\documentclass[12pt,a4paper]{article}
\usepackage{bbm}
\usepackage{amssymb,amsmath}
\usepackage{rotating}
\usepackage[centerlast]{caption}
\usepackage{graphicx}
\usepackage{verbatim}
\usepackage{enumerate}
\usepackage{mathrsfs}
\usepackage{amsfonts}
\usepackage{multirow}
\usepackage{amsthm,amscd,latexsym,color}
\usepackage{booktabs,tabularx}
\usepackage{caption}
\usepackage{subcaption}
\usepackage[title,titletoc]{appendix}
\usepackage[authoryear]{natbib}
\usepackage{amsmath}
\usepackage{amssymb}
\usepackage{mathtools}
\usepackage{bm}
\usepackage{nccmath}
\usepackage[title]{appendix}
\usepackage{tikz}
\usepackage[a4paper,text={17.2cm, 26.2cm},centering]{geometry}

\RequirePackage[colorlinks,citecolor=black,linkcolor=black,urlcolor=black,pagebackref]{hyperref}
\renewcommand{\arraystretch}{1.20}

\newtheorem{rem}{Remark}
\theoremstyle{definition}

\theoremstyle{plain}
\newtheorem{theo}{Theorem}
\theoremstyle{plain}

\theoremstyle{plain}

\theoremstyle{plain}

\theoremstyle{plain}
\newtheorem{lemma}{Lemma}
\newtheorem{assumption}{Assumption}
\numberwithin{equation}{section} \numberwithin{theo}{section}
\numberwithin{defn}{section} \numberwithin{rem}{section}
\numberwithin{cor}{section} \numberwithin{lemma}{section}
\numberwithin{prop}{section} \numberwithin{assumption}{section}
\allowdisplaybreaks[4]
\newcommand{\D}{\displaystyle}
\newcommand{\DF}[2]{\frac{\D#1}{\D#2}}
\newcommand{\e}{{\mathbb E}}
\newcommand{\p}{{\mathbb P}}

\DeclareMathOperator*{\argmin}{\arg\!\min}

\DeclareMathOperator*{\sgn}{\normalfont\textrm{sgn}}
\DeclareMathOperator*{\diag}{\normalfont\textrm{diag}}

\DeclareMathOperator*{\VEC}{\normalfont\textrm{vec}}
\def\be{\begin{equation}}
\def\ee{\end{equation}}
\def\bestar{\begin{eqnarray*}}
\def\eestar{\end{eqnarray*}}
\def\bea{\begin{eqnarray}}
\def\eea{\end{eqnarray}}

\renewcommand {\arraystretch}{1.2}

\begin{document}

\begin{titlepage}
\begin{center}
\Large Identification, Estimation and Inference Based on \\Structural Error Projection\footnote{An earlier version of this paper titled ``Semiparametric Instrumental Variables Method" has been discussed and presented at several places. We would like to thank them for their constructive comments and suggestions, particularly to Otavio Bartalotti, Debopam Bhattacharya, Giuseppe Cavalieve, Xiaohong Chen, Benjamin Deaner, Firmin Doko Tchatoka, James Duffy, Yanqin Fan, David Frazier, Silvia Gonçalves, Han Hong, Yongmiao Hong, Arturas Juodis, Toru Kitagawa, Frank Kleibergen, Tatiana Komarova, Dennis Kristensen, Sophocles Mavroedis, Akanksha Negi, Whitney Newey, Didier Nibbering, Taisuke Otsu, Hashem Pesaran, Peter Phillips, Taiga Saito, Richard Smith, Liangjun Su, Yike Wang, Martin Weidner, Jun Yu, Lina Zhang, and Xueyan Zhao.

Dong acknowledges from the National Natural Science Foundation of China under Grant Numbers: 72473156 and 72073143. Gao, Linton and Peng would like to thank the Australian Research Council Discovery Grants Program for its financial support under Grant Number: DP250100063. Thanks also go to Ruofan Xu for her assistance in computing. An online supplementary document and Matlab code are both available at \url{https://github.com/pengbin430/SIV_2026/tree/main.}}

\medskip

{\sc \small Chaohua Dong$^{\ast}$, Jiti Gao$^{\dag}$, Oliver Linton$^{\star}$ and Bin Peng$^{\dag}$}

{\small $^{\ast}$Zhongnan University of Economics and Law, $^{\dag}$Monash University and $^{\star}$University of Cambridge}
\end{center}

\begin{abstract}

This paper proposes to project and expand the conditional mean function of the structural error given the regressors in an endogenous regression under consideration. As the projection process is semiparametric, we define this procedure as a semiparametric projection (SP) method to address endogeneity in regression models by internally constructed instrumental variables.

The SP method is applicable to many classes of  regression models associated with endogeneity, such as linear, nonlinear, and non-- and semi--parametric models, and provides a simple and computationally tractable alternative to conventional instrumental variable approaches available from the existing literature.

This paper establishes identification conditions and derives the asymptotic properties of the resulting estimators. It then proposes a simple LASSO selection method to examine the finite--sample performance of both the proposed method and the established theory by simulated and real data examples.

\end{abstract}

\textit{Some} \textit{key words: } Causal effect; Control Function, Endogenous regressors, Instrumental Variable; Projection; Structural Equation

\textit{JEL subject classification}: C14, C26, G51

\end{titlepage}

\section{Introduction}

\label{Sec1}

Identifying structural parameters when regressors are correlated with
unobserved disturbances is a central problem in econometrics. Instrumental
variable (IV) methods provide a widely used solution, but they rely on the
availability of valid instruments that are correlated with the endogenous
regressors but uncorrelated with the structural errors. In many empirical
settings such instruments are difficult to obtain or justify. This paper
proposes a semiparametric instrumental variable approach that constructs
instruments directly from the observed regressors. The key observation is
that endogeneity arises from the component of the structural error that is
correlated with the regressors. By projecting the structural equation onto
the sigma--field generated by the regressors, we decompose the structural
error into a conditional expectation component and an orthogonal residual.
This decomposition yields an exogenous regression representation that allows
us to construct valid instruments as functions of the observed covariates.

Suppose that we have a linear structural model 
\begin{equation}
y=\mathbf{x}^{\top }\bm{\beta}_{0}+\varepsilon \ \ 
\mbox{with $\e[\varepsilon]
= 0$ but $\e[\mathbf{x} \, \varepsilon] \neq 0$},
\end{equation}%
where $\beta _{0}$ are the parameters of interest and $\mathbf{x}$ is the
observed $d$-dimensional vector of variables related to the outcome $y$. The
main issue we face regarding identification and estimation is that ${\mathbb{%
E}}[\mathbf{x}\varepsilon ]\neq 0.$ We write $\varepsilon ={\mathbb{E}}%
[\varepsilon |\mathbf{x}]+e,$ where $m(\mathbf{x)}={\mathbb{E}}[\varepsilon |%
\mathbf{x}]$ is of unknown functional form and ${\mathbb{E}}[e|\mathbf{x}]=0$%
. We suppose that $m\in S,$ a subspace of the Hilbert space of square
integrable functions of $\mathbf{x}$. Letting $\mathbf{P}_{\mathcal{S}}$
denote the projection operator on to the subspace $\mathcal{S}$, we define
instruments $\mathbf{z}=\mathbf{x}- \mathbf{P}_{\mathcal{S}} \, \mathbf{x}$,
and then apply the IV method to estimate the structural parameters. Because $%
\mathbf{z}$ is by construction orthogonal to the space containing $m,$ it
follows that ${\mathbb{E}}[\mathbf{z}\varepsilon ]=0,$ so that $\mathbf{z}$
serves as a valid instrument for $\mathbf{x}.$ Identification then follows
from the condition that 
\begin{equation*}
\bm{\Sigma} _{\mathbf{x}}={\mathbb{E}}\left[ \left( \mathbf{x}- \mathbf{P}_{%
\mathcal{S}}\mathbf{x}\right) \left( \mathbf{x}- \mathbf{P}_{\mathcal{S}}%
\mathbf{x}\right) ^{\intercal }\right] >\mathbf{0},
\end{equation*}%
under which the structural parameter satisfies $\bm{\beta} _{0}=\bm{\Sigma}
_{{x}}^{-1} \bm{\Sigma} _{xy},$ where $\bm{\Sigma} _{xy}={\mathbb{E}}[\left( 
\mathbf{x}- \mathbf{P}_{\mathcal{S}}\mathbf{x}\right) y].$

The first condition $m\in {\mathcal{S}}$ specifies the admissible sources of
confounding. The second condition $\bm{\Sigma} _{x}>\mathbf{0}$ requires
sufficient variation in the structural regressors after those sources have
been removed. We develop estimation technology and inference methods for the
structural parameters. A key consideration is the choice of ${\mathcal{S}}$,
which could be justified by economic, institutional, or design information.
Plausible examples include: (a) a known collection of fixed effects or
seasonal components; (b) a symmetry or invariance restriction; (c) specified
interactions that exclude the structural linear index; (d) a
finite-dimensional nuisance space motivated by the empirical design. We
explore some examples below. In the absence of such concrete choices, we
propose to use the LASSO method applied to a dictionary of basis functions
to parsimoniously represent this space.

Our approach is related to several strands of the literature that develop
identification strategies derived from instruments internally generated from
features of the data generating process. \cite{RR2003} shows that
heteroskedasticity across regimes can identify structural parameters in
simultaneous equations models even in the absence of conventional
instruments, see also \cite{sp2001}. In that framework, changes in the
variance of structural shocks generate additional moment conditions that
identify the structural coefficients. \cite{AL2012} proposes another
approach for simultaneous equations based on heteroskedasticity, showing
that covariance restrictions between regressors and structural disturbances
can generate valid instruments even when external instruments are
unavailable. 

These methods exploit variation in second moments of the data generating
process to produce internal instruments. \cite{lewis2025} points to a
related literature in macroeconometrics that exploits non-Gaussianity in the
structural shocks to obtain identification, such as the independent
component analysis (ICA). The literature on endogeneity in general is of
course very extensive. There are several well known survey articles by \cite%
{swy2002}, \cite{anderson2005}, \cite{sy2005}, \cite{Imbens2014}, \cite%
{ass2019}, \cite{cr2020}, \cite{Angrist2023}, and \cite{lewis2025}, and many results have been
written into textbooks, such as \cite{jmw2001}, \cite{wooldridge2016}, \cite%
{greene2018}, and \cite{jsmw2018}.

The approach developed in this paper follows a similar philosophy but relies
on a different source of identification. Instead of exploiting
heteroskedasticity or regime variation, we use a projection-based
decomposition of the conditional mean of the structural error. This
decomposition allows us to construct instrumental variables directly as
functions of the regressors. Identification therefore arises from the
structure of the conditional expectation rather than from variance shifts or
heteroskedasticity--based moment conditions.\footnote{%
However, under joint Gaussianity, the endogeneity will be linear and our
identification strategy fails as well.}

Our approach is also related to the control function literature discussed,
for example, in \cite{npv1999}, \cite{np2003}, \cite{blundell2004}, \cite%
{bck2007}, \cite{IN2009}, \cite{newey2013}, and \cite{JMW2015}. These
existing methods address endogeneity in semiparametric models by introducing
auxiliary instruments or structural restrictions. By contrast, the proposed
SP method constructs instruments internally through an orthogonal
decomposition of the structural error. This yields a closed--form
instrumental variable in each case, which is termed as internally
``Constructed Instrumental Variable" (CIV), that depends only on observed
regressors and allows identification in a number of classes of linear and
nonlinear models as discussed in Appendices \ref{Ap.A2} and A.3 below.

In summary, the main distinctions and contributions of this paper are given
follows.

\noindent\textbf{a}: \, We propose using a semiparametric projection (SP)
method to derive the so--called ``CIV" to correctly identify and
consistently estimate the unknown parameters and functions involved in a
class of correctly identified models under study;

\noindent\textbf{b}: \, We then show that the SP method also offers an
alternative way to those methods available from the existing literature
about how to deal with endogeneity issues involved in certain classes of
weakly identified models;

\noindent\textbf{c}: \, The proposed SP method is invariant to the degree of
endogeneity, including a wide range of weak endogeneity, involved in many
classes of nonlinear models;

\noindent\textbf{d}: \, We establish the corresponding estimation theory to
show that the proposed SP method produces consistent and unbiased estimates
as well as asymptotically normal distributions;

\noindent\textbf{e}: \, We develop a Hausman--type of nonparametric
statistic for testing a full level of strong exogeneity versus a wide range
of weak endogeneity with a near--optimal rate;

\noindent\textbf{f}: \, The proposed SP method offers a closed--form
expression for an IV function of the data for the correctly identified model
under study, and it is easily implementable; and

\noindent\textbf{g}: \, We propose a simple LASSO selection method to
evaluate and implement the estimation method by simulated and real data
examples. \medskip

The organization of this paper is as follows. Sections \ref{Sec2} and \ref%
{Sec3} respectively propose and discuss how to identify and estimate the
parameters of interest for a class of correctly identified linear models.
Section \ref{Sec4} then shows that the proposed SP method can also be
applied to a class of weakly identified linear models, including such linear
models associated with weak endogeneity. Section \ref{Sec5}.1 proposes a
LASSO selection method for an optimal set of orthonormal series before it is
demonstrated in an analysis of the return to schooling case study in Section %
\ref{Sec5}.2. Section \ref{Sec6} concludes and discusses the applicability
of the SP method to several classes of extended models.

Appendix \ref{Ap.A1} provides some heuristics about the identification and
estimation proposed in Sections \ref{Sec2} and \ref{Sec3}. Appendix \ref%
{Ap.A2} shows that the proposed SP method can be generalized to deal with a
wide class of nonlinear models. Appendix A.3 outlines how to extend the
proposed SP method to two classes of non-- and semi--parametric regressions,
and one class of nonlinear and non--separable binary models. Appendix A.4
briefly discusses other estimation issues including a type of GMM estimation.

Appendix B.1 discusses some implementation issues. Appendix B.2 evaluates
the finite--sample properties of the SP method using a number of simulated
datasets. Appendix B.3 further investigates the empirical example considered
in Section \ref{Sec5}.2 and then adds an extra empirical dataset. Appendix
B.4 proposes a jackknife bias--correction method. Appendix C collects the
proofs of the main results. Mathematical proofs for Appendix \ref{Ap.A2},
and additional simulations and extensions, are given in Appendices D-F of an
online supplementary document.

For notational consistency, we use $\mathbf{x}$ to stand for a vector
regardless of whether it is stochastic or deterministic, and $x$ to stand
for either a univariate variable or a given point. When there are no
confusions, $x$ can also be used as a generic variable for integration. The
usage of several other notation and symbols is standard, such as using $%
\mathbf{A}$ as either a matrix or a vector, $\|\mathbf{A}\|^2$ as the
standard Euclidean norm, $a$ as a real number, and $[a]\leq a$ as the
largest integer part of $a$. There are two modes of convergence involved in
the rest of this paper, with ``$\rightarrow_{\mathcal{D}}$ and $\rightarrow_P
$ representing convergence in distribution and convergence in probability,
respectively.

\section{Linear Model Identification}

\label{Sec2}

\subsection{Economic Interpretation and Ex Ante Specification of the
Nuisance Space}

The identifying content of the model comes from the specification of the
nuisance space $\mathcal{S}$. The condition $m(\cdot )\in \mathcal{S}$
should be interpreted as a restriction on the mechanisms through which the
endogenous component of the disturbance may depend on the observed state. An
ex ante specification of $\mathcal{S}$ should satisfy three principles.
First, its defining features should be motivated by economic, institutional,
or design information available independently of the outcomes. Second, the
restriction should have substantive content: it must exclude at least some
functions of the observed state that could otherwise be confused with the
structural index $\mathbf{x}^{\top }\bm{\beta }_{0}$. Third, the restriction
must leave enough residual variation in $\mathbf{x}$ for ${\mathcal M}_{\mathcal S}=\mathcal{I}-{\mathcal P}_{\mathcal S}$ to be
nonsingular. Ex ante specification does not require every element of the
projection operator to be known before seeing the sample. The abstract space 
$\mathcal{S}$ may be fixed while its inner products are estimated from the
observed regressors. For example, the researcher may specify in advance that 
$\mathcal{S}$ is generated by a set of interactions, calendar functions, or
invariant functions, and then use the empirical distribution to
orthonormalize those functions or estimate $\mathbf{P}_{\mathcal{S}}\,%
\mathbf{x}$. Because these operations use only the regressors, they
implement the maintained restriction rather than select it according to its
implications for the outcome. \medskip We next give several suggestive
examples.

\textbf{Example 1}. Group and Time Effects. Suppose that observations are
indexed by unit $i$ and period $t$, and consider the structural panel model $%
y_{it}=\mathbf{x}_{it}^{\top }\bm{\beta}_{0}+\varepsilon _{it}.$
Institutional knowledge may suggest that the conditional mean of the
disturbance consists of a unit-specific component and a common time
component ${\mathbb{E}}[\varepsilon _{it}\mid x_{it}]=\alpha _{i}+\lambda
_{t}.$ The nuisance space is then 
\begin{equation}
\mathcal{S}=\mathrm{span}\left\{ 1\{i=j\},\,1\{t=s\}:j=1,\ldots
,N,\;s=1,\ldots ,T\right\} .
\end{equation}%
This space is specified by the panel structure rather than selected from the
outcome data. The residualized regressor is the familiar within-transformed
variable, $\mathbf{z}_{it}=\mathbf{x}_{it}-\mathbf{P}_{\mathcal{S}}\,\mathbf{%
x}_{it},$ and identification requires nondegenerate variation in $\mathbf{x}%
_{it}$ after removing the unit and time components: ${\mathbb{E}}[\mathbf{z}%
_{it}\,\mathbf{z}_{it}^{\top }]>\mathbf{0}$. The economic restriction is not
simply that unit and time effects should be included because they improve
fit. It is that all systematic dependence between the disturbance and the
observed state operates through permanent unit heterogeneity and aggregate
time shocks. If, for example, the conditional mean disturbance also contains
unit-specific trends that are correlated with $\mathbf{x}_{it}$, the
proposed space is misspecified. Such departures can be accommodated through
the approximate restriction developed in the paper. A similar type of
setting arises in high-frequency demand, production, energy, or financial
data, where the institutional environment may imply that predictable omitted
shocks follow a known calendar structure described by dummy variables or
trigonometric functions, which can similarly be handled in this setting. 

\noindent \textbf{Example 2}. Symmetry and Invariance Restrictions. In some
cases, economic reasoning implies that the confounding component is
invariant under a known transformation, whereas the structural component is
not. Suppose first that $x$ is scalar and that the selection mechanism
depends on the magnitude of $x$, but not on its sign: $m(x)=m(-x).$ Then one
may take 
\begin{equation}
\mathcal{S}_{\mathrm{even}}=\left\{ s\in L^{2}(P_{X}):s(x)=s(-x)\right\} .
\end{equation}%
The structural function $x\,\beta _{0}$ changes sign under $x\mapsto -x$,
whereas every element of $\mathcal{S}_{\mathrm{even}}$ is invariant. Hence, $%
\mathrm{span}\{x\}\cap \mathcal{S}_{\mathrm{even}}=\{0\},$ provided the
support contains sufficiently rich positive and negative values. The
substantive interpretation could be that participation, selection, or
measurement error depends on the absolute magnitude of an exposure, while
the structural effect depends on its direction. For example, monitoring
intensity might respond to the absolute size of a position, whereas the
structural return depends on whether the position is long or short. Note
that the distribution of $x$ need not itself be symmetric. If it is
asymmetric, $P_{\mathcal{S}_{\mathrm{even}}}\,x$ need not be zero, so the
residualization is nontrivial and $E[x\,m(x)]$ may be nonzero.
Identification nevertheless follows from the separation between the even
nuisance functions and the odd structural index. 

A multivariate version could be based on permutation invariance. Let $%
x=x_{1}-x_{2},$ and suppose that $m(x_{1},x_{2})=m(x_{2},x_{1}).$ The
nuisance depends only on the unordered pair of characteristics, while the
structural regressor is directional and changes sign when the two components
are exchanged. The relevant space is $\mathcal{S}_{\mathrm{sym}}=\left\{
s:s(x_{1},x_{2})=s(x_{2},x_{1})\right\} .$ This structure may arise in
matched-pair, bilateral, ranking, or relative-performance settings in which
common selection depends symmetrically on the characteristics of the pair,
but the parameter of interest multiplies a directional difference.

More generally, if a group of transformations $\mathcal{G}$ acts on the
observed state, one may let $\mathcal{S}$ consist of functions invariant
under that group: $\mathcal{S}_{\mathcal{G}}=\left\{ s:s(gx)=s(x)\text{ for
every} \ g\in \mathcal{G}\right\} .$ Identification is possible when the
structural regressors contain components that transform differently from the
invariant nuisance functions. \medskip

\noindent \textbf{Example 3}. Specified Nonlinearities and Interactions.
Economic theory may imply that the omitted component operates through known
nonlinear features of the observed state but does not contain an
unrestricted additive linear component. For example, with two centred
regressors, suppose that $m_{0}(x_{1},x_{2})=\gamma _{1}q_{1}(x)+\gamma
_{2}q_{2}(x)+\gamma _{3}q_{3}(x),$ where $q_{1}(x)=x_{1}x_{2},$ $%
q_{2}(x)=x_{1}^{2}-E[x_{1}^{2}],$ and $q_{3}(x)=x_{2}^{2}-E[x_{2}^{2}].$ The
nuisance space is $\mathcal{S}=\mathrm{span}\{q_{1},q_{2},q_{3}\},$ whereas
the structural index is $x_{1}\beta _{01}+x_{2}\beta _{02}.$ A possible
economic interpretation is that the omitted state reflects
complementarities, congestion, dispersion, or adjustment costs that depend
on products and squared deviations, while the coefficients of interest
measure additive marginal effects. In a production setting, for example, an
omitted utilization component might be known to respond to the interaction
of capital and labour or to deviations from normal operating levels, rather
than to unrestricted linear combinations of the inputs. The basis functions
may be orthogonalized with respect to the distribution of $x$. Such
orthogonalization can be estimated using the regressors alone and does not
change the substantive restriction. 

\noindent \textbf{Example 4}. Design-Based Strata and Known Assignment
Rules. Suppose treatment or exposure varies within strata determined by a
known assignment rule. Let $A$ be a predetermined score or collection of
baseline characteristics, and suppose the conditional mean disturbance is
constant, or follows a specified low-dimensional form, within predefined
strata: $m(A)=\sum_{j=1}^{J}\alpha _{j}\,1\{A\in \mathcal{A}_{j}\}.$ The
sets $\mathcal{A}_{1},\ldots ,\mathcal{A}_{J}$ might be administrative
eligibility groups, sampling strata, geographic zones, cohorts, or
categories fixed by the treatment-assignment process. Then $\mathcal{S}=%
\mathrm{span}\left\{ 1\{A\in \mathcal{A}_{j}\}:j=1,\ldots ,J\right\} .$ The
coefficient is identified by variation in $x$ within these predetermined
strata: $z=x-{\mathbb{E}}[x\mid \mathcal{A}_{j}]$ for $A\in \mathcal{A}_{j}.$
The identifying assumption is that all systematic selection on unobservables
is captured by the known strata. 

A related case arises when an administrative assignment algorithm is known
to depend on a finite set of basis functions $b_{1}(A),\ldots ,b_{K}(A)$. If
economic reasoning implies that the conditional mean of the disturbance
depends on the assignment variables through the same features, one may
specify $\mathcal{S}=\mathrm{span}\{b_{1}(A),\ldots ,b_{K}(A)\}.$ \medskip

\noindent \textbf{Example 5}. Low-Rank Institutional, Network, or Spatial
Components. In some applications, the institutional structure suggests that
confounding is generated by a small number of common components. Suppose,
for example, that observations belong to known markets or network
communities and that $m(x_{i})=\sum_{k=1}^{K}\gamma _{k}f_{k}(x_{i}),$ where
the functions $f_{k}$ are fixed before the outcome analysis. They might
consist of market indicators, predetermined network eigenvectors, geographic
basis functions, or exposure to known aggregate shocks. Then $\mathcal{S}=%
\mathrm{span}\{f_{1},\ldots ,f_{K}\}.$ Identification comes from the
component of the structural regressor that is not explained by these common
institutional factors.

\subsection{Motivations and heuristics}

We return to the general structural linear model 
\begin{equation}  \label{a1}
y=\mathbf{x}^{\top} \bm{\beta}_0 +\varepsilon \ \ \mbox{with
$\e[\varepsilon] = 0$ but $\e[\mathbf{x} \, \varepsilon] \neq 0$}.
\end{equation}

Appendix \ref{Ap.Aall} shows that the proposed SP method is applicable to
several classes of nonlinear models, including the general nonparametric
regression: $y = g(\mathbf{x}) + \varepsilon$.

Noting that ${\mathbb{E}}[\mathbf{x}\varepsilon]\ne 0$ implies ${\mathbb{E}}%
[\varepsilon|\mathbf{x}]\ne 0$, we define 
\begin{equation}
m(\mathbf{x})={\mathbb{E}}[\varepsilon|\mathbf{x}] \ \ \ \mbox{and} \ \ \
e=\varepsilon-m(\mathbf{x}) = y - \mathbf{x}^{\top} \bm{\beta}_0 - m(\mathbf{%
x}),  \label{a2}
\end{equation}
where $\bm{\beta}_0$ can be identifiable and consistently estimable as
discussed in the rest of this section.

For the nonlinear model (\ref{a1}), we consider the projection process of ${%
\varepsilon} = {\mathbb{E}}[{\varepsilon}|\mathbf{x}] + {e}$ in (\ref{a2})
as the proposed \emph{the SP method}, which has a similar spirit to the
so--called ``Control Function Approach" discussed in \cite{newey1990}, \cite%
{npv1999}, \cite{np2003}, \cite{JMW2015} and others, but the proposed SP
method addresses endogeneity through an internally constructed IV system for
such classes of correctly identified models, without assuming the existence
and validity of an external IV structure.

To explain the main ideas and for notational simplicity, the main sections
of this paper then focus on a semiparametric linear regression of the form: 
\begin{equation}  \label{a4}
y=\mathbf{x}^{\top} \bm{\beta}_0 +m(\mathbf{x})+e.
\end{equation}

A diagram is given in Appendix \ref{Ap.A1}.1 to illustrate model (\ref{a4})
from a geometric point of view. Before we discuss how to identify $\bm{\beta}%
_0$, we make the following comments.

\begin{rem}
(i) \, Model (\ref{a4}) is quite different from the following partially
linear model: 
\begin{equation}
y= \mathbf{x}_u^{\top} \bm{\alpha}_0 + q(\mathbf{x}_v) + \varepsilon,
\label{plm1}
\end{equation}
where $\mathbf{x}_u$ and $\mathbf{x}_v$ are observed variables, as discussed
in the relevant literature by \cite{pmr1988}, \cite{linton1995}, \cite%
{hlg2000}, and other more recent developments, for which an identifiability
condition: 
\begin{equation}
{\mathbb{E}}\left[(\mathbf{x}_u - {\mathbb{E}}[\mathbf{x}_u|\mathbf{x}_v]) (%
\mathbf{x}_u - {\mathbb{E}}[\mathbf{x}_u|\mathbf{x}_v])^{\top}\right]>%
\mathbf{0}  \label{iden1}
\end{equation}
is required for $(\bm{\alpha}_0, q(\cdot))$ to be identifiable.

(ii) For model (\ref{a4}) where $\mathbf{x}_u=\mathbf{x}_v = \mathbf{x}$,
however, such an identifiability condition in (\ref{iden1}) is violated. We
therefore show in the rest of Section \ref{Sec2} that the SP method
constructs ${\mathbf{z}}= {\mathbf{x}} - \sum_{j=1}^{\infty} {\mathbb{E}}[{%
\mathbf{x}} \, \psi_j(\mathbf{x})] \, \psi_j(\mathbf{x})$ and proposes to
replace (\ref{iden1}) by requiring ${\mathbb{E}}\left[\mathbf{z} \, \mathbf{z%
}^{\top}\right]>\mathbf{0}$, where $\{\psi_j(\cdot): j\geq 1\}$ is an
orthonormal sequence chosen by the user.

(iii) Appendix A.3.1 below discusses that model (\ref{plm1}) itself may also
be endogenous. In this case, we show that model (\ref{a4}) itself, along
with the proposed SP method, is applicable to deal with certain endogeneity
issues involved in (\ref{plm1}) and some other nonlinear models discussed in
Appendix A.3.
\end{rem}

To address the endogeneity issue, a simple and commonly used IV method
considers combining model (\ref{a1}) with a linear decomposition of $\mathbf{%
x}$ of the form: 
\begin{equation}
\mathbf{x} = \mathbf{A} \, \mathbf{z}_{} + \bm{\xi} \ \ \ \mbox{and} \ \ \ {%
\mathbb{E}}[\varepsilon \, \mathbf{z}]=0,  \label{a2a}
\end{equation}
where $\mathbf{z}_{}$ is assumed to be a valid IV. Under the invertibility
of ${\mathbb{E}}\left[\mathbf{z} \, \mathbf{x}^{\top}\right]$, $\bm{\beta}_0$
can be identifiable by $\bm{\beta}_0 = \left({\mathbb{E}}\left[\mathbf{z} \, 
\mathbf{x}^{\top}\right]\right)^{-1} {\mathbb{E}}[\mathbf{z}_{} \, y]$.

As shown in equations (\ref{a6a})--(\ref{a6}) below, the proposed SP method
consequently enables us to construct an explicit form of $\mathbf{z}$ for it
to be the so--called ``CIV" as a valid IV. \medskip

We now outline the main heuristics about how to construct instrumental
variables by the so--called ``SP method'' to enable that $(\bm{\beta}_0,
m(\cdot))$ is identifiable in the rest of this section.

Assume that $m(\mathbf{x})$ can be written as $m(\mathbf{x}) = \mathbf{v}(%
\mathbf{x})^{\top} \bm{\gamma}$, where $\bm{\gamma}=\left(\gamma_1, \cdots,
\gamma_k\right)^{\top}$ is a vector of unknown coefficients, and 
\begin{equation}
\mathbf{v}(\mathbf{x}) \equiv: \mathbf{V}_k(\mathbf{x})= \left(\psi_1(%
\mathbf{x}), \cdots, \psi_k(\mathbf{x})\right)^{\top}  \label{a5}
\end{equation}
is a vector of series functions, where $k$ is a fixed and finite truncation
parameter at this stage. The case of varying $k$ will be discussed from
Section \ref{Sec2}.2. There is a substantial literature about series
estimation, such as \cite{gallant1981}, \cite{andrews1991}, \cite{newey1997}%
, \cite{ac2003}, \cite{chen2007}, and \cite{BCCK2015}. A recent book by \cite%
{dg2025} provides an update about nonparametric series estimation.

Letting $\mathbf{v}=\mathbf{v}(\mathbf{x})$ satisfy ${\mathbb{E}}[\mathbf{v} 
\mathbf{v}^{\top}]>0$, we rewrite model (\ref{a4}) as 
\begin{equation}
y = \mathbf{x}^{\top} \bm{\beta}_0 + \mathbf{v}^{\top} \bm{\gamma} + e,
\label{a6a}
\end{equation}
where ${\mathbb{E}}[e|\mathbf{x}]=0$. Let $w = y - \mathbf{v}^{\top} \left({%
\mathbb{E}}[\mathbf{v} \mathbf{v}^{\top}]\right)^{-1} {\mathbb{E}}[\mathbf{v}
y]$. Model (\ref{a6a}) can then be written as 
\begin{equation}
w = \mathbf{z}^{\top} \bm{\beta}_0 + e,  \label{a6b}
\end{equation}
where $\mathbf{z} = \mathbf{x} - {\mathbb{E}}[\mathbf{x} \mathbf{v}^{\top}]
\left({\mathbb{E}}[\mathbf{v} \mathbf{v}^{\top}]\right)^{-1} \mathbf{v}$
satisfies ${\mathbb{E}}[e|\mathbf{z}] = {\mathbb{E}}[{\mathbb{E}}[e|\mathbf{x%
}]|\mathbf{z}]=0$, which, along with ${\mathbb{E}}[\mathbf{z} \mathbf{z}%
^{\top}]>\mathbf{0}$, ensures that $\mathbf{z}$ can be chosen as an IV. It
will be shown in Section \ref{Sec2}.2 that 
\begin{equation}
\bm{\beta}_0 = \left({\mathbb{E}}[\mathbf{z} \mathbf{z}^{\top}]\right)^{-1} {%
\mathbb{E}}[\mathbf{z} \, w]  \label{a6c}
\end{equation}
is identifiable. We therefore have constructed a version of equation (\ref%
{a2a}) of the form: 
\begin{equation}
\mathbf{z} = \mathbf{x} - {\mathbb{E}}[\mathbf{x} \, \mathbf{v}^{\top}]
\left({\mathbb{E}}[\mathbf{v} \, \mathbf{v}^{\top}]\right)^{-1} \mathbf{v} \
\ \ \mbox{satisfying} \ \ {\mathbb{E}}\left[\mathbf{z} \, \mathbf{x}^{\top}%
\right] > \mathbf{0} \ \ \mbox{and} \ \ {\mathbb{E}}[\mathbf{z} \,
\varepsilon]=\mathbf{0},  \label{a6}
\end{equation}
which implies that $\mathbf{z}$ is the so--called CIV as a \emph{valid} IV
for the linear model: $y = \mathbf{x}^{\top} \bm{\beta}_0 + \varepsilon$.

\noindent Assume that $\{(\mathbf{x}_i, y_i)\}$ is a sequence of observed
random variables. We introduce $\mathbf{v}_i = \mathbf{v}(\mathbf{x}_i)$, $%
w_i = y_i - \mathbf{v}_i^{\top} \left(\sum_{l=1}^n \mathbf{v}_l \, \mathbf{v}%
_l^{\top}\right)^{-1} \, \sum_{j=1}^n y_j \,\mathbf{v}_j^{\top}$ and $%
\mathbf{z}_i = \mathbf{x}_i - \sum_{j=1}^n \mathbf{x}_j \, \mathbf{v}%
_j^{\top}\, \left(\sum_{l=1}^n \mathbf{v}_l \, \mathbf{v}_l^{\top}%
\right)^{-1} \mathbf{v}_i$. A sampling version of model (\ref{a6b}) is given
by: 
\begin{equation}
w_i = \mathbf{z}_i^{\top} \bm{\beta}_0 + e_i, \ i=1,2,\cdots, n,  \label{a7}
\end{equation}
where $e_i$ is the error term satisfying ${\mathbb{E}}[e_i|\mathbf{x}_i]=0$
and $0<{\mathbb{E}}[e_i^2|\mathbf{x}_i] = \sigma_i^2<\infty$ for $1\leq
i\leq n$. Equation (\ref{a7}) implies that $\bm{\beta}_0$ of (\ref{a6c}) can
be estimated by 
\begin{equation}
\widehat{\bm{\beta}}_{\mathrm{SP}} \equiv: \widehat{\bm{\beta}} = \left({%
\mathbf{Z}}^{\top} {\mathbf{Z}}\right)^{-1} \left({\mathbf{Z}}^{\top} {%
\mathbf{W}}\right),  \label{a8}
\end{equation}
where $\mathbf{Z} = \left(\mathbf{z}_1, \cdots, \mathbf{z}_n\right)^{\top}$
and $\mathbf{W} = \left(w_1, \cdots, w_n\right)^{\top}$.

It is pointed out that $\widehat{\bm{\beta}}_{\mathrm{SP}}$ is the same as
the following 2sLS estimator: 
\begin{equation}
\widehat{\bm{\beta}}_{\mathrm{2sLS}}=\left(\widetilde{\mathbf{X}}^{\top} 
\widetilde{\mathbf{X}}\right)^{-1} \left(\widetilde{\mathbf{X}}^{\top} 
\widetilde{\mathbf{Y}}\right)  \label{a9}
\end{equation}
defined by the conventional two--stage LS method when choosing $\mathbf{Z}$
as the IV function, where $\widetilde{\mathbf{Y}} = \left(\mathbf{I}_n - 
\mathbf{P}_v\right) \, \mathbf{Y}$ with $\mathbf{Y} = \left(y_1, \cdots,
y_n\right)^{\top}$, $\widetilde{\mathbf{X}} = \left(\mathbf{I}_n - \mathbf{P}%
_v\right) \, \mathbf{X}$ with $\mathbf{X} = \left(\mathbf{x}_1, \cdots, 
\mathbf{x}_n\right)^{\top}$, $\mathbf{P}_v = \mathbf{V} \, \left(\mathbf{V}%
^{\top} \mathbf{V}\right)^{-1} \mathbf{V}^{\top}$ and $\mathbf{V} = \left(%
\mathbf{v}(\mathbf{x}_1), \cdots, \mathbf{v}(\mathbf{x}_n)\right)^{\top}$.

As noted in (\ref{a6}) above, it is reasonable to require that 
\begin{equation*}
{\mathbb{E}}\left[\mathbf{z} \, \mathbf{z}^{\top}\right] = {\mathbb{E}}[%
\mathbf{x} \, \mathbf{x}^{\top}] - \sum_{j=1}^k {\mathbb{E}}[\mathbf{x} \,
\psi_j(\mathbf{x})] \, {\mathbb{E}}[\mathbf{x}^{\top} \psi_j(\mathbf{x}%
)]\geq {\mathbb{E}}[\mathbf{x} \, \mathbf{x}^{\top}] - \sum_{j=1}^{\infty} {%
\mathbb{E}}[\mathbf{x} \, \psi_j(\mathbf{x})] \, {\mathbb{E}}[\mathbf{x}%
^{\top} \psi_j(\mathbf{x})] >\mathbf{0}
\end{equation*}
to ensure that $\bm{\beta}_0$ is correctly identifiable, as discussed
rigorously in Section \ref{Sec2}.3 below.

\subsection{Model identification}

\noindent To present the main ideas about model identification, we introduce
a Hilbert space. Let ${\mathcal{L}}^2=\{g(\mathbf{x}): \ {\mathbb{E}}[g^2(%
\mathbf{x})]<\infty\}$, where the inner product is given by $\langle g_1(%
\mathbf{x}), g_2(\mathbf{x})\rangle={\mathbb{E}}[g_1(\mathbf{x})g_2(\mathbf{x%
})]$ from which the induced norm is $\|g(\mathbf{x})\|=\sqrt{{\mathbb{E}}%
[g^2(\mathbf{x})]}$ for any given $g_1(\mathbf{x}), g_2(\mathbf{x}), g(%
\mathbf{x})\in {\mathcal{L}}^2$. Equipped with this $\|\cdot\|$, ${\mathcal{L%
}}^2$ becomes a Hilbert space.

\begin{assumption}
\label{MTP1} Suppose that (i) for any $\lambda\in \mathbb{R}^d$, $%
\lambda^\top \mathbf{x}\in {\mathcal{L}}^2$; (ii) ${\mathbb{E}}[ \mathbf{x}
\, m(\mathbf{x})]\neq 0$ and ${\mathbb{E}}[m(\mathbf{x})]=0$; (iii) there is
a proper closed subspace ${\mathcal{S}}\subsetneqq {\mathcal{L}}^2$ such
that $m(\mathbf{x})\in {\mathcal{S}}$ but $\lambda^\top \mathbf{x}\not\in {%
\mathcal{S}}$ for any $\lambda\ne 0$.
\end{assumption}

Appendix \ref{Ap.A1}.2 provides a geometric illustration of Assumption \ref%
{MTP1}.

\begin{rem}
(a) \, Condition (i) is a minimum requirement for $\mathbf{x}$ in the model,
and we also in the sequel need ${\mathbb{E}}[(\lambda^\top \mathbf{x})^2]>0$
for any $\lambda\ne 0$, or equivalently, ${\mathbb{E}}[\mathbf{x}\mathbf{x}%
^\top]>0$. Condition (ii) is to confirm the existence of endogeneity in
model \eqref{a1} since ${\mathbb{E}}[\varepsilon \, \mathbf{x}] = {\mathbb{E}%
}[ \mathbf{x} \, m(\mathbf{x})]\neq 0$, while ${\mathbb{E}}[m(\mathbf{x})]={%
\mathbb{E}}(\varepsilon)=0$. Condition (iii) is crucial that separates the
linear form $\lambda^\top \mathbf{x}$ and the unknown $m(\mathbf{x})$ by the
proper subspace ${\mathcal{S}}$.

(b) \, Condition (iii) requires that there are no more linear combinations
of $\mathbf{x}$ left in $m(\mathbf{x})$ to ensure that $\bm{\beta}_0$
involved in ${\mathbb{E}}[y|\mathbf{x}] = \mathbf{x}^{\top} \bm{\beta}_0 + m(%
\mathbf{x})$ can be correctly identified. Section \ref{Sec4} shows how to
deal with such linearity cases when Condition (iii) fails.

(c) \, Section \ref{Sec5.1} proposes to choose ${\mathcal{S}}$ by the
so--called ``LASSO selection method" for practical implementation of the
proposed SP method by real data examples, and the choice of ${\mathcal{S}}$
allows that some commonly used functional forms of $m(\mathbf{x})$ can be
well expanded in the finite--sample evaluations in Appendices B and E of the
online supplementary document.
\end{rem}

Given the closed subspace ${\mathcal{S}}$, we have an orthogonal
decomposition for ${\mathcal{L}}^2$, that is, ${\mathcal{L}}^2={\mathcal{S}}%
\oplus{\mathcal{S}}^\bot $ where ${\mathcal{S}}^\bot$ is the orthogonal
complement subspace of ${\mathcal{S}}$. This decomposition means that for
every element $\xi \in {\mathcal{L}}^2$, it can be uniquely decomposed as $%
\xi=\xi_1+\xi_2$ where $\xi_1\in{\mathcal{S}}$, $\xi_2\in {\mathcal{S}}^\bot$%
, and they are orthogonal, ${\mathbb{E}}(\xi_1\xi_2)=0$. Accordingly, we
have two projection mappings ${\mathcal{P}}_{\mathcal{S}}$ and ${\mathcal{M}}%
_{\mathcal{S}}$ that map every element in ${\mathcal{L}}^2$ into ${\mathcal{S%
}}$ and ${\mathcal{S}}^\bot$, respectively, defined by the decomposition,
that is, ${\mathcal{P}}_{\mathcal{S}}(\xi)=\xi_1$ and ${\mathcal{M}}_{%
\mathcal{S}}(\xi)=\xi_2$.

We establish the following lemma; its proof is given in Appendix C.

\begin{lemma}
\label{lemma2.1} In addition to Assumption \ref{MTP1}, suppose that $%
\{\psi_j(\mathbf{x}), j\ge 1\}$ is an orthonormal basis for ${\mathcal{S}}$,
where ${\mathbb{E}}[\psi_j(\mathbf{x})\psi_{\ell}(\mathbf{x})]=\delta_{j\ell}
$ and ${\mathbb{E}}[\psi_j(\mathbf{x})]=0$ for all $(j,\ell)$. Then, we have
(i) ${\mathcal{P}}_{\mathcal{S}}(\xi)=\sum_{j=1}^\infty {\mathbb{E}}[\xi
\psi_j(\mathbf{x})] \psi_j(\mathbf{x})$ for any $\xi \in {\mathcal{L}}^2$;
(ii) ${\bm\Sigma}_x\equiv{\mathbb{E}}[\mathbf{xx}^\top]-\sum_{j=1}^\infty {%
\mathbb{E}}[\mathbf{x} \psi_j(\mathbf{x})] {\mathbb{E}}[\mathbf{x}%
^\top\psi_j(\mathbf{x})]>\mathbf{0}$.
\end{lemma}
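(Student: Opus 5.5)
Part (i) is the standard Fourier expansion of a projection onto a closed subspace with a countable orthonormal basis. For $\xi\in\mathcal{L}^2$ we write $\xi=\xi_1+\xi_2$ with $\xi_1=\mathcal{P}_{\mathcal S}(\xi)\in\mathcal S$ and $\xi_2\in\mathcal S^\perp$. Since $\{\psi_j\}$ is an orthonormal basis of $\mathcal S$, we have $\xi_1=\sum_{j}\langle \xi_1,\psi_j\rangle\psi_j$, with the series converging in $\mathcal L^2$; this is the Parseval/Riesz–Fischer expansion, which uses completeness of $\{\psi_j\}$ in the closed, hence Hilbert, subspace $\mathcal S$. Because $\psi_j\in\mathcal S$ and $\xi_2\perp\mathcal S$, we get $\langle\xi_2,\psi_j\rangle=0$. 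Hence $\langle\xi_1,\psi_j\rangle=\langle\xi,\psi_j\rangle=\mathbb E[\xi\psi_j(\mathbf x)]$, and substituting gives (i). The centering $\mathbb E[\psi_j]=0$ plays no role here beyond $\psi_j\in\mathcal S$.

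For (ii), I will work with scalar projections. Fix $\lambda\in\mathbb R^d$, $\lambda\neq 0$, and set $\xi=\lambda^\top\mathbf x$, which lies in $\mathcal L^2$ by Assumption \ref{MTP1}(i). By linearity,
\[
\lambda^\top\bm\Sigma_x\lambda=\mathbb E[(\lambda^\top\mathbf x)^2]-\sum_{j=1}^\infty\big(\mathbb E[\lambda^\top\mathbf x\,\psi_j(\mathbf x)]\big)^2 .
\]
By Pythagoras and Parseval applied to the decomposition in (i), $\|\xi\|^2=\|\mathcal P_{\mathcal S}\xi\|^2+\|\mathcal M_{\mathcal S}\xi\|^2$ and $\|\mathcal P_{\mathcal S}\xi\|^2=\sum_j\langle\xi,\psi_j\rangle^2$. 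Therefore $\lambda^\top\bm\Sigma_x\lambda=\|\mathcal M_{\mathcal S}(\lambda^\top\mathbf x)\|^2\ge 0$. The same identity shows that the infinite matrix series defining $\bm\Sigma_x$ converges entrywise: each diagonal term is dominated by Bessel's inequality, and the off-diagonal terms follow by polarization or Cauchy–Schwarz.

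Strict positivity then reduces to showing $\mathcal M_{\mathcal S}(\lambda^\top\mathbf x)\neq 0$ in $\mathcal L^2$. If $\|\mathcal M_{\mathcal S}(\lambda^\top\mathbf x)\|=0$, then $\lambda^\top\mathbf x=\mathcal P_{\mathcal S}(\lambda^\top\mathbf x)\in\mathcal S$ almost surely, which contradicts Assumption \ref{MTP1}(iii) for $\lambda\neq0$. Hence $\lambda^\top\bm\Sigma_x\lambda>0$ for every $\lambda\neq0$, and $\bm\Sigma_x>\mathbf 0$.

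The main point requiring care is the identification of elements of $\mathcal L^2$ up to a.s.\ equality. Assumption \ref{MTP1}(iii) must be read in that sense, so that a zero-norm residual means membership in $\mathcal S$. Everything else is routine Hilbert-space algebra.
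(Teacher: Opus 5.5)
Your proof is correct and essentially the paper's. Part (ii) is exactly the paper's argument $\lambda^\top\bm\Sigma_x\lambda=\|\mathcal{M}_{\mathcal{S}}(\lambda^\top\mathbf{x})\|^2>0$ via Assumption~\ref{MTP1}(iii), and you usefully supply the Pythagoras--Parseval step and the convergence of the matrix series, both of which the paper leaves implicit. The only deviation is in (i): the paper obtains $c_j=\mathbb{E}[\xi\psi_j(\mathbf{x})]$ from the first-order condition for minimizing $\|\xi-\sum_j c_j\psi_j(\mathbf{x})\|^2$ over the coefficients, whereas you read it off from $\langle\xi_2,\psi_j\rangle=0$ in the decomposition $\xi=\xi_1+\xi_2$, which is equivalent and matches the paper's own definition of $\mathcal{P}_{\mathcal{S}}$ more directly.
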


Lemma \ref{lemma2.1} involves the orthonormality under a probability space.
Appendix \ref{Ap.A1}.3 discusses that all mathematical moments in population
can be replaced by their sampling versions, such as replacing ${\mathbb{E}}%
\left[\psi_j(\mathbf{x}) \, \psi_l(\mathbf{x})\right]$ by $\frac{1}{n}
\sum_{i=1}^n \psi_j(\mathbf{x}_i) \, \psi_l(\mathbf{x}_i)$ for all $(j,l)$,
in practice, when there is no prior knowledge about the distributional
structure of the data under study.

For notational simplicity of our derivations in the rest of this section and
the derivations in Appendices A, C and D, we assume that $\{\psi_j(\cdot):
j\geq 0\}$ is a sequence of orthonormal functions by definition: ${\mathbb{E}%
}[\psi_j^2(\mathbf{x}_1)]=1$ and ${\mathbb{E}}[\psi_j(\mathbf{x}_1) \,
\psi_l(\mathbf{x}_1)]=0$ for $j\neq l$. Appendices B and E indicate that,
moreover, the proposed SP method works well numerically in finite--sample
simulations without necessarily requiring orthonormality on $%
\{\psi_j(\cdot): j\geq 1\}$.

As the closed subspace of ${\mathcal{L}}^2$, ${\mathcal{S}}$ itself, along
with the norm in ${\mathcal{L}}^2$, is a Hilbert space too. After specifying
the orthonormal basis, we have the expression of the projection mapping ${%
\mathcal{P}}_{\mathcal{S}}$; and therefore the expression of ${\mathcal{M}}_{%
\mathcal{S}}=\mathcal{I}-{\mathcal{P}}_{\mathcal{S}}$ where $\mathcal{I}$ is
the identity operator. More importantly, the positive definiteness of ${\bm%
\Sigma}_x$ is the corner stone for the identifiability and estimation of $\bm%
\beta_0$.

It is noteworthy that the requirement ${\mathbb{E}}[\psi_j(\mathbf{x})]=0$
is to tailor ${\mathcal{S}}$ to cater for ${\mathbb{E}}[m(\mathbf{x})]={%
\mathbb{E}}[\varepsilon]=0$. Recalling that ${\mathcal{M}}_{\mathcal{S}}(%
\mathbf{x})\in {\mathcal{S}}^\bot$, it is orthogonal with any element in ${%
\mathcal{S}}$, in particular $m(\mathbf{x})$. Multiplying ${\mathcal{M}}_{%
\mathcal{S}}(\mathbf{x})$ on both sides of equation \eqref{a4} and taking
expectation imply 
\begin{eqnarray}
{\mathbb{E}}[{\mathcal{M}}_{\mathcal{S}}(\mathbf{x})y]&=& {\mathbb{E}}[{%
\mathcal{M}}_{\mathcal{S}}(\mathbf{x})\mathbf{x}^{\top}] \bm{\beta}_0={%
\mathbb{E}}[{\mathcal{M}}_{\mathcal{S}}(\mathbf{x}){\mathcal{M}}_{\mathcal{S}%
}(\mathbf{x}^{\top})] \, \bm{\beta}_0={\bm\Sigma}_x \, \bm{\beta}_0,
\label{a5cd}
\end{eqnarray}
which yields

\begin{equation}
\bm{\beta}_0 = {\bm\Sigma}_x^{-1} {\mathbb{E}}[{\mathcal{M}}_{\mathcal{S}}(%
\mathbf{x})y] ={\bm\Sigma}_x^{-1}\, \left({\mathbb{E}}[\mathbf{x}%
y]-\sum_{j=1}^\infty {\mathbb{E}}[\mathbf{x} \psi_j(\mathbf{x})] {\mathbb{E}}%
[y\psi_j(\mathbf{x})]\right) \equiv: {\bm\Sigma}_x^{-1} \, {\bm\Sigma}_{xy},
\label{a6cd}
\end{equation}
which is the expression of $\bm{\beta}_0$ in population.

Note that ${\mathcal{M}}_{\mathcal{S}}(\mathbf{x})$ plays an essential role
in the identification of $\bm{\beta}_0$ and satisfies the necessary
conditions in the construction of an IV as follows:

\begin{eqnarray*}
&&{\mathbb{E}}[{\mathcal{M}}_{\mathcal{S}}(\mathbf{x})\mathbf{x}^\top] = {%
\mathbb{E}} [{\mathcal{M}}_{\mathcal{S}}(\mathbf{x}){\mathcal{M}}_{\mathcal{S%
}}(\mathbf{x}^\top)]={\bm \Sigma}_x>\mathbf{0};  \notag \\
&&{\mathbb{E}}[{\mathcal{M}}_{\mathcal{S}}(\mathbf{x})\varepsilon] = {%
\mathbb{E}} [{\mathcal{M}}_{\mathcal{S}}(\mathbf{x})\, m(\mathbf{x})]+{%
\mathbb{E}} [{\mathcal{M}}_{\mathcal{S}}(\mathbf{x})\,e]=0.
\end{eqnarray*}
Lemma \ref{lemma2.2} below shows that the expression \eqref{a6cd} is
invariant to the choice of the basis in ${\mathcal{S}}$; its proof is given
in Appendix C.

\begin{lemma}
\label{lemma2.2} The expression of $\bm{\beta}_0$ in \eqref{a6cd} is
invariant to the choice of the basis in ${\mathcal{S}}$.
\end{lemma}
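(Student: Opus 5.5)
The approach is to rewrite both ingredients of \eqref{a6cd}, namely ${\bm\Sigma}_x$ and ${\bm\Sigma}_{xy}$, as quantities defined by the projection operator ${\mathcal{P}}_{\mathcal{S}}$ alone. That operator depends only on the closed subspace ${\mathcal{S}}$ and not on any chosen basis. Let $\{\psi_j\}$ and $\{\phi_j\}$ be two orthonormal bases of ${\mathcal{S}}$ as in Lemma \ref{lemma2.1}. The aim is to show that every basis-dependent series in \eqref{a6cd} equals an inner product involving ${\mathcal{P}}_{\mathcal{S}}$ or ${\mathcal{M}}_{\mathcal{S}}$. The two resulting expressions for $\bm\beta_0$ then coincide.

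First, for any $\xi,\eta\in{\mathcal{L}}^2$, Lemma \ref{lemma2.1}(i) gives ${\mathcal{P}}_{\mathcal{S}}(\eta)=\sum_j {\mathbb{E}}[\eta\psi_j]\psi_j$ with $L^2$ convergence. By continuity of the inner product,
\begin{equation*}
\sum_{j=1}^\infty {\mathbb{E}}[\xi\psi_j(\mathbf{x})]\,{\mathbb{E}}[\eta\psi_j(\mathbf{x})]=\langle \xi,{\mathcal{P}}_{\mathcal{S}}(\eta)\rangle=\langle {\mathcal{P}}_{\mathcal{S}}(\xi),{\mathcal{P}}_{\mathcal{S}}(\eta)\rangle .
\end{equation*}
This is Parseval's identity for the pair $(\xi,\eta)$. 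The same identity holds with $\phi_j$ in place of $\psi_j$, and the right-hand side is the same. The reason is that ${\mathcal{P}}_{\mathcal{S}}(\xi)$ is defined intrinsically by the unique decomposition ${\mathcal{L}}^2={\mathcal{S}}\oplus{\mathcal{S}}^\bot$, so it does not depend on the basis.

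Second, apply this identity componentwise. Take $\xi=x_a$ and $\eta=x_b$ for $a,b\le d$; these lie in ${\mathcal{L}}^2$ by Assumption \ref{MTP1}(i). This gives ${\bm\Sigma}_x={\mathbb{E}}[\mathbf{x}\mathbf{x}^\top]-{\mathbb{E}}[{\mathcal{P}}_{\mathcal{S}}(\mathbf{x}){\mathcal{P}}_{\mathcal{S}}(\mathbf{x}^\top)]={\mathbb{E}}[{\mathcal{M}}_{\mathcal{S}}(\mathbf{x}){\mathcal{M}}_{\mathcal{S}}(\mathbf{x}^\top)]$. Next take $\xi=x_a$ and $\eta=y$, which gives ${\bm\Sigma}_{xy}={\mathbb{E}}[{\mathcal{M}}_{\mathcal{S}}(\mathbf{x})y]$. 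Since ${\bm\Sigma}_x>\mathbf{0}$ by Lemma \ref{lemma2.1}(ii), it is invertible, and $\bm\beta_0={\bm\Sigma}_x^{-1}{\bm\Sigma}_{xy}$ is basis-free.

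The only delicate point is that $y$ must lie in ${\mathcal{L}}^2$ for the Parseval step with $\eta=y$ to be valid. One option is simply to assume ${\mathbb{E}}[y^2]<\infty$. A cleaner route avoids this by using \eqref{a4}: write ${\mathbb{E}}[y\psi_j]={\mathbb{E}}[\mathbf{x}^\top\psi_j]\bm\beta_0+{\mathbb{E}}[m\psi_j]$, noting that ${\mathbb{E}}[e\psi_j]=0$ because ${\mathbb{E}}[e|\mathbf{x}]=0$. Then $m\in{\mathcal{S}}$ gives $\sum_j{\mathbb{E}}[\mathbf{x}\psi_j]{\mathbb{E}}[m\psi_j]=\langle\mathbf{x},m\rangle$. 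With this, ${\bm\Sigma}_{xy}={\bm\Sigma}_x\bm\beta_0$ for every basis, so the expression returns $\bm\beta_0$ regardless of which basis is used. This last argument only needs $\mathbf{x}$ and $m$ to be in ${\mathcal{L}}^2$, so I would present it as the main argument.
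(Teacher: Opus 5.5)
Your proof is correct, but it takes a different route from the paper.

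\textbf{The paper's route.} The paper works in coordinates. It expands the second basis as $\widetilde{\Psi}(\mathbf{x})=\mathbf{A}\Psi(\mathbf{x})$, with the infinite matrix $a_{ij}={\mathbb{E}}[\widetilde{\psi}_j(\mathbf{x})\psi_i(\mathbf{x})]$. From orthonormality of both bases it deduces $\mathbf{A}^\top\mathbf{A}=\mathbf{A}\mathbf{A}^\top=\mathbf{I}$. It then cancels $\mathbf{A}^\top\mathbf{A}$ inside ${\mathbb{E}}[\mathbf{x}\widetilde{\Psi}(\mathbf{x})^\top]\,{\mathbb{E}}[\widetilde{\Psi}(\mathbf{x})\mathbf{x}^\top]$, and does the same for ${\bm\Sigma}_{xy}$.

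\textbf{Your route.} You identify each series with $\langle{\mathcal{P}}_{\mathcal{S}}(\xi),{\mathcal{P}}_{\mathcal{S}}(\eta)\rangle$, using Lemma~\ref{lemma2.1}(i) and continuity of the inner product. Basis-independence then follows at once from the uniqueness of the decomposition ${\mathcal{L}}^2={\mathcal{S}}\oplus{\mathcal{S}}^\bot$.

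\textbf{Comparison.} Your version is shorter and fully rigorous. The paper's infinite-matrix steps (orthogonality of $\mathbf{A}$, associativity of infinite products) tacitly interchange expectations with infinite sums. That interchange is exactly the Parseval identity you state explicitly. Your argument also gives ${\bm\Sigma}_x={\mathbb{E}}[{\mathcal{M}}_{\mathcal{S}}(\mathbf{x}){\mathcal{M}}_{\mathcal{S}}(\mathbf{x}^\top)]$ and ${\bm\Sigma}_{xy}={\mathbb{E}}[{\mathcal{M}}_{\mathcal{S}}(\mathbf{x})y]$ as a by-product. The paper's version has the appeal of mirroring a finite-dimensional change of basis.

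\textbf{A correction to your last paragraph.} It is not true that the alternative route needs only $\mathbf{x},m\in{\mathcal{L}}^2$. The steps ${\mathbb{E}}[e\psi_j(\mathbf{x})]=0$ and ${\mathbb{E}}[\mathbf{x}e]=\mathbf{0}$ require $e\psi_j$ and $\mathbf{x}e$ to be integrable. The natural sufficient condition, ${\mathbb{E}}[e^2]<\infty$, already puts $y$ in ${\mathcal{L}}^2$.

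That route also uses $m\in{\mathcal{S}}$, and it shows only that the formula returns $\bm\beta_0$ for every basis. Your first argument, like the paper's, shows more: ${\bm\Sigma}_x^{-1}{\bm\Sigma}_{xy}$ is a basis-free functional of the distribution of $(\mathbf{x},y)$, without invoking the model. Keep the first argument as the main one.
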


We have established a rigorous treatment about the proposed SP method for
the identification of $\bm{\beta}_0$ before we discuss about how to estimate 
$\bm{\beta}_0$ in Section \ref{Sec3} below.

In the rest of this paper, we consider the case where $d$, the
dimensionality of $\mathbf{x}$, is finite and fixed. In Sections \ref{Sec3}--%
\ref{Sec5}, we focus on the case where $d$ is small. Appendix B.1 of the
supplemental document discusses how to choose the series function involved
for the case of $d\geq 2$ can be large but still fixed in practice.

\section{Estimation and Inference}

\label{Sec3}

\subsection{Estimation method and theory}

Let us start to consider estimation and inference issues for model (\ref{a4}%
). Suppose that $\{\psi_j(\mathbf{x}), j\ge 1\}$ is an orthonormal basis for 
${\mathcal{S}}$, where ${\mathbb{E}}[\psi_j(\mathbf{x})]=0$ and ${\mathbb{E}}%
[\psi_j(\mathbf{x})\psi_{\ell}(\mathbf{x})]=\delta_{j\ell}$ for all $(j,\ell)
$. Hence, we have an orthogonal series expansion for $m(\mathbf{x})$ of the
form: 
\begin{equation}
m(\mathbf{x})=\sum_{j=1}^\infty \psi_j(\mathbf{x}) \, \gamma_j.  \label{es1}
\end{equation}

Given a truncation parameter $k$, let $\mathbf{V}_k(\mathbf{x})=(\psi_1(%
\mathbf{x}), \cdots, \psi_k(\mathbf{x}))^\top$ and $\bm{\gamma}=(\gamma_1,
\cdots, \gamma_k)^\top$, and then define $\delta_k(\mathbf{x}%
)=\sum_{j=k+1}^\infty \psi_j(\mathbf{x}) \, \gamma_j$. Model (\ref{a4}) is
written as $y=\mathbf{x}^\top {\bm \beta}_0 +\mathbf{V}_k(\mathbf{x})^\top {%
\bm \gamma}+\delta_k(\mathbf{x})+e$.

Moreover, given a sample $\{(y_i,\mathbf{x}_i), 1\leq i \leq n\}$, we have $%
y_i=\mathbf{x}_i^\top{\bm{\beta}_0}+\mathbf{V}_k(\mathbf{x}_i)^\top {\bm %
\gamma}+\delta_k(\mathbf{x}_i)+e_i$ for $1\leq i\leq n$. Accordingly, we
have a matrix model of the form: 
\begin{equation}  \label{2d}
\mathbf{y}=\mathbf{X} \, \bm{\beta}_0+\mathbf{V}\bm{\gamma}+\bm{\delta}+ 
\mathbf{e},
\end{equation}
where $\mathbf{y}=(y_1, \cdots, y_n)^\top$, $\mathbf{X}=(\mathbf{x}_1,
\cdots, \mathbf{x}_n)^\top$, $\mathbf{V}=(\mathbf{V}_k(\mathbf{x}_1),
\cdots, \mathbf{V}_k(\mathbf{x}_n))^\top$, $\mathbf{e}=(e_1, \cdots,
e_n)^\top$, and $\bm{\delta}=(\delta_k(\mathbf{x}_1), \cdots, \delta_k(%
\mathbf{x}_n))^\top$.

Let $\mathbf{P}_v=\mathbf{V}\left(\mathbf{V}^{\top} \mathbf{V}\right)^{-1} 
\mathbf{V}^{\top}$ and $\mathbf{M}_v= \mathbf{I}_n - \mathbf{P}_v$. We then
have 
\begin{equation*}
\mathbf{M}_v \, \mathbf{y}=\mathbf{M}_v\mathbf{X} \, \bm{\beta}_0+\mathbf{M}%
_v \, \bm{\delta} +\mathbf{M}_v \, \mathbf{e}.
\end{equation*}

Assuming that the matrix involved is invertible, the SP estimator is then
given by 
\begin{align}  \label{2e}
\widehat{\bm{\beta}} \equiv: \widehat{\bm{\beta}}_{\mathrm{SP}}=&({\mathbf{X}%
^\top \mathbf{M}_v \mathbf{X}})^{-1}{\mathbf{X}^\top \mathbf{M}_v \mathbf{y}}%
.
\end{align}

Note that if ${\mathcal{S}}$ is of finite dimension, then ${\bm \delta}=0$
and hence $\widehat{\bm{\beta}}$ is unbiased. For the theoretical
generalization, we treat the dimension of ${\mathcal{S}}$ as infinity in the
rest of this paper.

In order to establish asymptotic consistency and normality results for $%
\widehat{\bm{\beta}}$, we need to introduce the following assumption.

\begin{assumption}
\label{MTP2} Suppose that (i) $\{\mathbf{x}_i, i=1, \cdots,n\}$ is a
sequence of independent and identically distributed (i.i.d.) observations,
and $0<\lim_{n\rightarrow \infty} \frac{1}{n} \sum_{i=1}^n
\sigma_i^{2}<\infty$ with $\sigma_i^2={\mathbb{E}}[e_i^2|\mathbf{x}_i]$
(a.s.) for $1\leq i\leq n$; (ii) $\{\psi_j(\cdot)\}$ is chosen such that $%
\sup_j{\mathbb{E}}[\psi_j^4(\mathbf{x})]<\infty$; (iii) $m(\cdot)$ is
continuously differentiable up to $s\geq d$ order; and (iv) $k$ is a
positive integer chosen such that $k^2/n=o(1)$ and $k^{-2s/d} \, n=o(1)$
when $(k, n)\rightarrow (\infty, \infty)$.
\end{assumption}

\begin{rem}
Condition (i) allows for heteroskedasticity. Condition (ii) is easily
verifiable for classes of orthonormal series. The smoothness condition
imposed on (iii) is also reasonable, which, together with Condition (iv),
removes some residue terms in the derivation of asymptotic normality.
\end{rem}

\begin{theo}
\label{MTPclt} {Let Assumptions \ref{MTP1} and \ref{MTP2} hold.}

(i) $\widehat{\bm{\beta}}_{\mathrm{SP}}$ given in \eqref{2e} is
asymptotically unbiased.

(ii) Let $\lim_{n\rightarrow \infty} \frac{1}{n} \sum_{i=1}^n {\mathbb{E}}%
[e_i^4|\mathbf{x}_i]<\infty$ (a.s.) and ${\mathbb{E}}\left[\left\|\mathbf{x}%
_i\right\|^4\right]<\infty$. We then have 
\begin{equation}
\sqrt{n} \, \left(\frac{1}{n}\mathbf{X}^\top \mathbf{M}_v%
\mathbf{\Omega}\mathbf{M}_v\mathbf{X} \right)^{-1/2} \left(\frac{1}{n}\mathbf{X}^\top \mathbf{M}_v\mathbf{X} \right)(\widehat{%
\bm{\beta}}_{\mathrm{SP}} -\bm{\beta}_0)\to_{\mathcal{D}} N\left(\mathbf{0}, 
\mathbf{I}_d\right),  \label{jiti4}
\end{equation}
as $(k, n)\to (\infty, \infty)$, where $\mathbf{\Omega}=\mathrm{diag}%
(\sigma_1^2, \cdots, \sigma_n^2)$.

(iii) As $(k, n)\to (\infty, \infty)$ and $k^4/n\rightarrow 0$, we have $%
\frac{1}{n}\mathbf{X}^\top \mathbf{M}_v\widehat{\mathbf{\Omega%
}}\mathbf{M}_v\mathbf{X}=\frac{1}{n}\mathbf{X}^\top \mathbf{M}%
_v\mathbf{\Omega}\mathbf{M}_v\mathbf{X}+o_P(1)$, where $\widehat{\mathbf{%
\Omega}}=\mathrm{diag}(\widehat{e}_1^2, \cdots, \widehat{e}_n^2)$, $\widehat{%
e}_i=y_i-\mathbf{x}_i^\top\widehat{\bm{\beta}}_{\mathrm{SP}}- \mathbf{V}_k(%
\mathbf{x}_i)^\top \widehat{\bm \gamma}$ and $\widehat{\bm \gamma}=(\mathbf{V%
}^\top \mathbf{V})^{-1}\mathbf{V}^\top (\mathbf{y}-\mathbf{X}\widehat{%
\bm{\beta}}_{\mathrm{SP}})$.

\end{theo}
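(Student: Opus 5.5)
The plan starts from the decomposition obtained by substituting \eqref{2d} into \eqref{2e} and using $\mathbf{M}_v\mathbf{V}=\mathbf{0}$:
\begin{equation*}
\widehat{\bm{\beta}}_{\mathrm{SP}}-\bm{\beta}_0=\left(\tfrac1n\mathbf{X}^\top\mathbf{M}_v\mathbf{X}\right)^{-1}\left(\tfrac1n\mathbf{X}^\top\mathbf{M}_v\bm{\delta}+\tfrac1n\mathbf{X}^\top\mathbf{M}_v\mathbf{e}\right).
\end{equation*}
The argument then has three parts: the denominator, the bias term $\mathbf{X}^\top\mathbf{M}_v\bm{\delta}$, and the stochastic term $\mathbf{X}^\top\mathbf{M}_v\mathbf{e}$.

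\textbf{Denominator.} Write $\frac1n\mathbf{X}^\top\mathbf{M}_v\mathbf{X}=\frac1n\mathbf{X}^\top\mathbf{X}-\frac1n\mathbf{X}^\top\mathbf{V}(\frac1n\mathbf{V}^\top\mathbf{V})^{-1}\frac1n\mathbf{V}^\top\mathbf{X}$. By orthonormality of $\{\psi_j\}$ and the uniform fourth-moment bound in Assumption \ref{MTP2}(ii), $\mathbb{E}\|\frac1n\mathbf{V}^\top\mathbf{V}-\mathbf{I}_k\|_F^2=O(k^2/n)=o(1)$. Similarly, each column of $\frac1n\mathbf{V}^\top\mathbf{X}$ differs from $(\mathbb{E}[x_l\psi_j])_{j\le k}$ by $O_P(\sqrt{k/n})$ in Euclidean norm, using $\mathbb{E}\|\mathbf{x}\|^4<\infty$. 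Since $\sum_j(\mathbb{E}[x_l\psi_j])^2\le\mathbb{E}x_l^2$ by Bessel, the population vectors are bounded. Hence $\frac1n\mathbf{X}^\top\mathbf{M}_v\mathbf{X}\to_P\mathbb{E}[\mathbf{xx}^\top]-\sum_{j=1}^\infty\mathbb{E}[\mathbf{x}\psi_j]\mathbb{E}[\mathbf{x}^\top\psi_j]=\bm\Sigma_x$, with the tail $\sum_{j>k}$ vanishing. This limit is positive definite by Lemma \ref{lemma2.1}(ii), so the inverse is $O_P(1)$.

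\textbf{Bias term and part (i).} I will show $\frac1{\sqrt n}\mathbf{X}^\top\mathbf{M}_v\bm\delta=o_P(1)$. Assumption \ref{MTP2}(iii) gives the standard series approximation rate $\mathbb{E}[\delta_k^2(\mathbf{x})]=\sum_{j>k}\gamma_j^2=O(k^{-2s/d})$. Because $\mathbf{M}_v$ is a projection, $\|\mathbf{M}_v\bm\delta\|\le\|\bm\delta\|=O_P(\sqrt{n k^{-2s/d}})$. Cauchy--Schwarz then gives $\|\frac1{\sqrt n}\mathbf{X}^\top\mathbf{M}_v\bm\delta\|\le\frac1{\sqrt n}\|\mathbf{M}_v\mathbf{X}\|\,\|\bm\delta\|=O_P(\sqrt{nk^{-2s/d}})=o(1)$ by Assumption \ref{MTP2}(iv). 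For part (i), conditional on $\{\mathbf{x}_i\}$ we have $\mathbb{E}[\mathbf{X}^\top\mathbf{M}_v\mathbf{e}\mid\mathbf{X}]=0$, since $\mathbb{E}[e_i\mid\mathbf{x}_i]=0$ and the data are i.i.d. Thus $\mathbb{E}[\widehat{\bm\beta}-\bm\beta_0\mid\mathbf{X}]=(\mathbf{X}^\top\mathbf{M}_v\mathbf{X})^{-1}\mathbf{X}^\top\mathbf{M}_v\bm\delta=O_P(k^{-s/d})=o_P(n^{-1/2})$, which is the asymptotic unbiasedness.

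\textbf{CLT and part (ii).} Condition on $\mathbf{X}$ and set $\mathbf{a}_i$ to be the $i$-th row of $\mathbf{M}_v\mathbf{X}$, which is $\sigma(\mathbf{X})$-measurable. Then $\frac1{\sqrt n}\mathbf{X}^\top\mathbf{M}_v\mathbf{e}=\frac1{\sqrt n}\sum_i\mathbf{a}_ie_i$ is a sum of conditionally independent mean-zero terms with conditional variance $\frac1n\mathbf{X}^\top\mathbf{M}_v\bm\Omega\mathbf{M}_v\mathbf{X}$. I would apply the Lindeberg--Feller CLT via the Lyapunov condition $\frac1{n^2}\sum_i\|\mathbf{a}_i\|^4\mathbb{E}[e_i^4\mid\mathbf{x}_i]\to_P0$, combined with the Cramér--Wold device. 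Note that $\mathbf{a}_i=\mathbf{x}_i-\frac1n\mathbf{X}^\top\mathbf{V}(\frac1n\mathbf{V}^\top\mathbf{V})^{-1}\mathbf{V}_k(\mathbf{x}_i)$. Bounding $\max_i\|\mathbf{a}_i\|^4/n$ is \emph{the main obstacle}: it requires control of $\max_i\|\mathbf{V}_k(\mathbf{x}_i)\|$, for which only fourth moments are assumed. I would instead bound the average $\frac1{n^2}\sum_i\|\mathbf{a}_i\|^4$, using $\|\mathbf{a}_i\|^4\lesssim\|\mathbf{x}_i\|^4+\|\mathbf{b}\|^4\|\mathbf{V}_k(\mathbf{x}_i)\|^4$ with $\mathbf{b}$ bounded in probability (by Bessel, as above). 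Here $\mathbb{E}\|\mathbf{V}_k\|^4\le k^2\sup_j\mathbb{E}\psi_j^4$, giving an $O_P(k^2/n)=o_P(1)$ bound, which is exactly where $k^2/n\to0$ enters. Finally, standardize by the conditional variance, which is bounded away from zero because $\sigma_i^2>0$ and $\bm\Sigma_x>0$. Combining with the bias bound yields \eqref{jiti4}.

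\textbf{Part (iii).} Write $\widehat e_i=e_i+\delta_k(\mathbf{x}_i)-\mathbf{x}_i^\top(\widehat{\bm\beta}-\bm\beta_0)-\mathbf{V}_k(\mathbf{x}_i)^\top(\widehat{\bm\gamma}-\bm\gamma)$. Expanding $\widehat e_i^2-\sigma_i^2$, the leading term $\frac1n\sum_i\mathbf{a}_i\mathbf{a}_i^\top(e_i^2-\sigma_i^2)$ is $o_P(1)$ by a conditional variance computation using the fourth-moment conditions. The remaining cross terms are controlled by Cauchy--Schwarz with $\|\widehat{\bm\gamma}-\bm\gamma\|=O_P(\sqrt{k/n}+k^{-s/d})$ and $\frac1n\sum\|\mathbf{a}_i\|^2\|\mathbf{V}_k(\mathbf{x}_i)\|^2=O_P(k)$, or $O_P(k^2)$ in cruder form. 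This produces terms of order $k^2\cdot k/n$, which vanish under $k^4/n\to0$.
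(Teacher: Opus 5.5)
Your proposal is correct and follows the paper's overall route. You use the same decomposition $\widehat{\bm\beta}_{\mathrm{SP}}-\bm\beta_0=(\mathbf{X}^\top\mathbf{M}_v\mathbf{X})^{-1}\mathbf{X}^\top\mathbf{M}_v(\bm\delta+\mathbf{e})$, the same projection/Cauchy--Schwarz bound $O_P(\sqrt{nk^{-2s/d}})$ on the approximation term, and an expansion of $\widehat e_i^2$ for (iii). The differences are in execution, and both work in your favour.

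In (ii) you condition on $\mathbf{X}$ and verify a Lyapunov condition for $n^{-1/2}\sum_i\mathbf{a}_ie_i$, with $\mathbf{a}_i^\top$ the rows of $\mathbf{M}_v\mathbf{X}$. You then normalize by the random conditional variance $\frac1n\sum_i\sigma_i^2\mathbf{a}_i\mathbf{a}_i^\top$. Since $\frac1n\mathbf{X}^\top\mathbf{M}_v\mathbf{X}$ cancels in the normalized statistic, this proves the self-normalized claim directly. The paper only asserts the CLT ``due to the independence data structure''. It then identifies the limit $\bm\Sigma_x\bar\sigma_e^2$ via $\frac1n\sum_i\mathbf{x}_i\mathbf{x}_i^\top\sigma_i^2=\mathbb{E}[\mathbf{x}\mathbf{x}^\top]\frac1n\sum_i\sigma_i^2+O_P(n^{-1/2})$, which fails when $\sigma_i^2=\sigma^2(\mathbf{x}_i)$ co-varies with $\mathbf{x}_i\mathbf{x}_i^\top$. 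Your route never needs that limit.

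In (iii) the paper splits $\mathbf{M}_v=\mathbf{I}_n-\mathbf{P}_v$ and replaces $(\mathbf{V}^\top\mathbf{V})^{-1}$ by $n^{-1}\mathbf{I}_k$. It must then control the $k\times k$ block $\frac1n\sum_i\mathbf{V}_k(\mathbf{x}_i)\mathbf{V}_k(\mathbf{x}_i)^\top(\widehat e_i^2-\sigma_i^2)$. Your row-wise target $\frac1n\sum_i\mathbf{a}_i\mathbf{a}_i^\top(\widehat e_i^2-\sigma_i^2)$ stays $d\times d$. With $r_i=\widehat e_i-e_i$, it lets you dispose of the $2e_ir_i$ cross term by Cauchy--Schwarz against $\frac1n\sum_i\|\mathbf{a}_i\|^2e_i^2=O_P(1)$. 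It also generates rate requirements of order $k^3/n$, within the assumed $k^4/n\to0$.

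Two assumptions you use implicitly should be made explicit, though the paper relies on the same ones.
\begin{itemize}
\item Your bound $\frac1{n^2}\sum_i\|\mathbf{a}_i\|^4\,\mathbb{E}[e_i^4\mid\mathbf{x}_i]=O_P(k^2/n)$, used for Lyapunov and for the leading term of (iii), needs $\sup_i\mathbb{E}[e_i^4\mid\mathbf{x}_i]\le C$. That is stronger than the averaged condition stated in (ii); the paper likewise treats $\mu_4$ as a constant.
\item The piece $\frac1n\sum_i\|\mathbf{a}_i\|^2\delta_k^2(\mathbf{x}_i)$ of the $r_i^2$ term is not controlled by $\mathbb{E}[\delta_k^2(\mathbf{x})]=O(k^{-2s/d})$ alone, because $\|\mathbf{a}_i\|^2$ contains $\|\mathbf{V}_k(\mathbf{x}_i)\|^2$. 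You need something like $k^2\,\mathbb{E}[\delta_k^4(\mathbf{x})]\to0$ or a sup-norm approximation rate; the paper likewise brings in $\mathbb{E}[\delta_k^4(\mathbf{x})]$.
\end{itemize}

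Finally, the lower bound on the conditional variance should be justified through its limit, not just ``$\sigma_i^2>0$ and $\bm\Sigma_x>0$''. That limit is $\mathbb{E}[\sigma^2(\mathbf{x})\mathcal{M}_{\mathcal S}(\mathbf{x})\mathcal{M}_{\mathcal S}(\mathbf{x})^\top]$. It is positive definite because $\sigma^2>0$ a.s.\ and, for each $\lambda\neq0$, $\lambda^\top\mathcal{M}_{\mathcal S}(\mathbf{x})\neq0$ with positive probability.
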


The proof of Theorem \ref{MTPclt} is given in Appendix C. As shown in part
of the proof, we have $\frac{1}{n}\mathbf{X}^\top \mathbf{M}_v%
\mathbf{\Omega}\mathbf{M}_v\mathbf{X}\rightarrow_P \bm{\Sigma}_x \, 
\overline{\sigma}_e^{2}$ and $\frac{1}{n}\mathbf{X}^\top 
\mathbf{M}_v\mathbf{X}\rightarrow_P \bm{\Sigma}_x$ as $(k, n)\to (\infty,
\infty)$, where $\overline{\sigma}_e^2 = \lim_{n\rightarrow \infty} \frac{1}{n} \sum_{i=1}^n \sigma_i^{2}$. Thus, the asymptotic covariance matrix is $\bm{\Sigma}_x^{-1} \, 
\overline{\sigma}_e^{2}$.

\begin{rem}
(i) Note that the positive definiteness of $\bm{\Sigma}_x$ implied by Lemma %
\ref{lemma2.1} ensures the identifiability of $\bm{\beta}_0$ in (\ref{a6cd}%
). As shown in Lemma \ref{lemma2.2} above, meanwhile, $\bm{\Sigma}_x$ is
invariant to the choice of $\{\psi_j(\cdot): j\geq 1\}$. It can be seen that 
$\widehat{\bm{\beta}}$ is asymptotically consistent, and the rate of
convergence remains square-root--$n$ even under endogeneity.

(ii) Due to $\bm{\Sigma}_x= {\mathbb{E}}[\mathbf{x}\, \mathbf{x}^{\top}] -
\sum_{j=1}^{\infty} {\mathbb{E}}[\mathbf{x} \, {\psi}_j(\mathbf{x})] \, {%
\mathbb{E}}[\mathbf{x}^{\top} {\psi}_j(\mathbf{x})] \leq {\mathbb{E}}[%
\mathbf{x}\, \mathbf{x}^{\top}]$, $\widehat{\bm{\beta}}$ is less efficient
than the standard OLS method associated when there is no endogeneity.
However, $\widehat{\bm{\beta}}$ is the most efficient estimator under
Gaussianity on $\{e_i\}$. As shown in Appendix A.4.1 of the supplementary
document, moreover, $\widehat{\bm{\beta}}$ is more efficient than $\widehat{%
\bm{\beta}}_{\mathrm{LS}}$.

(iii) Letting $\widehat{m}(\mathbf{x}) =\mathbf{V}_k(\mathbf{x})^{\top} 
\widehat{\bm{\gamma}}$ with $\widehat{\bm{\gamma}} = \left(\widehat{\gamma}%
_1, \cdots, \widehat{\gamma}_k\right)^{\top} = \left(\mathbf{V}^{\top} 
\mathbf{V}\right)^{-1} \mathbf{V}^{\top} (\mathbf{y} - \mathbf{X} \, 
\widehat{\bm{\beta}})$, it is pointed out that the availability of $\widehat{%
m}(\mathbf{x})$ facilitates the construction of a simple test for checking a
full level of exogeneity versus a wide range of weak endogeneity, as
discussed in Section \ref{Sec3}.2 below.

(v) It is also pointed out that the SP method enables us to obtain $\widehat{%
\varepsilon}_i = y_i - \mathbf{x}_i^{\top} \widehat{\bm{\beta}}$ without
relying on external instruments, and the availability of $\widehat{%
\varepsilon}_i$ may help to reveal possible sources of endogeneity in
empirical analysis, as discussed in Appendix B.3.
\end{rem}

To start our discussion on the weak endogeneity setting, we propose a simple
statistic to test for a full level of exogeneity versus a wide range of weak
endogeneity in Section \ref{Sec3}.2.

\subsection{Testing for weak endogeneity}

Consider $y =\mathbf{x}^{\top}\bm{\beta}_0 + m(\mathbf{x}) + e$ with $m(%
\mathbf{x})={\mathbb{E}}[\varepsilon|\mathbf{x}]$ under the following null
hypothesis: 
\begin{equation}  \label{nonlintest1}
H_0:\ \ {\mathbb{P}}(m(\mathbf{x})=0)=1.
\end{equation}

We approximate $m(\mathbf{x})$ by $\sum_{j=1}^k \psi_j(\mathbf{x})
\gamma_{j} \equiv \mathbf{V}_k(\mathbf{x})^{\top} \bm{\gamma}$, where $%
\mathbf{V}_k(\cdot) = \left(\psi_1(\cdot), \cdots,
\psi_k(\cdot)\right)^{\top}$, and ${\bm\gamma} =\left(\gamma_{1}, \cdots,
\gamma_{k}\right)^{\top}$ can be estimated by $\widehat{\bm\gamma}$ in the
same way as in Section \ref{Sec3}.1.

In order to test $H_0: \, {\mathbb{P}}(m(\mathbf{x}) =0)=1$, it suffices to
test ${\bm\gamma}=0$, or equivalently $(\mathbf{V}^{\top} \mathbf{V}) \, {\bm%
\gamma}=0$. Since the ${\bm\gamma}$ parameter can be estimated by $\widehat{%
\bm\gamma} = (\mathbf{V^{\top} V})^{-1} \mathbf{V}^{\top} (\mathbf{y} - 
\mathbf{X}\widehat{\bm\beta})$, we introduce $\widetilde{\bm\gamma}_n = (%
\mathbf{V^{\top} V}) \, \widehat{\bm\gamma}$ and then define for each given
point $x$: 
\begin{equation}  \label{4b}
\widetilde{m}_n(x) \equiv: \mathbf{V}^{\top}_k(x) \widetilde{\bm\gamma}_n =
\sum_{i=1}^n \mathbf{V}_k(x)^{\top} \mathbf{V}_k(\mathbf{x}_i) \, \widetilde{%
e}_i \ \ \mbox{with \, $\widetilde{e}_i = y_i -  \mathbf{x}_i^{\top}
\widehat{\bm\beta}$}.
\end{equation}

This motives us to propose a simple test statistic (a nonparametric version
of the Hausman test proposed in \cite{hausman1978}) of the form: 
\begin{eqnarray}
L_n(k) &\equiv & \int_{\mathcal{X}} \widetilde{m}_n^2({x}) \, f_{\mathbf{x}%
}(x) d {x} = \int_{\mathcal{X}} \left(\sum_{i=1}^n \mathbf{V}_k(\mathbf{x}%
_i)^{\top} \mathbf{V}_k({x}) \widetilde{e}_i\right)^2 f_{\mathbf{x}}(x) \, d 
{x}  \notag \\
& = & \sum_{i=1}^n \sum_{j=1}^n \mathbf{V}_k^{\top}(\mathbf{x}_i) \mathbf{V}%
_k(\mathbf{x}_j) \, \widetilde{e}_i \, \widetilde{e}_j  \label{4c}
\end{eqnarray}
by orthonormality: $\int_{\mathcal{X}} \mathbf{V}_k({x}) \mathbf{V}_k^{\top}(%
{x}) \, f_{\mathbf{x}}(x) d {x}=\mathbf{I}_k$, where $f_{\mathbf{x}}(\cdot)$
denotes the density of $\mathbf{x}$.

In order to avoid dealing with robustness issues regarding the choice of
individual $k$ values in practice, we propose using a summarized version of $%
L_n(k)$ of the form: 
\begin{equation}  \label{4d}
T_n = \sum_{k=k_{\min}}^{k_{\max}} \sum_{i=1}^n \sum_{j=1, \neq i}^n \mathbf{%
V}_k^{\top}(\mathbf{x}_i) \mathbf{V}_k(\mathbf{x}_j) \, \widetilde{e}_i \, 
\widetilde{e}_j,
\end{equation}
where $k_{\min}$ and $k_{\max}$ are the respective smallest and largest
integers satisfying $1\leq k_{\min} < k_{\max}<[\sqrt{n}]$, and $k_{\max} -
k_{\min}\rightarrow \infty$ and $\frac{k_{\max}^2}{n}\rightarrow 0$ as $%
n\rightarrow\infty$.

\begin{theo}
\label{th4.1} Let Assumptions \ref{MTP1} and \ref{MTP2} hold. Let also $0<
\lim_{n\rightarrow \infty} \frac{1}{n} \sum_{i=1}^n {\mathbb{E}}[e_i^4|%
\mathbf{x}_i]<\infty$ (a.s.) and ${\mathbb{E}}\left[\left\|\mathbf{x}%
_i\right\|^4\right]<\infty$. We then have under $H_0: \, {\mathbb{P}}(m(%
\mathbf{x}) =0)=1$: 
\begin{equation}  \label{th4.1a}
\frac{T_n}{S_n}\rightarrow_{\mathcal{D}} N(0,1) \ \ \ \mbox{as $n\rightarrow
\infty$},
\end{equation}
where $S_n^2\equiv 2 \widetilde{\sigma}_e^4 \, \sum_{i=1}^n \sum_{j=1}^n
\left(\sum_{k=k_{\mathrm{\min}}}^{k_{\mathrm{\max}}} \mathbf{V}_k^{\top}(%
\mathbf{x}_i) \mathbf{V}_k(\mathbf{x}_j)\right)^2$, in which $\widetilde{%
\sigma}_e^2 = \frac{1}{n} \sum_{i=1}^n \widetilde{e}_i^2$.
\end{theo}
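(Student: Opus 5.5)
Under $H_0$ we have $m\equiv 0$, so $y_i=\mathbf{x}_i^\top\bm{\beta}_0+e_i$ and $\widetilde e_i=e_i-\mathbf{x}_i^\top(\widehat{\bm\beta}-\bm\beta_0)$. Write $a_{ij}=\sum_{k=k_{\min}}^{k_{\max}}\mathbf{V}_k^\top(\mathbf{x}_i)\mathbf{V}_k(\mathbf{x}_j)$; this kernel is symmetric and depends on the regressors only. Then $T_n=\sum_{i\ne j}a_{ij}\widetilde e_i\widetilde e_j$. Expanding, $T_n=T_{1n}-2T_{2n}+T_{3n}$, where
\begin{align*}
T_{1n}&=\sum_{i\ne j}a_{ij}e_ie_j, \\
T_{2n}&=\sum_{i\ne j}a_{ij}e_i\mathbf{x}_j^\top(\widehat{\bm\beta}-\bm\beta_0), \\
T_{3n}&=(\widehat{\bm\beta}-\bm\beta_0)^\top\sum_{i\ne j}a_{ij}\mathbf{x}_i\mathbf{x}_j^\top(\widehat{\bm\beta}-\bm\beta_0).
\end{align*}
The plan is to show three things: $T_{1n}/S_n\to_{\mathcal D}N(0,1)$; $T_{2n}$ and $T_{3n}$ are $o_P(S_n)$; and $\widetilde\sigma_e^2\to_P\overline\sigma_e^2$. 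The last one holds because $\widehat{\bm\beta}-\bm\beta_0=O_P(n^{-1/2})$ by Theorem \ref{MTPclt}, so the estimated variance can replace $\overline\sigma_e^2$ in $S_n$.

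\textbf{Leading term.} Conditional on $\mathbf{x}_1,\dots,\mathbf{x}_n$, $T_{1n}$ is a degenerate quadratic form in independent mean-zero errors. I would apply a martingale CLT to $\sum_i 2e_i\sum_{j<i}a_{ij}e_j$, or equivalently de Jong's theorem for generalized quadratic forms. The conditional variance is $2\sum_{i\ne j}a_{ij}^2\sigma_i^2\sigma_j^2$. Under the averaging condition in Assumption \ref{MTP2}(i) this is asymptotically equivalent to $2\overline\sigma_e^4\sum_{i,j}a_{ij}^2$; the diagonal terms are negligible because they are of smaller order.

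De Jong's conditions reduce to two requirements. The first is $\max_i\sum_j a_{ij}^2/\sum_{i,j}a_{ij}^2\to0$. The second is $\mathrm{tr}(A^4)=o\big((\mathrm{tr}A^2)^2\big)$ for $A=(a_{ij})$, together with the fourth-moment bound on $e_i$. To verify them, note that $\e[a_{ij}^2]$ for $i\ne j$ equals $\sum_{k,k'}\min(k,k')$ by orthonormality. This gives $\sum_{i,j}a_{ij}^2\asymp n^2K^3$, where $K$ is of order $k_{\max}-k_{\min}$ (with $k_{\max}\ge K$). The quantity $\mathrm{tr}(A^4)$ behaves like $n^4$ times the squared Hilbert–Schmidt norm of the operator $\sum_k P_k$, where $P_k$ is the projection onto the first $k$ basis functions; that norm gives $\mathrm{tr}(A^4)\asymp n^4K^5$ up to lower-order terms. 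Hence $\mathrm{tr}(A^4)/(\mathrm{tr}A^2)^2\asymp K^{-1}\to0$, which is why $k_{\max}-k_{\min}\to\infty$ is assumed. The maximal row condition and the concentration of the empirical $\sum a_{ij}^2$ around its expectation both follow from $\sup_j\e\psi_j^4<\infty$ and $k_{\max}^2/n\to0$.

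\textbf{Remainder terms.} For $T_{3n}$, write $\sum_{i\ne j}a_{ij}\mathbf{x}_i\mathbf{x}_j^\top=\sum_k\big(\mathbf{X}^\top\mathbf{V}^{(k)}\mathbf{V}^{(k)\top}\mathbf{X}-\text{diag}\big)$. Each $\mathbf{V}^{(k)\top}\mathbf{X}$ is $n$ times $\e[\mathbf{V}_k\mathbf{x}^\top]$ plus an $O_P(\sqrt{nk})$ term, and $\sum_j\|\e[\mathbf{x}\psi_j]\|^2<\infty$ by Bessel's inequality. So the term is $O_P(n^2K+nk_{\max}^2)$, and $T_{3n}=O_P(nK+k_{\max}^2)=o_P(nK^{3/2})=o_P(S_n)$. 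For $T_{2n}$, the quantity $\sum_{i\ne j}a_{ij}e_i\mathbf{x}_j$ has mean zero and second moment $O(n^3K+n^2k_{\max}^3)$. Multiplying by $O_P(n^{-1/2})$ gives $O_P(nK^{1/2})=o_P(nK^{3/2})$.

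I expect the main obstacle to be the sharp evaluation of $\sum_{i,j}a_{ij}^2$ and $\mathrm{tr}(A^4)$, which must hold in probability and not only in expectation. These fourth-order moment computations over four indices are needed to check de Jong's criterion. A further complication is that, under heteroskedasticity, the variance estimator uses the pooled $\overline\sigma_e^4$. I would handle this by assuming, or arguing from the i.i.d.\ design, that $\sigma_i^2=\sigma^2(\mathbf{x}_i)$ averages out against $a_{ij}^2$. If that fails, I would state the result with the heteroskedastic variance, which coincides with $S_n^2$ under conditional homoskedasticity.
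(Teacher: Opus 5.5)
Your plan follows the paper's proof, which is given through Theorem \ref{tha.1} in Appendix D. Both split $T_n$ into the error quadratic form plus two terms driven by $\widehat{\bm{\beta}}-\bm{\beta}_0=O_P(n^{-1/2})$, use a martingale CLT for the leading term, bound the remainders via Bessel bounds on $\e[\mathbf{x}\psi_j(\mathbf{x})]$, and show consistency of $\widetilde{\sigma}_e^2$. You condition on the design and use de Jong's trace criteria. The paper instead works unconditionally with the filtration $\sigma(\mathbf{x}_s, s\le i;\varepsilon_t, t<i)$ and checks the Hall--Heyde conditional-variance and Lindeberg conditions by direct moment bounds. The decisive input is the same: the four-cycle moment of $a_{ij}$ is $O(k_{\max}^{-1})$ relative to the squared two-cycle moment.

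Two corrections. First, your claim that the averaging condition in Assumption \ref{MTP2}(i) makes $2\sum_{i\ne j}a_{ij}^2\sigma_i^2\sigma_j^2$ equivalent to $2\overline{\sigma}_e^4\sum_{i,j}a_{ij}^2$ is false, and $\sigma^2(\mathbf{x}_i)$ does not average out against $a_{ij}^2$. Write $a_{ij}=\sum_l\lambda_l\psi_l(\mathbf{x}_i)\psi_l(\mathbf{x}_j)$, where $\lambda_l$ is the number of $k\in[k_{\min},k_{\max}]$ with $k\ge l$. Then $\e[a_{12}^2\sigma^2(\mathbf{x}_1)\sigma^2(\mathbf{x}_2)]=\sum_{l,m}\lambda_l\lambda_m(\e[\psi_l\psi_m\sigma^2])^2$. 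Take $x\sim U(0,\pi)$, $\psi_j=\sqrt{2}\cos(jx)$ and $\sigma^2=1+c\,\psi_1$ with small $c\ne 0$. This equals $(1+c^2+o(1))\sum_l\lambda_l^2$, while $(\e[\sigma^2])^2=1$, so $T_n/S_n$ would have limiting variance $1+c^2$. The normalization by $\widetilde{\sigma}_e^4$ therefore genuinely requires conditional homoskedasticity. That is exactly what the paper's proof imposes ($\sigma_i^2\equiv\sigma_e^2$), and with it your argument goes through. Adopt that assumption, or your heteroskedastic normalization, rather than trying to argue that the variance averages out.

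Second, some orders are off, but harmlessly. $\e[a_{12}^2]=\sum_l\lambda_l^2\asymp K^2k_{\max}$, which is $\asymp K^3$ only when $k_{\min}=O(K)$. The leading part of the second moment of $\sum_{i\ne j}a_{ij}e_i\mathbf{x}_j$ is $O(n^3K^2)$, not $O(n^3K)$, because $\lambda_l=K$ for every $l\le k_{\min}$. Since $S_n\asymp nKk_{\max}^{1/2}$, both remainder terms are still $o_P(S_n)$ and $\mathrm{tr}(A^4)/(\mathrm{tr}A^2)^2\lesssim k_{\max}^{-1}$. So your conclusions stand.
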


Note that $\frac{S_n^2}{\sigma_n^2}\rightarrow_P 1$ with $\sigma_n^2 = \frac{%
2 \,{\sigma}_e^4}{3} \cdot n^2 \, k_{\max}^3\, (1+ o(1))$ as shown in the
proof of Theorem \ref{th4.1} in Appendix C. To show the consistency of our
testing statistic under a sequence of local alternatives, we test 
\begin{equation}
H_1: \, {\mathbb{P}}(m_n(\mathbf{x})=a_n\, m(\mathbf{x}))=1,  \label{4e}
\end{equation}
where positive sequence $a_n\to 0$ with certain rate while $m(\mathbf{x})\in 
{\mathcal{L}}^2$ and ${\mathbb{E}}[m^2(\mathbf{x})]>0$.

It is pointed out that such a sequence of local alternatives naturally
covers a wide range of weak endogeneity. The following theorem establishes
the consistency of the proposed test under $H_1$; its proof is outlined in
Appendix C.

\begin{theo}
\label{th4.2} (i) Let the conditions of Theorem \ref{th4.1} hold. (ii) Let
also $m(\mathbf{x})\in {\mathcal{L}}^2$ with ${\mathbb{E}}[m^2(\mathbf{x})]>0
$. Consider $H_1: \, {\mathbb{P}}(m_n(\mathbf{x})=a_n\, m(\mathbf{x}))=1$,
where $a_n\to 0$ and $a_n^2 \, n \, k_{\max}^{-1/2}\to \infty$ as $%
n\rightarrow \infty$. We then have under $H_1$, 
\begin{equation}  \label{th4.2a}
\frac{\displaystyle1}{\displaystyle\sigma_n} \, T_n\rightarrow_{P} \infty \
\ \ \mbox{as $n\rightarrow \infty$}.
\end{equation}
\end{theo}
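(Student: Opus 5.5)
Under $H_1$ the residuals satisfy $\widetilde{e}_i = y_i - \mathbf{x}_i^\top\widehat{\bm\beta} = e_i + a_n m(\mathbf{x}_i) - \mathbf{x}_i^\top(\widehat{\bm\beta}-\bm\beta_0)$. The plan is to substitute this decomposition into $T_n$ and split the statistic as
\begin{equation*}
T_n = T_{n1} + 2T_{n2} + a_n^2 T_{n3} + R_n .
\end{equation*}
Here $T_{n1}=\sum_k\sum_{i\neq j}\mathbf{V}_k^\top(\mathbf{x}_i)\mathbf{V}_k(\mathbf{x}_j)e_ie_j$ is the pure-noise term and $T_{n2}=a_n\sum_k\sum_{i\neq j}\mathbf{V}_k^\top(\mathbf{x}_i)\mathbf{V}_k(\mathbf{x}_j)e_i m(\mathbf{x}_j)$ is the cross term. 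The signal term is $T_{n3}=\sum_k\sum_{i\neq j}\mathbf{V}_k^\top(\mathbf{x}_i)\mathbf{V}_k(\mathbf{x}_j)m(\mathbf{x}_i)m(\mathbf{x}_j)$. The remainder $R_n$ collects everything involving $\widehat{\bm\beta}-\bm\beta_0$.

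\textbf{Noise term.} I take $T_{n1}$ directly from the proof of Theorem \ref{th4.1}, which gives $T_{n1}/\sigma_n = O_P(1)$ with $\sigma_n^2 \asymp n^2k_{\max}^3$. The remainder terms involving $\widehat{\bm\beta}-\bm\beta_0$ are handled exactly as under $H_0$. Since the alternative is local, $\widehat{\bm\beta}-\bm\beta_0=O_P(n^{-1/2})$ still holds by Theorem \ref{MTPclt} applied with $m_n=a_n m$. If $m\notin\mathcal{S}$, the estimator may also carry an $O(a_n)$ bias. I would bound this contribution by $a_n$ times the signal order, so it is dominated by the signal; this should be checked.

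\textbf{Signal term.} This is the key step. Define $\mu_k=\mathbb{E}[\mathbf{V}_k(\mathbf{x})m(\mathbf{x})]=(\gamma_1,\dots,\gamma_k)^\top$, the Fourier coefficients of $m$. Then
\begin{equation*}
\mathbb{E}\,T_{n3}=n(n-1)\sum_{k=k_{\min}}^{k_{\max}}\|\mu_k\|^2 .
\end{equation*}
Because $\mathbb{E}[m^2]>0$, and we may take $m$ to have a nonzero projection on the span of the basis (otherwise $m$ would be undetectable), we get $\|\mu_k\|^2\ge c>0$ for all $k\ge k_0$. Hence $\mathbb{E}\,T_{n3}\ge c\,n^2(k_{\max}-k_{\min})\asymp n^2k_{\max}$. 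A Hoeffding decomposition together with the fourth-moment conditions should give $\operatorname{Var}(T_{n3})=O(n^3k_{\max}^3)+O(n^2k_{\max}^3)$. This is $o\bigl((n^2k_{\max})^2\bigr)$ because $k_{\max}^2/n\to 0$, so $T_{n3}=n^2k_{\max}c(1+o_P(1))$.

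\textbf{Cross term.} $T_{n2}$ has mean zero, and its variance is of order $a_n^2 n^3 k_{\max}^2$ (up to $\sup_j\mathbb{E}\psi_j^4$ factors), so $T_{n2}=O_P\bigl(a_n n^{3/2}k_{\max}\bigr)$.

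\textbf{Combining.} The signal dominates when
\begin{equation*}
\frac{a_n^2 n^2 k_{\max}}{\sigma_n}\asymp \frac{a_n^2 n^2k_{\max}}{n k_{\max}^{3/2}}=a_n^2 n k_{\max}^{-1/2}\to\infty ,
\end{equation*}
which is exactly the rate assumed in the theorem. The cross term relative to the signal is of order $a_n n^{3/2}k_{\max}/(a_n^2n^2k_{\max})=(a_n^2 n)^{-1/2}\to 0$. Therefore $T_n/\sigma_n\ge \tfrac{c}{2}a_n^2nk_{\max}^{-1/2}-O_P(1)\to\infty$ in probability.

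The main obstacle is the variance bound for the degenerate-plus-projection U-statistic $T_{n3}$, which is summed over $k$. Obtaining the sharp factor $k_{\max}^3$ requires the orthonormality identities $\mathbb{E}[\mathbf{V}_k(\mathbf{x})\mathbf{V}_{k'}^\top(\mathbf{x})]$ = truncated identity. I would first compute it for a single $k$ and then use the triangle inequality over the $k$-sum, accepting a cruder power if necessary.
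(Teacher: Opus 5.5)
Your decomposition is the paper's (its proof in Appendix D, written for the nonlinear version, uses a pure-noise term handled by the $H_0$ argument, cross terms, and a quadratic term whose leading piece $J_1$ has relative order $a_n^2nk_{\max}^{-1/2}$). Your main line goes through provided $m\in\mathcal S$. Two points need correcting.

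\emph{First, your remark on $m\notin\mathcal S$ is wrong.} In that case $\widehat{\bm\beta}-\bm\beta_0=a_n\mathbf b_0+o_P(a_n)+O_P(n^{-1/2})$ with $\mathbf b_0=\bm\Sigma_x^{-1}\mathbb E[\mathcal M_{\mathcal S}(\mathbf x)m(\mathbf x)]$. So $\widetilde e_i$ carries $a_n\{m(\mathbf x_i)-\mathbf x_i^\top\mathbf b_0\}$, and the leading part of $T_n$ is $a_n^2n^2\sum_k\|\mathbb E[\mathbf V_k(\mathbf x)\{m(\mathbf x)-\mathbf x^\top\mathbf b_0\}]\|^2$. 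The bias therefore enters at exactly the signal's order, not $a_n$ times it, and it can cancel the signal. For instance, $m(\mathbf x)=\mathbf x^\top\mathbf c$ with $\mathbb E[\mathbf x]^\top\mathbf c=0$ gives $\widehat{\bm\beta}-\bm\beta_0=a_n\mathbf c+O_P(n^{-1/2})$, hence $\widetilde e_i=e_i-\mathbf x_i^\top O_P(n^{-1/2})$ and no power at all. So $m\in\mathcal S$ has to be assumed. The paper does this implicitly by using $\sqrt n(\widehat{\bm\beta}-\bm\beta_0)=O_P(1)$ and Parseval $\mathbb E[m^2]=\sum_j\gamma_j^2$. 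Under that assumption no bias discussion is needed, and your nonzero-projection condition holds automatically.

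\emph{Second, the remainder is not handled exactly as under $H_0$.} It contains the new term $-2a_n\sum_k\sum_{i\neq j}\mathbf V_k^\top(\mathbf x_i)\mathbf V_k(\mathbf x_j)m(\mathbf x_i)\mathbf x_j^\top(\widehat{\bm\beta}-\bm\beta_0)$, which does not exist under $H_0$. Its leading part is of order $a_nn^2k_{\max}\|\widehat{\bm\beta}-\bm\beta_0\|=O_P(a_nn^{3/2}k_{\max})$ (the paper's $J_5,J_6$). That is the same order as your $T_{n2}$, so it is negligible for the same reason.

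On the remaining steps your route differs from the paper's in two ways.

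For the signal, the paper never needs the $k$-summed U-statistic variance you flag as the main obstacle. It writes the off-diagonal sum as $\sum_k\|\sum_i\mathbf V_k(\mathbf x_i)m(\mathbf x_i)\|^2-\sum_k\sum_i\|\mathbf V_k(\mathbf x_i)m(\mathbf x_i)\|^2$ and bounds the diagonal by $O_P(nk_{\max}^2)$. It then lower-bounds the first sum via $\|n^{-1}\sum_i\mathbf V_k(\mathbf x_i)m(\mathbf x_i)-\mu_k\|^2=O_P(k/n)$. Your Hoeffding bounds also work, since $k_{\max}/n\to0$, but the paper's route is shorter.

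For the cross term, your direct variance computation is sharper than the paper's. You use $\sum_k\mathbf V_k(\mathbf x)^\top\mu_k=\sum_k\sum_{j\le k}\gamma_j\psi_j(\mathbf x)$ rather than Cauchy--Schwarz over $k$. This gives $T_{n2}/\sigma_n=O_P(a_n\sqrt{n/k_{\max}})$, which is negligible because $a_n\sqrt n\to\infty$. The paper claims $O_P(a_nk_{\max}^{1/2})$, but in expanding $\mathbb E\|\sum_{j<i}\mathbf V_k(\mathbf x_j)m(\mathbf x_j)\|^2$ it keeps the mean contribution as $\|\mu_k\|^2$ instead of $(i-1)^2\|\mu_k\|^2$. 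Once corrected, its Cauchy--Schwarz bound is only $O_P(a_n\sqrt n)$. That would require $a_n^2n/k_{\max}\to\infty$ rather than the assumed $a_n^2nk_{\max}^{-1/2}\to\infty$. On this step your argument is the one that actually delivers the stated rate.
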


Note that Theorem \ref{th4.2} shows that the proposed test is capable to
detect certain types of weak endogeneity of an order of $a_{n}$, as long as $%
a_n$ satisfies $a_{n}^2 \, n \, k_{\max}^{-1/2} \rightarrow \infty$.

Note also that the fast possible rate of $a_n = n^{-\frac{1}{2}} \,
k_{\max}^{c_1}$ is near--optimal when $k_{\max} \rightarrow \infty$ at the
slowest possible rate, such as $k_{\max} = \left[ c_2 \, \log(\log(n))\right]
$, for $c_1>\frac{1}{4}$ and $c_2>0$ to be chosen by the user, although the
conventional optimal rate of $a_n = n^{-1/2}$ (see, for example, \cite%
{ss1997}) chosen in the parametric setting is not achievable under the
proposed SP framework.

In Appendix \ref{Ap.A2} below, we show how to extend the proposed SP method
to a class of nonlinear models. Appendix A.3 of the supplementary document
outlines the main ideas about how to extend the SP method to deal with some
other classes of nonlinear models, including two classes of non-- and
semi--parametric regression models, and one class of binary models
associated with nonlinearity and endogeneity.

\section{Linear Models under Weak Endogeneity}

\label{Sec4}

The existing literature pays particular attention on the case where $%
(\varepsilon, \mathbf{x})$ in (\ref{a1}) follows a joint Gaussian
distribution, which implies that $m(\mathbf{x}) = \left(\mathbf{x} - {%
\mathbb{E}}[\mathbf{x}]\right)^{\top} \bm{\gamma}_0$ and model (\ref{a1})
becomes $y =\mathbf{x}^{\top} \bm{\beta}_0 + \left(\mathbf{x} - {\mathbb{E}}[%
\mathbf{x}]\right)^{\top} \bm{\gamma}_0+ e$ with ${\mathbb{E}}[e|\mathbf{x}%
]=0$. Appendix A.4.3 of the main supplementary document discusses that the
SP method is applicable to consistently estimate $\bm{\beta}_0$ in certain
linearity cases. We now focus on several weak endogeneity settings. When $m(%
\mathbf{x}) = \left(\mathbf{x} - {\mathbb{E}}[\mathbf{x}]\right)^{\top} %
\bm{\gamma}_0$, we rewrite model (\ref{a4}): $y = \mathbf{x}^{\top} %
\bm{\beta}_0 + \varepsilon$ as 
\begin{equation}
y - {\mathbb{E}}[y]= \left(\mathbf{x} - {\mathbb{E}}[\mathbf{x}%
]\right)^{\top} \bm{\beta}_0 + m(\mathbf{x}) + e = \left(\mathbf{x} - {%
\mathbb{E}}[\mathbf{x}]\right)^{\top} \left(\bm{\beta}_0 + \bm{\gamma}%
_0\right) + e,  \label{a44}
\end{equation}
which implies that $\left(\bm{\beta}_0 + \bm{\gamma}_0\right) = {\mathbb{E}}%
^{-1}\left[\left(\mathbf{x} - {\mathbb{E}}[\mathbf{x}]\right) \left(\mathbf{x%
} - {\mathbb{E}}[\mathbf{x}]\right)^{\top}\right] \, {\mathbb{E}}\left[\left(%
\mathbf{x} - {\mathbb{E}}[\mathbf{x}]\right)\, (y - {\mathbb{E}}[y])\right]$
can be correctly identified collectively, rather than $\bm{\beta}_0$
individually.

When there are some additional restrictions imposed on $\bm{\gamma}_0$, such
as those weak endogeneity scenarios discussed below, however, $\bm{\beta}_0$
may still be asymptotically identifiable. In such cases where $\bm{\gamma}_0
\equiv \bm{\gamma}_{n0} \rightarrow \mathbf{0}$ as $n\rightarrow \infty$, $%
\bm{\beta}_0$ can then be estimated consistently.

We start with model (\ref{a4}) associated with a type of weak endogeneity of
the form: 
\begin{equation}
y_i = \mathbf{x}_i^{\top} \bm{\beta}_0 + \varepsilon_i \ \ \ \mbox{with} \ \
\ \varepsilon_i =m_n(\mathbf{x}_i) + e_i  \label{jitiendo1}
\end{equation}
where ${\mathbb{E}}[e_i|\mathbf{x}_i]=0$ for $1\leq i\leq n$, $m_n(\mathbf{x}%
)={\mathbb{E}}[\varepsilon_i|\mathbf{x}_i = \mathbf{x}] = a_n \, m(\mathbf{x}%
)$ with ${\mathbb{E}}[m(\mathbf{x})]=0$, ${\mathbb{E}}[\mathbf{x} \, m(%
\mathbf{x})] \neq 0$, ${\mathbb{E}}[m^2(\mathbf{x})]>0$, $m(\mathbf{x}) \in {%
\mathcal{S}}$, $a_n\rightarrow 0$ and $a_n \, n\, k^{-\frac{2s}{d}%
}\rightarrow 0$ as $n\rightarrow \infty$, in which $s$ is the smoothness
order of $m(\cdot)$ as in Assumption \ref{MTP2}(iv). We have as $%
n\rightarrow \infty$ 
\begin{equation}
{\mathbb{E}}[\varepsilon_i \, \mathbf{x}_i] = {\mathbb{E}}[\mathbf{x}_i \,
m_n(\mathbf{x}_i)] = a_n \, {\mathbb{E}}[\mathbf{x}_i \, m(\mathbf{x}_i)]
\to 0.  \label{jitiendo2}
\end{equation}

Probably because of the projection in (\ref{jitiendo1}) and then the
definition of weak endogeneity in (\ref{jitiendo2}), we are able to show
that $\bm{\beta}_0$ can still be consistently estimated.

Since we assume that $m(\mathbf{x}) \in {\mathcal{S}}$, we approximate $m(%
\mathbf{x})$ in the same way as in Section \ref{Sec2}. In view of model (\ref%
{a6b}), the `endogenous component' represented by $\mathbf{v}^{\top} %
\bm{\gamma}_n$, with $\bm{\gamma}_n= a_n \, \bm{\gamma}$, has been
eliminated. Consequently, equations (\ref{2d}) and (\ref{2e}) show that $%
\bm{\beta}_0$ of (\ref{jitiendo1}) is correctly identified and therefore it
can still be estimated by $\widehat{\bm{\beta}} =\widehat{\bm{\beta}}_{%
\mathrm{SP}}$ of (\ref{2e}) consistently.

The establishment and the proof of Theorem \ref{MTPclt} imply Theorem \ref%
{cor4.1} below.

\begin{theo}\label{cor4.1}
    Let the conditions of Theorem \ref{MTPclt}(ii) hold for model (\ref{jitiendo1}) with the last part of Assumption \ref{MTP2}(iv) being weakened to $a_n \, n\, k^{-\frac{2s}{d}}\rightarrow 0$. We then have as $n\rightarrow \infty$
\be
\sqrt{n} \, \widehat{\bm{\Sigma}}_n^{-1/2}(\mathbf{x}, k) \,  (\widehat{\bm{\beta}}_{\rm SP} - \bm{\beta}_0 ) \rightarrow_{\mathcal D} N\left(\mathbf{0}, \, {\bf I}_d \right),
\label{endow}
\ee
where {\small $\widehat{\bm{\Sigma}}_n(\mathbf{x}, k) = \widehat{\bm{\Sigma}}_x^{-1}(k) \, \widehat{\bm{\Sigma}}_n(k) \, \widehat{\bm{\Sigma}}_x^{-1}(k)$, in which $\widehat{\bm{\Sigma}}_x(k) = \frac{1}{n} \sum_{i=1}^n \mathbf{z}_i \, \mathbf{z}_i^{\top}$ and $\widehat{\bm{\Sigma}}_n(k) = \frac{1}{n} \sum_{i=1}^n \mathbf{z}_i \, \mathbf{z}_i^{\top} \, \widehat{e}_i^2$ with $\widehat{e}_i = w_i - \mathbf{z}_i^{\top} \, \widehat{\bm{\beta}}_{\rm SP}$, with $(w_i, \mathbf{z}_i)$ being the same as in Section 2.1}. 
\end{theo}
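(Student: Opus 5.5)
The plan is to reduce Theorem \ref{cor4.1} to the argument behind Theorem \ref{MTPclt}(ii)--(iii). The only changes are that $m$ is replaced by $m_n=a_n m$ and that the sandwich is written in the $(w_i,\mathbf{z}_i)$ notation of Section 2.1. Since $\mathbf{z}_i$ is the $i$th row of $\mathbf{M}_v\mathbf{X}$ and $w_i$ the $i$th entry of $\mathbf{M}_v\mathbf{y}$, we have $\widehat{\bm{\Sigma}}_x(k)=\frac1n\mathbf{X}^\top\mathbf{M}_v\mathbf{X}$. Writing $m_n(\mathbf{x})=\mathbf{V}_k(\mathbf{x})^\top(a_n\bm\gamma)+a_n\delta_k(\mathbf{x})$ and noting $\mathbf{M}_v\mathbf{V}=\mathbf{0}$, the estimator satisfies
\begin{equation*}
\sqrt{n}(\widehat{\bm\beta}_{\rm SP}-\bm\beta_0)=\widehat{\bm{\Sigma}}_x^{-1}(k)\left(\frac{1}{\sqrt n}\mathbf{X}^\top\mathbf{M}_v\mathbf{e}+\frac{a_n}{\sqrt n}\mathbf{X}^\top\mathbf{M}_v\bm\delta\right).
\end{equation*}

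I will then take the pieces in the following order.

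\textbf{(a)} Show $\widehat{\bm{\Sigma}}_x(k)\to_P\bm{\Sigma}_x>\mathbf{0}$. This uses Lemma \ref{lemma2.1}(ii) together with the law-of-large-numbers and $k^2/n\to0$ step from the proof of Theorem \ref{MTPclt}; it does not involve $m_n$ at all.

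\textbf{(b)} Show $\frac{1}{\sqrt n}\mathbf{X}^\top\mathbf{M}_v\mathbf{e}$ is asymptotically $N(\mathbf{0},\bm{\Sigma}_x\overline{\sigma}_e^2)$ after standardising by $\frac1n\mathbf{X}^\top\mathbf{M}_v\bm\Omega\mathbf{M}_v\mathbf{X}$. This is the Lindeberg/martingale CLT already used for (\ref{jiti4}). It relies only on $\e[e_i|\mathbf{x}_i]=0$ and the fourth-moment conditions, so it is unchanged.

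\textbf{(c)} Show the bias term is negligible. Using $\|\mathbf{M}_v\|\le1$ and $\|\mathbf{X}\|=O_P(\sqrt n)$, we get $\|\frac{a_n}{\sqrt n}\mathbf{X}^\top\mathbf{M}_v\bm\delta\|\le a_n O_P(1)\|\bm\delta\|$. The smoothness bound $\e[\delta_k^2(\mathbf{x})]=O(k^{-2s/d})$ gives $\|\bm\delta\|^2=O_P(nk^{-2s/d})$. The term is therefore $O_P(a_n\sqrt{n k^{-2s/d}})$. Since $a_n\to0$, this is $o_P(1)$ whenever $a_n^2 nk^{-2s/d}\to0$, which follows from the weakened hypothesis $a_n nk^{-2s/d}\to0$. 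This is the only place the hypothesis enters, and it replaces $nk^{-2s/d}\to0$ in Theorem \ref{MTPclt}.

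\textbf{(d)} Show $\widehat{\bm{\Sigma}}_n(k)-\frac1n\mathbf{X}^\top\mathbf{M}_v\bm\Omega\mathbf{M}_v\mathbf{X}=o_P(1)$ and identify the residuals. We have $\widehat e_i=w_i-\mathbf{z}_i^\top\widehat{\bm\beta}_{\rm SP}$, which equals the $i$th entry of $\mathbf{M}_v(\mathbf{y}-\mathbf{X}\widehat{\bm\beta})$ and hence coincides with the residual in Theorem \ref{MTPclt}(iii). Expanding $\widehat e_i=e_i-(\mathbf{P}_v\mathbf{e})_i+a_n(\mathbf{M}_v\bm\delta)_i-\mathbf{z}_i^\top(\widehat{\bm\beta}-\bm\beta_0)$, I will control the cross terms exactly as in part (iii). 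The extra $a_n\bm\delta$ contribution is $O_P(a_n^2k^{-2s/d})$ on average and so is negligible.

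Combining (a)--(d) with Slutsky's theorem gives (\ref{endow}), because $\widehat{\bm\Sigma}_n(\mathbf{x},k)^{-1/2}\widehat{\bm\Sigma}_x^{-1}(k)$ times the normalised score is asymptotically $N(\mathbf{0},\mathbf{I}_d)$.

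The main obstacle is step (d). Theorem \ref{MTPclt}(iii) was stated under $k^4/n\to0$, which is not listed here. The term $\frac1n\sum_i\mathbf{z}_i\mathbf{z}_i^\top[(\mathbf{P}_v\mathbf{e})_i^2+2e_i(\mathbf{P}_v\mathbf{e})_i]$ must therefore be bounded using only $k^2/n\to0$. I would first try to bound it by $\max_i\|\mathbf{z}_i\|^2\cdot\frac1n\mathbf{e}^\top\mathbf{P}_v\mathbf{e}=O_P(\max_i\|\mathbf{z}_i\|^2\,k/n)$. If the maximum is too crude, I would compute the expectation directly using $\e[\mathbf{e}^\top\mathbf{P}_v\mathbf{D}\mathbf{P}_v\mathbf{e}\mid\mathbf{X}]$ with $\mathbf{D}=\mathrm{diag}(\|\mathbf{z}_i\|^2)$ and the bounded fourth moments of $\psi_j$. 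Failing that, I would read ``the conditions of Theorem \ref{MTPclt}(ii)'' as implicitly including the rate used in part (iii).
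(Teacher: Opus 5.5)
Your proposal follows the paper's own route. The paper's entire proof is the remark that Theorem 3.1's argument carries over once $\mathbf{M}_v$ annihilates $\mathbf{V}(a_n\bm\gamma)$ and the leftover bias $a_n\bm\delta$ is controlled by the weakened rate; your steps (a)--(c), the bound $O_P(a_n\sqrt{nk^{-2s/d}})$, and your identification of $\widehat{\bm\Sigma}_n(k)$ with $\frac1n\mathbf{X}^\top\mathbf{M}_v\widehat{\mathbf{\Omega}}\mathbf{M}_v\mathbf{X}$ are all correct. The $k^4/n$ issue you flag in (d) is a real looseness in the paper, which tacitly uses Theorem 3.1(iii), so your fallback reading is exactly what the paper does, while your conditional-expectation route would avoid it: assuming bounded conditional variances, $\frac1n\mathrm{tr}(\mathbf{P}_v\mathbf{D}\mathbf{P}_v\mathbf{\Omega})\le \frac{C}{n}\sum_i\|\mathbf{z}_i\|^2(\mathbf{P}_v)_{ii}=O_P(k^2/n)$, so $k^2/n\to0$ would suffice.
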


Let us return to the linear weak endogeneity case where $m_n(\mathbf{x}) =\frac{\bm{\mu}^{\top} \left(\mathbf{x} - \e[\mathbf{x}]\right)}{\sqrt{n}}$ with $\|\bm{\mu}\|^2<\infty$. There is a substantial literature as reviewed in a recent survey by \cite{ass2019} about treatments of weak endogeneity involving weak IVs. 

Let $\bm{\beta}_{n, \bm{\mu}} \equiv: \bm{\beta}_0 + \frac{\bm{\mu}}{\sqrt{n}}$. When $m_n(\mathbf{x}) =\frac{\bm{\mu}^{\top} \left(\mathbf{x} - \e[\mathbf{x}]\right)}{\sqrt{n}}$, it follows that the standard OLS estimator becomes
\be
 \widehat{\bm{\beta}}_{\rm LS} \equiv: n^{-1} \,\widehat{\bm{\Sigma}}_{\mathbf{x}}^{-1} \, \sum_{i=1}^n \left(\mathbf{x}_i - \overline{\mathbf{x}}_n\right) \, (y_i - \overline{y}_n) =\bm{\beta}_{n, \bm{\mu}} + n^{-1} \,\widehat{\bm{\Sigma}}_{\mathbf{x}}^{-1} \, \sum_{i=1}^n \left(\mathbf{x}_i - \overline{\mathbf{x}}_n\right) \, (e_i - \overline{e}_n), 
\label{ols1}
\ee
where $\widehat{\bm{\Sigma}}_{\mathbf{x}} = \frac{1}{n} \sum_{i=1}^n \left(\mathbf{x}_i - \overline{\mathbf{x}}_n\right)\, \left(\mathbf{x}_i - \overline{\mathbf{x}}_n\right)^{\top}$ with $\overline{\mathbf{x}}_n = \frac{1}{n} \sum_{i=1}^n \mathbf{x}_i$, $\overline{y}_n = \frac{1}{n} \sum_{i=1}^n y_i$ and $\overline{e}_n = \frac{1}{n} \sum_{i=1}^n e_i$.

We then establish the following theorem; its proof follows trivially from verifying Lindeberg conditions needed for a central limit theorem for an i.i.d. data setting.

\begin{theo}\label{cor4.2}
 
Let Assumption 3.1(i) hold. 

If, in addition, $ \e\left[\left\|\mathbf{x}_i\right\|^4\right]<\infty$ and $ \e\left[\left(\mathbf{x}_i - \e[\mathbf{x}_i]\right) \, \left(\mathbf{x}_i - \e[\mathbf{x}_i]\right) ^{\top}\right]>{\bf 0}$, then we have
\be
 \sqrt{n} \, \widehat{\bm{\Sigma}}_n^{-1/2}(\mathbf{x}) \, \left(\widehat{\bm{\beta}}_{\rm LS} - \bm{\beta}_0 - \frac{\bm{\mu}}{\sqrt{n}}\right) \rightarrow_{\mathcal D} N\left(\mathbf{0}, \, {\bf I}_d \right)
\label{ols2}
\ee
as $n\rightarrow \infty$, where $\widehat{\bm{\Sigma}}_n(\mathbf{x}) =\widehat{\bm{\Sigma}}_{\mathbf{x}}^{-1} \, \widehat{\bm{\Sigma}}_e \, \widehat{\bm{\Sigma}}_{\mathbf{x}}^{-1}$, in which $\widehat{\bm{\Sigma}}_e = \frac{1}{n} \sum_{i=1}^n \left(\mathbf{x}_i - \overline{\mathbf{x}}_n\right) \, \left(\mathbf{x}_i - \overline{\mathbf{x}}_n\right)^{\top} \, \widetilde{e}_i^2$ with $\widetilde{e}_i = y_i - \mathbf{x}_i^{\top} \widehat{\bm{\beta}}_{\rm LS}$.
\end{theo}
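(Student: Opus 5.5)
The argument is a standard OLS central limit theorem. The only twist is that the linear weak-endogeneity term $m_n(\mathbf{x})=\bm{\mu}^{\top}(\mathbf{x}-\e[\mathbf{x}])/\sqrt{n}$ is absorbed exactly into the pseudo-true parameter $\bm{\beta}_{n,\bm{\mu}}=\bm{\beta}_0+\bm{\mu}/\sqrt{n}$, as the decomposition (\ref{ols1}) already shows.

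\textbf{Step 1: centring.} Since $y_i-\overline{y}_n=(\mathbf{x}_i-\overline{\mathbf{x}}_n)^{\top}\bm{\beta}_{n,\bm{\mu}}+(e_i-\overline{e}_n)$ exactly, (\ref{ols1}) gives
\[
\sqrt{n}\bigl(\widehat{\bm{\beta}}_{\rm LS}-\bm{\beta}_{n,\bm{\mu}}\bigr)=\widehat{\bm{\Sigma}}_{\mathbf{x}}^{-1}\, n^{-1/2}\sum_{i=1}^n(\mathbf{x}_i-\overline{\mathbf{x}}_n)e_i .
\]
Here the $\overline{e}_n$ term vanishes because $\sum_i(\mathbf{x}_i-\overline{\mathbf{x}}_n)=0$. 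Next, write $\mathbf{x}_i-\overline{\mathbf{x}}_n=(\mathbf{x}_i-\e[\mathbf{x}])-(\overline{\mathbf{x}}_n-\e[\mathbf{x}])$. The second piece contributes $(\overline{\mathbf{x}}_n-\e[\mathbf{x}])\,n^{-1/2}\sum_i e_i=O_P(n^{-1/2})O_P(1)=o_P(1)$, using $\e[e_i|\mathbf{x}_i]=0$ and bounded average conditional variances.

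\textbf{Step 2: CLT for the score.} The summands $\mathbf{u}_i=(\mathbf{x}_i-\e[\mathbf{x}])e_i$ are independent with mean zero, because $\e[e_i|\mathbf{x}_i]=0$. Their covariances are $\e[(\mathbf{x}_i-\e\mathbf{x})(\mathbf{x}_i-\e\mathbf{x})^{\top}\sigma_i^2]$. Under Assumption 3.1(i) the errors are allowed to be heteroskedastic and non-identically distributed, so I would apply the Lindeberg--Feller CLT to $\bm{\lambda}^{\top}\mathbf{u}_i$ through the Cram\'er--Wold device. I would verify Lindeberg via a Lyapunov-type bound: $\e[(\bm{\lambda}^{\top}\mathbf{u}_i)^2 1\{|\bm{\lambda}^{\top}\mathbf{u}_i|>\epsilon\sqrt{n}\}]$, combined with $\e\|\mathbf{x}_i\|^4<\infty$ and uniformly bounded conditional moments of $e_i$. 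This gives $\bm{\Sigma}_{e,n}^{-1/2}\,n^{-1/2}\sum_i\mathbf{u}_i\to_{\mathcal D}N(\mathbf{0},\mathbf{I}_d)$, where $\bm{\Sigma}_{e,n}=n^{-1}\sum_i\e[\mathbf{u}_i\mathbf{u}_i^{\top}]$ is bounded away from zero by the positive definiteness assumption together with $\lim n^{-1}\sum\sigma_i^2>0$.

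\textbf{Step 3: consistency of the sandwich.} The law of large numbers with $\e\|\mathbf{x}\|^2<\infty$ gives $\widehat{\bm{\Sigma}}_{\mathbf{x}}\to_P\e[(\mathbf{x}-\e\mathbf{x})(\mathbf{x}-\e\mathbf{x})^{\top}]>\mathbf{0}$, so its inverse is $O_P(1)$. For $\widehat{\bm{\Sigma}}_e$, I would write the residual as $\widetilde{e}_i=e_i+(\overline{y}_n-\overline{\mathbf{x}}_n^{\top}\bm{\beta}_{n,\bm\mu})$-type terms $-\,\mathbf{x}_i^{\top}(\widehat{\bm{\beta}}_{\rm LS}-\bm{\beta}_{n,\bm\mu})$; note the intercept is implicitly handled through centring. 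I would then show $\widehat{\bm{\Sigma}}_e-\bm{\Sigma}_{e,n}=o_P(1)$. The leading term $n^{-1}\sum(\mathbf{x}_i-\overline{\mathbf{x}}_n)(\mathbf{x}_i-\overline{\mathbf{x}}_n)^{\top}e_i^2-\bm{\Sigma}_{e,n}$ is $o_P(1)$ by a weak law of large numbers for independent summands, which needs a uniform $1+\eta$ moment; Cauchy--Schwarz with fourth moments of $\mathbf{x}$ and $e$ supplies this. The cross terms involving $\widehat{\bm{\beta}}_{\rm LS}-\bm{\beta}_{n,\bm\mu}=O_P(n^{-1/2})$ are bounded by $O_P(n^{-1/2})\,n^{-1}\sum\|\mathbf{x}_i\|^3|e_i|$ plus similar terms, and these are $o_P(1)$ under the fourth-moment conditions.

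\textbf{Step 4: combining.} Slutsky's lemma then delivers (\ref{ols2}).

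The main obstacle is Step 3 together with the Lindeberg verification, because the $e_i$ are only conditionally heteroskedastic with averaged moment conditions. I would handle this by imposing, as Theorem \ref{MTPclt}(ii) does, $\lim n^{-1}\sum\e[e_i^4|\mathbf{x}_i]<\infty$, which I read as implicit in ``Assumption 3.1(i)''.
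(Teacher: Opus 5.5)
Your proposal is correct and is the argument the paper has in mind. The paper's entire proof is the remark that (\ref{ols2}) follows from verifying the Lindeberg conditions for an i.i.d.\ CLT once $\bm{\mu}/\sqrt{n}$ is absorbed into $\bm{\beta}_{n,\bm{\mu}}$ as in (\ref{ols1}). The extra error-moment condition you flag (e.g.\ $\lim_{n}\frac{1}{n}\sum_{i}\mathbb{E}[e_i^4\mid\mathbf{x}_i]<\infty$ from Theorem \ref{MTPclt}(ii)) is genuinely needed: the stated hypotheses do not even make the score $(\mathbf{x}_i-\mathbb{E}[\mathbf{x}_i])e_i$ square integrable, so the paper tacitly assumes it too.

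One correction in Step 3. Cauchy--Schwarz with separate fourth moments of $\mathbf{x}$ and $e$ yields only $\mathbb{E}[\|\mathbf{x}_i\|^2e_i^2]<\infty$. It gives neither the $1+\eta$ moment you claim nor the $\mathbb{E}[\|\mathbf{x}_i\|^4e_i^4]<\infty$ that a Lyapunov bound needs. Instead, condition on $\mathbf{X}$ and normalise by the conditional covariance $\frac{1}{n}\sum_{i}(\mathbf{x}_i-\mathbb{E}[\mathbf{x}_i])(\mathbf{x}_i-\mathbb{E}[\mathbf{x}_i])^{\top}\sigma_i^2$, as Theorem \ref{MTPclt}(ii) does. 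Then control both the conditional Lyapunov ratio and the conditional variance of the leading term of $\widehat{\bm{\Sigma}}_e$ through $\frac{1}{n}\max_{i}\|\mathbf{x}_i\|^4\cdot\frac{1}{n}\sum_{i}\mathbb{E}[e_i^4\mid\mathbf{x}_i]=o_P(1)$.
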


Equation (\ref{ols2}) implies that $\widehat{\bm{\beta}}_{\rm LS} = \bm{\beta}_0 + \frac{\bm{\mu}}{\sqrt{n}} + O_P\left(\frac{\widehat{\bm{\Sigma}}_n^{1}(\mathbf{x})}{\sqrt{n}}\right) \rightarrow \bm{\beta}_0$, indicating that $\widehat{\bm{\beta}}_{\rm LS}$ is a consistent estimator of $\bm{\beta}_0$. 

However, $\bm{\mu}$ itself cannot be identified correctly. This becomes clearer in the special case where $\bm{\beta}_0= \mathbf{0}$ and $y = \left(\frac{\bm{\mu}}{\sqrt{n}}\right)^{\top}  \left(\mathbf{x} - \e[\mathbf{x}]\right) + e$. In this case, we have as $n\rightarrow \infty$
{\small
\be
\sqrt{n} \, \left(\frac{\widehat{\bm{\mu}}}{\sqrt{n}} - \frac{{\bm{\mu}}}{\sqrt{n}}\right) = \left(\widehat{\bm{\mu}} - \bm{\mu}\right) \rightarrow_{\mathcal D} N\left(\mathbf{0}, \, \overline{\sigma}_e^2 \cdot {\rm var}^{-1}(\mathbf{x})\right),
\label{ols3}
\ee
which shows that $\widehat{\bm{\mu}} = \sqrt{n} \, \left(\sum_{i=1}^n \left(\mathbf{x}_i - \overline{\mathbf{x}}_n\right) \, \left(\mathbf{x}_i - \overline{\mathbf{x}}_n\right)^{\top}\right)^{-1} \, \sum_{i=1}^n \left(\mathbf{x}_i - \overline{\mathbf{x}}_n\right) \, y_i$} is not consistent to $\bm{\mu}$, although $\frac{\widehat{\bm{\mu}}}{\sqrt{n}}$ is consistent to $\frac{\bm{\mu}}{\sqrt{n}}$. 

This is similar in spirit to those discussed in the weak IV literature, see equation (2.5) of \cite{ss1997}, for example. We demonstrate in Appendix B.3 of the supplementary document that one may eliminate $\bm{\mu}$ by a simple jackknife method.
\medskip

Furthermore, we show that the SP method offers a simple way to deal with a mixture of strong and weak endogeneity of the form: 
\be
m_n(\mathbf{x}) =  \left(\frac{\bm{\mu}}{\sqrt{n}}\right)^{\top}  \left(\mathbf{x} - \e[\mathbf{x}]\right) + m^{\ast}(\mathbf{x}),
\label{weakendo1}
\ee
which implies  $\e\left[\mathbf{x} \, m_n(\mathbf{x})\right] =  {\rm var}[\mathbf{x}] \, \frac{\bm{\mu}}{\sqrt{n}} + \e\left[\mathbf{x} \, m^{\ast}(\mathbf{x})\right]$, where $\e[m^{\ast}(\mathbf{x})]=0$, $\e|m^{\ast}({\bf x})|^2>0$ and $m^{\ast}(\mathbf{x}) \in {\mathcal S}$. Our approach allows either $\e[\mathbf{x} \, m^{\ast}(\mathbf{x})] = 0$ or $\e[\mathbf{x} \, m^{\ast}(\mathbf{x})] \neq 0$. 

We use the same $\{\psi_j(\mathbf{x}): j\geq 1\}$ as an orthonormal series to approximate $m^{\ast}(\mathbf{x})$ by $m_{k}^{\ast}(\mathbf{x}) \equiv:\sum_{j=1}^{k} \psi_j(\mathbf{x}) \, \gamma_{j}^{\ast}$, where $\bm{\gamma}^{\ast}_k = \left(\gamma_{1}^{\ast}, \cdots, \gamma_{k}^{\ast}\right)^{\top}$ is an array of coefficients. 

In the same spirit as in Section \ref{Sec3}.1, we now have the following model $w_i - \overline{w}_n = \left(\mathbf{z}_i - \overline{\mathbf{z}}_n\right)^{\top} \bm{\beta}_{n, \bm{\mu}} + e_i$, which implies that we can estimate $\bm{\beta}_{n, \bm{\mu}} = \bm{\beta}_0 + \frac{\bm{\mu}}{\sqrt{n}}$ by 
\be
\widehat{\overline{\bm{\beta}}}_{\rm SP} = \left(\sum_{i=1}^n \left(\mathbf{z}_i - \overline{\mathbf{z}}_n\right)  \, \left(\mathbf{z}_i - \overline{\mathbf{z}}_n\right)^{\top}\right)^{-1} \, \sum_{i=1}^n  \left(\mathbf{z}_i - \overline{\mathbf{z}}_n\right) \, \left(w_i - \overline{w}_n\right),
\label{ols4}
\ee
where $\overline{\mathbf{z}}_n = \frac{1}{n} \sum_{i=1}^n \mathbf{z}_i$, $\overline{w}_n = \frac{1}{n} \sum_{i=1}^n w_i$.

We finally establish the following theorem; its proof follows trivially from that of Theorem \ref{MTPclt}.
 
\begin{theo}\label{cor4.3}
Let Assumptions \ref{MTP1} and \ref{MTP2} hold with $m(\cdot)$ being replaced by $m^{\ast}(\cdot)$. We have as $n\rightarrow \infty$
\be
\sqrt{n} \, \overline{\bm{\Sigma}}_n^{-1/2}(\mathbf{z}, k) \, \left(\widehat{\overline{\bm{\beta}}}_{\rm SP} - \bm{\beta}_0 - \frac{\bm{\mu}}{\sqrt{n}}\right) \rightarrow_{\mathcal D} N\left(\mathbf{0}, \, {\bf I}_d\right),
\label{ols5}
\ee
where {$\overline{\bm{\Sigma}}_n(\mathbf{z}, k) = \overline{\bm{\Sigma}}_z^{-1}(k) \, \overline{\bm{\Sigma}}_n(k) \, \overline{\bm{\Sigma}}_z^{-1}(k)$, in which $\overline{\bm{\Sigma}}_z(k) = \frac{1}{n} \sum_{i=1}^n \left(\mathbf{z}_i - \overline{\mathbf{z}}_n\right)  \, \left(\mathbf{z}_i - \overline{\mathbf{z}}_n\right)^{\top}$ and $\overline{\bm{\Sigma}}_n(k) = \frac{1}{n} \sum_{i=1}^n \left(\mathbf{z}_i - \overline{\mathbf{z}}_n\right) \, \left(\mathbf{z}_i - \overline{\mathbf{z}}_n\right)^{\top} \, \widehat{\overline{e}}_i^2$ with $\widehat{\overline{e}}_i^2 = w_i - \overline{w}_n - \left(\mathbf{z}_i - \overline{\mathbf{z}}_n\right)^{\top} \, \widehat{\overline{\bm{\beta}}}_{\rm SP}$}.
\end{theo}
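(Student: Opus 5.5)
We follow the proof of Theorem \ref{MTPclt}, applied to the demeaned projected model. Write $\mathbf{y}=\mathbf{X}\bm{\beta}_0+n^{-1/2}(\mathbf{X}-\mathbf{1}\e[\mathbf{x}]^{\top})\bm{\mu}+\mathbf{V}\bm{\gamma}^{\ast}_k+\bm{\delta}^{\ast}+\mathbf{e}$, where $\bm{\delta}^{\ast}$ collects the truncation remainders $\delta_k^{\ast}(\mathbf{x}_i)=\sum_{j>k}\psi_j(\mathbf{x}_i)\gamma_j^{\ast}$. Let $\mathbf{M}_1=\mathbf{I}_n-n^{-1}\mathbf{1}\mathbf{1}^{\top}$. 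Since $w_i$ and $\mathbf{z}_i$ are the $\mathbf{M}_v$-residuals of $y_i$ and $\mathbf{x}_i$, the demeaned vectors are $\mathbf{M}_1\mathbf{M}_v\mathbf{y}$ and $\mathbf{M}_1\mathbf{M}_v\mathbf{X}$. Applying $\mathbf{M}_1\mathbf{M}_v$ annihilates $\mathbf{V}\bm{\gamma}^{\ast}_k$. It also maps the constant $\mathbf{1}\e[\mathbf{x}]^{\top}\bm{\mu}/\sqrt{n}$ to zero. This holds because $\mathbf{M}_v\mathbf{1}$ is a constant vector only up to a small component and is killed by $\mathbf{M}_1$; more cleanly, one can note that $\overline{\bm\beta}$ is the OLS of $\mathbf{y}$ on $(\mathbf{1},\mathbf{X},\mathbf{V})$, so the intercept absorbs all constants. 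This gives the exact decomposition
\[
\widehat{\overline{\bm{\beta}}}_{\rm SP}-\bm{\beta}_0-\tfrac{\bm\mu}{\sqrt n}
=\big(\mathbf{X}^{\top}\mathbf{M}_{1v}\mathbf{X}\big)^{-1}\mathbf{X}^{\top}\mathbf{M}_{1v}(\bm{\delta}^{\ast}+\mathbf{e}),
\]
where $\mathbf{M}_{1v}$ is the annihilator of $[\mathbf{1},\mathbf{V}]$. I would establish this identity first, via Frisch--Waugh.

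Next, using Assumption \ref{MTP2}(ii) and $k^2/n\to0$, we have $\|\mathbf{V}^{\top}\mathbf{V}/n-\mathbf{I}_k\|=O_P(k/\sqrt n)=o_P(1)$. Together with Lemma \ref{lemma2.1} applied with $m^{\ast}$ in place of $m$, this yields $n^{-1}\mathbf{X}^{\top}\mathbf{M}_{1v}\mathbf{X}\to_P\bm{\Sigma}_x>\mathbf{0}$. The intercept changes nothing here, because $\e[\psi_j]=0$ means the constant is orthogonal to $\mathcal S$; $\bm\Sigma_x$ is then computed for centred $\mathbf{x}$, and the $\lambda^\top\mathbf{x}\notin\mathcal S$ condition guarantees it is positive definite. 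This step is exactly the one used in Theorem \ref{MTPclt}, and so is the bias step: $n^{-1/2}\mathbf{X}^{\top}\mathbf{M}_{1v}\bm\delta^{\ast}=O_P(\sqrt{n}\,k^{-s/d})=o_P(1)$ by the smoothness of $m^{\ast}$ and $k^{-2s/d}n=o(1)$.

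For the stochastic term we write $n^{-1/2}\mathbf{X}^{\top}\mathbf{M}_{1v}\mathbf{e}=n^{-1/2}\sum_i\mathcal{M}_{\mathcal S}(\mathbf{x}_i-\e\mathbf{x})e_i+o_P(1)$. The replacement error is a quadratic form in $\mathbf{e}$ conditional on $\mathbf{X}$, with conditional variance $O_P(k/n)+o(1)$. The leading sum is a sum of i.i.d.\ martingale differences given $\mathbf{x}_i$, and the fourth moment conditions of Theorem \ref{MTPclt}(ii) give the Lindeberg condition. The limit is therefore normal with covariance $\lim n^{-1}\sum\e[\mathbf{z}\mathbf{z}^\top\sigma_i^2]$.

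Finally, the sandwich estimator is consistent: $\overline{\bm\Sigma}_z(k)\to_P\bm\Sigma_x$, and $\overline{\bm\Sigma}_n(k)$ converges to the middle matrix, as in Theorem \ref{MTPclt}(iii). The residuals satisfy $\widehat{\overline e}_i=e_i-\overline e_n+r_i$, where $n^{-1}\sum\|\mathbf{z}_i\|^2r_i^2=o_P(1)$ follows from $\sqrt n$-consistency together with the projection-error bounds. Slutsky's theorem then gives (\ref{ols5}).

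I expect the main obstacle to be controlling $\mathbf{X}^\top(\mathbf{M}_{1v}-\mathcal{M}_{\mathcal S})\mathbf{e}$ and the residual-based variance uniformly in growing $k$. The risk is that the residual step needs $k^4/n\to0$, as in Theorem \ref{MTPclt}(iii), rather than only $k^2/n\to0$. If so, I would impose that rate, or bound the error through the trace $\mathrm{tr}(\mathbf{P}_v\bm\Omega)=O_P(k)$.
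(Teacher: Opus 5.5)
Your plan is the paper's own: its proof of this theorem is the one-line remark that it follows from the proof of Theorem~\ref{MTPclt}. Your CLT step, which linearises around the i.i.d.\ sum $n^{-1/2}\sum_i\mathcal M_{\mathcal S}(\mathbf x_i-\e[\mathbf x])e_i$, is more careful than the corresponding step in the paper.

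\textbf{Identification step (genuine gap).} This step fails as written. After demeaning,
$$\overline{\bm\Sigma}_z(k)=n^{-1}\sum_i\mathbf z_i\mathbf z_i^\top-\overline{\mathbf z}_n\overline{\mathbf z}_n^\top\to_P\bm\Sigma_x-\e[\mathbf x]\e[\mathbf x]^\top=\e\bigl[\mathcal M_{\mathcal S}(\mathbf x-\e[\mathbf x])\,\mathcal M_{\mathcal S}(\mathbf x-\e[\mathbf x])^\top\bigr].$$
So the intercept does change the limit. This limit is positive definite iff $\lambda^\top(\mathbf x-\e[\mathbf x])\notin\mathcal S$ for every $\lambda\neq0$. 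That is not implied by Assumption~\ref{MTP1}(iii). Every element of $\mathcal S$ has mean zero, because it is the closed span of mean-zero $\psi_j$. Hence $\lambda^\top\mathbf x\notin\mathcal S$ holds automatically whenever $\lambda^\top\e[\mathbf x]\neq0$, and it says nothing about the centred index.

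For a concrete failure, take $d=1$, $\e[x]\neq0$ and $\psi_1=(x-\e[x])/\mathrm{sd}(x)$. This is the linear case of Appendix~A.4.3, which the paper says Theorem~\ref{MTPclt} still covers. There Assumption~\ref{MTP1}(iii) holds and $\bm\Sigma_x=(\e[x])^2>0$. But $\mathcal M_{\mathcal S}(x-\e[x])\equiv0$ and $\mathbf M_1\mathbf M_v\mathbf X=-\e[x]\,\mathbf M_1\mathbf P_v\mathbf 1$, so $\overline{\bm\Sigma}_z(k)\le(\e[x])^2\,\mathbf 1^\top\mathbf P_v\mathbf 1/n=O_P(k/n)\to0$. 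Both your positive-definite limit and your nondegenerate CLT are then lost.

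You need to add the centred condition $\lambda^\top(\mathbf x-\e[\mathbf x])\notin\mathcal S$ as a hypothesis, i.e.\ $\mathrm{span}\{1\}\oplus\mathcal S$ contains no nonzero linear index. The paper's set-up presumes this, since the weak linear term $\bm\mu^\top(\mathbf x-\e[\mathbf x])/\sqrt n$ is meant to lie outside $\mathcal S$. Its one-line proof passes over the point exactly as you did.

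\textbf{Two smaller points.} First, your ``exact'' decomposition is not exact. The estimator (\ref{ols4}) uses $\mathbf M_v\mathbf M_1\mathbf M_v$, which coincides with the annihilator of $[\mathbf 1,\mathbf V]$ only when $\mathbf V^\top\mathbf 1=\mathbf 0$. So $\widehat{\overline{\bm\beta}}_{\rm SP}$ is not the coefficient on $\mathbf X$ in the OLS of $\mathbf y$ on $(\mathbf 1,\mathbf X,\mathbf V)$, and the constant $-n^{-1/2}\e[\mathbf x]^\top\bm\mu\,\mathbf 1$ is not annihilated. The error is harmless:
$$\mathbf M_v\mathbf M_1\mathbf M_v\mathbf 1=n^{-1}(\mathbf 1^\top\mathbf P_v\mathbf 1)\,\mathbf M_v\mathbf 1,\qquad \mathbf 1^\top\mathbf P_v\mathbf 1=O_P(k).$$
The constant therefore contributes $O_P(k/n)$ after $\sqrt n$-scaling. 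The rank-one difference between the two annihilators, whose coefficient is $O_P(k/n^2)$, is likewise negligible in the $\mathbf e$ and $\bm\delta^\ast$ terms. You should state the identity up to $o_P(1)$ rather than exactly.

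Second, your concern about the variance estimator is justified. Routing through Theorem~\ref{MTPclt}(iii) requires $k^4/n\to0$, which the present statement does not assume. Imposing that rate is what the paper's argument silently needs as well.
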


This implies {\small $\widehat{\overline{\bm{\beta}}}_{\rm SP} = \bm{\beta}_0 + \frac{\bm{\mu}}{\sqrt{n}} + O_P\left(\frac{\overline{\bm{\Sigma}}_n^{1/2}(\mathbf{z}, k)}{\sqrt{n}}\right) \rightarrow \bm{\beta}_0$}, which indicates that $\widehat{\bm{\beta}}_{\rm SP}$ is still a consistent estimator of $\bm{\beta}_0$. 
\medskip

Section \ref{Sec5.1} proposes a simple LASSO selection method to realise the
proposed SP method in practice before we employ it for a real dataset
analysis in Section \ref{Sec5.2} and then in Appendix B.2 of the
supplementary document for the finite--sample evaluation of simulated
datasets and another real dataset.

\section{LASSO Selection and Empirical Analysis}

\label{Sec5}

\subsection{Construction of $\mathcal{S}$}

\label{Sec5.1}

Constructing the subspace ${\mathcal{S}}$ involved in Assumption 2.1 is
crucial in our theory that affects both the asymptotic consistency and
unbiasedness of the estimator proposed. We focus on the linear model
discussed in Sections 2 and 3, and a similar discussion for the nonlinear
setting in Appendix A.2 follows accordingly.

As can be seen, since the limit covariance matrix of $\widehat{\bm \beta}_{%
\mathrm{SP}}$ of (3.2) is proportional to ${\bm \Sigma}_x^{-1}$ where ${\bm %
\Sigma}_x={\mathbb{E}}[\mathbf{x}\mathbf{x}^\top]-\sum_{j=1}^\infty {\mathbb{%
E}}[\mathbf{x}\psi_j(\mathbf{x})]{\mathbb{E}}[\psi_j(\mathbf{x})\mathbf{x}%
^\top]$, in which the smaller the number of basis functions we use, the more
efficient $\widehat{\bm \beta}_{\mathrm{SP}}$ can be. Therefore, it is not
unreasonable that we choose such $\mathcal{S}$ that forms the smallest
subspace containing $m(\mathbf{x})$. We then propose a data driven approach
to selecting the basis functions that spans the subspace $\mathcal{S}$.

Note that, as a closed subspace of the Hilbert space $\mathcal{L}^2$, $%
\mathcal{S}$ is also a Hilbert space. Since $\mathcal{L}^2$ is separable, it
possesses an orthonormal sequence as its basis (see, Page 169 of \cite{dudley2003}, for example); some subsequence of this basis will be the basis of $\mathcal{S}$ by
construction. Precisely, let $\{\phi_j(\mathbf{x}), j\ge 0\}$ be the basis
of $\mathcal{L}^2$ and the subsequence $\{\phi_{s_j}(\mathbf{x}), j=1, 2,
\cdots\}$ be the basis of $\mathcal{S}$ that we denote as $\{\psi_j(\mathbf{x%
}), j=1, 2, \cdots\}$ in previous sections, that is, $\psi_j(\mathbf{x}%
)=\phi_{s_j}(\mathbf{x})$ for $j=1,2 \cdots$. The choice of $\{\phi_j(%
\mathbf{x}), j\ge 0\}$ should avoid orthogonal polynomial sequence because
the endogeneity implies ${\mathbb{E}}[\mathbf{x}m(\mathbf{x})]\ne 0$;
otherwise, this condition cannot be fulfilled. Consequently, with the help
of Lasso approach and the regression equation, along with Assumption \ref%
{MTP1}, we are able to select $\{\psi_j(\mathbf{x}), j=1, 2, \cdots\}$ from
its parent sequence $\{\phi_j(\mathbf{x}), j\ge 0\}$, and therefore, $%
\mathcal{S}=$span$\{\psi_j(\mathbf{x}), j=1, 2, \cdots\}$.

{To elaborate this data driven method,} without loss of generality, we
consider the function space ${\mathcal{L}}^2$ where the support of $\mathbf{x%
}$ is $[0,\pi]$. Accordingly, we introduce an orthonormal basis in ${%
\mathcal{L}}^2$ of the form: $\{\phi_0(\mathbf{x}) =1, \phi_j(\mathbf{x})= 
\sqrt{2/\pi}\cos( j \mathbf{x}) \text{ for } j\ge 1\}$, which has been
employed for the numerical exercises conducted in Section \ref{Sec5.2} and
Appendix B.2. In Appendix E of the online supplementary document, we also
consider some other basis functions to check the robustness of the SP
method. In the rest of this subsection, our goal becomes to find the basis
of the subspace $\mathcal{S}$ from the basis $\{\phi_j(\mathbf{x}), j\ge 0\}$
in ${\mathcal{L}}^2$. 

{Since $m(\mathbf{x})\in \mathcal{S}\subsetneqq{{\mathcal{L}}^2}$,} we may
expand $m(\mathbf{x})$ in terms of $\{\phi_j(\mathbf{x}), j\ge 0\}$ by $m(%
\mathbf{x})=\sum_{j=1}^\infty \phi_j(\mathbf{x}) \, \xi_j$, where we use $%
\{\xi_j: j\geq 1\}$ as the coefficients {in order to distinguish the $%
\{\gamma_j: j\geq 1\}$ in the expansion of $m(\mathbf{x})$ in terms of $%
\{\psi_j(\mathbf{x}), j\ge 1\}$. Here, we abandon the constant term $\phi_0(%
\mathbf{x})$ due to ${\mathbb{E}}[m(\mathbf{x})]=0$. More importantly, many $%
\gamma_j$ are either zero or statistically insignificant. This is because $m(%
\mathbf{x})$ belongs to the subspace $\mathcal{S}$, and by Parseval's
identity in Hilbert space, ${\mathbb{E}}[m(\mathbf{x})^2]=\sum_{j=1}^\infty
\xi_j^2<\infty$ implying the attenuation of $\xi_j$.}

{Meanwhile, from the sieve estimation point of view, we could not estimate
an infinite dimensional function; \cite{chen2007} (see, Page 5561 for example). Instead, we
use a ``\textit{sieve}'', $\cdots \subset {\mathcal{L}}^2_k\subset {\mathcal{%
L}}^2_{k+1}\subset \cdots \subset {\mathcal{L}}^2$ where ${\mathcal{L}}^2_k=$%
span$\{\phi_j(\mathbf{x}), j=1, \cdots, k\}$ that approach ${\mathcal{L}}^2$%
; correspondingly, ${\mathcal{S}}_{(k)}=$span $\{\phi_{s_j}(\mathbf{x}),
s_j\le k\}$ form a sieve for ${\mathcal{S}}$. Once we identify each element
in ${\mathcal{S}}_{(k)}$ with high probability for each $k$, we obtain $%
\widehat{\mathcal{S}}_{(k)}$ by the data driven method.}

{To proceed,} given a sample $\{(\mathbf{x}_i, y_i), i=1,\cdots,n\}$, we can
rewrite the model as follows: 
\begin{equation}  \label{4.2D}
y_i=\mathbf{x}_i^\top {\bm\beta}_0 + \sum_{j=1}^k \phi_j(\mathbf{x}_i) \,
\xi_j +\delta_{k}(\mathbf{x}_i)+e_i,
\end{equation}
where $k$ is the truncation parameter, and $\delta_{k}(\mathbf{x}%
_i)=\sum_{j=k+1}^\infty \phi_j(\mathbf{x}_i) \, \xi_j$. The parameters $%
\{\xi_j\}$ need to meet certain sparsity conditions that will be specified
soon.

Based on \eqref{4.2D}, we propose the following LASSO selection method.

\begin{equation}  \label{4.3D}
\mathbf{Step \ 1}: \ (\widehat{\bm \beta}^*, \, \widehat{\bm\xi}^*)=\underset%
{{(\bm \beta, \, \bm\xi)}}{\arg\min} \frac{1}{n}\sum_{i=1}^n\left[y_i-%
\mathbf{x}_i^\top {\bm\beta} - \sum_{j=1}^{k} \phi_j(\mathbf{x}_i)\xi_j%
\right]^2+\lambda \sum_{j=1}^{k} |\xi_j|,
\end{equation}
where $\widehat{\bm\xi}^*=(\widehat{\xi}_1^*,\ldots, \widehat{\xi}_k^*)^\top$%
, and $k$ and $\lambda$ are the user--chosen truncation and tuning
parameters, respectively.

\begin{equation}  \label{4.3D.2}
\mathbf{Step \ 2}: \, (\widehat{\bm \beta}^\dag, \, \widehat{\bm\xi}^\dag)=%
\underset{{(\bm \beta, \, \bm\xi)}}{\arg\min} \frac{1}{n}\sum_{i=1}^n\left[%
y_i-\mathbf{x}_i^\top {\bm\beta} - \sum_{j=1}^{k} \phi_j(\mathbf{x}_i)\xi_j%
\right]^2+\lambda \sum_{j=1}^{k}\zeta_j |\xi_j|,
\end{equation}
where $\{\zeta_j\}$ is a set of pre-determined weights such as $1/\widehat{%
\xi}_j^*$ with $\widehat{\xi}_j^*$ being the $j^{th}$ element of $\widehat{%
\bm\xi}^*$. \ The LASSO estimate $\widehat{\bm\xi}^\dag=(\widehat{ \xi}%
_{1}^\dag,\ldots, \widehat{ \xi}_{k}^\dag)^\top$ enables us to obtain 
\begin{eqnarray}
\widehat{\mathcal{S}}_{(k)}=\text{span}\left\{\phi_j(\mathbf{x})\ \mid 
\widehat{ \xi}_{j}^\dag\ne 0 \right\},  \label{subset}
\end{eqnarray}
for which we show that $\Pr(\widehat{\mathcal{S}}_{(k)}=\mathcal{S}%
_{(k)})\rightarrow 1$ as $(n, k) \rightarrow (\infty, \infty)$ in the
following Theorem \ref{Lem.Lasso}. \smallskip

\begin{assumption}\label{AS.lasso}

(i) Let $\mathscr{C}=\{\ell \mid \xi_\ell\ne 0, \, \ell \le k\}$ with cardinality $\sharp \mathscr{C} =k_0$ and $\bar{\mathscr{C}}=\{\ell \mid \xi_\ell = 0, \, \ell \le k\}$, where $k_0$ can be either fixed or varying to infinity, and $k$ varies to infinity. \ (ii) There exists some $J> 2$ such that $\e\left[|e_i|^J\right]<\infty$, and $\frac{k}{[n^{J-1}(\log k)^{J/2}]}\to 0$ and $\frac{k_0 \sqrt{\log k}}{\sqrt{n}}\to 0$ as $(k, n)\to (\infty, \infty)$.    
\end{assumption}

Assumption \ref{AS.lasso}(i) is a typical sparse restriction and infers that $\mathscr{C}\cap \bar{\mathscr{C}}=\emptyset$ and $\sharp \bar{\mathscr{C}}=k-k_0$. Assumption \ref{AS.lasso}(ii) regulates the truncation parameter and the sample size further. {In some special cases, e.g. $m(\cdot)$ is even function or odd function, $k_0\le k/2$, this condition is easily fulfilled.}

We establish the following asymptotic consistency; its proof is given in Appendix C.

\begin{theo}\label{Lem.Lasso}
Let the conditions of Theorem 3.1 and Assumption \ref{AS.lasso} hold, and let $\lambda\asymp  \frac{\sqrt{\log k}}{\sqrt{n}} $. We then obtain as $(k, n)\to (\infty, \infty)$

(i)  {\small $\|\VEC(\boldsymbol{\beta}_0 - \widehat{\boldsymbol{\beta}}^*, \boldsymbol{\xi}-\widehat{\boldsymbol{\xi}}^* ) \| =O_P(\frac{\sqrt{k_0\log k}}{\sqrt{n}})$; (ii) \  $\|\VEC(\boldsymbol{\beta}_0 - \widehat{\boldsymbol{\beta}}^*, \boldsymbol{\xi}-\widehat{\boldsymbol{\xi}}^* ) \|_1 =O_P(\frac{k_0\sqrt{\log k}}{\sqrt{n}})$}.

\noindent In addition, suppose that {\small $\min_{\ell \in \mathscr{C}} |\xi_{\ell}| \gg \sqrt{\frac{(k_0\log  k )}{n}} (1+  \max_{\ell\in \mathscr{C}} \zeta_{\ell} )$, and $\min_{\ell \in \bar{\mathscr{C}}} \zeta_{\ell}\gg k_0 \left(1+  \max_{\ell\in \mathscr{C}} \zeta_{\ell}\right)$}. Then we have as $(k, n)\to (\infty, \infty)$; (iii) {\small $\Pr(\widehat{\mathcal{S}}_{(k)}=\mathcal{S}_{(k)})\rightarrow 1$; and (iv) $\sqrt{n}(\widehat{\bm \beta}^\dag -{\bm \beta}_0)\to_D N(\mathbf{0}, \bm{\Sigma}_x^{-1} \, \overline{\sigma}_e^{2})$, where $\bm{\Sigma}_x^{-1} \, \overline{\sigma}_e^{2}$} is the same as defined below Theorem \ref{MTPclt}.
\end{theo}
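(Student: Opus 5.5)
The plan is to follow the standard LASSO / adaptive LASSO route for a partially penalized regression. Stack $\bm{\theta}=\VEC(\bm\beta,\bm\xi)$ and $\mathbf{D}=(\mathbf{X},\bm\Phi)$, where $\bm\Phi=(\phi_j(\mathbf{x}_i))_{i\le n,j\le k}$. Model \eqref{4.2D} then reads $\mathbf{y}=\mathbf{D}\bm\theta_0+\bm\delta+\mathbf{e}$, and only the $\bm\xi$ block is penalized.

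The first ingredient is a restricted eigenvalue (compatibility) condition for $\frac1n\mathbf{D}^\top\mathbf{D}$ on the cone $\{\|\bm u_{\bar{\mathscr C}}\|_1\le 3\|\bm u_{\mathscr C\cup\{\beta\}}\|_1\}$.
\begin{itemize}
\item At the population level, $\e[\mathbf{d}\mathbf{d}^\top]$ has smallest eigenvalue bounded below. This uses orthonormality of the $\phi_j$ together with Assumption \ref{MTP1}(iii) and Lemma \ref{lemma2.1}(ii): the Schur complement of the $\bm\phi$ block is at least $\bm\Sigma_x>\mathbf 0$.
\item The sample matrix deviates from its mean entrywise by $O_P(\sqrt{\log k/n})$, by a Bernstein/union bound using $\sup_j\e\phi_j^4<\infty$.
\item On the cone, the $\ell_1$-norm is controlled by $\sqrt{k_0}$ times the $\ell_2$-norm. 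Hence the perturbation is at most $O_P(k_0\sqrt{\log k/n})=o_P(1)$ by Assumption \ref{AS.lasso}(ii).
\end{itemize}

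The second ingredient is the score bound
$$\Big\|\tfrac1n\mathbf{D}^\top(\mathbf{e}+\bm\delta)\Big\|_\infty=O_P\big(\sqrt{\log k/n}\big).$$
\begin{itemize}
\item For $\mathbf{e}$, only $J>2$ moments are available, so I would use a truncation (Fuk--Nagaev type) argument. The condition $k/[n^{J-1}(\log k)^{J/2}]\to0$ is exactly what makes the tail part negligible after a union bound over $k$ coordinates.
\item For $\bm\delta$, the sieve bias $\|\delta_k\|\lesssim k^{-s/d}$ from Assumption \ref{MTP2}(iii)--(iv) is $o(n^{-1/2})$.
\end{itemize}

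With $\lambda\asymp\sqrt{\log k/n}$ dominating twice the score, the basic inequality places $\widehat{\bm\theta}^*-\bm\theta_0$ in the cone. The compatibility condition then gives $\|\cdot\|_2=O_P(\sqrt{k_0}\lambda)$ and $\|\cdot\|_1=O_P(k_0\lambda)$, which are parts (i) and (ii).

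For (iii), I would use the KKT conditions of Step 2 and argue by a primal-dual witness.
\begin{itemize}
\item Solve the oracle problem restricted to $\mathscr C$, with $\bm\xi_{\bar{\mathscr C}}=0$. Its error is $O_P(\sqrt{k_0\log k/n}\,(1+\max_{\mathscr C}\zeta_\ell))$, since the penalty contributes a bias of order $\lambda\max\zeta_\ell\sqrt{k_0}$.
\item The $\beta$-min condition $\min_{\mathscr C}|\xi_\ell|\gg$ this rate ensures the oracle keeps every active sign nonzero.
\item Strict dual feasibility on $\bar{\mathscr C}$ requires $|\frac1n\bm\phi_\ell^\top(\mathbf y-\mathbf D\widehat{\bm\theta})|<\lambda\zeta_\ell$. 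The left side is at most $O_P(\lambda)+O_P(\lambda k_0(1+\max_{\mathscr C}\zeta))$, using bounded entries of the Gram matrix and the $\ell_1$ rate. The condition $\min_{\bar{\mathscr C}}\zeta_\ell\gg k_0(1+\max_{\mathscr C}\zeta_\ell)$ gives the inequality with probability tending to one.
\end{itemize}
Uniqueness then yields $\widehat{\mathcal S}_{(k)}=\mathcal S_{(k)}$.

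For (iv), on the event in (iii), $\widehat{\bm\beta}^\dag$ coincides with the penalized least squares fit on $(\mathbf X,\bm\Phi_{\mathscr C})$. By the partialled-out (Frisch--Waugh) form, it equals the SP estimator \eqref{2e} with $\mathbf V=\bm\Phi_{\mathscr C}$, up to the penalty term. That term enters the $\bm\beta$ equation through $(\mathbf X^\top\mathbf M_v\mathbf X)^{-1}\mathbf X^\top\mathbf V(\mathbf V^\top\mathbf V)^{-1}\lambda\bm\zeta_{\mathscr C}$. Its size is $O_P(\lambda\sqrt{k_0}\max\zeta\cdot\|n^{-1}\mathbf X^\top\mathbf V\|)$, and since $n^{-1}\mathbf X^\top\mathbf V$ has square-summable coefficients (by Bessel's inequality) this needs to be $o(n^{-1/2})$. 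I expect this to be the main obstacle: it may need weights with $\max_{\mathscr C}\zeta_\ell$ bounded (as with $1/|\widehat\xi^*_\ell|$ under the $\beta$-min condition) and an extra rate argument. Given this, Theorem \ref{MTPclt}(ii) applied with $k_0$ basis functions delivers $N(\mathbf 0,\bm\Sigma_x^{-1}\overline\sigma_e^2)$, using that $\bm\Sigma_x$ is basis invariant by Lemma \ref{lemma2.2}.
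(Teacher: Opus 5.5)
\paragraph{Parts (i)--(iii).} These follow the paper's route:
\begin{itemize}
\item a restricted-eigenvalue bound on the usual cone, built from a population bound plus an entrywise $O_P(\sqrt{\log k/n})$ Gram perturbation;
\item a Nagaev-type score bound that uses exactly $k/[n^{J-1}(\log k)^{J/2}]\to0$;
\item the basic inequality;
\item then the KKT conditions of Step~2, with the solution restricted to $\mathscr C$, where the $\beta$-min condition gives sign consistency and $\min_{\bar{\mathscr C}}\zeta_\ell\gg k_0(1+\max_{\mathscr C}\zeta_\ell)$ gives dual feasibility.
\end{itemize}

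\paragraph{The genuine gap is (iv).} You correctly isolate the penalty term in the $\bm\beta$-equation, of size $\lambda\sqrt{k_0}\max_{\mathscr C}\zeta_\ell$. The remedy you suggest, weights of order one on $\mathscr C$ (e.g.\ $1/|\widehat\xi^*_\ell|$ with signals bounded away from zero), cannot work.
\begin{itemize}
\item Up to sign, that term is asymptotically $\frac{\lambda}{2}\bm\Sigma_x^{-1}\sum_{\ell\in\mathscr C}\mathbb{E}[\mathbf x\phi_\ell]\,\mathrm{sgn}(\xi_\ell)\zeta_\ell$.
\item Some $\mathbb{E}[\mathbf x\phi_\ell]$ with $\ell\in\mathscr C$ must be nonzero, because $\mathbb{E}[\mathbf x\,m(\mathbf x)]\neq\mathbf 0$. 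So the term is generically of exact order $\lambda\asymp\sqrt{\log k/n}$.
\item After scaling by $\sqrt n$ it is of order $\sqrt{\log k}\to\infty$, and no further rate argument removes it.
\end{itemize}
The missing idea is that the active-set weights must vanish. The paper's proof takes $\zeta_j=1/(|\widehat\xi_j^*|\log k)$. Then $\sqrt n\,\lambda\sqrt{k_0}\max_{\mathscr C}\zeta_\ell\asymp\sqrt{k_0}/(\sqrt{\log k}\,\min_{\mathscr C}|\xi_\ell|)\to0$, for instance with fixed $k_0$ and signals bounded away from zero. For $\ell\in\bar{\mathscr C}$, part (i) gives $|\widehat\xi^*_\ell|=O_P(\sqrt{k_0\log k/n})$, so those weights still diverge, which is what the dual-feasibility condition needs. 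Only then does the reduction to Theorem~\ref{MTPclt} with $\mathbf V=\bm\Phi_{\mathscr C}$ go through. This weight choice is an ingredient beyond the displayed hypotheses of the theorem.

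\paragraph{A smaller problem in (i).} Your justification of the population eigenvalue bound does not hold.
\begin{itemize}
\item The $\phi_j$, $j\le k$, are the parent basis of $\mathcal L^2$, and those indexed by $\bar{\mathscr C}$ lie outside $\mathcal S$.
\item Hence the Schur complement $\mathbb{E}[\mathbf x\mathbf x^\top]-\sum_{j\le k}\mathbb{E}[\mathbf x\phi_j]\mathbb{E}[\mathbf x\phi_j]^\top$ is not bounded below by $\bm\Sigma_x$.
\item By Parseval it decreases to $\mathbb{E}[\mathbf x]\mathbb{E}[\mathbf x]^\top$, which is degenerate when $d\ge2$ or $\mathbb{E}[\mathbf x]=\mathbf 0$. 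So Lemma~\ref{lemma2.1}(ii) does not give a uniform bound.
\end{itemize}
The paper asserts the same uniform positive definiteness of $\bm\Omega_k$ via a Weyl-inequality step, so here you are not behind the paper. Still, the step needs either an explicit restricted-eigenvalue assumption on the cone or a case where the limit is nondegenerate, such as $d=1$ with $\mathbb{E}[x]\neq0$. Similarly, the entrywise $O_P(\sqrt{\log k/n})$ Gram bound needs uniformly bounded $\phi_j$, as the paper uses, rather than only $\sup_j\mathbb{E}\phi_j^4<\infty$.
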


Theorem \ref{Lem.Lasso} (iii) implies that we can choose the subset $\mathcal{S}_{(k)}$ asymptotically.

\subsection{The return to schooling case study}

\label{Sec5.2}

\noindent\textbf{Example 5.1}: In this example, we examine the returns to
schooling (see, \citealp{card2001}, for example) using data from the 1979
National Longitudinal Survey of Youth\footnote{%
The data is collected from %
\url{https://davidcard.berkeley.edu/data_sets.html}.}. After removing those
individuals having missing values, we have $n=2639$ observations left in the
dataset. We specifically consider the following variables:

{\ 
\begin{table}[htb!]
\caption{Summary Statistics}
\label{tb.em.1}\centering{\small \renewcommand{\arraystretch}{0.6} 
\begin{tabular}{lrr}
\hline\hline
& Mean & sd \\ 
$\log$(wage) (y) & 6.329 & 0.444 \\ 
edu $(x)$ & 13.268 & 2.634 \\ 
dad edu $(z_1)$ & 9.956 & 3.229 \\ 
4-year college $(z_2)$ & 0.690 & 0.463 \\ 
2-year college $(z_3)$ & 0.437 & 0.496 \\ 
4-year pub college $(z_4)$ & 0.490 & 0.500 \\ 
4-year pri college $(z_5)$ & 0.200 & 0.400 \\ \hline\hline
\end{tabular}
}
\end{table}
}

In Table \ref{tb.em.1}, $y$ is the $\log$(wage) (i.e., the response), and $x$
is the endogenous regressor (i.e., the number of years of child's education
received). We follow Chapter 15 of \cite{wooldridge2016} as well as \cite{agmr2023} and 
\cite{cff2025} to consider the following set of IVs: $z_1$ stands for the
number of years of dad's education; $z_2$ equals to 1 if the individual grew
up near an accredited four-year college, 0 otherwise; $z_3$ equals to 1 if
the individual grew up near an accredited two-year college, 0 otherwise; $z_4
$ equals to 1 if the individual grew up near an accredited four-year public
college, 0 otherwise; $z_5$ equals to 1 if the individual grew up near an
accredited four-year private college, 0 otherwise.

We consider the following linear model: 
\be
y_i= \alpha_0+ x_i \, \beta_0+\varepsilon_i ,
\nonumber
\ee
and our focus is to infer $\beta_0$. To be consistent with the simulation
designs considered in Appendix B, we rescale $x_i$'s as follows: 
\begin{equation}  \label{data1}
\widetilde{x}_i = \pi\frac{x_i-\underline{x}}{\overline{x} - \underline{x}},
\end{equation}
where $\underline{x} =\min_{i}x_i$ and $\overline{x}=\max_i x_i$. Similarly,
we normalize $z_{1i}$'s. This is in the same spirit of the data
transformation process as in \cite{cff2025}. Thus, $\widetilde{x}_i\in
[0,\pi]$. After the data transformation \eqref{data1}, the model becomes

\begin{equation}  \label{eq52}
y_i =\widetilde{\alpha}_0+\widetilde{x}_i \widetilde{\beta}_0 + \varepsilon_i
\end{equation}
where $\widetilde{\beta}_0 = \frac{1}{\pi}(\overline{x} - \underline{x}%
)\beta_0$, $\widetilde{\alpha}_0 =\frac{1}{\pi} \underline{x}\beta_0 +
\alpha_0$, and $\varepsilon_i =\widetilde{m}(\widetilde{x}_i)+e_i$ with $%
\widetilde{m}(\widetilde{x}_i) = m(\widetilde{x}_i \cdot \frac{\overline{x}
- \underline{x}}{\pi}+ \underline{x}) =m(x_i)$. By doing so, we maintain the
property of $\varepsilon_i$ in order to compare with the 2SLS method widely
adopted in the literature. For the purpose of comparison, we consider the
following methods:

(i) SP method; \, (ii) LS method: (a) regressing $y$ on $(1, \widetilde{x})$%
; (b) regressing $y$ on $(1, \widetilde{x}, z_{1}, z_{1}^2)$; \, (iii) 2sLS
method for the model \eqref{eq52} under two scenarios: (a) Just identified:
using $z_j$ as IV only for $j=1,2,\ldots, 5$ respectively; (b) Over
identified: using two IVs $z_{1}$ and $z_{j}$, where $j=2,3,4,5$; (c) Over
identified: using $z_{1}$ and $\phi_2(\cdot),\phi_3(\cdot),\phi_8(\cdot)$
together as IVs, where $\phi_2(\cdot)$, $\phi_3(\cdot)$, $\phi_8(\cdot)$ are
selected by the SP method; and (d) Over identified: using $\phi_2(\cdot)$, $%
\phi_3(\cdot)$, $\phi_8(\cdot)$ as IVs. \smallskip

\smallskip

In what follows, we consider the case where the level of the dad's education
is included in the following model: 
\begin{equation}
y_i =\widetilde{\alpha}_0 +\widetilde{x}_i\widetilde{\beta}_0 + z_{1i}
\beta_1 +\varepsilon_i \ \ \mbox{with} \ \ \varepsilon_i=m(\widetilde{x}%
_i,z_{1i})+e_i,  \label{SP*}
\end{equation}
where $x$ and $z_1$ denote the number of years of the education level
received for the child and dad, respectively. We estimate the model using
the SP method as well. We then select $\mathcal{S}$ from $\{\phi_0(%
\widetilde{x}),\phi_1(\widetilde{x}),\ldots \}\otimes
\{\phi_0(z),\phi_1(z),\ldots \}$. Using the proposed LASSO method, we select 
$\phi_1(\widetilde{x})\phi_6(z)$, $\phi_2(\widetilde{x})$, and $\phi_3(%
\widetilde{x})\phi_1(z)$. The selection of the basis functions is somewhat
consistent with those in the main text, as we also identify $\phi_2(%
\widetilde{x})$ in the LASSO procedure. The estimated $m(x,z)$ is $\widehat{m%
}(x,z)= 0.0680\phi_1(\widetilde{x})\phi_6(z)-0.0678 \phi_2(\widetilde{x})
-0.1232 \phi_3(\widetilde{x})\phi_1(z)$. We refer to this model as SP* in
what follows, and still focus on reporting $\widetilde{\beta}_0$.

For each method, we report the estimated value of $\widetilde{\beta}_0$ and
its 95\% confidence interval (referred to CI in Table \ref{tb.em.2}).
Moreover, we report the in-sample root mean squared errors of each method: $%
\text{RMSE} =\sqrt{\frac{1}{n}\sum_{i=1}^n (y_i-\widehat{y}_i)^2}$.

{\small 
\begin{table}[htb!]
\caption{Estimation Results}
\label{tb.em.2}{\small \centering \renewcommand{\arraystretch}{0.6} }
\par
{\small 
\begin{tabular}{llrrrrrr}
\hline\hline
&  & $\widetilde{\beta}_0$ & CI & RMSE & $\beta_0$ & CI &  \\ 
SP &  & 0.3604 & (0.3591, 0.3616) & 0.4131 & 0.0708 & (0.0705, 0.0710) &  \\ 
SP* &  & 0.3837 & (0.3825, 0.3849) & 0.4120 & 0.0753 & (0.0751, 0.0756) & 
\\ 
LS & (a) & 0.3096 & (0.3090, 0.3102) & 0.4143 & 0.0608 & (0.0607, 0.0609) & 
\\ 
& (b) & 0.3017 & (0.3010, 0.3023) & 0.4137 & 0.0592 & (0.0591, 0.0594) &  \\ 
2sLS & $z_1$ & 0.3665 & (0.3650, 0.3680) & 0.4154 & 0.0720 & (0.0717, 0.0723)
&  \\ 
& $z_2$ & 0.9140 & (0.9086, 0.9193) & 0.5189 & 0.1795 & (0.1784, 0.1805) & 
\\ 
& $z_3$ & 1.5497 & (1.5316, 1.5677) & 0.7633 & 0.3043 & (0.3007, 0.3078) & 
\\ 
& $z_4$ & 0.8551 & (0.8505, 0.8597) & 0.5012 & 0.1679 & (0.1670, 0.1688) & 
\\ 
& $z_5$ & 0.5692 & (0.5503, 0.5880) & 0.4355 & 0.1118 & (0.1081, 0.1155) & 
\\ 
& $z_1, z_2$ & 0.4084 & (0.4070, 0.4099) & 0.4175 & 0.0802 & (0.0799, 0.0805)
&  \\ 
& $z_1, z_3$ & 0.3777 & (0.3763, 0.3792) & 0.4158 & 0.0742 & (0.0739, 0.0745)
&  \\ 
& $z_1, z_4$ & 0.4118 & (0.4104, 0.4133) & 0.4177 & 0.0809 & (0.0806, 0.0812)
&  \\ 
& $z_1, z_5$ & 0.3673 & (0.3658, 0.3688) & 0.4154 & 0.0721 & (0.0718, 0.0724)
&  \\ 
& $z_1, \phi_2,\phi_3, \phi_8$ & 0.2987 & (0.2980, 0.2993) & 0.4144 & 0.0580
& (0.0579, 0.0582) &  \\ 
& $\phi_2,\phi_3, \phi_8$ & 0.2955 & (0.2948, 0.2962) & 0.4144 & 0.0580 & 
(0.0579, 0.0582) &  \\ \hline\hline
\end{tabular}
}
\end{table}
}

We summarize the results in Table \ref{tb.em.2}. All methods confirm a
positive relationship between wage and education, and the SP method offers
better model fitting in terms of RMSE. The LASSO method selects $%
\{\phi_2(x), \phi_3(x), \phi_8(x)\}$. It is noteworthy that Table \ref%
{tb.em.2} provides the estimates of both $\widetilde{\beta}_0$ and $\beta_0$%
. After accounting for the relationship $\widetilde{\beta}_0 = \frac{1}{\pi}(%
\overline{x} - \underline{x})\beta_0$, the estimates of $\beta_0$ are
reported in the last column. The result from the SP method shows that an
additional year eduction suggests around 7.1\% increase in wage on average.
The finding from the SP method is very close to 2sLS method when the IVs are
selected as $z_1$ only, as $(z_1, z_3)$, or as $(z_1, z_5)$.

Meanwhile, Table \ref{tb.em.2} also shows that the proposed SP* model
accommodates the case where $\mathbf{x}$ includes externally observed
variables, such as the level of the dad's education. Among the SP, SP* and
LS(b) estimates, SP* performs better than SP and LS(b), while LS(b) is a
commonly used one in such empirical analysis. By using different IVs,
obviously, the outcomes of 2sLS vary from case to case, which does create
some uncertainty for practical interpretation. By contrast, the SP and SP*
estimation offers robust estimates regardless of the choice of the series
functions and truncation parameters.

Appendix B.3.2 examines possible source of endogeneity involved in Example
5.1. In Appendix B.3.3, additionally, we discuss possible endogeneity for a
stock return dataset and offer a simple and robust way to deal with the case
where the structural model error term may be endogenously correlated with
the market portfolio variable under study.

\section{Conclusions and Discussion}

\label{Sec6}

This paper has proposed the SP method to deal with certain types of
endogeneity involved in linear and nonlinear models. The corresponding
estimation and testing theory has been established, and evaluated by both
simulated and real datasets. Our finite--sample results established in
Section \ref{Sec5}.2, Appendix B of the supplementary document and Appendix
E of the online appendix have shown that the proposed SP method works well
numerically with the support of the sound large--sample theory established
in Sections \ref{Sec2}--\ref{Sec5}.1.

Appendix \ref{Ap.A2} below discusses about how to extend the proposed SP
method to the identification, estimation and testing for a class of
nonlinear regression models. As alluded in the introduction and then
outlined in Appendix A.3 of the supplementary document, the proposed SP
method can also be extended in a unified way to a number of non-- and
semi--parametric regression models associated endogeneity issues.

Since such models have their own identifiability issues, existing
developments in the relevant literature, and there are estimation and
testing properties to be established for each of them, we wish to write up
details for each of them separately.

{\footnotesize 
\bibliographystyle{plain}
\bibliography{Refs1.bib}

@article{M1997,
 author = {A. Craig MacKinlay},
 journal = {Journal of Economic Literature},
 number = {1},
 pages = {13--39},
 title = {Event Studies in Economics and Finance},
 volume = {35},
 year = {1997}
}

@book{campbell1998econometrics,
  title={The Econometrics of Financial Markets},
  author={Campbell, John Y and Lo, Andrew W and MacKinlay, A Craig},
  year={1997},
  publisher={Princeton University Press}
}

@article{nagaev1979large,
	author = {S. V. Nagaev},
  title = "Large Deviations of Sums of Independent Random Variables",
  volume = {7},
  journal = {Annals of Probability},
  number = {5},
  pages = {745-789},
  year = {1979},
}

@article{BRT2009,
  title={Simultaneous analysis of {LASSO} and Dantzig selector},
  author={Bickel, P. J. and Ritov, Y. and Tsybakov, A. B.},
  year={2009},
  Journal = {Annals of Statistics},
  Volume = {37},
  Number = {4},
  Pages = {1705-1732}
}

@article{raskutti10a,
  author  = {Garvesh Raskutti and Martin J. Wainwright and Bin Yu},
  title   = {Restricted Eigenvalue Properties for Correlated Gaussian Designs},
  journal = {Journal of Machine Learning Research},
  year    = {2010},
  volume  = {11},
  number  = {78},
  pages   = {2241-2259}
}

@incollection{Angrist2023,
  title={{Thirty--Five Years of IV Estimation:
  The Strong, the Weak and the Many Weak}},
  author={Joshua Angrist},
  booktitle={Keynote Presentation},
  publisher ={Asian Meeting of the Econometric Society
  in East and Southeast Asia, 28-30 July 2023},
  year=2023
}

@article{gao2002,
    author = {Jiti Gao and Howell Tong and Rodney Wolff},
    title = "{Adaptive orthogonal series estimation in additive stochastic regression models}",
    journal = {Statistica Sinica},
    volume = {11},
    number = {4},
    pages = {409--428},
    year = {2002}
}

@article{sp2001,
    author = {Enrique Sentana and Gabriele Fiorentini},
    title = "{Identification, estimation and testing of conditionally heteroskedastic factor models}",
    journal = {Journal of Econometrics},
    volume = {102},
    number = {2},
    pages = {143--164},
    year = {2001}
}

@article{lewis2025,
    author = {Daniel J. Lewis},
    title = "{Identification based on higher moments in macroeconometrics}",
    journal = {Annual Review of Economics},
    volume = {17},
    number = {1},
    pages = {665--693},
    year = {2025}
}

@article{agmr2023,
    author = {J. Alejo and A. F. Galvao and G. Montes--Rojas},
    title = "{A first-stage representation for instrumental variables quantile regression}",
    journal = {Econometrics Journal},
    volume = {26},
    number = {3},
    pages = {350--377},
    year = {2023}
}

@article{RR2003,
    author = {Roberto Rigobon},
    title = "{Identification through Heteroskedasticity}",
    journal = {The Review of Economics and Statistics},
    volume = {85},
    number = {4},
    pages = {777--792},
    year = {2003}
}

@article{AL2012,
    author = {Arthur Lewbel},
    title = "{Using heteroscedasticity to identify and estimate mismeasured and endogenous regressor models}",
    journal = {Journal of Business and Economic Statistics},
    volume = {30},
    number = {1},
    pages = {67--80},
    year = {2012}
}

@article{BCCK2015,
    author = {A. Belloni and V. Chernozhukov and D. Chetverikov and K. Kato},
    title = "{Some new asymptotic theory for least squares: pointwise and uniform results}",
    journal = {Journal of Econometrics},
    volume = {186},
    number = {4},
    pages = {345--366},
    year = {2015}
}

@article{newey1990,
author  = "Whitney Newey",
title   = "{Efficient instrumental variables estimation of nonlinear models}",
journal = "Econometrica",
year    = "1990",
volume  = "58",
number  = "4",
pages   = "809--837"
}

@article{linton1995,
author  = "Oliver Linton",
title   = "{Second order approximation in a partially linear regression model}",
journal = "Econometrica",
year    = "1995",
volume  = "63",
number  = "4",
pages   = "1079--1113"
}

@article{BK2019,
author  = "Benedikt Bauer and Michael Kohler",
title   = "{On deep learning as a remedy for the curse of dimensionality in nonparametric regression}",
journal = "Annals of Statistics",
year    = "2019",
volume  = "47",
number  = "4",
pages   = "2261-2285"
}

@article{Andrews1991,
  author = {Donald W. Andrews},
  journal = {Econometrica},
  number = {2},
  pages = {307--345},
  title = {Asymptotic normality of series estimators
  for non-- and semi--parametric regression models},
  volume = {59},
  year = {1991}
}

@article{hausman1978,
  author = {Jerry A. Hausman},
  journal = {Econometrica},
  number = {6},
  pages = {1251--1271},
  title = {Specification tests in econometrics},
  volume = {46},
  year = {1978}
}

@article{IN2009,
  author = {Guido W. Imbens and Whitney Newey},
  journal = {Econometrica},
  number = {2},
  pages = {467--475},
  title = {Identification and estimation of triangular simultaneous equations models without additivity},
  volume = {77},
  year = {2009}
}

@article{JMW2015,
  author = {Jeffrey M. Wooldridge},
  journal = {Journal of Human Resources},
  number = {2},
  pages = {420--445},
  title = {Control function methods in applied econometrics},
  volume = {50},
  year = {2015}
}

@article{gg2008,
  author = {Jiti Gao and Irene Gijbels},
  journal = {Journal of the American Statistical Association},
  number = {484},
  pages = {1584--1594},
  title = {Bandwidth selection in nonparametric kernel testing},
  volume = {103},
  year = {2008}
}

@article{newey1997,
  author = {Whitney K. Newey},
  journal = {Journal of Econometrics},
  number = {2},
  pages = {147--168},
  title = {Convergence rates and asymptotic normality for series estimators},
  volume = {79},
  year = {1997}
}

@incollection{chen2007,
title={{Large Sample Sieve Estimation of Semi--Nonparametric Models}},
author={Xiaohong Chen},
booktitle={Chapter 76 of Handbook of Econometrics (edited by J. J. Heckman
and E. E. Leamer)},
volume={6B},
pages={5549--5632},
year={2007},
publisher={Elsevier, New York}
}

@incollection{sy2005,
	author={James H. Stock and M. Yogo},
	title={{Testing for Weak Instruments in Linear IV Regression}},
        booktitle={Identification and Inference for Econometric Models:
        Essays in Honour of Thomas Rothenberg (edited by D. W. Andrews and
        J. H. Stock)},
        pages ={80--108},
	year={2005},
	publisher={Cambridge University Press, London}
}

@incollection{cr2020,
	author={Andrew Chesher and Adam M. Rosen},
	title={{Generalized Instrumental Variable Models, Methods, and Applications}},
        booktitle={Handbook of Econometrics 7A (edited by S. N. Durlauf
        and L. P. Hansen and J. J. Heckman and R. L. Matzkin)},
        pages ={1--110},
	year={2020},
	publisher={Elsevier INC, London}
}

@book{dudley2003,
	author={R. M. Dudley},
	title={{Real Analysis and Probability}},
	year={2003},
	publisher={Cambridge University Press, London}
}

@book{BT1985,
	author={Bowden, R. and Turkington, D. A.},
	title={{Instrumental Variables}},
	year={1985},
	publisher={Cambridge University Press, London}
}

@book{dg2025,
	author={Chaohua Dong and Jiti Gao},
	title={{Modern Series Methods in Econometrics and Statistics}},
	year={2025},
	publisher={Springer, New York}
}

@book{greene2018,
	title={Econometric Analysis: Eighth Edition},
	author={William Greene},
	year={2018},
	publisher={Pearson, New York}
}

@book{jsmw2018,
	title={Introduction to Econometrics: Fourth Edition},
	author={James H. Stock and Mark W. Watson},
	year={2018},
	publisher={Pearson, New York}
}

@book{jmw2001,
	title={Econometric Analysis of Cross Section and Panel Data},
	author={Jeffrey M. Wooldridge},
	year={2001},
	publisher={MIT Press, Boston}
}

@book{amemiya1985,
	title={{Advanced Econometrics}},
	author={Takeshi Amemiya},
	year={1985},
	publisher={Harvard University Press, Boston}
}

@article{cff2025,
 title={Iterative estimation of nonparametric regressions with
 continuous endogenous variables and discrete instruments},
  author={Centorrino, Samuele and F{\'e}ve, Fr{\'e}d{\'e}rique and
  Florens, Jean-Pierre},
  journal={Journal of Econometrics},
  volume={247},
  pages={105950},
  year={2025}
}

@article{newey2013,
 author = {Whitney K. Newey},
  title = {Nonparametric Instrumental Variables Estimation},
 journal = {The American Economic Review, Papers and Proceedings of
 the 125 Annual Meeting of the American Economic Association},
 number = {3},
 pages = {550--556},
 volume = {103},
 year = {2013}
}

@article{blundell2004,
 author = {Blundell, R. W. and Powell, J. L.},
  title = {Endogeneity in semiparametric binary response models},
 journal = {Review of Economic Studies},
 number = {3},
 pages = {655--679},
 volume = {71},
 year = {2004}
}

@article{card2001,
 author = {David Card},
  title = {Estimating the return to schooling: progress on some persistent
  econometric problems},
 journal = {Econometrica},
 number = {5},
 pages = {1127-1160},
 volume = {69},
 year = {2001}
}

@article{lewbel1998,
 author = {Lewbel, Arthur},
  title = {Semiparametric latent variable model estimation with endogenous or
  mis--measured regressors},
 journal = {Econometrica},
 number = {1},
 pages = {105-121},
 volume = {66},
 year = {1998}
}

@article{ln1995,
 author = {Linton, Oliver and Nielsen, Jens, P.},
  title = {A kernel method of estimating structured nonparametric regression
  based on marginal integration},
 journal = {Biometrika},
 number = {1},
 pages = {93-100},
 volume = {82},
 year = {1995}
}

@article{gallant1981,
 author = {Gallant, A. R.},
  title = {On the bias in flexible functional forms and an essentially
  unbiased form: The {Fourier} flexible form},
 journal = {Journal of Econometrics},
 number = {2},
 pages = {211--245},
 volume = {15},
 year = {1981}
}

@article{anderson2005,
 author = {T. W. Anderson},
  title = {Origins of the limited information maximum likelihood and
  two--stage least squares estimators},
 journal = {Journal of Econometrics},
 number = {1},
 pages = {1--16},
 volume = {127},
 year = {2005}
}

@article{ac2003,
 author = {Chunrong Ai and Xiaohong Chen},
  title = {Efficient estimation of conditional moment restrictions models
  containing unknown functions},
 journal = {Econometrica},
 number = {6},
 pages = {1795--1843},
 volume = {71},
 year = {2003}
}

@article{dlp2021,
title = {A weighted sieve estimator for nonparametric time series models with
nonstationary variables},
journal = {Journal of Econometrics},
volume = {222},
number = {3},
pages = {909-932},
year = {2021},
author = {Chaohua Dong and Oliver Linton and Bin Peng}

}

@article{dgl2023,
title = {High dimensional semiparametric moment restriction models},
journal = {Journal of Econometrics},
volume = {232},
number = {3},
pages = {320-345},
year = {2023},
author = {Chaohua Dong and Jiti Gao and Oliver Linton}

}

@article{ss1997,
 author = {Douglas Staiger and James H. Stock},
  title = {Instrumental variables regression with weak instruments},
 journal = {Econometrica},
 number = {3},
 pages = {1613--1669},
 volume = {65},
 year = {1997}
}

@article{swy2002,
 author = {James H. Stock and Jonathan H. Wright and Motohiro Yogo},
  title = {A survey of weak instruments and weak identification in generalized method of moments},
 journal = {Journal of Business and Economic Statistics},
 number = {2},
 pages = {518--529},
 volume = {20},
 year = {2002}
}

@article{bck2007,
 author = {Richard Blundell and Xiaohong Chen and Dennis Kristensen},
 journal = {Econometrica},
 number = {6},
 pages = {1613--1669},
 title = {Semi--nonparametric {IV} estimation of shape--invariant Engel curves},
 volume = {76},
 year = {2007}
}

@article{amemiya1977,
 author = {Takshi Amemiya},
 journal = {Econometrica},
 number = {4},
 pages = {955--968},
 title = {The maximum likelihood and the nonlinear three-stage least squares
 estimator in the general nonlinear simultaneous equation model},
 volume = {45},
 year = {1977}
}

@article{npv1999,
 author = {Whitney K. Newey and James L. Powell and Francis Vella},
 journal = {Econometrica},
 number = {3},
 pages = {565--603},
 title = {Nonparametric estimation of triangular simultaneous equations models},
 volume = {67},
 year = {1999}
}

@article{np2003,
 author = {Whitney K. Newey and James L. Powell},
 journal = {Econometrica},
 number = {6},
 pages = {1565--1578},
 title = {Instrumental variables estimation for nonparametric models},
 volume = {71},
 year = {2003}
}

@article{ass2019,
author = {Andrews, Isaiah and Stock, James H. and Sun, Liyang},
title = {Weak instruments in instrumental variables regression: theory and practice},
journal = {Annual Review of Economics},
volume = {11},
number = {1},
pages = {727-753},
year = {2019}
}

@article{pmr1988,
author = {Robinson, Peter M.},
title = {Root--{N} consistent semiparametric regression},
journal = {Econometrica},
volume = {56},
number = {3},
pages = {931--954},
year = {1988}
}

@book{hlg2000,
	author={Wolfgang H\"{a}rdle and Hua Liang and Jiti Gao},
	title={{Partially Linear Models}},
	year={2000},
	publisher={Springer/Verlag, New York}
}

@article{Imbens2014,
 author = {Guido W. Imbens},
  title = {Instrumental variables: An Econometrician's Perspective},
 journal = {Statistical Science},
 number = {3},
 pages = {323-358},
 volume = {29},
 year = {2014}
}

@book{wooldridge2016,
  title={Introductory Econometrics: A Modern Approach 6th Ed.},
  author={Wooldridge, Jeffrey M},
  year={2016},
  publisher={Cengage Learning, Boston}
}

@article {phillips2001,
    AUTHOR = {Joon Y. Park and Peter C. B. Phillips},
     TITLE = {{Nonlinear regression with integreted time series}},
   JOURNAL = {Econometrica},
  FJOURNAL = {},
    VOLUME = {69},
      YEAR = {2001},
    NUMBER = {1},
     PAGES = {117-161},
      ISSN = {},
   MRCLASS = {},
  MRNUMBER = {}}

@book{oliver2017book,
  title={Probability, Statistics and Econometrics},
  author={Oliver Linton},
  year={2017},
  publisher={Academic Press, New York}
}

@book {peterhall1980,
    AUTHOR = {P. G. Hall and C. C. Heyde},
     TITLE = {{Martingale Limit Theory and Its Applications}},
    SERIES = {},
 PUBLISHER = {Academic Press},
   ADDRESS = {New York},
      YEAR = {1980},
     PAGES = {},
      ISBN = {},
   MRCLASS = {},
  MRNUMBER = {},
MRREVIEWER = {}}
}

\newpage 
{\small 
\begin{appendix}

\section{Linear and Nonlinear Models}\label{Ap.Aall}
\renewcommand{\theequation}{A.\arabic{equation}}

\subsection{Linear regression}\label{Ap.A1} 

\subsubsection{A diagrammatic illustration about model (\ref{a4})}

In model (\ref{a4}): $y={\bf x}^\top{\bm\beta}_0+\varepsilon$, the regressor ${\bf x}$ has an effect on $y$ in two ways. One is a direct effect that adds ${\bf x}^\top {\bm \beta}_0$ to $y$; and the other is an indirect effect that affects $y$ through $\varepsilon$, that is, when ${\bf x}$ changes, the error term $\varepsilon$ changes as well due to correlation with ${\bf x}$, as illustrated below.

\begin{center}
\begin{tikzpicture}[node distance=2cm, auto]
    \node (x) at (0,1) {${\bf x}$};
    \node (y) at (4,1) {$y$};
    \node (w) at (2,0) {$\varepsilon$};

    \draw[->,thick] (x) -- (y) node[midway, above, red] {${\bf x}^\top{\bm\beta}_0$};
    \draw[->,thick] (x) -- (w);
    \draw[->,thick, sloped] (w) -- (y) node[midway, below, sloped, red] {$m({\bf x})$};
\end{tikzpicture}
\end{center}

More precisely, the projection $m({\bf x})=\e(\varepsilon|{\bf x})$ splits the contribution of $\varepsilon$ to $y$ into two parts, $\varepsilon=m({\bf x})+e$, where the first contribution $m({\bf x})$ stems from ${\bf x}$ while the second contribution $e$ is due to all factors orthogonal with ${\bf x}$. The effect of $m({\bf x})$ on $y$ is of all factors implicitly included in $\varepsilon$ that are correlated with ${\bf x}$. When ${\bf x}\sim {\bf x}+\Delta {\bf x}$, we have
\be
\Delta y= {\bm\beta}_0^\top \Delta {\bf x}+\Delta \varepsilon={\bm\beta}_0^\top \Delta {\bf x}+\Delta m({\bf x}) = {\bm\beta}_0^\top \Delta {\bf x}+\DF{\partial m({\bf x})}{\partial {\bf x}^\top}\Delta {\bf x}.
\nonumber
\ee

\noindent For the endogenous model, because of these two--fold effect of ${\bf x}$ on $y$, OLS could not identify the direct effect of ${\bf x}$. Our method extracts the ``energy'' of ${\bf x}$ in $\varepsilon$, exposing the effect $m({\bf x})$ in the equation, so that under certain conditions we can identify the direct and indirect effects of ${\bf x}$ on $y$.

\subsubsection{A geometric illustration about Assumption \ref{MTP1}}

We have defined the Hilbert space ${\mathcal L}^2=\{g({\bf x}):\; \e[g^2({\bf x})]<\infty\}$, in which $\e[\xi_1\xi_2]$ is an inner product for $\xi_1, \xi_2\in {\mathcal L}^2$; in Assumption \ref{MTP1}, we stipulate ${\mathcal S}$ as the closed proper subspace of ${\mathcal L}^2$. Accordingly, we have direct sum decomposition ${\mathcal L}^2={\mathcal S}\oplus {\mathcal S}^\bot$ where ${\mathcal S}^\bot$ is the orthogonal complement subspace of ${\mathcal S}$ that is the set of all elements from ${\mathcal L}^2$ orthogonal with each element in ${\mathcal S}$. Notice that both ${\mathcal S}$ and ${\mathcal S}^\bot$ are Hilbert space too, with the norm defined in ${\mathcal L}^2$. 

At element level, for any $\xi\in{\mathcal L}^2$, one can uniquely decompose $\xi=\xi_1+\xi_2$ with $\xi_1\in{\mathcal S}$ and $\xi_2\in{\mathcal S}^\bot$. Accordingly, two projection mappings can be defined as ${\mathcal P_S}(\xi)=\xi_1$ and ${\mathcal M_S}(\xi)=\xi_2$. Certainly, ${\mathcal M_S}\equiv {\mathcal I-P_S}$ where ${\mathcal I}$ is the identity operator.

In Assumption \ref{MTP1}, the condition $\lambda^\top {\bf x}\not \in {\mathcal S}$ for any $\lambda\neq 0$ implies that ${\mathcal M_S}(\lambda^\top{\bf x})$ does not degenerate, as illustrated in Figure \ref{f1}.

{\footnotesize
\tikzset{global scale/.style={
    scale=#1,
    every node/.append style={scale=#1}
  }
}
{
\begin{figure}[h]
\begin{center}
\begin{tikzpicture}[scale=0.4]
\begin{scope}[thick]
\draw[blue](-3,0)--(-1.5,2)--(8,2)--(6.5,0)--(-3,0);
\draw[blue](-1,1)--(6,1);
\draw[red](-1,1)--(6,6);\draw[red, dashed](6,6)--(6,1); \draw(5.8, 3.5)node[right]{${\mathcal M_S}({\bf x})$};
\draw(3, 4)node[left]{${\bf x}$};
\draw(3, 1.4)node[below]{${\mathcal P_S}({\bf x})$};
\draw(6.3, -0.1)node[above]{${\mathcal S}$};
\draw[red](-4,-1)--(-4,7)--(9.5,7)--(9.5,-1)--(-4,-1);
\draw(-4, 6)node[right]{${\mathcal L}^2$};
\end{scope}
\end{tikzpicture}
\end{center}
\caption{Projecting ${\bf x}$ into ${\mathcal S}$}\label{f1}
\end{figure}}
}

Consequently, as shown in Lemma 2.1 above, we have the following identifiability condition:
\be
{\bm\Sigma}_x=\e[{\mathcal M_S}({\bf x}){\mathcal M_S}({\bf x})^\top]=\e[{\bf x}{\bf x}^\top] - \sum_{j=1}^\infty \e[{\bf x}\psi_j({\bf x})] \, \e[{\bf x}^\top\psi_j({\bf x})]>{\bf 0},
\label{cdx}
\ee
ensures that $\bm{\beta}_0 = {\bm\Sigma}_x^{-1} \, \bm{\Sigma}_{xy}$ is identifiable, where $\bm{\Sigma}_{xy} = \e[\mathbf{x} \, y] - \sum_{j=1}^\infty \e[{\bf x} \, \psi_j({\bf x})] \, \e[y \, \psi_j({\bf x})]$. Furthermore, Lemma 2.2 shows that $\bm{\beta}_0$ is invariant to the choice of the basis $\{\psi_j(\cdot): j\geq 1\}$.

\subsubsection{Probability density function of $\mathbf{x}$ unknown}

Since the space ${\mathcal L}^2$ is defined as the set of all $g({\bf x})$ such that $\e[g^2({\bf x})]<\infty$, when the density of $\mathbf{x}$ is unknown, it is not easy to determine which function of ${\bf x}$ belongs to the space. In this case, we may relax the restriction $\e[g^2({\bf x})]<\infty$.

Let $\pi(x)$ be a density satisfying $c\pi(x)\le f_{\bf x}(x)\le C\pi(x)$ for some constants $C>c>0$, where $f_{\bf x}(x)$ is the density of ${\bf x}$. Define ${\mathcal L}^2_{\pi}=\{h({\bf x}): \int h^2(x)\pi(x)dx<\infty\}$. It is clear that $g({\bf x})\in {\mathcal L}^2$ iff $g({\bf x})\in {\mathcal L}^2_{\pi}$. Let ${\mathcal S}_{\pi}$ be a proper closed subspace of ${\mathcal L}^2_{\pi}$, and ${\mathcal S}_{\pi}={\rm span}\{\xi_j({\bf x}), j\ge 1\}$ where $\{\xi_j(x)\}$ is an orthonormal sequence w.r.t. $\pi(x)$, that is, $\int \xi_i(x)\xi_j(x)\pi(x)dx=\delta_{ij}$.

\noindent Because of Proposition 2.1 in \cite{BCCK2015}, for any vector $U_k({\bf x})=(\xi_1({\bf x}), \cdots, \xi_k({\bf x}))$, the eigenvalues of the matrix $\e[U_k({\bf x})U_k({\bf x})^\top]$ are bounded above and away from zero, uniformly over $k$. This implies that any subset of $\{\xi_j({\bf x}), j\ge 1\} $ is linearly independent, so we use Gram-Schmidt procedure to orthogonalize $\{\xi_j({\bf x}), j\ge 1\} $ to be $\{{\psi}_j({\bf x}), j\ge 1\}$, such that (1) ${\e}[{\psi}_i({\bf x}){\psi}_j({\bf x})]=\delta_{ij}$, and $\e[{\psi}_j({\bf x})]=0$ for all $i, j$; (2) ${\rm span}\{{\psi}_j({\bf x}), j\ge 1\}={\rm span}\{{\xi}_j({\bf x}), j\ge 1\}$. 

Therefore, $\{{\psi}_j({\bf x}), j\ge 1\}$ is an orthonormal sequence under inner product $\langle\xi, \eta\rangle=\e(\xi \eta)$, and  ${\mathcal S}={\rm span}\{{\psi}_j({\bf x}), j\ge 1\}$ is a proper subspace of ${\mathcal L}^2$; moreover, $m({\bf x})\in {\mathcal S}$ iff $m({\bf x})\in {\mathcal S}_\pi$.

We shall show these assertions using induction. First, let ${\psi}_1({\bf x})=\DF{1}{{\rm s.d.}(\xi_1({\bf x})) }\{\xi_1({\bf x})-\e[\xi_1({\bf x})]\}.$ Clearly, the assertion holds for $k=1$, that is, ${\rm span}\{\psi_1({\bf x})\}={\rm span}\{{\xi}_1({\bf x})\}$ and $\e[{\psi}_1({\bf x})]=0$.

Suppose the assertion holds for $k-1$, i.e. $\{{\psi}_j({\bf x}), j=1,\cdots, k-1\}$ is a set of orthonormal sequence, ${\rm span}\{\psi_j({\bf x}), j=1,\cdots, k-1\}={\rm span}\{{\xi}_j({\bf x}), j=1,\cdots, k-1\}$, and $\e[{\psi}_j({\bf x})]=0$, $\forall \, j\le k-1$. Define $h_k({\bf x})=\xi_k({\bf x})-\e[\xi_k({\bf x})]-\sum_{j=1}^{k-1}\e[\xi_k({\bf x}) \psi_j({\bf x})] \, {\psi}_j({\bf x}).$

$h_k({\bf x})$ is orthogonal with each of $\{{\psi}_j({\bf x}), j=1,\cdots, k-1\}$, $\e[h_k({\bf x})]=0$, and $\e[h_k^2({\bf x})]={\rm Var}[\xi_k({\bf x})]-\sum_{j=1}^{k-1}\{\e[\xi_k({\bf x}) \, {\psi}_j({\bf x})]\}^2>0;$ Otherwise $\xi_k({\bf x})$ is a combination of $\{{\psi}_j({\bf x}), j=1,\cdots, k-1\}$, a contradiction. Let ${\psi}_k({\bf x})=\DF{1}{{\rm s.d.}(h_k({\bf x}))} \, h_k({\bf x}).$ Hence, the assertion holds for $k$. By the method of induction, the assertion holds for any given $k$ and therefore for the sequence.

Noting that ${\rm span}\{\psi_j({\bf x}), j=1,\cdots\}={\rm span}\{{\xi}_j({\bf x}), j=1,\cdots\}$, $m({\bf x})$ can be expanded as an orthogonal infinite series in terms of $\{{\psi}_j({\bf x}), j\geq 1\}$ iff it can be expanded as an orthogonal infinite series in terms of $\{{\xi}_j({\bf x}), j\geq 1\}$. So, $m({\bf x})\in {\mathcal S}$ iff $m({\bf x})\in {\mathcal S}_\pi$. It is worth mentioning that in the Gram-Schmidt procedure, all mathematical moments in population can be replaced by their sampling versions, such as replacing $\e\left[\psi_j(\mathbf{x}) \, \psi_l(\mathbf{x})\right]$ by $\frac{1}{n} \sum_{i=1}^n \psi_j(\mathbf{x}_i) \, \psi_l(\mathbf{x}_i)$ for all $(j,l)$, in practice, when there is no prior knowledge about the distributional structure of the data under study.

\subsection{Nonlinear regression}\label{Ap.A2}

\subsubsection{Estimation of nonlinear regression}

Consider a general nonlinear regression model of the form:
\begin{equation}\label{nonlinear1}
y=g({\bf x}, {\bm \theta}_0)+\varepsilon,
\end{equation}
where the functional form of $g(\cdot, \bm{\theta}_0)$ is assumed to be parametrically known, but is indexed by ${\bm \theta}_0$ as a vector of unknown parameters, and $(\varepsilon, \mathbf{x})$ is the same as defined in Section \ref{Sec2}, in which the $p$--dimensional ${\bm \theta}_0$ is an interior point of a compact set $\Theta\subset \mathbb{R}^p$. 

The relevant literature (see, \cite{amemiya1977}, \cite{amemiya1985},  \cite{BT1985}, \cite{newey1990}, for example) discusses endogeneity issues for model (\ref{nonlinear1}). Before we discuss about how to identify and then estimate $\bm{\theta}_0$, we provide the following examples.

\noindent{\bf Example A.1}: Consider the following log(wage) and education model:
\be
y = g(\mathbf{x}, \bm{\theta}_0) + \varepsilon = \theta_{00}  +  {\theta_{01}} \, x_1 +  {\theta_{02}}\, x_2 +  \theta_{03} \, x_2^2 + \varepsilon,
\label{add2}
\ee
where $y$ denotes the log(wage), $x_{1}$, $x_{2}$ and $x_2$ denote the education, experience, and experience$^2$, respectively, and $\varepsilon$ is uncorrelated with $x_2$ and $x_2^2$, but is correlated with $x_1$ (see, for example,  \citet[Chapter 15]{wooldridge2016}), and $\bm{\theta}_0 = \left(\theta_{00}, \theta_{01}, \theta_{02}, \theta_{03}\right)^{\top}$ is a vector of unknown parameters.

Appendix E shows that the proposed SP estimation method works well numerically when $m(\mathbf{x})$ contains second--order polynomial terms of $\mathbf{x}$.
\smallskip

\noindent{\bf Example A.2}: Consider the following additive case where
\be
y = g(\mathbf{x}, \bm{\theta}_0) + \varepsilon = \sum_{j=1}^q g_j(\mathbf{x}^{\top} \bm{\alpha}_{0j}) \, \gamma_{0j} + \varepsilon,
\label{add3}
\ee
where $\bm{\theta}_0 = \left(\bm{\alpha}_{01}^{\top}, \cdots, \bm{\alpha}_{0q}^{\top}; \gamma_{01}, \cdots, \gamma_{0q}\right)^{\top}$ is a vector of unknown parameters of interest, and each $g_j(\cdot)$ is a known function commonly used in empirical applications.

Model (\ref{add3}) covers a class of important models often used in the neural network literature for the case where $g_j(\cdot) = \sigma(\cdot)$ is chosen as the so--called activation function (see, for example, \cite{BK2019}). While permitting possible endogeneity, Example E3 of Appendix E shows that the SP method works well numerically for a special case of model (\ref{add3}).
\smallskip

We now extend our SP method to identify and estimate $\bm{\theta}_0$ in model (\ref{nonlinear1}). Recall $m(\mathbf{x}) = \e[\varepsilon|\mathbf{x}]$ and then rewrite model (\ref{nonlinear1}) as follows:
\be
y=g({\bf x}, {\bm \theta}_0) + \varepsilon \ \ \mbox{and} \ \ \varepsilon =m(\mathbf{x})+e \ \ \mbox{with \ $\e[e|\mathbf{x}]=0$}.
\label{nonlinear2}
\ee

Note that the introduction of model (\ref{nonlinear2}) not only covers the linear model in (\ref{a1}), but also facilitates the discussion of a wide range of nonlinear and non--separable econometric models as outlined in  Appendix A.3 of the supplementary document.

As in model (\ref{a6a}), we rewrite model (\ref{nonlinear2}) as follows:
\be
 y=g({\bf x}, {\bm \theta}_0) + \mathbf{v}(\mathbf{x})^{\top} \bm{\gamma}_0 + r(\mathbf{x}) + e,
\label{nonlinear3}
\ee
where $r(\mathbf{x}) = \sum_{j=k+1}^{\infty} \psi_j(\mathbf{x}) \, \gamma_j$, and the true version: $(\bm{\theta}_0, \bm{\gamma}_0)$ is chosen such that to minimize 
\be
S(\bm{\theta}, \bm{\gamma}) \equiv: \e\left(\left[y - g({\bf x}, {\bm \theta}) - \mathbf{v}(\mathbf{x})^{\top} \bm{\gamma}\right]^2\right) \ \ \mbox{at $(\bm{\theta}_0, \bm{\gamma}_0)$}.
\label{mini}
\ee 

Letting $\frac{\partial S(\bm{\theta}, \bm{\gamma})}{\partial \bm{\theta}}|_{(\bm{\theta} = \bm{\theta}_0, \bm{\gamma} = \bm{\gamma}_0)}=0$ and $\frac{\partial S(\bm{\theta}, \bm{\gamma})}{\partial \bm{\gamma}}|_{(\bm{\theta} = \bm{\theta}_0, \bm{\gamma} = \bm{\gamma}_0)}=0$, we then have
\bea
&& \frac{\partial S(\bm{\theta}, \bm{\gamma})}{\partial \bm{\theta}}|_{(\bm{\theta} = \bm{\theta}_0, \bm{\gamma} = \bm{\gamma}_0)} = - 2 \e\left[\bm{g}_1(\mathbf{x}, \bm{\theta}_0) \left(y - g(\mathbf{x}, \bm{\theta}_0) - \mathbf{v}(\mathbf{x})^{\top} \bm{\gamma}_0\right)\right] = 0,
\label{non3a}\\
&& \frac{\partial S(\bm{\theta}, \bm{\gamma})}{\partial \bm{\gamma}}|_{(\bm{\theta} = \bm{\theta}_0, \bm{\gamma} = \bm{\gamma}_0)} = - 2 \e\left[\mathbf{v}(\mathbf{x}) \left(y - g(\mathbf{x}, \bm{\theta}_0) - \mathbf{v}(\mathbf{x})^{\top} \bm{\gamma}_0\right)\right] = 0,
\label{non3b}
\eea
which imply
\begin{eqnarray}
    &&\e[\mathbf{g}_1(\mathbf{x}, \bm{\theta}_0) \, y]  - \sum_{j=1}^{\infty} \e[\mathbf{g}_1(\mathbf{x}, \bm{\theta}_0) \, \psi_j(\mathbf{x})] \, \e[y\, \psi_j(\mathbf{x})] \notag \\
=&& \e[\mathbf{g}_1(\mathbf{x}, \bm{\theta}_0) \, g(\mathbf{x}, \bm{\theta}_0)] - \sum_{j=1}^{\infty} \e[\mathbf{g}_1(\mathbf{x}, \bm{\theta}_0) \, \psi_j(\mathbf{x})] \, \e[g(\mathbf{x}, \bm{\theta}_0)\, \psi_j(\mathbf{x})],
\label{nonlinear4}
\end{eqnarray}
where $\mathbf{g}_1(\mathbf{x}, \bm{\theta}_0) =\frac{\partial g(\mathbf{x}, \bm{\theta})}{\partial \bm{\theta}}|_{\bm{\theta} = \bm{\theta}_0}$. 

Recall ${\mathcal L}^2$, ${\mathcal S}$, ${\mathcal P}_{\mathcal S}$ and ${\mathcal M}_{\mathcal S}$ as defined in the same way as in Section 2.2 above. Recall also that the operator ${\mathcal P}_{\mathcal S}$ projects any element of ${\mathcal L}^2$ into ${\mathcal S}$, while the operator ${\mathcal M}_{\mathcal S}$ projects any element of ${\mathcal L}^2$ into ${\mathcal S}^\bot$. In particular, we have ${\mathcal M}_{\mathcal S}(\varepsilon)=\varepsilon-{\mathcal M}_{\mathcal P}(\varepsilon)=e$.

If we use the projection mapping operator, equation (\ref{nonlinear4}) reduces to
\begin{equation*}
\e[{\mathcal M_S}(\mathbf{g}_1(\mathbf{x}, \bm{\theta}_0)) {\mathcal M_S}(y-\mathbf{g}(\mathbf{x}, \bm{\theta}_0))]=0,
\end{equation*}
which is the first order condition of $\min_{{\bm \theta}}\e[{\mathcal M_S}(y-\mathbf{g}(\mathbf{x}, \bm{\theta}))]^2$ at $\bm{\theta} = \bm{\theta}_0$. It follows from model (\ref{nonlinear1}), ${\mathcal M_S}(y-\mathbf{g}(\mathbf{x}, \bm{\theta}_0))=e$, which is the corresponding nonlinear parametric model without endogeneity. Therefore, the conventional NLS applies.
In a similar way to the proof of Lemma \ref{lemma2.2} above, it can be shown that the identifiability of $\bm{\theta}_0$ by (\ref{nonlinear4}) is invariant to the choice of $\{\psi_j(\cdot): j\geq 1\}$.

In the case of $g(\mathbf{x}, \bm{\theta}_0) = \mathbf{x}^{\top} \bm{\beta}_0$ in Section \ref{Sec2}, Equation (\ref{nonlinear4}) reduces to Lemma \ref{lemma2.1}(ii) by requiring
\be
\bm{\Sigma}_x = \e[\mathbf{x} \, \mathbf{x}^{\top}] - \sum_{j=1}^\infty \e[{\bf x} \, \psi_j({\bf x})] \, \e[\mathbf{x}^{\top} \, \psi_j({\bf x})]>{\bf 0}
\label{nonlinear4a}
\ee
to ensure that $\bm{\beta}_0 = {\bm\Sigma}_x^{-1} \, \bm{\Sigma}_{xy}$ is identifiable, where $\bm{\Sigma}_{xy} = \e[\mathbf{x} \, y] - \sum_{j=1}^\infty \e[{\bf x} \, \psi_j({\bf x})] \, \e[y \, \psi_j({\bf x})]$. 

We now introduce the following assumption.

\begin{assumption}\label{nonlinass1}

Let ${\mathcal S}$={\rm span}$\{\psi_1(\mathbf{x}), \psi_2(\mathbf{x}), \cdots\}$ where $\{\psi_j(\cdot),\, j\ge 1\}$ is an orthonormal sequence with $\e[\psi_j({\bf x})]=0$ and $\sup_j\e[\psi_j^4({\bf x})]<\infty$. Let ${\mathcal S}_0=\{g(\mathbf{x}, {\bm\theta}), {\bm\theta}\in \Theta\}$. Suppose that (i) $m(\mathbf{x})\in {\mathcal S}$, (ii) ${\mathcal S}_0\cap {\mathcal S}=\emptyset$ or $\{0\}$, and $({\mathcal S}_0-g(\mathbf{x}, {\bm \theta}_0))\cap {\mathcal S}=\{0\}$ where $0$ stands for the null function; and (iii) $\e\left[\mathbf{g}_1(\mathbf{x}, {\bm \theta}_0)\, m(\mathbf{x})\right]\ne 0$.

\end{assumption}

\begin{rem}

Note that ${\mathcal S}_0-g(\mathbf{x}, {\bm \theta}_0)$ means the set of all elements of ${\mathcal S}_0$ minus $g(\mathbf{x}, {\bm \theta}_0)$. Condition (i) confines the function $m(\mathbf{x})$ in ${\mathcal S}$. Due to this, we may be able to identify ${\bm \theta}_0$ in the endogenous model. Note also that, as a set, $\mathcal{S}_0$ does not necessarily contain the null function, while, as a subspace, $\mathcal{S}$ does. In Condition (ii), ${\mathcal S}_0\cap {\mathcal S}=\emptyset$ or $\{0\}$ means that, at most $\mathcal{S}_0$ and $\mathcal{S}$ have a common function, i.e. the null function; the set ${\mathcal S}_0-g(\mathbf{x}, {\bm \theta}_0)$ shifts all elements of ${\mathcal S}_0$ by $g(\mathbf{x}, {\bm \theta}_0)$, and we require the intersection $({\mathcal S}_0- g(\mathbf{x}, {\bm \theta}_0))\cap {\mathcal S}$ only contains the null function that facilitates the establishment of the consistency of the estimator defined later. 

In addition, Conditions (ii) and (iii) together imply that $g(\mathbf{x}, {\bm \theta}_0)\not \in {\mathcal S}$ and $g(\mathbf{x}, {\bm \theta}_0)\not \in {\mathcal S}^\bot$. On the one hand, this maintains the endogeneity in model \eqref{nonlinear1}, and on the other hand, it enables us to identify ${\bm \theta}_0$. Consequently, Assumption \ref{nonlinass1} ensures that we are able to identify ${\bm \theta}_0$ and estimate it by the NLS method. The rest of the discussion of Assumption \ref{nonlinass1} is similar to that of Assumption \ref{MTP1}.
\end{rem}

Therefore, operating ${\mathcal M_S}$ on both sides of model \eqref{nonlinear1} yields ${\mathcal M_S}(y)={\mathcal M_S}(g({\bf x},{\bm \theta}_0))+e$. This motivates the following observation:
\bea
&&\e\{[{\mathcal M_S}(y-g({\bf x},{\bm \theta}))]^2\}  = \e\{[{\mathcal M_S}(\varepsilon+g({\bf x},{\bm \theta}_0)-g({\bf x},{\bm \theta}))]^2\}\notag\\
=&& \e\{[e+{\mathcal M_S}(g({\bf x},{\bm \theta}_0)-g({\bf x},{\bm \theta}))]^2\}
 =  \e[e^2]+ \e\{[{\mathcal M_S}(g({\bf x},{\bm \theta}_0)-g({\bf x},{\bm \theta}))]^2\} \ge \e[e^2],
\nonumber
\eea
where the equality holds if and only if $\e\{[{\mathcal M_S}(g({\bf x},{\bm \theta}_0)-g({\bf x},{\bm \theta}))]^2\}=0$. Thus, under Equation (\ref{nonlinear4}) and Assumption \ref{nonlinass1}, ${\bm \theta}_0$ is the unique minimum point of $\e\{[{\mathcal M_S}(y-g({\bf x},{\bm \theta}))]^2\}$, so that ${\bm \theta}_0$ is uniquely identifiable and NLS is applicable to estimate it. 
\smallskip

Given $\{(y_i, {\bf x}_i), i=1, \cdots,n\}$, a sampling version of model \eqref{nonlinear2} is as follows:
\begin{equation}\label{nonlinear5}
  y_i=g({\bf x}_i, {\bm \theta}_0)+m({\bf x}_i)+e_i, \ \ i=1, \cdots,n.
\end{equation}

Note that, under Assumption \ref{nonlinass1}, we have $m(\mathbf{x})=\sum_{j=1}^\infty \psi_j(\mathbf{x}) \, \gamma_j$ with $\gamma_j=\e[m(\mathbf{x}) \, \psi_j(\mathbf{x})]$,  and for a given truncation parameter $k>1$, define the partial sum $m_k(\mathbf{x})=\mathbf{V}_k(\mathbf{x})^\top {\bm\gamma}$, where $\mathbf{V}_k(\mathbf{x})=(\psi_1(\mathbf{x}), \cdots, \psi_k(\mathbf{x}))^\top$ and ${\bm\gamma}=(\gamma_1, \cdots, \gamma_k)^\top$. 

We also define $\delta_k(\mathbf{x})=\sum_{j=k+1}^\infty \gamma_j \psi_j(\mathbf{x})$ for us to rewrite (\ref{nonlinear5}) as 
\begin{equation}\label{nonlinear6}
  y_i=g({\bf x}_i, {\bm \theta}_0)+ \mathbf{V}_k({\bf x}_i)^\top{\bm \gamma}+\delta_k({\bf x}_i)+e_i,
\end{equation}
which can be written in matrix form:
\begin{equation}\label{nonlinear7}
{\bf y}=\mathbf{G}({\bm \theta}_0)+{\bf V}{\bm \gamma}+{\bm\delta}+{\bf e},
\end{equation}
where $\mathbf{G}({\bm \theta}_0)=(g({\bf x}_1, {\bm \theta}_0), \cdots, g({\bf x}_n, {\bm \theta}_0))^\top$.

Defining ${\bf P}_v={\bf V}({\bf V}^\top {\bf V})^{-1}{\bf V}^\top$ and ${\bf M}_v={\bf I}_n-{\bf P}_v$, we then have from \eqref{nonlinear7} that ${\bf M}_v({\bf y}-\mathbf{G}({\bm \theta}_0))={\bf M}_v({\bm\delta}+{\bf e})$. Letting $L_n( {\bm \theta})=\DF{1}{n}\|{\bf M}_v({\bf y}- \mathbf{G}({\bm \theta}))\|^2$, the estimator $\widehat{\bm\theta}$ is then defined by
\begin{equation}\label{nonlinear8}
\widehat{\bm\theta}=\underset{{\bm \theta}\in \Theta}{\arg\min}\, L_n( {\bm \theta}).
\end{equation}

\begin{theo}[Consistency]\label{thm.A1}
Suppose $\{(y_i, {\bf x}_i), i=1, \cdots,n\}$ is an i.i.d. sequence, $\e[e_1^2]=\sigma_e^2<\infty$. In addition to Assumption \ref{nonlinass1}, suppose that for any ${\bm \theta}\in \Theta$, ${\bm \theta}\ne{\bm \theta}_0$, $\lambda(\{x: \; g(x, {\bm \theta})\ne g(x, {\bm \theta}_0)\})>0$ where $\lambda$ is Lebesgue measure; $k^2=o(n)$ as $n\to\infty$. Then, $\widehat{\bm \theta}\to_P{\bm \theta}_0$ as $(k, n)\rightarrow (\infty, \infty)$.
\label{3.1th}
\end{theo}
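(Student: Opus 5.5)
We follow the standard route for extremum estimators: show that $L_n(\bm\theta)$ converges uniformly in probability over $\Theta$ to the deterministic limit
\[
L(\bm\theta)=\e\{[{\mathcal M_S}(y-g({\bf x},\bm\theta))]^2\}=\sigma_e^2+\e\{[{\mathcal M_S}(g({\bf x},\bm\theta_0)-g({\bf x},\bm\theta))]^2\}.
\]
We then show that $L$ has a well-separated unique minimum at $\bm\theta_0$. Compactness of $\Theta$ and the argmin argument (e.g.\ Amemiya's consistency theorem) then give $\widehat{\bm\theta}\to_P\bm\theta_0$.

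\textbf{Identification.} By the display preceding (\ref{nonlinear5}), $L(\bm\theta)\ge\sigma_e^2$, with equality iff ${\mathcal M_S}(g(\cdot,\bm\theta_0)-g(\cdot,\bm\theta))=0$. Equality means $g(\cdot,\bm\theta)-g(\cdot,\bm\theta_0)\in({\mathcal S}_0-g(\cdot,\bm\theta_0))\cap{\mathcal S}=\{0\}$ by Assumption \ref{nonlinass1}(ii). So $g(\cdot,\bm\theta)=g(\cdot,\bm\theta_0)$ a.s. The Lebesgue-measure condition, together with a density of ${\bf x}$ bounded below on its support as implicitly used in Appendix A.1.3, then forces $\bm\theta=\bm\theta_0$. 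Well-separatedness, $\inf_{\|\bm\theta-\bm\theta_0\|\ge\eta}L(\bm\theta)>\sigma_e^2$, follows from compactness once $L$ is continuous. Continuity holds if $g(x,\cdot)$ is continuous with an $L^2$ envelope $\sup_{\Theta}|g({\bf x},\bm\theta)|$; this is implicit in the smoothness needed for $\mathbf g_1$, and we would state it.

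\textbf{Uniform convergence.} Write ${\bf M}_v({\bf y}-\mathbf G(\bm\theta))={\bf M}_v({\bf D}(\bm\theta)+\bm\delta+{\bf e})$ with ${\bf D}(\bm\theta)=\mathbf G(\bm\theta_0)-\mathbf G(\bm\theta)$. Expanding,
\[
L_n(\bm\theta)=\tfrac1n{\bf e}^\top{\bf M}_v{\bf e}+\tfrac1n{\bf D}^\top{\bf M}_v{\bf D}+\tfrac2n{\bf D}^\top{\bf M}_v{\bf e}+R_n(\bm\theta),
\]
where $R_n$ collects the $\bm\delta$ terms.

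The first term equals $\frac1n{\bf e}^\top{\bf e}-\frac1n{\bf e}^\top{\bf P}_v{\bf e}$. We have $\e[{\bf e}^\top{\bf P}_v{\bf e}]\le C k$, using $\e[e|{\bf x}]=0$ and $\operatorname{tr}{\bf P}_v=k$ (with conditional variances bounded, or via a trace bound). Hence this term is $\sigma_e^2+O_P(k/n)$.

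For $R_n$, we use $\frac1n\|\bm\delta\|^2=O_P(\e\delta_k^2)=o_P(1)$ by Parseval and $m\in{\mathcal S}$, together with Cauchy--Schwarz.

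For the cross term, $\frac1n{\bf D}^\top{\bf e}$ is a mean-zero average and converges uniformly by a ULLN under the envelope and continuity. The piece $\frac1n{\bf D}^\top{\bf P}_v{\bf e}$ is bounded by $(\frac1n\|{\bf D}\|^2)^{1/2}(\frac1n{\bf e}^\top{\bf P}_v{\bf e})^{1/2}=O_P(\sqrt{k/n})$ uniformly in $\bm\theta$.

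\textbf{Main obstacle.} The hardest step is the deterministic-like term $\frac1n{\bf D}^\top{\bf M}_v{\bf D}$, which must converge uniformly to $\e\{[{\mathcal M_S}D]^2\}$ with $D(\bm\theta)$ the corresponding function of ${\bf x}$. We write it as
\[
\tfrac1n{\bf D}^\top{\bf D}-\Bigl(\tfrac1n{\bf V}^\top{\bf D}\Bigr)^\top\Bigl(\tfrac1n{\bf V}^\top{\bf V}\Bigr)^{-1}\Bigl(\tfrac1n{\bf V}^\top{\bf D}\Bigr).
\]
Then we argue in three steps.
\begin{enumerate}
\item[(a)] $\|\frac1n{\bf V}^\top{\bf V}-{\bf I}_k\|=O_P(k/\sqrt n)=o_P(1)$ by $\sup_j\e\psi_j^4<\infty$ and $k^2=o(n)$, so the inverse has eigenvalues bounded in probability.
\item[(b)] $\|\frac1n{\bf V}^\top{\bf D}-\e[{\bf V}_k D]\|^2=O_P(k/n)$ uniformly, obtained from fourth-moment bounds on $\psi_j$, the envelope, and a bracketing or Lipschitz-in-$\bm\theta$ argument.
\item[(c)] $\e[{\bf V}_kD]^\top\e[{\bf V}_kD]=\sum_{j\le k}\e[D\psi_j]^2\to\|{\mathcal P_S}D\|^2$ by Lemma \ref{lemma2.1}(i). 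This convergence is uniform on compact $\Theta$ by Dini's theorem, since the partial sums are monotone in $k$ and continuous in $\bm\theta$ with a continuous limit.
\end{enumerate}
Combining (a)--(c) gives $\sup_\Theta|L_n-L|=o_P(1)$, and consistency follows. The delicate parts are the uniformity in $\bm\theta$ of (b) and (c); we would impose, if needed, a Lipschitz condition $|g(x,\bm\theta)-g(x,\bm\theta')|\le G(x)\|\bm\theta-\bm\theta'\|$ with $\e G^2<\infty$ to make them routine.
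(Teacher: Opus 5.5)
Your proposal is correct and follows the paper's skeleton. It uses the same expansion of $L_n(\bm\theta)$ into the ${\bf e}$, $\bm\delta$ and $\mathbf{G}(\bm\theta_0)-\mathbf{G}(\bm\theta)$ pieces. It reaches the same limit $\sigma_e^2+\e\{[{\mathcal M}_{\mathcal S}(g({\bf x},\bm\theta_0)-g({\bf x},\bm\theta))]^2\}$, via $\frac1n{\bf V}^\top{\bf V}={\bf I}_k+o_P(1)$ and Parseval. It also uses the same identification argument.

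The difference is in what is proved about the convergence. The paper shows $L_n(\bm\theta)\to_P L(\bm\theta)$ only for each fixed $\bm\theta$, then cites the Park--Phillips sufficient condition. It never verifies the control of $L_n$ uniformly over $\{\|\bm\theta-\bm\theta_0\|\ge\eta\}$ that any argmin argument needs. It also passes from $M(\bm\theta,\bm\theta_0)=0$ to $\bm\theta=\bm\theta_0$ in one line.

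You supply both missing pieces. First, you give uniform convergence with a well-separated minimum. Second, you spell out the chain ${\mathcal M}_{\mathcal S}D=0\Rightarrow g(\cdot,\bm\theta)-g(\cdot,\bm\theta_0)\in({\mathcal S}_0-g(\cdot,\bm\theta_0))\cap{\mathcal S}=\{0\}\Rightarrow\bm\theta=\bm\theta_0$, using a density bounded below. The price is conditions absent from the statement:
\begin{itemize}
\item an envelope with enough moments;
\item continuity or Lipschitz continuity in $\bm\theta$;
\item bounded conditional variance of $e$;
\item positivity of the density where the Lebesgue condition is imposed.
\end{itemize}
The paper's argument tacitly needs these as well. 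Your Cauchy--Schwarz bound $O_P(\sqrt{k/n})$ for $\frac1n{\bf D}^\top{\bf P}_v{\bf e}$ is also the correct order. The paper's claimed $O_P(k/n)$ for that piece is too optimistic, though harmless.

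One simplification concerns step (b). The uniform rate $O_P(k/n)$ you state there is more than you need, and a simple covering argument does not readily deliver it under only $k^2=o(n)$; uniform $o_P(1)$ suffices. Under your Lipschitz condition this is immediate because ${\bf M}_v$ is a contraction:
$\bigl|n^{-1/2}\|{\bf M}_v{\bf D}(\bm\theta)\|-n^{-1/2}\|{\bf M}_v{\bf D}(\bm\theta')\|\bigr|\le\bigl(\frac1n\sum_{i=1}^n G^2({\bf x}_i)\bigr)^{1/2}\|\bm\theta-\bm\theta'\|$.
This Lipschitz constant is $O_P(1)$ and does not depend on $k$. The population counterpart $\|{\mathcal M}_{\mathcal S}D_{\bm\theta}\|$ is Lipschitz with constant $(\e G^2)^{1/2}$. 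Pointwise convergence plus this equicontinuity on compact $\Theta$ then gives uniform convergence of $\frac1n{\bf D}^\top{\bf M}_v{\bf D}$ directly, without bracketing or Dini.
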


The additional condition $\lambda(\{x: \; g(x, {\bm \theta})\ne g(x, {\bm \theta}_0)\})>0$ is necessary for the asymptotic consistency of the NLS estimator, as it helps identify ${\bm \theta}$ from ${\bm \theta}_0$ when $g(x, {\bm \theta})\ne g(x, {\bm \theta}_0)$ for a set of points, $x$'s, whose measure is greater than zero.
\smallskip

Let $\mathbf{S}_n({\bm \theta})= \DF{\partial}{\partial {\bm \theta}}L_n( {\bm \theta})=-\DF{2}{n}\DF{\partial}{\partial {\bm \theta}} \mathbf{G}({\bm \theta})^\top {\bf M}_v({\bf y}- \mathbf{G}({\bm \theta}))$ and
\begin{align*}
\mathbf{H}_n({\bm \theta})= &\DF{\partial^2}{\partial {\bm \theta}\partial {\bm \theta}^\top }L_n( {\bm \theta})=\DF{2}{n}\DF{\partial}{\partial {\bm \theta}} \mathbf{G}({\bm \theta})^\top {\bf M}_v\DF{\partial}{\partial {\bm \theta}} \mathbf{G}({\bm \theta})-\DF{2}{n}\DF{\partial^2}{\partial {\bm \theta}\partial {\bm \theta}^\top} \mathbf{G}({\bm \theta}) {\bf M}_v({\bf y}- \mathbf{G}({\bm \theta})),
\end{align*}
be the score function and Hessian matrix of $L_n( {\bm \theta})$, respectively. Before we establish Theorem \ref{thm.A2} below, we introduce the following assumption.

\begin{assumption}\label{nonlinass2}
Let $g(\mathbf{x}, {\bm\theta})$ be differentiable w.r.t. ${\bm\theta}$ up to second order such that (i) All elements of $\DF{\partial}{\partial {\bm \theta}} g(\mathbf{x}, {\bm \theta}_0)$ are not in $\mathcal{S}$; (ii) All elements of $\DF{\partial^2}{\partial {\bm \theta} \partial {\bm \theta}^\top}g(\mathbf{x}, {\bm \theta}_0)$ are in ${\mathcal L}^2$; (iii) The Hessian matrix $\mathbf{H}_n({\bm \theta})$ of $L_n({\bm \theta})$ are such that for some sequence $\epsilon_n\to 0$ as $n\to\infty$, $\sup_{\|{\bm \theta}-{\bm \theta}_0\|<\epsilon_n}\| \mathbf{H}_n({\bm \theta}) - \mathbf{H}_n({\bm \theta}_0)\|=o_P(1)$; (iv) Assumption 3.1(iii)(iv) remains satisfied.
\end{assumption}

\begin{rem}
(a) Conditions (i) and (ii) are commonly used in NLS estimation {while the condition (i) excludes the derivative $\DF{\partial}{\partial {\bm \theta}} g(\mathbf{x}, {\bm \theta}_0)$ from $\mathcal{S}$, likewise $g(\mathbf{x}, {\bm \theta}_0)$}. The condition imposed on (iii) removes some residue terms in the derivation of asymptotic normality. 
As shown in Theorem 23.3 of \citet{oliver2017book}, if $\widehat{\bm \theta}\to_P{\bm \theta}_0$, then there exists a sequence $\epsilon_n\to 0$ as $n\to\infty$ such that $\|\widehat{\bm \theta}-{\bm \theta}_0\|<\epsilon_n$ with probability tending to one. Hence, under the consistency of $\widehat{\bm \theta}$, we are able to focus on a shrinking neighbourhood of ${\bm \theta}_0$ in the establishment of an asymptotic normality. In particular, if $\DF{\partial^2}{\partial {\bm \theta} \partial {\bm \theta}^\top}g({\bf x}, {\bm \theta}_0)$ is Lipschitz, the condition (iii) is fulfilled automatically (see, Assumption 3.4 in \citet{dgl2023}, for example).

(b) Note that  $\bm{\Sigma}_g = \e\left[\mathbf{g}_1(\mathbf{x}, \bm{\theta}_0) \, \mathbf{g}_1^{\top}(\mathbf{x}, \bm{\theta}_0)\right] - \sum_{j=1}^{\infty} \e\left[\mathbf{g}_1(\mathbf{x}, \bm{\theta}_0) \, \psi_j(\mathbf{x})\right] \, \e\left[\mathbf{g}_1^{\top}(\mathbf{x}, \bm{\theta}_0)\, \psi_j(\mathbf{x})\right]>{\bf 0}$ follows from Assumptions \ref{nonlinass1} and \ref{nonlinass2}, where $\mathbf{g}_1(\mathbf{x}, \bm{\theta}_0)$ is assumed to satisfy the following condition: $\sum_{j=1}^{\infty} \|\e\left[\mathbf{g}_1(\mathbf{x}, \bm{\theta}_0) \, \psi_j(\mathbf{x})\right]\|^2<\infty$.
\end{rem}

\begin{theo}[Normality]\label{thm.A2}
Suppose $\{(y_i, {\bf x}_i), i=1, \cdots,n\}$ is an i.i.d. sequence with $\e[e_1^4]<\infty$ and  $\e\left[\left\|\DF{\partial}{\partial {\bm\theta}}g({\bf x}_1, {\bm\theta}_0)\right\|^4\right]<\infty$. Under Assumptions \ref{nonlinass1} and \ref{nonlinass2}, we then have as $(k, n)\rightarrow (\infty, \infty)$,
\begin{align}\label{hslimit}
\sqrt{n} \, \mathbf{S}_n(\bm \theta_0)\to_{\mathcal D}N(\mathbf{0}, \sigma_e^2{\bm\Sigma}_g) \ \ \ \mbox{and} \ \ \ \mathbf{H}_n(\bm \theta_0) &\to_P{\bm\Sigma}_g,
\end{align}
where $\bm{\Sigma}_g = \e\left[\mathbf{g}_1(\mathbf{x}, \bm{\theta}_0) \, \mathbf{g}_1^{\top}(\mathbf{x}, \bm{\theta}_0)\right] - \sum_{j=1}^{\infty} \e\left[\mathbf{g}_1(\mathbf{x}, \bm{\theta}_0) \, \psi_j(\mathbf{x})\right] \, \e\left[\mathbf{g}_1^{\top}(\mathbf{x}, \bm{\theta}_0)\, \psi_j(\mathbf{x})\right]>0$. Consequently,
\begin{equation}\label{nonlinearCLT}
\sqrt{n} \, (\widehat{\bm \theta}-{\bm \theta}_0)\to_{\mathcal D} N(\mathbf{0}, \sigma_e^2 \, {\bm\Sigma}_g^{-1}) \ \ \ \mbox{as $n\rightarrow \infty$}.
\end{equation}
\end{theo}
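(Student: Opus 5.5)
We follow the standard M-estimation route: a Taylor expansion of the first-order condition $\mathbf{S}_n(\widehat{\bm\theta})=\mathbf{0}$ about $\bm\theta_0$. By Theorem \ref{thm.A1}, $\widehat{\bm\theta}\to_P\bm\theta_0$, and $\bm\theta_0$ is interior, so with probability tending to one $\mathbf{S}_n(\widehat{\bm\theta})=\mathbf{0}$. The mean value theorem then gives $\mathbf{0}=\mathbf{S}_n(\bm\theta_0)+\mathbf{H}_n(\bar{\bm\theta})(\widehat{\bm\theta}-\bm\theta_0)$, with $\bar{\bm\theta}$ between $\widehat{\bm\theta}$ and $\bm\theta_0$, applied row by row. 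Since $\|\bar{\bm\theta}-\bm\theta_0\|<\epsilon_n$ w.p.a.1, Assumption \ref{nonlinass2}(iii) yields $\mathbf{H}_n(\bar{\bm\theta})=\mathbf{H}_n(\bm\theta_0)+o_P(1)$. Once (\ref{hslimit}) is established, (\ref{nonlinearCLT}) follows by Slutsky's theorem. Note that $\mathbf{H}_n$ is normalised by $2$, so either the constant is absorbed or the two limits carry matching factors of $2$ that cancel.

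\textbf{Hessian.} Write $\mathbf{G}_1=\partial\mathbf{G}(\bm\theta_0)/\partial\bm\theta^\top$, an $n\times p$ matrix. The first term of $\mathbf{H}_n(\bm\theta_0)$ is $\frac1n\mathbf{G}_1^\top\mathbf{M}_v\mathbf{G}_1$. We split it as
\[
\frac1n\mathbf{G}_1^\top\mathbf{G}_1-\Big(\frac1n\mathbf{G}_1^\top\mathbf{V}\Big)\Big(\frac1n\mathbf{V}^\top\mathbf{V}\Big)^{-1}\Big(\frac1n\mathbf{V}^\top\mathbf{G}_1\Big).
\]
This is exactly the argument used for $\frac1n\mathbf{X}^\top\mathbf{M}_v\mathbf{X}\to_P\bm\Sigma_x$ in Theorem \ref{MTPclt}, with $\mathbf{x}$ replaced by $\mathbf{g}_1(\mathbf{x},\bm\theta_0)$. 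The required bounds are $\|\frac1n\mathbf{V}^\top\mathbf{V}-\mathbf{I}_k\|=O_P(k/\sqrt n)$, from $\sup_j\e[\psi_j^4]<\infty$, and $\|\frac1n\mathbf{V}^\top\mathbf{G}_1-\e[\mathbf{V}_k\mathbf{g}_1^\top]\|=O_P(\sqrt{k/n})$. Together with $k^2/n\to0$ these give convergence to $\e[\mathbf{g}_1\mathbf{g}_1^\top]-\sum_{j\le k}\e[\mathbf{g}_1\psi_j]\e[\mathbf{g}_1^\top\psi_j]$. This tends to $\bm\Sigma_g$ because $\sum_j\|\e[\mathbf{g}_1\psi_j]\|^2<\infty$.

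For the second term of $\mathbf{H}_n(\bm\theta_0)$ we use $\mathbf{M}_v(\mathbf{y}-\mathbf{G}(\bm\theta_0))=\mathbf{M}_v(\bm\delta+\mathbf{e})$. The piece $\frac1n\sum_i\partial^2 g(\mathbf{x}_i,\bm\theta_0)\,e_i$ is $o_P(1)$ by the law of large numbers, using Assumption \ref{nonlinass2}(ii) and $\e[e|\mathbf{x}]=0$. The projected piece $\frac1n\partial^2\mathbf{G}^\top\mathbf{P}_v\mathbf{e}$ is $O_P(\sqrt{k/n})$. The $\bm\delta$ piece is bounded by Cauchy--Schwarz with $\frac1n\|\bm\delta\|^2=O_P(k^{-2s/d})$, which is $o(1)$ by smoothness.

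\textbf{Score, the main step.} We have
\[
-\tfrac{\sqrt n}{2}\mathbf{S}_n(\bm\theta_0)=\frac{1}{\sqrt n}\mathbf{G}_1^\top\mathbf{M}_v\mathbf{e}+\frac{1}{\sqrt n}\mathbf{G}_1^\top\mathbf{M}_v\bm\delta .
\]
The bias term is at most $\sqrt n\,\big(\frac1n\|\mathbf{M}_v\mathbf{G}_1\|^2\big)^{1/2}\big(\frac1n\|\bm\delta\|^2\big)^{1/2}=O_P(\sqrt{n\,k^{-2s/d}})=o_P(1)$, by Assumption 3.1(iv) as carried over in Assumption \ref{nonlinass2}(iv).

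For the leading term, we replace $\mathbf{M}_v\mathbf{G}_1$ by its population counterpart $\mathbf{z}_i^g=\mathbf{g}_1(\mathbf{x}_i,\bm\theta_0)-\e[\mathbf{g}_1\mathbf{V}_k^\top]\mathbf{V}_k(\mathbf{x}_i)$. The resulting difference is
\[
\frac{1}{\sqrt n}\Big(\widehat{\mathbf{A}}-\mathbf{A}\Big)\mathbf{V}^\top\mathbf{e},\qquad \widehat{\mathbf{A}}=\mathbf{G}_1^\top\mathbf{V}(\mathbf{V}^\top\mathbf{V})^{-1},\quad \mathbf{A}=\e[\mathbf{g}_1\mathbf{V}_k^\top].
\]
Here $\|\widehat{\mathbf{A}}-\mathbf{A}\|=O_P(k/\sqrt n)$ and $\|\mathbf{V}^\top\mathbf{e}/\sqrt n\|=O_P(\sqrt k)$, so the product is $O_P(k^{3/2}/\sqrt n)$. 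This is the delicate rate. Instead of the crude bound, I would use the conditional mean-zero structure and bound the conditional second moment given $\{\mathbf{x}_i\}$, which yields $O_P(\sqrt{k/n}+k/\sqrt n)=o_P(1)$ under $k^2/n\to0$.

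Finally, $\frac1{\sqrt n}\sum_i\mathbf{z}_i^g e_i$ is a sum of i.i.d. mean-zero vectors, since $\e[e|\mathbf{x}]=0$. Its variance tends to $\sigma_e^2\bm\Sigma_g$, with homoskedasticity implicit in the limit stated. For each $k$ this is a triangular array, so we verify a Lyapunov condition using $\e[e^4]<\infty$, $\e\|\mathbf{g}_1\|^4<\infty$, and the fourth-moment bound on $\mathbf{A}\mathbf{V}_k$, which is controlled by $\sup_j\e\psi_j^4$ together with the square-summability of $\e[\mathbf{g}_1\psi_j]$. The Cram\'er--Wold device then delivers the score limit. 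I expect the main obstacle to be this projection-replacement step and the uniform fourth-moment control of $\mathbf{A}\mathbf{V}_k$.
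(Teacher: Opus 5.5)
Your argument follows the paper's proof step for step. It uses a mean-value expansion of the first-order condition with Assumption \ref{nonlinass2}(iii), and treats the Hessian exactly like $\frac1n\mathbf{X}^\top\mathbf{M}_v\mathbf{X}$ in Theorem \ref{MTPclt}. It removes the bias $\frac{1}{\sqrt n}\mathbf{G}_1^\top\mathbf{M}_v\bm\delta$ by Cauchy--Schwarz and $nk^{-2s/d}\to0$, then replaces the sample projection by $\mathbb{E}[\mathbf{g}_1\mathbf{V}_k^\top]\mathbf{V}_k$ and applies a CLT.

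Where the paper is terse you are more careful, and rightly so. The paper writes the replacement as a factor $(1+o_P(1))$; your conditional-variance bound $O_P(k/\sqrt n)$ is what actually makes $k^2/n\to0$ suffice. The paper also says only the covariance limit is needed, although the summands form a triangular array in $k=k_n$. Your contraction bound $\|\mathbf{M}_v\mathbf{a}\|\le\|\mathbf{a}\|$ for the $\bm\delta$ part of the Hessian is cleaner than the paper's split through $\mathbf{P}_v$, which needs $k\|\delta_k\|\to0$. Your remark about the factor $2$ is also correct: with the paper's definitions the limits in (\ref{hslimit}) are $N(\mathbf{0},4\sigma_e^2\bm\Sigma_g)$ and $2\bm\Sigma_g$, and the constants cancel in (\ref{nonlinearCLT}).

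The one step that does not go through as stated is the Lyapunov check. A bound on $\mathbb{E}\|\mathbf{A}\mathbf{V}_k(\mathbf{x})\|^4$ uniform in $k$ does not follow from $\sup_j\mathbb{E}[\psi_j^4]<\infty$ and square-summability of $\mathbb{E}[\mathbf{g}_1\psi_j]$ alone. Minkowski only gives $\|\mathbf{A}\mathbf{V}_k\|_{L^4}\le C\sum_{j\le k}\|\mathbb{E}[\mathbf{g}_1\psi_j]\|$, which needs absolute summability. Moreover, since $e$ need not be independent of $\mathbf{x}$, $\mathbb{E}[\|\mathbf{z}^g\|^4e^4]$ is not controlled by $\mathbb{E}[e^4]$ and $\mathbb{E}\|\mathbf{z}^g\|^4$ separately.

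Verify Lindeberg instead, which needs only second moments:
\begin{itemize}
\item $\mathbf{z}^g_k=\mathbf{g}_1-\sum_{j\le k}\mathbb{E}[\mathbf{g}_1\psi_j]\psi_j\to\mathcal{M}_{\mathcal S}(\mathbf{g}_1)$ in $L^2$ by Lemma \ref{lemma2.1}(i).
\item Under the homoskedasticity $\mathbb{E}[e^2|\mathbf{x}]=\sigma_e^2$ implicit in the stated variance, $\mathbb{E}\big[\|\mathbf{z}^g_k-\mathcal{M}_{\mathcal S}(\mathbf{g}_1)\|^2e^2\big]\to0$, while $\|\mathcal{M}_{\mathcal S}(\mathbf{g}_1)\|^2e^2$ is a fixed integrable variable.
\item Hence $\{\|\mathbf{z}^g_k\|^2e^2\}_k$ is uniformly integrable, so $\mathbb{E}\big[\|\mathbf{z}^g_k\|^2e^2\,\mathbf{1}\{\|\mathbf{z}^g_k\|\,|e|>\epsilon\sqrt n\}\big]\to0$.
\end{itemize}
Cram\'er--Wold then gives the score limit.
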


The unknown quantities involved in Theorem \ref{thm.A2} can be consistently estimated respectively by their sampling versions. It can be shown that $\bm{\Sigma}_g$ is invariant to the choice of $\{\psi_j(\cdot): j\geq 1\}$ in an analogous way to Lemma 2.2.  We also have the following remarks.

\begin{rem}

(i) \, Our discussion about model (\ref{nonlinear1}) covers an extended version of the form:
\be
y = g(\mathbf{x}, \bm{\theta}_0) + \varepsilon = g(\mathbf{x}_{\rm exo}, \mathbf{x}_{\rm end}; \bm{\theta}_0) + \varepsilon, \ \ \mbox{with \ $\e[\varepsilon|\mathbf{x}_{\rm end}]\neq 0$ \ but \ $\e[\varepsilon|\mathbf{x}_{\rm exo}]=0$},
\label{extend1}
\ee
where $\mathbf{x}_{\rm end}$ and $\mathbf{x}_{\rm exo}$ are the endogenous and exogenous components of $\mathbf{x}$, respectively.

(ii) \, Note that when we specify $\e[\varepsilon|\mathbf{x}] =m(\mathbf{x}, \bm{\gamma}_0)$ parametrically, model (\ref{nonlinear3}) then becomes a parametrically nonlinear model of the form:
\be
y = g(\mathbf{x}, \bm{\theta}_0) + \varepsilon = g(\mathbf{x}, \bm{\theta}_0) + m(\mathbf{x}, \bm{\gamma}_0) + e,
\label{extend2}
\ee
where $(\bm{\theta}_0, \bm{\gamma}_0)$ is identifiable and estimable when

\bea
 \bm{\Sigma}_{gm} \equiv && \e\left[\mathbf{g}_1(\mathbf{x}, \bm{\theta}_0) \, \mathbf{g}_1^{\top}(\mathbf{x}, \bm{\theta}_0)\right] 
   \nonumber\\
   && -  \e\left[\mathbf{g}_1(\mathbf{x}, \bm{\theta}_0) \, \mathbf{m}_1^{\top}(\mathbf{x}, \bm{\gamma}_0)\right] \bm{\Sigma}_{mm}^{-1} \e\left[\mathbf{m}_1(\mathbf{x}, \bm{\gamma}_0)\, \mathbf{g}_1^{\top}(\mathbf{x}, \bm{\theta}_0)\right]>{\bf 0},
\label{extend3}
\eea
in which $\bm{\Sigma}_{mm} = \e\left[\mathbf{m}_1(\mathbf{x}, \bm{\gamma}_0)\, \mathbf{m}_1^{\top}(\mathbf{x}, \bm{\gamma}_0)\right]>{\bf 0}$, $\mathbf{g}_1(\mathbf{x}, \bm{\theta}_0) =\frac{\partial g(\mathbf{x}, \bm{\theta})}{\partial \bm{\theta}}|_{\bm{\theta} = \bm{\theta}_0}$ and $\mathbf{m}_1(\mathbf{x}, \bm{\gamma}_0) =\frac{\partial m(\mathbf{x}, \bm{\gamma})}{\partial \bm{\gamma}}|_{\bm{\gamma} = \bm{\gamma}_0}$.  It can be shown that equation (\ref{extend3}) is required for the identifiability of $(\bm{\theta}_0, \bm{\gamma}_0)$. 

In a similar way to that of ${\bm\Sigma}_g$ involved in Theorem \ref{MTPclt}, the corresponding covariance matrix becomes
\be
    \mathbf{\Omega}_{gm} \equiv: \begin{pmatrix}
        \e\left[\mathbf{g}_1(\mathbf{x}, \bm{\theta}_0) \, \mathbf{g}_1^{\top}(\mathbf{x}, \bm{\theta}_0)\right] & \e\left[\mathbf{g}_1(\mathbf{x}, \bm{\theta}_0) \, \mathbf{m}_1^{\top}(\mathbf{x}, \bm{\gamma}_0)\right] \\
        \e\left[\mathbf{m}_1(\mathbf{x}, \bm{\gamma}_0) \, \mathbf{g}_1^{\top}(\mathbf{x}, \bm{\theta}_0)\right] & \e\left[\mathbf{m}_1(\mathbf{x}, \bm{\gamma}_0) \, \mathbf{m}_1^{\top}(\mathbf{x}, \bm{\gamma}_0)\right]
    \end{pmatrix}^\top,
   \label{bc3a} 
\ee
which is invertible under Condition (\ref{extend3}).

(iii) \, Specifically, we now show that the proposed SP approach is applicable to a class of nonlinear models of the form:
\be
y = g(\mathbf{x}, \bm{\theta}_0) + \varepsilon = g(\mathbf{x}, \bm{\theta}_0) + m(\mathbf{x}) + e = g(\mathbf{x}, \bm{\theta}_0) + \mathbf{x}^{\top} \bm{\gamma}_0 + e,
\label{gaussian4}
\ee
where $\bm{\theta}_0$ is identifiable when  $\bm{\Sigma}_{gn} \equiv : \e\left[\mathbf{x} \, \mathbf{x}^{\top}\right] -  \e\left[\mathbf{x} \, \mathbf{g}_1^{\top}(\mathbf{x}, \bm{\theta}_0)\right] \bm{\Sigma}_{gg}^{-1} \e\left[\mathbf{g}_1(\mathbf{x}, \bm{\theta}_0)\, \mathbf{x}^{\top}\right]$ is positive definite, in which  $\mathbf{g}_1(\mathbf{x}, \bm{\theta}_0) =\frac{\partial g(\mathbf{x}, \bm{\theta})}{\partial \theta}|_{\bm{\theta} = \bm{\theta}_0}$ and $\bm{\Sigma}_{gg} = \e\left[\mathbf{g}_1(\mathbf{x}, \bm{\theta}_0)\, \mathbf{g}_1^{\top}(\mathbf{x}, \bm{\theta}_0)\right]$ is invertible, and $\bm{\gamma}_0$ is an unknown parameter.

Therefore, model (\ref{gaussian4}) shows that the proposed SP method covers the nonlinear regression case where $(\varepsilon, \mathbf{x})$ follows a joint Gaussian distribution, and $g(\mathbf{x}, \bm{\theta}_0)$ is nonlinear in $\mathbf{x}$. 

As discussed above, model (\ref{extend2}) covers that $(\varepsilon, \mathbf{x})$ follows a joint Gaussian distribution. By slightly modifying the assumptions and these proofs of Theorems \ref{thm.A1} and \ref{thm.A2} for model (\ref{nonlinear1}), a corresponding estimation theory can be established accordingly for model (\ref{extend2}). 

\end{rem}

In view of the above discussion in Section \ref{Sec3}.2 on weak endogeneity, we also propose a test for a full level of exogeneity versus a wide range of nonlinear weak endogeneity.

\subsubsection{Testing for nonlinear weak endogeneity}

Consider $y =g(\mathbf{x}, \bm{\theta}_0) + \varepsilon = g(\mathbf{x}, \bm{\theta}_0) + m_n(\mathbf{x}) + e$ under the following null hypothesis:
\begin{equation}\label{nona1}
H_0:\ \ \p(m_n(\mathbf{x})=0)=1, 
\end{equation}
with $m_n(\mathbf{x})=\e[\varepsilon|\mathbf{x}]$ satisfies $\e[m_n^2({\bf x})]\rightarrow 0$ as $n\rightarrow \infty$. 

To test $H_0: \, \p(m_n({\bf x}) =0)=1$, in a similar fashion to that discussed in Section \ref{Sec3},  we can establish an asymptotic normality of $T_n$ that is defined in the same way as in (\ref{4c}) in Section \ref{Sec3} with $\widetilde{e}_i = y_i - g({\bf x}_i, \widehat{\bm\theta})$.

\begin{theo}\label{tha.1}
Let Assumptions \ref{nonlinass1} and \ref{nonlinass2} hold. Let also  $\e\left[\left\|\DF{\partial}{\partial {\bm\theta}}g({\bf x}_1, {\bm\theta}_0)\right\|^4\right]<\infty$ and $\e[e_1^4]<\infty$. We then have under $H_0: \, \p(m_n({\bf x}) =0)=1$:
\begin{equation}\label{thA.1a}
\DF{1}{\sigma_n} \, T_n\rightarrow_{\mathcal D} N(0,1) \ \ \ \mbox{as $n\rightarrow \infty$},
\end{equation}
where $\sigma_n^2\equiv 2 \widehat{\sigma}_e^4 \, \sum_{i=1}^n \sum_{j=1}^n  \e\left[\left(\sum_{k=k_{\rm \min}}^{k_{\rm \max}} \mathbf{V}_k^{\top}({\bf x}_i) \mathbf{V}_k({\bf x}_j)\right)^2\right]$, in which $\widehat{\sigma}_e^2 = \frac{1}{n} \sum_{i=1}^n \widetilde{e}_i^2$.
\end{theo}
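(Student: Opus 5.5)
Under $H_0$ we have $m_n\equiv 0$, so $y_i=g(\mathbf{x}_i,\bm\theta_0)+e_i$. Write the plug-in residual as $\widetilde e_i=e_i-\Delta_i$ with $\Delta_i=g(\mathbf{x}_i,\widehat{\bm\theta})-g(\mathbf{x}_i,\bm\theta_0)$. Put $a_{ij}=\sum_{k=k_{\min}}^{k_{\max}}\mathbf{V}_k^\top(\mathbf{x}_i)\mathbf{V}_k(\mathbf{x}_j)$, so that $T_n=\sum_{i\neq j}a_{ij}\widetilde e_i\widetilde e_j$. Expanding the product gives
\[
T_n=\sum_{i\ne j}a_{ij}e_ie_j-2\sum_{i\ne j}a_{ij}e_i\Delta_j+\sum_{i\ne j}a_{ij}\Delta_i\Delta_j\equiv T_{1n}-2T_{2n}+T_{3n}.
\]
The plan is to show that $T_{1n}/\sigma_n\to_{\mathcal D}N(0,1)$ and $T_{2n},T_{3n}=o_P(\sigma_n)$. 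The latter two bounds use the order $\sigma_n^2\asymp n^2k_{\max}^3$ recorded after Theorem \ref{th4.1}. We also need $\widehat\sigma_e^2\to_P\sigma_e^2$, which follows from Theorem \ref{thm.A2}.

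\textbf{Leading term.} $T_{1n}$ is a degenerate $U$-statistic, since $\e[e_i\mid\mathbf{x}_i]=0$ and the observations are i.i.d. Conditional on the design, or using the martingale ordering $\sum_{j}\big(2e_j\sum_{i<j}a_{ij}e_i\big)$, I apply the martingale CLT of \cite{peterhall1980} in the form used for degenerate quadratic forms. Two conditions must be verified:
\begin{itemize}
\item[(a)] the conditional variance $4\sum_j e_j^2\big(\sum_{i<j}a_{ij}e_i\big)^2$, normalised by $\sigma_n^2$, tends to $1$ in probability;
\item[(b)] a Lyapunov-type fourth-moment condition holds.
\end{itemize}
Both reduce to kernel moment bounds of the form
\[
\e[a_{ij}^4]=o\big(\e[a_{ij}^2]^2 n^2\big),\qquad \e[a_{ij}a_{jk}a_{kl}a_{li}]=o\big(n^{-2}\e[a_{ij}^2]^2\big)\ \text{(distinct indices)},
\]
which is the standard de Jong / Hall criterion. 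Orthonormality gives $\e[\mathbf{V}_k^\top(\mathbf{x}_i)\mathbf{V}_l(\mathbf{x}_j)]^2$-type identities, so that $\e[a_{ij}^2]=\sum_{k,l}\min(k,l)\asymp k_{\max}^3$. Hence $\sigma_n^2=2\sigma_e^4 n^2\e[a_{12}^2](1+o(1))$. The condition $\sup_j\e[\psi_j^4]<\infty$ controls $\e[a_{ij}^4]\lesssim k_{\max}^{6}\cdot O(1)$ crudely. A sharper count shows that the four-cycle term is of order $k_{\max}^{4}$, which is $o(k_{\max}^6)$. The hypotheses $\e[e_1^4]<\infty$ and $k_{\max}^2/n\to0$ then close (a) and (b). Replacing $\e$ by the sample version in $\sigma_n^2$, or by $\sigma_e^4$, costs only $o_P(1)$ relative error.

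\textbf{Remainders.} Since $\sqrt n(\widehat{\bm\theta}-\bm\theta_0)=O_P(1)$, a Taylor expansion gives $\Delta_j=\mathbf{g}_1(\mathbf{x}_j,\bm\theta_0)^\top(\widehat{\bm\theta}-\bm\theta_0)+O_P(n^{-1})\,\rho(\mathbf{x}_j)$, using Assumption \ref{nonlinass2}(ii),(iii).
\begin{itemize}
\item \emph{Term $T_{3n}$.} Write $T_{3n}\approx(\widehat{\bm\theta}-\bm\theta_0)^\top\big[\sum_{i\neq j}a_{ij}\mathbf{g}_1(\mathbf{x}_i)\mathbf{g}_1(\mathbf{x}_j)^\top\big](\widehat{\bm\theta}-\bm\theta_0)$. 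The bracketed double sum has mean $n^2\sum_k\sum_{l\le k}\e[\mathbf{g}_1\psi_l]\e[\mathbf{g}_1^\top\psi_l]$, which is $O(n^2k_{\max})$ by the square-summability in the Remark after Assumption \ref{nonlinass2}. Its fluctuation is $O_P(n^{3/2}k_{\max}^{3/2})$. Therefore $T_{3n}=O_P(nk_{\max})=o_P(nk_{\max}^{3/2})$.
\item \emph{Term $T_{2n}$.} Write $T_{2n}\approx\sum_{i\ne j}a_{ij}e_i\mathbf{g}_1(\mathbf{x}_j)^\top(\widehat{\bm\theta}-\bm\theta_0)$. The vector $\sum_{i\ne j}a_{ij}e_i\mathbf{g}_1(\mathbf{x}_j)$ has mean zero. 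Its second moment is $O(n^3k_{\max}^2)$: the dominant part is $n\sum_i e_i^2\|\sum_{l}\e[\mathbf{g}_1\psi_l]\psi_l(\mathbf{x}_i)\|^2$, weighted by counts of order $k_{\max}$. This gives $T_{2n}=O_P(n^{1/2}k_{\max}\cdot n^{1/2})\cdot n^{-1/2}\cdot\sqrt n=O_P(nk_{\max})=o_P(\sigma_n)$.
\end{itemize}
The dependence between $\widehat{\bm\theta}$ and $e_i$ is harmless because $\widehat{\bm\theta}-\bm\theta_0$ factors out as an $O_P(n^{-1/2})$ vector.

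I expect the main obstacle to be the leading-term CLT, specifically the exact combinatorics of the nested-sum kernel $a_{ij}$. Its second moment gives the $k_{\max}^3/3$ constant, and the four-cycle bound requires care because the $\psi_j$ are only fourth-moment bounded rather than uniformly bounded. I would first try to bound $\e[a_{12}a_{23}a_{34}a_{41}]$ via orthonormality, which collapses it to $\sum_{k_1,\dots,k_4}\min(k_1,\dots,k_4)=O(k_{\max}^5)$, and then compare this with $\e[a_{12}^2]^2\asymp k_{\max}^6$.
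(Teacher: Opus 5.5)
Your proposal is correct in outline and follows the paper's proof essentially step for step. It uses the same three-way split of $T_n$ and the Hall--Heyde martingale CLT for the degenerate quadratic form in the errors, checked through conditional-variance convergence plus a fourth-moment Lindeberg condition. It then uses a Taylor expansion with $\sqrt{n}(\widehat{\bm\theta}-\bm\theta_0)=O_P(1)$ to show both remainders are $O_P(nk_{\max})=o_P(nk_{\max}^{3/2})$.

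Fix three scaling slips; none of them changes the route.
\begin{itemize}
\item The Hall/de Jong criterion is $\mathbb{E}[a_{12}a_{23}a_{34}a_{41}]=o(\mathbb{E}[a_{12}^2]^2)$ and $\mathbb{E}[a_{12}^4]=o(n\,\mathbb{E}[a_{12}^2]^2)$. The version you displayed, with $n^{-2}$ and $n^{2}$, is wrong, and the $n^{-2}$ condition would actually fail here.
\item The four-cycle equals $\sum_{l}c_l^4\asymp k_{\max}^5$, where $c_l=\#\{k\in[k_{\min},k_{\max}]:k\ge l\}$, not $k_{\max}^4$. This gives the $O(k_{\max}^{-1})$ ratio that the paper obtains for its term $I_{3n}$.
\item With only $\sup_j\mathbb{E}[\psi_j^4(\mathbf{x})]<\infty$, Minkowski's inequality gives $\mathbb{E}[a_{12}^4]\lesssim k_{\max}^8$, not $k_{\max}^6$. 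So it is precisely $k_{\max}^2/n\to0$ that closes the fourth-moment condition, as in the paper.
\end{itemize}
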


Equation (\ref{thA.1a}) shows that $T_n$ is asymptotically normally distributed under $H_0$. The proof of Theorem \ref{tha.1} is given in Appendix C, where it can also be shown $\sigma_n^2= \frac{2 \,\widehat{\sigma}_e^4}{3} \cdot n^2 \, k_{\max}^3\, (1+ o(1))$.
\smallskip 

To show the consistency of our testing statistic under a sequence of local alternatives, we test 
\be
H_1: \, \p(m_n({\bf x})=a_n\, m({\bf x}))=1, 
\label{nona2} 
\ee
where positive sequence $a_n\to 0$ with certain rate while $m(\mathbf{x})\in {\mathcal L}^2$ and $\e[m^2({\bf x})]>0$.

It is pointed out that a sequence of local alternatives covers a wide range of weak endogeneity. The following theorem establishes the consistency of the proposed test under $H_1$.

\begin{theo}\label{tha.2}
(i) Let the conditions of Theorem \ref{tha.1} hold. (ii) Let $\e[m^2({\bf x})]>0$.
Consider $H_1: \, \p(m_n({\bf x})=a_n\, m({\bf x}))=1$, where $a_n\to 0$ and $a_n^2 \, n \, k_{\max}^{-1/2}\to \infty$ as $n\rightarrow \infty$.
We then have under $H_1$,
\begin{equation}\label{tha.2a}
\DF{1}{\sigma_n} \, T_n\rightarrow_{P} \infty \ \ \ \mbox{as $n\rightarrow \infty$}.
\end{equation}
\end{theo}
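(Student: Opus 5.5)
We would prove Theorem \ref{tha.2} by decomposing $T_n$ under $H_1$ into a null-type quadratic form, a signal term, and cross terms, and showing that the signal term dominates $\sigma_n\asymp n\,k_{\max}^{3/2}$. Under $H_1$, $\widetilde e_i=y_i-g(\mathbf{x}_i,\widehat{\bm\theta})=e_i+a_n m(\mathbf{x}_i)+r_i$, where $r_i=g(\mathbf{x}_i,\bm\theta_0)-g(\mathbf{x}_i,\widehat{\bm\theta})$. Write $K_{ij}=\sum_{k=k_{\min}}^{k_{\max}}\mathbf{V}_k^\top(\mathbf{x}_i)\mathbf{V}_k(\mathbf{x}_j)$. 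Then
\begin{equation*}
T_n=\sum_{i\neq j}K_{ij}e_ie_j+a_n^2\sum_{i\neq j}K_{ij}m(\mathbf{x}_i)m(\mathbf{x}_j)+2a_n\sum_{i\neq j}K_{ij}e_im(\mathbf{x}_j)+R_n,
\end{equation*}
where $R_n$ collects every term containing $r_i$.

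The first term is $O_P(\sigma_n)$ by exactly the argument behind Theorem \ref{tha.1}, which applies because $e_i$ is a martingale difference given $\mathbf{x}$. For the signal term, orthonormality gives $\mathbb{E}[K_{ij}m(\mathbf{x}_i)m(\mathbf{x}_j)]=\sum_{k}\sum_{l\le k}\gamma_l^2$, with $\gamma_l=\mathbb{E}[m\psi_l]$. When $m$ is expanded in the test basis, $\sum_{l\le k}\gamma_l^2\to\mathbb{E}[m^2]>0$, so the expectation is of order $(k_{\max}-k_{\min})\,\mathbb{E}[m^2]\,(1+o(1))$. This order is at least of order one; in the typical case it grows with $k_{\max}-k_{\min}$. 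The second term therefore has mean of order $a_n^2n^2(k_{\max}-k_{\min})$. Its fluctuation about the mean is a degenerate-plus-linear U-statistic, and a Hoeffding decomposition combined with the bound $\sup_j\mathbb{E}\psi_j^4<\infty$ should show it is of smaller order. Dividing by $\sigma_n$, the signal is of order at least $a_n^2 n k_{\max}^{-1/2}$ after accounting for the powers of $k$, and this diverges by assumption. The cross term has mean zero, and its variance is $O(a_n^2 n^2\,\mathbb{E}[K_{ij}^2])$, so it is $O_P(a_n n k_{\max}^{3/2})$. Since $a_n\to0$, this is $o_P(\sigma_n)$, and it is also dominated by the signal.

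The main obstacle is $R_n$. The bias $r_i$ arises because, under $H_1$, $\widehat{\bm\theta}$ is obtained from $\mathbf{M}_v$-projected NLS, which still removes $m$ asymptotically. By Theorem \ref{thm.A2} and the argument of Theorem \ref{thm.A1}, $\|\widehat{\bm\theta}-\bm\theta_0\|=O_P(n^{-1/2})$ under $H_1$, using $\mathbb{E}[m_n^2]\to0$. A first-order Taylor expansion then gives $r_i=-\mathbf{g}_1(\mathbf{x}_i,\bm\theta_0)^\top(\widehat{\bm\theta}-\bm\theta_0)+O_P(n^{-1})$. I would bound each piece of $R_n$ by Cauchy--Schwarz. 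For example, $\sum_{i\neq j}K_{ij}r_ir_j$ is controlled by $\|\widehat{\bm\theta}-\bm\theta_0\|^2\,\|\sum_{i\neq j}K_{ij}\mathbf{g}_1(\mathbf{x}_i)\mathbf{g}_1(\mathbf{x}_j)^\top\|=O_P(n^{-1}\cdot n^2k_{\max})$. The cross pieces with $e_j$ and with $a_nm(\mathbf{x}_j)$ are handled similarly, using the moment conditions on $\mathbf{g}_1$. The delicate point is that the pieces involving $a_n m$ must be shown to be dominated by the signal rather than merely by $\sigma_n$. I would first try to show that they are $O_P(a_n n^{3/2}k_{\max})$, which is $o$ of the signal because $a_n^2n^2k_{\max}\gg a_nn^{3/2}k_{\max}$ whenever $a_n\sqrt n\to\infty$, and this is implied by the rate condition. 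Combining the four estimates gives $T_n/\sigma_n\to_P\infty$.
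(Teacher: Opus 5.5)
Your decomposition is essentially the paper's. The paper also writes $T_n=T_{1n}+T_{2n}+T_{3n}+T_{4n}$: the null quadratic form in the $e_i$ (handled by the martingale CLT of Theorem \ref{tha.1}), the cross terms with $e_i$, and a quadratic term whose leading piece $J_1\asymp a_n^2nk_{\max}^{-1/2}$ is your signal. It then shows $J_3,J_4=o_P(1)$ and $J_5,J_6=O_P(a_n\sqrt{n/k_{\max}})$, the latter being beaten by $J_1$. That is exactly the comparison you propose for the $a_nm$-parts of $R_n$. The paper also first checks that $\widehat\sigma_e^2\to_P\e[e^2]$ under $H_1$. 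You need this step too, because $\sigma_n$ is built from the contaminated residuals $\widetilde e_i$, and it is what justifies $\sigma_n\asymp nk_{\max}^{3/2}$.

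One step is wrong as stated, namely your bound on the cross term $C_n=2a_n\sum_{i\neq j}K_{ij}e_im(\mathbf{x}_j)$. Conditioning on the $\mathbf{x}$'s gives $\mathrm{Var}(C_n)=4a_n^2\sigma_e^2\sum_i\e[c_i^2]$, where $c_i=\sum_{j\neq i}K_{ij}m(\mathbf{x}_j)$. Moreover $\e[c_i^2]=(n-1)\e[K_{ij}^2m^2(\mathbf{x}_j)]+(n-1)(n-2)\e[h^2(\mathbf{x})]$, with $h=\sum_{k=k_{\min}}^{k_{\max}}\sum_{l\le k}\gamma_l\psi_l=\e[K_{ij}m(\mathbf{x}_j)\mid\mathbf{x}_i=\cdot\,]$. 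You kept only the first part. The second part, the H\'ajek projection, is of order $n^2k_{\max}^2\e[m^2]$ and dominates because $k_{\max}/n\to0$. Hence $C_n=O_P(a_nn^{3/2}k_{\max})=O_P(a_n\sqrt{n/k_{\max}})\,\sigma_n$. This is not $o_P(\sigma_n)$ in general: take $a_n=1/\log n$ and $k_{\max}\asymp\log\log n$. So the argument ``since $a_n\to0$'' fails. The conclusion survives because $C_n$ is still negligible against the signal, since $a_nn^{3/2}k_{\max}/(a_n^2n^2k_{\max})=(a_n\sqrt n)^{-1}\to0$.

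For what it is worth, the paper's bound $T_{2n}/\sigma_n=O_P(a_nk_{\max}^{1/2})$ drops the same term. In its expansion of $\e\|\sum_{j<i}\mathbf{V}_k(\mathbf{x}_j)m(\mathbf{x}_j)\|^2$, the squared mean should carry a factor $(i-1)^2$. After its Cauchy--Schwarz step the corrected bound is $O_P(a_n\sqrt n)$, which is dominated by $J_1$ only if $a_n^2n/k_{\max}\to\infty$. The direct conditional-variance computation above is what works under the stated rate.

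Finally, $\sqrt n(\widehat{\bm\theta}-\bm\theta_0)=O_P(1)$ under $H_1$ holds because $\mathbf{M}_v$ removes $a_nm\in\mathcal S$ up to its truncation residual (Theorem \ref{thm.A2} with $m$ replaced by $a_nm$). It does not follow from $\e[m_n^2]\to0$. Since $a_n\sqrt n\to\infty$, that condition alone would still allow a bias of order $a_n\gg n^{-1/2}$.
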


The proofs of Theorems \ref{thm.A1}--\ref{tha.2} are given in Appendix D. Discussions of Theorems \ref{tha.1} and \ref{tha.2} are similar to those for Theorems \ref{th4.1} and \ref{th4.2} in Section \ref{Sec3} above.

\subsection{Nonlinear and non--separable models}\label{Ap.A3}

We now show that the proposed SP method can also be extended in a unified way to a number of non-- and semi--parametric regression models associated endogeneity issues, such as
\begin{equation}
y = \mathbf{x}_u^{\top} \bm{\alpha}_0 + q(\mathbf{x}_v) + \varepsilon,
\label{con1}
\end{equation}
as a semiparametric regression model, along with the following nonparametric regression:
\be
y = g(\mathbf{x}) + \varepsilon.
\label{con2}
\ee

Appendices \ref{Ap.A3}.1 and \ref{Ap.A3}.2 below outline the main ideas and steps about how to estimate models (\ref{con1}) and (\ref{con2}) consistently and unbiasedly.

In Appendix \ref{Ap.A3}.3, we will also discuss one class of binary models of the form: 
\be
y = {\rm I}[\mathbf{x}^{\top} \bm{\theta}_0 - \varepsilon\geq 0],
\label{con6}
\ee
where the quantities are the same as before.

\subsubsection{Semiparametric regression}

Observe that model (\ref{con1}) may be motivated as follows:
\be
y = \mathbf{x}_u^{\top} \bm{\alpha}_0 + \eta \ \ \mbox{with $\e[\eta]=0$ but $\e[\eta \, \mathbf{x}_u]\neq 0$},
\label{seminon1}
\ee
where a semiparametric projection gives us the following model:
\be
y = \mathbf{x}_u^{\top} \bm{\alpha}_0 + \eta \ \ \mbox{with $\eta = \e[\eta|\mathbf{x}_v] + e$},
\label{seminon2}
\ee
where $\mathbf{x}_v$ is either an observed variable or an IV available to the econometrician satisfying
\be
\e[e|(\mathbf{x}_u, \mathbf{x}_v)] = 0
\label{seminon3}
\ee
under which we may identify and estimate $\bm{\alpha}_0$ consistently in the same way as has been done in the relevant literature. Model (\ref{seminon2}) with (\ref{seminon3}) has been employed in Example 5.1.

When condition (\ref{seminon3}) is not satisfied, we consider model (\ref{con1}) for the case where 
\be
\e[\varepsilon]=0, \ \ \e[\varepsilon|(\mathbf{x}_u, \mathbf{x}_v)]\neq 0 \ \ \ \mbox{and} \ \ \ \e[\varepsilon|\mathbf{x}_v]\neq 0.
\label{con12}
\ee

Note that the proposed approach below is valid regardless of whether $\e[\varepsilon|\mathbf{x}_u]\neq 0$ or $\e[\varepsilon|\mathbf{x}_u]=0$. Let $q_1(\mathbf{x}_v) = \e[y|\mathbf{x}_v]$, $\mathbf{q}_2(\mathbf{x}_v) =\e[\mathbf{x}_u|\mathbf{x}_v]$, $\widetilde{\mathbf{x}}_u=\mathbf{x}_u - \mathbf{q}_2(\mathbf{x}_v)$, $\widetilde{y}=y - q_1(\mathbf{x}_v)$, $q_3(\mathbf{x}_v) = \e[\varepsilon|\mathbf{x}_v]$, $\eta = \varepsilon -  q_3(\mathbf{x}_v)$, $m(\widetilde{\mathbf{x}}_u) = \e[\eta|\widetilde{\mathbf{x}}_u]$ and $e=\eta - m(\widetilde{\mathbf{x}}_u)$. Then model (\ref{con1}) can be written as
\be
\widetilde{y} = y - q_1(\mathbf{x}_v) = (\mathbf{x}_u - \mathbf{q}_2(\mathbf{x}_v))^{\top} \bm{\alpha}_0 + \varepsilon - q_3(\mathbf{x}_v) = \widetilde{\mathbf{x}}_u^{\top} \bm{\alpha}_0 + m(\widetilde{\mathbf{x}}_u) + e
\label{con13}
\ee
with $\e[e|\widetilde{\mathbf{x}}_u]=0$. Notationally, model (\ref{con13}) is the same as model (2.3) with $(y, \mathbf{x})$ being replaced by $(\widetilde{y}, \widetilde{\mathbf{x}}_u)$, respectively.  The main parameter--of-interest, $\bm{\alpha}_0$, can then be estimated in the same way as in Section 2.1 based on a sampling version of the form:
\be
\widetilde{y} = \widetilde{\mathbf{x}}_u^{\top} \bm{\alpha}_0 + m(\widetilde{\mathbf{x}}_u) + e,
\label{con14}
\ee
where $\widetilde{y} = y - \widehat{q}_1(\mathbf{x}_v)$ and $\widetilde{\mathbf{x}}_u = \mathbf{x}_u - \widehat{\mathbf{q}}_2(\mathbf{x}_v)$, in which $\widehat{q}_1(\cdot)$ and $\widehat{\mathbf{q}}_2(\cdot)$ can be constructed as in Section 2.1, before $m(\cdot)$ can be estimated in the same way as in Section 2.1.

It is noted that our discussion covers the case where $\mathbf{x}_u$ is a binary variable, and $\mathbf{x}_v$ reduces to a univariate fixed--design variable of the form $\mathbf{x}_v = \tau \in [0,1]$.

\subsubsection{Nonparametric regression}

Consider model (\ref{con2}), where $g(\mathbf{x})$ is an unknown function of interest, and the error term $\varepsilon$ satisfies $\e[\varepsilon]=0$, but $\e[\varepsilon|\mathbf{x}] \neq 0$. In such cases, there are several different approaches proposed in dealing with potential endogeneity issues. One of the existing methods is the so--called ``Nonparametric IV" approach, as discussed and reviewed in a recent paper by \cite{cff2025}, for example, which can be summarized as follows.

Let $\mathbf{z}$ be an IV such that $\e[\varepsilon|\mathbf{z}]=0$. One may then use (\ref{con2}) to derive
\be
g_1(\mathbf{z}) \equiv \e[y|\mathbf{z}] = \e[g(\mathbf{x})|\mathbf{z}] + \e[\varepsilon|\mathbf{z}] = \e[g(\mathbf{x})|\mathbf{z}],
\label{c1a}
\ee
for which $g_1(\mathbf{z})=\e[y|\mathbf{z}]$ can be estimated by an existing nonparametric method before one might be able to recover $g(\mathbf{x})$ by solving an inverse problem $\e[g(\mathbf{x})|\mathbf{z}]={g}_1({\bf z})$. 

As stressed before, the main drawback is that such an approach depends on the availability and validity of $\mathbf{z}$ as the IV, in addition to addressing ill--posed inverse issues.
\medskip

Our discussion is to offer an alternative by decomposing $\varepsilon$ and rewriting model (\ref{con2}) as
\be
y =  g(\mathbf{x}) + \varepsilon \ \ \mbox{with} \ \  \varepsilon = \e[\varepsilon|\mathbf{x}] + e \equiv m(\mathbf{x}) + e,
\label{con22}
\ee
where $\e[e|\mathbf{x}]=0$ is automatically satisfied.

We first consider the case where $m(\mathbf{x}) = \e[\varepsilon|\mathbf{x}] = m(\mathbf{x}, \bm{\gamma}_0)$. Model (\ref{con22}) then becomes
\be
y = g(\mathbf{x}) + m(\mathbf{x}, \bm{\gamma}_0) + e = \mu_0 + m(\mathbf{x}, \bm{\gamma}_0) + \widetilde{g}_1(\mathbf{x}) + e,
\label{bc5}
\ee
which is the same notation as discussed in Appendix A.2.1, where $\widetilde{g}_1(\mathbf{x}) = g(\mathbf{x}) - \mu_0$ with $\mu_0 = \e[g(\mathbf{x})]$. Both the identifiability and estimation of $(\mu_0, \bm{\gamma}_0, \widetilde{g}_1(\cdot))$ follows from that for $(\bm{\theta}_0, m(\cdot))$ discussed in Appendix A.2.1. 

It is pointed out that model (\ref{bc5}) allows the case where $g(\mathbf{x})$ is nonlinear in $\mathbf{x}$ and $m(\mathbf{x}, \bm{\gamma}_0) = \mathbf{x}^{\top}\bm{\gamma}_0$. Consequently, the case where $(\varepsilon, \mathbf{x})$ follows a joint Gaussian distribution can be covered in (\ref{bc5}).
\medskip

In the case where $m(\mathbf{x})=\e[\varepsilon|\mathbf{x}]$ is specified nonparametrically, substantially new developments are required. We therefore wish to write up such details into a different paper.

\subsubsection{Binary regression}

Regression models with binary dependent variables have many theoretical investigations and empirical studies, and the relevant literature is comprehensive, such as \cite{lewbel1998}, and \cite{blundell2004}, for example. To show that the proposed SP method is also useful to address certain types of endogeneity involved in binary models, we start with the following linear probability model:
\be
{\rm P}[y=1|\mathbf{x}] =  \mathbf{x}^{\top} \bm{\beta}_0 + \varepsilon,
\label{jiti4.1}
\ee
where the exogeneity case of $\e[\varepsilon|\mathbf{x}]=0$ has been discussed in the relevant literature, such as Chapter 11 of \cite{jsmw2018}. \, We then rewrite model (\ref{jiti4.1}) as
\be
y =  \mathbf{x}^{\top} \bm{\beta}_0 + m(\mathbf{x}) + e,
\label{jiti4.2}
\ee
where $m(\mathbf{x}) = \e[\varepsilon|\mathbf{x}]\neq 0$, $e = \varepsilon - m(\mathbf{x})$, and $\e[e|\mathbf{x}]=0$ by definition.  Model (\ref{jiti4.2}) can then be estimated in the same way as in Section 3, although $y$ is now a binary response variable. 

Meanwhile, we consider a non--separable binary model of the form: 
\be
y = {\rm I}[\mathbf{x}^{\top} \bm{\theta}_0 - \varepsilon\geq 0] =  {\rm I}[\mathbf{x}^{\top} \bm{\theta}_0 - m(\mathbf{x}) \geq e],
\label{jiti4.3}
\ee
which can be rewritten as $\e[y|\mathbf{x}] = {\rm P}_{e}(e\leq \mathbf{x}^{\top} \bm{\theta}_0 - m(\mathbf{x})) = F_{e}(\mathbf{x}^{\top} \bm{\theta}_0 - \mathbf{v}(\mathbf{x})^{\top} \bm{\gamma}) = F_{e}(\mathbf{w}_{-}(\mathbf{x})^{\top} \bm{\theta}_{-})$ when ignoring the approximation error term and $m(\mathbf{x})$ is replaced by $\mathbf{v}(\mathbf{x})^{\top} \, \bm{\gamma}$ as in Section 2, where $F_{e}(\cdot)$ denotes the cumulative distribution function (CDF) of $e$, $(\mathbf{v}(\mathbf{x}), \bm{\gamma})$ is the same as in Section 2, $\mathbf{w}_{-}(\mathbf{x}) = \left(\mathbf{x}^{\top}, -\mathbf{v}^{\top}(\mathbf{x})\right)^{\top}$ and $\bm{\theta}_{-} = \left(\bm{\theta}_0^{\top}, \bm{\gamma}^{\top}\right)^{\top}$.

When $F_{e}(\cdot)$ is parametrically known, the parameter--of--interest, $\bm{\theta}_0$, can then be estimated consistently and unbiasedly by MLE. When $F_e(\cdot)$ is nonparametrically unknown, it can be expanded by an infinite sum of known CDFs before a semiparametric MLE estimation method may be developed.

\subsection{Discussion of other estimation methods}\label{Ap.A4}

\subsubsection{SP estimation efficiency}

Consider the case of $d=1$ and $\sigma_i^2 \equiv \sigma_e^2$ for notational simplicity in the following derivations. Recall the standard OLS estimator by 
\be
\widehat{\bm{\beta}}_{\rm LS} = s_n^{-1} \, \sum_{i=1}^n \mathbf{x}_i \, y_i = \bm{\beta}_0 + s_n^{-1} \, \sum_{i=1}^n \mathbf{x}_i \, m(\mathbf{x}_i) + s_n^{-1} \, \sum_{i=1}^n \mathbf{x}_i \, e_i, 
\label{ols}
\ee
where $s_n \equiv: \sum_{i=1}^n \mathbf{x}_i \, \mathbf{x}_i^{\top}$.  

Let $\bm{\Sigma}_{xm} = \e[e^2] \, \e\left(\left[\mathbf{x} \, \mathbf{x}^{\top}\right]\right) + \e\left(\left[\mathbf{x} \, m(\mathbf{x}) - {\ell}_{\rm endo}\right] \, \left[\mathbf{x} \, m(\mathbf{x}) - {\ell}_{\rm endo}\right]^{\top}\right)$, $\bm{\Sigma}_{\rm LS} = \e^{-2} \left[\mathbf{x}_1^2\right]\, \bm{\Sigma}_{xm}$ and ${\ell}_{\rm endo} = \e[\mathbf{x} \, m(\mathbf{x})]$, and $\bm{\Sigma}_{\rm SP} = \sigma_e^2 \, \bm{\Sigma}_x^{-1}$ with $\bm{\Sigma}_x = \e[\mathbf{x}_1^2] - \sum_{j=1}^{\infty} \e^2\left[\mathbf{x}_1 \, \psi_j(\mathbf{x}_1)\right]$. We now show that $\bm{\Sigma}_{\rm SP} \leq \bm{\Sigma}_{\rm LS}$ as follows:
\bea
&& \e^{2} \left[\mathbf{x}_1^2\right] \, \bm{\Sigma}_x \, \left(\bm{\Sigma}_{\rm SP} - \bm{\Sigma}_{\rm LS}\right) \notag \\
&=& \e^{2} \left[\mathbf{x}_1^2\right] \, \sigma_e^2 - \bm{\Sigma}_x \, \e\left[\mathbf{x}_1^2\right] \, \sigma_e^2 - \bm{\Sigma}_x \, \e\left(\left[(\mathbf{x}_1 \, m(\mathbf{x}_1) - {\ell}_{\rm endo})\right]^2\right)
\nonumber\\
&=& \bm{\sigma}_{11} \, \bm{\sigma}_{12} \, \sigma_e^2 - \left(\bm{\sigma}_{11} - \bm{\sigma}_{12}\right) \, \e\left(\left[(\mathbf{x}_1 \, m(\mathbf{x}_1) - {\ell}_{\rm endo})\right]^2\right)\leq 0
\label{app3}
\eea
when  $\e\left(\left[(\mathbf{x}_1 \, m(\mathbf{x}_1) - {\ell}_{\rm endo})\right]^2\right) \geq \frac{\bm{\sigma}_{11} \, \bm{\sigma}_{12}}{\bm{\sigma}_{11} - \bm{\sigma}_{12}} \, \sigma_e^2$, where $\bm{\sigma}_{11} = \e[\mathbf{x}_1^2]$ and $\bm{\sigma}_{12} = \sum_{j=1}^{\infty} \e^2\left[\mathbf{x}_1 \, \psi_j(\mathbf{x}_1)\right]$. 

Equation (\ref{app3}) remains true even in the case of ${\ell}_{\rm endo}=0$ and $\bm{\sigma}_{12} = 0$. In other words, $\widehat{\bm{\beta}}_{\rm SP} \equiv: \widehat{\bm{\beta}}$ is more efficient than $\widehat{\bm{\beta}}_{\rm LS}$ even when ${\ell}_{\rm endo}=0$. 

\subsubsection{GMM estimation method}

Recall the notation and symbols introduced in Section 3. Letting $\mathbf{Q}_{d\times d}$ be a known positive definite weight matrix to be chosen by the user, we estimate $\bm{\beta}_0$ by
\begin{align*}
\widehat{\bm{\beta}}_{\rm GMM}=\underset{\bm{\beta}}{\arg\min}\left(\DF{1}{n}\sum_{i=1}^n \left((w_i-{\bf z}^{\top}_i \bm{\beta}){\bf z}_i]\right)^\top\right)\, \mathbf{Q} \, \left(\DF{1}{n}\sum_{i=1}^n \left((w_i-{\bf z}^{\top}_i \bm{\beta}){\bf z}_i\right)\right),
\end{align*}
which offers a closed--form expression as follows:
\begin{align*}
\widehat{\bm{\beta}}_{\rm GMM}=\left(\DF{1}{n}\sum_{i=1}^n [{\bf z}_i{\bf z}_i^{\top}] \, \mathbf{Q} \, \DF{1}{n}\sum_{i=1}^n {\bf z}_i \, {\bf z}^{\top}_i \right)^{-1}  \DF{1}{n}\sum_{i=1}^n {\bf z}_i \, {\bf z}^{\top}_i \, \mathbf{Q} \, \DF{1}{n}\sum_{i=1}^n w_i \, {\bf z}_i.
\end{align*}

It is known from the discussion in Section 3 that
\begin{align*}
 \DF{1}{n}\sum_{i=1}^n {\bf z}_i \, {\bf z}^{\top}_i =\e[{\bf z}_1{\bf z}^{\top}_1]+o_P(1) =\bm{\Sigma}_x(k)+o_P(1).
\end{align*}

We then have

\bea
&& \sqrt{n}(\widehat{\bm{\beta}}_{\rm GMM} - \bm{\beta}_0)= \left(\DF{1}{n}\sum_{i=1}^n {\bf z}_i \, {\bf z}_i^{\top} \, \mathbf{Q} \, \DF{1}{n}\sum_{i=1}^n {\bf z}_i \, {\bf z}^{\top}_i\right)^{-1}  \DF{1}{n}\sum_{i=1}^n {\bf z}_i \, {\bf z}^{\top}_i \, \mathbf{Q}\, \DF{1}{\sqrt{n}}\sum_{i=1}^n e_i{\bf z}_i 
\nonumber\\
&&+\left(\DF{1}{n}\sum_{i=1}^n \, {\bf z}_i{\bf z}_i^{\top} \, \mathbf{Q}\, \DF{1}{n}\sum_{i=1}^n {\bf z}_i \, {\bf z}^{\top}_i\right)^{-1}  \DF{1}{n}\sum_{i=1}^n {\bf z}_i \, {\bf z}^{\top}_i \, \mathbf{Q} \, \DF{1}{\sqrt{n}}\sum_{i=1}^n \delta_k({\bf x}_i) \, {\bf z}_i
\nonumber\\
&=& (\bm{\Sigma}_x(k) \, {\bf Q} \, \bm{\Sigma}_x(k))^{-1} \bm{\Sigma}_x(k) \, \mathbf{Q} \, \DF{1}{\sqrt{n}}\sum_{i=1}^n \, e_i{\bf z}_i \, (1+o_P(1))
\nonumber\\
&&+(\bm{\Sigma}_x(k)\, \mathbf{Q} \, \bm{\Sigma}_x(k))^{-1} \bm{\Sigma}_x(k) \, \mathbf{Q} \, \DF{1}{\sqrt{n}}\sum_{i=1}^n \, \delta_k({\bf x}_i){\bf z}_i\, (1+o_P(1)).
\label{gmm1}
\eea

The second term is $o_P(1)$ as $k\to \infty$ under Assumption 3.1(iii)(iv) as shown in the proof of Theorem 3.1. For the first term, the conditional covariance matrix is

\bea
&&(\bm{\Sigma}_x(k) \, \mathbf{Q} \, \bm{\Sigma}_x(k))^{-1} \bm{\Sigma}_x(k) \, \mathbf{Q} \, \e\left(\frac{1}{n} \sum_{i=1}^n \sum_{j=1}^n e_i \, {\bf z}_i \, e_j \, {\bf z}_j^\top |{\bf X}\right) \,  \mathbf{Q} \, \bm{\Sigma}_x(k)(\bm{\Sigma}_x(k) \, \mathbf{Q} \, \bm{\Sigma}_x(k))^{-1}
\nonumber\\
& =& \DF{1}{n}\sum_{i=1}^n\sigma_i^2 \, (\bm{\Sigma}_x(k) \, \mathbf{Q} \, \bm{\Sigma}_x(k))^{-1}\bm{\Sigma}_x(k) \, \mathbf{Q} \, \bm{\Sigma}_x(k) \, \mathbf{Q} \, \bm{\Sigma}_x(k)(\bm{\Sigma}_x(k) \, \mathbf{Q} \, \bm{\Sigma}_x(k))^{-1}
\nonumber\\
& =& \DF{1}{n}\sum_{i=1}^n\sigma_i^2 \, \bm{\Sigma}_x(k)^{-1}\to \overline{\sigma}_e^2 \, \bm{\Sigma}_x^{-1},
\label{gmm2}
\eea
as $(n,k)\to(\infty, \infty)$ under the conditions of Theorem 3.1(ii) since ${\bf Q}$ and $\bm{\Sigma}_x(k)$ are invertible. The result is the same as Theorem 3.1.


\subsubsection{Estimation of weakly identified linear models}

We start with the linear case of $m(\mathbf{x}) = \left(\mathbf{x} - \e[\mathbf{x}]\right)^{\top} \bm{\gamma}_0$. Let us define a truncated version of $\bm{\Sigma}_x$ of the form:
\be
\bm{\Sigma}_x(k) = \e\left[\mathbf{x} \, \mathbf{x}^{\top}\right] - \sum_{j=1}^k \e\left[\mathbf{x} \, \psi_j(\mathbf{x})\right] \, \e\left[\mathbf{x}^{\top}\, \psi_j(\mathbf{x})\right].
\label{asym1}
\ee

If we assume that we can expand $\mathbf{x} - \e[\mathbf{x}]= \sum_{j=1}^{\infty} \psi_j(\mathbf{x}) \, \bm{\gamma}_j$ by the same orthonormal series: $\{\psi_j(\cdot): j\geq 1\}$ as used in Section 2, we have each given $k\geq 1$
\be
\bm{\Sigma}_x(k) = \e[\mathbf{x}] \, \e\left[\mathbf{x}^{\top}\right] + \sum_{j=k+1}^{\infty} \bm{\gamma}_j \, \bm{\gamma}_j^{\top},
\label{asym2}
\ee
which has a reduced rank when $\e[\mathbf{x}]\neq 0$ and $d>1$. In this case, the proposed LASSO selection estimation method in Section 5.1, which allows for $m(\mathbf{x})$ to be linear in $\mathbf{x}$, addresses such reduced--rank issues. The finite--sample evaluation results in Section 5 and Appendix \ref{Ap.B2} support the LASSO selection method.

In the case of $d=1$ and $\e[x]\neq 0$, it can be seen that the SP method itself is directly applicable, and Theorem 3.1 remains true. Example B.2.2 of Appendix \ref{Ap.B2} shows that without using the LASSO selection method, the SP selection method works well with commonly used functions, including $m({x}) = \left({x} - \e[{x}]\right) {\gamma}_0$ and second--order polynomial functions.

When $\e[\mathbf{x}] = 0$, model (2.3) reduces to
\be
y = \mathbf{x}^{\top} \bm{\beta}_0 + m(\mathbf{x}) + e = \mathbf{x}^{\top} \bm{\beta}_0 + \mathbf{x}^{\top} \bm{\gamma}_0 + e = \mathbf{x}^{\top} \left(\bm{\beta}_0 +  \bm{\gamma}_0\right) + e,
\label{asym3}
\ee
which means that one may only be able to correctly identify $\left(\bm{\beta}_0 +  \bm{\gamma}_0\right) = \e^{-1}\left[\mathbf{x} \, \mathbf{x}^{\top}\right] \, \e\left[\mathbf{x} \, y\right]$ collectively, rather than $\bm{\beta}_0$ individually.

For model (\ref{asym3}) in the case of $\e[\mathbf{x}] = 0$, meanwhile, equation (\ref{asym2}) implies that as $k\rightarrow \infty$
\be
\bm{\Sigma}_x(k) = \sum_{j=k+1}^{\infty} \bm{\gamma}_j \, \bm{\gamma}_j^{\top} \rightarrow {\bf 0}.
\label{asym4}
\ee

\noindent Due to (\ref{asym4}), for model (\ref{asym3}), we cannot assume Assumption 2.1(iii). Instead we replace it by {\bf Assumption 2.1(iii)*}: Assume that there are a positive definite matrix of real numbers, $\mathbf{Q}_n$, and another positive definite matrix, $\bm{\Sigma}_{x, \ast}$, such that $\lambda_{\rm \min}(\mathbf{Q}_n) \rightarrow \infty$, $ n \, \lambda_{\rm \min}(\mathbf{Q}_n^{-1}) \rightarrow \infty$ and $\mathbf{Q}_n \, \bm{\Sigma}_x(k)\rightarrow_P \bm{\Sigma}_{x, \ast}$ as $n\rightarrow \infty$, where $\lambda_{\min}(A)$ denotes the smallest eigenvalue of matrix $A$.

In view of the proof of Theorem 3.1, replacing Assumption 2.1(iii) by Assumption 2.1(iii)*, it can then be established that the following asymptotic normality holds:
\be
\sqrt{n} \, \widehat{\bm{\Sigma}}_x^{1/2}(k) \, \left(\widehat{\bm{\beta}} - \bm{\beta}_0\right)  = \left(\sqrt{n} \, \mathbf{Q}_n^{-1/2}\right) \, \left(\mathbf{Q}_n^{1/2} \, \widehat{\bm{\Sigma}}_x^{1/2}(k)\right) \, \left(\widehat{\bm{\beta}} - \bm{\beta}_0\right)  \rightarrow_{\mathcal D} N({\bf 0}, {\rm I}_d \, \overline{\sigma}_e^2),
\label{asym5}
\ee
which still offers a consistent estimator for $\bm{\beta}_0$, with a reduced rate of convergence of an order of $\left(\sqrt{n} \, \mathbf{Q}_n^{-1/2}\right)$, slower than $\sqrt{n}$, in which
\be
\widehat{\bm{\Sigma}}_x(k) = \frac{1}{n} \sum_{i=1}^n \mathbf{x}_i \, \mathbf{x}_i^{\top} - \sum_{j=1}^k \left(\frac{1}{n} \sum_{i=1}^n \mathbf{x}_i \, \psi_j(\mathbf{x}_i)\right) \, \left(\frac{1}{n} \sum_{i=1}^n \mathbf{x}_i^{\top} \, \psi_j(\mathbf{x}_i)\right).
\label{asym6}
\ee

\renewcommand{\theequation}{B.\arabic{equation}}

\section{Finite--Sample Evaluations}

\subsection{Implementational Issues}\label{Ap.B1}

We discuss several important issues about how to implement the proposed SP estimation procedure in practice. We offer our recommendations on the choice of orthonormal series functions and the truncation parameters involved in the proposed SP estimation.

\subsubsection{Dimension reduction}

The main model and estimation method proposed in Section 3 remains valid for the case where the dimensionality, $d$, is large but fixed in theory, although we assume in Section 2 that we focus on the case where $d$ is small. To explain this in a bit more detail, we recall from Section 2 that we expand $m({\bf x})$ as $m(\mathbf{x}) = \sum_{j=1}^{\infty} \psi_j(\mathbf{x}) \, \gamma_j$, in which the dimensionality of $\mathbf{x}$ is only involved in the chosen series $\{\psi_j(\mathbf{x}): j\geq 1\}$. For ease of implementation, we suggest using $\psi_j(\mathbf{x}) = \prod_{k=1}^d \psi_{jk}(x_k)$ when $d\geq 2$, where $\{\psi_{jk}(\cdot): k\geq 1\}$ is an array of univariate series functions, and $\mathbf{x} = (x_1, \cdots, x_d)$.

Meanwhile, some other dimension reductions might be employed, such as an additivity structure of the form: $m(\mathbf{x}) = \sum_{j=1}^d m_j(x_j)$ directly as in \cite{ln1995}, in which it is expected that the construction of $\mathcal{S}$ should be a direct sum of $\mathcal{S}_j$ for $1\leq j\leq d$. There are also dimension reduction methods proposed by involving certain types of single--index or multi--index modelling methods (see, \cite{dg2025}, for proposing an additive single--index structure form for each $\psi_j(\cdot)$). Our experience with the finite--sample studies in Section 5 and Appendix \ref{Ap.B2} shows that the multiplicative form: $\psi_j(\mathbf{x}) = \prod_{k=1}^d \psi_{jk}(x_k)$ works well and better than some other competing forms available in the relevant literature.

\subsubsection{Choice of series functions}

In both theory and practice, we need not require orthogonality on $\{\psi_l(\cdot): l\geq 1\}$, although the orthogonality assumption simplifies the notation involved in the theoretical derivations. In practice, our experience suggests using either one of the following series, or a mixture of both.
\smallskip

\noindent 1. \, The probabilist's Hermite polynomials $\{H_j(x),j\ge 1\}$ are a set of orthonormal basis functions defined on $L^2(\mathbb{R}, \exp(-x^2/2))$ for the univariate setting;

\noindent 2. \, Let the trigonometric polynomials be $p_j(x) = \sqrt{2}\cos(\pi j x)$ with $j\ge 1$. Lemma E.1 of Appendix E in the online supplement shows that $\{p_j(x): j\ge 1\}$ is an orthogonal set of basis functions on $L^2([a, b])$ as long as $(a,b)$ are different integers. For the case $a=0$ and $b=1$, $\{p_j(x): j \ge 1\}$ constitutes an orthonormal family.

\noindent 3. \, In the multivariate setting, we propose using the following form of either $p_j(\mathbf{x}) = \prod_{k=1}^d p_{jk}(x_k)$ or $H_j(\mathbf{x}) = \prod_{k=1}^d h_{jk}(x_k)$ in simulations and empirical applications, where $p_{jk}(\cdot)$ and $h_{jk}(\cdot)$ are the corresponding univariate functions that may be chosen as in Steps 1 and 2 above.
\smallskip
        
As discussed in Section 4.2.3 and Appendix C.4 of \cite{dg2025}, we need not know the distributional structure of the data under analysis in practice as long as the support of $\{\mathbf{x}_i: i\geq 1\}$ becomes available to the practitioner. 

In theory, the choice of $\{\psi_j(\cdot): j\geq 1\}$ can be flexible as long as $\bm{\Sigma}_x>{\bf 0}$. As discussed in Section 5.1 of the main submission, moreover, the proposed LASSO selection not only facilitates the choice of an optimal orthonormal series for $\{\psi_j(\cdot): j\geq 1\}$ in practice, but also helps address possible reduced--rank issues as alluded in Appendix \ref{Ap.A4} above.

\subsection{Main simulations}\label{Ap.B2}

\subsubsection{Finite--sample properties of the estimation theory}

In this section, we consider a number of scenarios to demonstrate the finite-sample performance of the proposed SP estimation method under orthonormality. To show that the orthonormality and even the orthogonality on $\{\psi_j(\cdot): j\geq 1\}$ may all be relaxed, we present extensive numerical evaluations in Appendix E of the online supplementary document to demonstrate that the SP method still works well numerically. We provide the code at \url{https://github.com/pengbin430/SIV_2026/tree/main}.
\smallskip

\noindent{\bf Example B.2.1}:  We consider the following data generating processes:

\begin{itemize}
    \item Case A: $y_i = \alpha_0 +x_i \, \beta_0 + \varepsilon_i$, $(\alpha_0,\beta_0) =(1,1)$, $\varepsilon_i = m(x_i) + e_i$, $e_i\sim N(0,1)$, $x_i \sim U(0, \pi)$ and $m(x)= 2.5\cos(3 \, x)+0.5\cos(5 \, x)$;
    \item Case B: Consider Case A, but choose $m(x)$ as $m(x)=\sum_{j=1}^{\infty} \gamma_j\cos(j x)$, and $\{ y_i, x_i\}$ are observable, in which $\gamma_j =4\cdot (0.9)^j$ if $j\in \{4(\ell-1)+3\mid \ell=1,2,\ldots \}$, and $\gamma_j =0$, otherwise;
    \item Case C: $y_i =x_i^2 \, \beta_0 +\varepsilon_i$, $x_i\sim U(0, \pi)$ and $m(x_i)=12\pi (x_i-\frac{\pi }{2})$;
    \item Case D:  Consider Case A with one additional exogenous regressor $y_i = \alpha_0 +x_i \, \beta_0+w_i\beta_1 + \varepsilon_i$, where $w_i\sim N(1,1)$ and $\beta_1=1$.
\end{itemize}

In Appendix \ref{Ap.B2}.3 below, we show that $E[x_im(x_i)]\ne 0$ and $E[m(x_i)]=0$ for all cases, so the endogeneity exits. Case C can also be considered as a special case of the NLS framework discussed in Appendix A.2. Both Cases B and C have truncation residuals. As shown in the justification, Case B has more sparsity in $\{\gamma_j \}$, so it is reasonable to expect our approach has better finite sample performance for Case B.

After identifying $\phi_j(x)$'s with nonzero coefficients from the LASSO method, we then construct $\{\psi_j(\cdot): j\geq 1\}$ to run SP to obtain $\widehat{\beta}_{\rm SP}$. We alway enforce $\mathbf{V}_k(\cdot)$ to include the constant term 1 in order to remove the intercept.

{\footnotesize
\begin{table}[htb!]
\caption{Estimation Results} \label{TB.N1} 
\centering\small
\setlength{\tabcolsep}{4pt}
\renewcommand{\arraystretch}{0.6} 
\begin{tabular}{lrrrrrrr}
    \hline
    & $n$   & $\Delta \beta$  & $\text{sd}_\beta$ & $|\frac{\Delta \beta_{SP}}{\Delta \beta_{LS}}|$ & $\Delta \beta$  & $\text{sd}_\beta$ & $|\frac{\Delta \beta_{SP}}{\Delta \beta_{LS}}|$  \\
    &     & \multicolumn{3}{c}{Case A}  & \multicolumn{3}{c}{Case D ($\beta_0$)} \\
LS  & 200 & -0.233 & 0.168 & \multicolumn{1}{r|}{}      & -0.225   & 0.171   &             \\
    & 300 & -0.222 & 0.134 & \multicolumn{1}{r|}{}      & -0.226   & 0.137   &             \\
    & 400 & -0.234 & 0.121 & \multicolumn{1}{r|}{}      & -0.230   & 0.118   &             \\
SP & 200 & -0.008 & 0.084 & \multicolumn{1}{r|}{0.035} & -0.004   & 0.079   & 0.017       \\
    & 300 & 0.002  & 0.064 & \multicolumn{1}{r|}{0.007} & -0.003   & 0.065   & 0.013       \\
    & 400 & -0.001 & 0.055 & \multicolumn{1}{r|}{0.003} & 0.001    & 0.055   & 0.006       \\
    &     & \multicolumn{3}{c|}{Case B}                  & \multicolumn{3}{c}{Case D ($\beta_1$)} \\
LS  & 200 & -0.282 & 0.253 & \multicolumn{1}{r|}{}      & 0.003    & 0.146   &             \\
    & 300 & -0.278 & 0.199 & \multicolumn{1}{r|}{}      & 0.003    & 0.119   &             \\
    & 400 & -0.284 & 0.176 & \multicolumn{1}{r|}{}      & -0.007   & 0.106   &             \\
SP & 200 & -0.001 & 0.102 & \multicolumn{1}{r|}{0.004} & 0.000    & 0.074   & 0.028       \\
    & 300 & 0.001  & 0.084 & \multicolumn{1}{r|}{0.003} & -0.001   & 0.058   & 0.490       \\
    & 400 & -0.004 & 0.072 & \multicolumn{1}{r|}{0.014} & -0.003   & 0.050   & 0.441       \\
    &     & \multicolumn{3}{c}{Case C}                  &          &         &             \\
LS  & 200 & 11.256 & 0.232 & \multicolumn{1}{r|}{}      &          &         &             \\
    & 300 & 11.257 & 0.184 & \multicolumn{1}{r|}{}      &          &         &             \\
    & 400 & 11.252 & 0.162 & \multicolumn{1}{r|}{}      &          &         &             \\
SP & 200 & 0.160  & 0.356 & \multicolumn{1}{r|}{0.014} &          &         &             \\
    & 300 & 0.132  & 0.285 & \multicolumn{1}{r|}{0.012} &          &         &             \\
    & 400 & 0.092  & 0.225 & \multicolumn{1}{r|}{0.008} &          &         &         \\
    \hline
\end{tabular}
\end{table}}

We repeat the above procedure $R=1000$ times and then report the following measures in Table \ref{TB.N1}:
\begin{eqnarray}\label{measure1}
\Delta \beta = \overline{\beta}  - \beta_0\quad \text{and}\quad \text{sd}_\beta = \sqrt{\frac{1}{R}\sum_{r=1}^R (\widehat{\beta}_r - \overline{\beta})^2}, 
\end{eqnarray}
where $\overline{\beta} = \frac{1}{R}\sum_{r=1}^R\widehat{\beta}_r$ with $\widehat{\beta}_r \in\{\widehat{\beta}_{\rm LS},\widehat{\beta}_{\rm SP}\}$ in each replication. Additionally, we calculate $\left|\frac{\Delta \beta_{SP}}{\Delta \beta_{LS}}\right|$ in Table \ref{TB.N1} in order to show the relative magnitude of the biases associated with both OLS and SP methods. As indicated in Table \ref{TB.N1}, the OLS method is apparently biased and has large standard deviation. The SP method works well for all cases. Although the magnitudes of biases vary in different cases, the ratio $\left|\frac{\Delta \beta_{SP}}{\Delta \beta_{LS}}\right|$ is reasonably stable across all cases. The only exception is the ratio associated with $\beta_1$ of Case D, which should be expected due to the fact that $w_i$ is an exogenous variable.  

{\footnotesize
\begin{table}[htb!]
    \caption{Selection Results} \label{TB.N2}
    \centering \small
\setlength{\tabcolsep}{4pt}
\renewcommand{\arraystretch}{0.6}
\begin{tabular}{lrrrrrrrrrrrr}
    \hline
        & \multicolumn{3}{c}{Case A}                 & \multicolumn{3}{c}{Case B}                 & \multicolumn{3}{c}{Case C}                 & \multicolumn{3}{c}{Case D} \\
$n$       & 200   & 300   & 400                        & 200   & 300   & 400                        & 200   & 300   & 400                        & 200     & 300     & 400    \\
$P_1$  & 0.00 & 0.00 & \multicolumn{1}{r|}{0.00} & 0.00 & 0.00 & \multicolumn{1}{r|}{0.00} & 1.00 & 1.00 & \multicolumn{1}{r|}{1.00} & 0.00    & 0.00    & 0.00   \\
$P_2$  & 0.01 & 0.00 & \multicolumn{1}{r|}{0.00} & 0.03 & 0.02 & \multicolumn{1}{r|}{0.01} & 0.25 & 0.17 & \multicolumn{1}{r|}{0.13} & 0.01    & 0.00    & 0.00   \\
$P_3$  & 1.00 & 1.00 & \multicolumn{1}{r|}{1.00} & 1.00 & 1.00 & \multicolumn{1}{r|}{1.00} & 1.00 & 1.00 & \multicolumn{1}{r|}{1.00} & 1.00    & 1.00    & 1.00   \\
$P_4$ & 0.01 & 0.00 & \multicolumn{1}{r|}{0.00} & 0.02 & 0.02 & \multicolumn{1}{r|}{0.01} & 0.02 & 0.01 & \multicolumn{1}{r|}{0.00} & 0.01    & 0.00    & 0.00   \\
$P_5$  & 0.81 & 0.96 & \multicolumn{1}{r|}{0.98} & 0.03 & 0.01 & \multicolumn{1}{r|}{0.00} & 1.00 & 1.00 & \multicolumn{1}{r|}{1.00} & 0.86    & 0.95    & 0.98   \\
$P_6$  & 0.01 & 0.00 & \multicolumn{1}{r|}{0.00} & 0.03 & 0.01 & \multicolumn{1}{r|}{0.01} & 0.01 & 0.01 & \multicolumn{1}{r|}{0.00} & 0.01    & 0.00    & 0.00   \\
$P_7$  & 0.00 & 0.00 & \multicolumn{1}{r|}{0.00} & 1.00 & 1.00 & \multicolumn{1}{r|}{1.00} & 1.00 & 1.00 & \multicolumn{1}{r|}{1.00} & 0.00    & 0.00    & 0.00   \\
$P_8$  & 0.00 & 0.00 & \multicolumn{1}{r|}{0.00} & 0.03 & 0.01 & \multicolumn{1}{r|}{0.01} & 0.02 & 0.01 & \multicolumn{1}{r|}{0.00} & 0.00    & 0.00    & 0.00   \\
$P_9$  & 0.00 & 0.00 & \multicolumn{1}{r|}{0.00} & 0.02 & 0.01 & \multicolumn{1}{r|}{0.01} & 0.97 & 1.00 & \multicolumn{1}{r|}{1.00} & 0.00    & 0.00    & 0.00   \\
$P_{10}$ & 0.00 & 0.00 & \multicolumn{1}{r|}{0.00} & 0.03 & 0.01 & \multicolumn{1}{r|}{0.01} & 0.02 & 0.01 & \multicolumn{1}{r|}{0.00} & 0.00    & 0.00    & 0.00   \\
$P_{11}$ & 0.00 & 0.00 & \multicolumn{1}{r|}{0.00} & 1.00 & 1.00 & \multicolumn{1}{r|}{1.00} & 0.70 & 0.85 & \multicolumn{1}{r|}{0.92} & 0.00    & 0.00    & 0.00   \\
$P_{12}$ & 0.00 & 0.00 & \multicolumn{1}{r|}{0.00} & 0.03 & 0.01 & \multicolumn{1}{r|}{0.01} & 0.02 & 0.01 & \multicolumn{1}{r|}{0.00} & 0.00    & 0.00    & 0.00   \\
$P_{13}$ & 0.00 & 0.00 & \multicolumn{1}{r|}{0.00} & 0.02 & 0.02 & \multicolumn{1}{r|}{0.01} & 0.36 & 0.49 & \multicolumn{1}{r|}{0.57} & 0.00    & 0.00    & 0.00   \\
$P_{14}$ & 0.00 & 0.00 & \multicolumn{1}{r|}{0.00} & 0.01 & 0.01 & \multicolumn{1}{r|}{0.00} & 0.01 & 0.01 & \multicolumn{1}{r|}{0.00} & 0.00    & 0.00    & 0.00   \\
$P_{15}$ & 0.00 & 0.00 & \multicolumn{1}{r|}{0.00} & 0.96 & 1.00 & \multicolumn{1}{r|}{1.00} & 0.16 & 0.20 & \multicolumn{1}{r|}{0.23} & 0.00    & 0.00    & 0.00   \\
\hline \hline
\end{tabular}
\end{table}
}

We then examine the LASSO selection method to further reiterate our discussion in the above sections. Specifically, in Table \ref{TB.N2}, we report the following measure: $P_j =\frac{1}{R}\sum_{r=1}^R I(\phi_j\in S_r)$, where $S_r$ stands for the set $S$ selected by the LASSO method in the $r^{th}$ replication. Therefore, $P_j$ measures the probability of $\phi_j$ being selected over $R$ replications. 

It is clear that that the  LASSO method works well. A few findings emerge. Take Case C as an example. Justification of Example B.2.1 given in Appendix \ref{Ap.B2} shows that the odd indexed $\phi_j$'s  have nonzero coefficients, while the even indexed $\phi_j$'s have coefficients 0. Consequently, Table \ref{TB.N2} shows that the odd indexed $P_j$'s are often selected with high probability, while the even indexed $P_j$'s are barely selected.  

\smallskip

We then add some simulation results for the case where $m(\mathbf{x})$ itself contains a linear component in each case under examination in Example B.2.2 below.

\smallskip

\noindent{\bf Example B.2.2}: We now consider model $y_i = x_i \, \beta_0 + \varepsilon_i$, $\varepsilon_i = m(x_i) + e_i$, $e_i\sim N(0,1)$ and $x_i \sim U(0, \pi)$ under the following scenarios:
\begin{itemize}
    \item Case E: $m(x) = 1.5 \, (\cos(x) + 2 \,x - \pi) $;
    \item Case F: $m(x) = 1 + (2 \, x - \pi) - \frac{3}{\pi^2} \, x^2$; and
    \item Case G: $m(x) = 1.5 \, (x - 0.5 \, \pi)$.
\end{itemize}

Apparently, the restrictions: $\e[m(x_i)]=0$ and $\e[x_i \, m(x_i)]\neq 0$ can be verified in a similar way to those for Example B.2.1. We summarize the simulation results in Table \ref{tb.linear} below, wherein the biases and standard deviations (in parentheses) are calculated in exactly the same way as before. 

\begin{table}[htb!]
    \caption{Estimation Results for Cases E--G}\label{tb.linear}
    \centering \small
    \setlength{\tabcolsep}{4pt}
\renewcommand{\arraystretch}{0.7}
\begin{tabular}{rrrrr}
     \hline
 & &  $n=100$ & $n=250$ & $n=500$  \\
 Case E & LS & 0.4569 (0.0051) & 0.4604 (0.0022) & 0.4600 (0.0110) \\
& SP & 0.0036 (0.0040) & 0.0013 (0.0015) & 0.0022 (0.0008) \\ 
Case F & LS & 0.2616 (0.0031) & 0.2621 (0.0012) & 0.2624 (0.0007) \\
& SP & 0.0033 (0.0041) & 0.0013 (0.0015) & 0.0016 (0.0008) \\
Case G & LS & 0.3733 (0.0042) & 0.3758 (0.0017) & 0.3756 (0.0009) \\
& SP & 0.0031 (0.0041) & 0.0011 (0.0015) & 0.0016 (0.0008)\\ 
 \hline
\end{tabular}
\end{table}

Table \ref{tb.linear} obviously shows that the SP method works well numerically. Moreover, the numerical results are better than those in Example B.2.1, probably because the elementary functional forms of $m(\cdot)$ considered in Cases E--G were more accurately approximated by the trigonometric series $\left\{\psi_j(x) = \sqrt{\frac{2}{\pi}} \, \cos(j\, x): j\geq 1\right\}$ for the case of $x\in U[0, \pi]$.

\subsubsection{Finite--sample properties of the testing theory}

In this section, we evaluate the finite--sample property of the test proposed in Section 3.2. 

\smallskip

\noindent{\bf Example B.2.3}: Without loss of generality, we consider Cases A and B of Example B.2.1, and slightly make the following modification in order to evaluate size and power respectively.

\begin{eqnarray*}
    y_i = x_i \, \beta_0 + \varepsilon_i \quad \mbox{with} \quad \varepsilon_i = a_{nj}m(x_i) + e_i,\notag
\end{eqnarray*}
where $x_i \sim U(0, \pi)$, and $a_{n,j} =c_j \, \sqrt{\frac{k_{\max}}{n}}$ for $1\leq j\leq 4$. Without of loss generality, we chose $c_1\equiv 0$ to evaluate the sizes, and $c_2 \equiv 0.75$, $c_3 \equiv 1$, and $c_4 \equiv 1.25$ to evaluate the local powers. Accordingly, we have $\{\psi_{j}(x)=\sqrt{2/\pi}\cos(jx)\mid j \ge 1 \}$, which are defined precisely in multiple places. When calculating the test statistic, we chose $k_{\min}=\lceil \log \log n\rceil$ and $k_{\max}=\lceil 3 \log \log n\rceil$. By construction, we have
\begin{eqnarray}
   a_{n,j}^2 \, n \, k_{\max}^{-1/2} \asymp \sqrt{\log\log n }\to \infty,
\end{eqnarray}
so the requirement to ensure local power properties is fulfilled. We generate $m(x)$ in the following two forms, which are almost identical to those in Example B.2.1.
\begin{itemize}
    \item Case A: $m(x)= 4\cos(3 x)+4\cos(5 x)$;
    \item Case B: $m(x)=\sum_{j=1}^{\infty} \gamma_j\cos(j x)$ in which $\gamma_j =4\cdot (0.9)^j$ if $j\in \{4(\ell-1)+3\mid \ell=1,2,\ldots\}$; $\gamma_j =0$, otherwise.
\end{itemize}

To improve the finite sample performance of the size function in each case, we propose using the following wild bootstrap procedure (see, for example, \citealp{gg2008}) to obtain bootstrapping critical values.
\begin{enumerate}
    \item Obtain $\widetilde{e}_i$ as specified under (3.5), and generate bootstrap sample $\{y_i^* \}$, where $y_i^* \equiv x_i \, \widehat{\beta}+ \widetilde{e}_i \eta_i$ and $\{\eta_i\}$ are i.i.d. sample generated from $N(0,1)$.

    \item Given $\{y_i^*, x_i\}$, we run regression to obtain $\{\widetilde{e}_i^*\}$ as in Step 1, and calculate the corresponding bootstrap statistic $L_n^*$.

    \item We repeat the above steps, say, 400 times to obtain the 95\% coverage set $\mathscr{C}_n$.
\end{enumerate}

After $R=1000$ simulation replications, we report the following rejection rate for $n=200,300, 400$ respectively: $\text{RJ} =\frac{1}{R} \sum_{r=1}^R I(L_{n,r}\not \in \mathscr{C}_{n,r}),$ where $L_{n,r}$ and $\mathscr{C}_{n,r}$ respectively stand for the values of $L_n$ and $\mathscr{C}_n$ obtained in the $r^{th}$ replication. The results are summarized in Table \ref{Tab_test}.

\begin{table}[htb!]\centering\small
\renewcommand{\arraystretch}{0.7}
\caption{Rejection Rates}\label{Tab_test}
\begin{tabular}{lrlrrrlrrr}
\hline \hline
    &       &  & \multicolumn{3}{c}{Case A} &  & \multicolumn{3}{c}{Case B}     \\ \cline{4-6} \cline{8-10}
$n$   & $c_0$  &  & $c_1$      & $c_2$      & $c_3$     &  & $c_1$      & $c_2$      & $c_3$       \\
200 & 0.041 &  & 0.999 & 1.000 & 1.000 &  & 0.709 & 0.944 & 0.992 \\
300 & 0.045 &  & 0.997 & 1.000 & 1.000 &  & 0.738 & 0.957 & 0.997 \\
400 & 0.049 &  & 0.999 & 1.000 & 1.000 &  & 0.741 & 0.951 & 0.996 \\
\hline \hline
\end{tabular}
\end{table}

In Table \ref{Tab_test}, the rejection rates with $c_1=0$ are around the nominal rejection rate (i.e., 5\%). When using $c_2$, $c_3$ and $c_4$, the rejection rates reflect the local power of the proposed test. It is not surprising that the local power varies across two cases. In Case B, the rejection rates are slightly lower than those in Case A, which might be due to the impact of the truncation residual. In summary, Table \ref{Tab_test} shows that the proposed test can detect the weakest possible endogeneity at an optimal order of $n^{-1/2} \, \delta_n$ for such $\delta_n$ that diverges to $\infty$ at one of the slowest possible rates of an order of $\log(\log(n))$.

\subsubsection{Verification of the simulation designs in Examples B.2.1 and B.2.2}

We consider Case C of Example B.2.1 only here. As the derivations for Cases A, B and D of Example B.2.1 are similar and simpler, and for Example B.2.2 are also similar, we omit the details. To show that $E[m(x_i)]=0$ and $E[x_im(x_i)]\ne 0$, it is sufficient to consider
\begin{eqnarray*}
    \int_{0}^\pi \frac{1}{\pi} \left(z-\frac{\pi }{2}\right)\mathrm{d}z =\frac{\pi }{2} -\frac{\pi }{2}=0 \ \ \mbox{and} \ \
    \frac{1}{\pi}\int_0^{\pi}  z^2\mathrm{d}z- \frac{1}{2}\int_0^{\pi}  z\mathrm{d}z=\frac{1}{3}\pi^2 -\frac{1}{4}\pi^2 =\frac{1}{12}\pi^2,
\end{eqnarray*}
so endogeneity exists.

We then show that 
\begin{eqnarray*}
    E[x_i^2 \cos(j x_i)] &=&\frac{1}{\pi}\int_{0}^\pi z^2 \cos (j z)\mathrm{d}z=\frac{1}{\pi j}\int_{0}^\pi  z^2\mathrm{d}\sin(jz) = -\frac{2}{\pi j}\int_{0}^\pi\sin(jz) z \mathrm{d}z \notag \\
    &=&\frac{2}{\pi j^2}\int_{0}^\pi z \mathrm{d}\cos(jz)=\frac{2}{\pi j^2} z\cos(jz) |_0^\pi -\frac{2}{\pi j^2}\int_{0}^\pi\cos(jz)\mathrm{d}z  \notag \\
    &=& \frac{2}{ j^2} \cos(j\pi) =\left\{\begin{array}{ll}
        \frac{2}{ j^2} & \text{if $j$ is even} \\
        -\frac{2}{ j^2} & \text{if $j$ is odd}
    \end{array} \right.
\end{eqnarray*}
and 
\begin{eqnarray*}
E[x_i\cos(j x_i)] & = & \frac{1}{\pi}\int_0^\pi z\cos(jz)\mathrm{d}z -\frac{1}{2}\int_0^\pi \cos(jz)\mathrm{d}z = \frac{1}{\pi j}\int_0^{\pi}z\mathrm{d} \sin(jz) =\notag \\
&= & -\frac{1}{\pi j}\int_0^{\pi}\sin(jz)\mathrm{d}z=\frac{1}{\pi j^2}\cos(jz)|_0^\pi =\left\{\begin{array}{ll}
        0 & \text{if $j$ is even}, \\
        -\frac{2}{\pi j^2} & \text{if $j$ is odd}.
    \end{array}\right.
\end{eqnarray*}

Therefore, we have shown that $m(x)$ can be fully expanded by the odd terms of $\{\cos(j \, x)\}$, while fully expanding the regressor $x^2$ requires all of $\{\cos(j \, x)\}$.

\subsection{Extra empirical analysis}\label{Ap.B3}

\subsubsection{\noindent{\bf Example 5.1} (continued)}

As alluded before, we change the basis functions to $\phi_j(x) =(x-\pi/2)^j$ for $j\geq 1$, and  keep everything else unchanged. Obviously, the new set of basis functions are not even orthogonal to each other in the space $L^2([0,\pi])$. The LASSO method selects $\phi_2(x) =(x-\pi/2)^2$ only, so $\widehat{m}(x)= -0.0643(x-\pi/2)^2$.

We then obtain the following table, wherein the last two rows are corresponding to the results obtained from the 2sLS method by using $(z_1,\phi_2)$ and $\phi_2(x)=(x-\pi/2)^2$ as IVs, respectively. 

\begin{table}[htb!]
\centering \small
\renewcommand{\arraystretch}{0.6}
\caption{Additional Estimation Results}\label{tb.em.2_2} 

\begin{tabular}{lrrrrrr}
\hline
&  & $\widetilde{\beta}_0$ & CI  & RMSE & $\beta_0$ & CI   \\
SP &  & 0.3862 & (0.3851, 0.3873) & 0.4136 & 0.0758 & (0.0756, 0.0760) \\
2sLS  & $z_1, \phi_2$ & 0.2787 & (0.2780, 0.2794) & 0.4146 & 0.0547 & (0.0546, 0.0549) \\
 & $\phi_2$  &  0.2846 & (0.2839, 0.2853) & 0.4145 & 0.0559 & (0.0557, 0.0560)\\
\hline
\end{tabular}
\end{table}

It is observed that the new estimation results are very similar to those reported in Section 5. This may indicate the robustness of our method, and the choice of $\{\phi_j(\cdot): j\geq 1\}$ and whether it is orthonormal or not may not affect the corresponding estimation results very much.

\subsubsection{Possible sources of endogeneity}

As an additional evidence to show that there is some substantial nonlinear endogeneity involved in Example 5.1, we plot the estimated $m(x)$ with its 95\% confidence interval (obtained via wild bootstrap) in Figure \ref{Fig1}. It is clear the SP method supports a clearly nonlinear functional form for the specification of $m(x)$. 

{\samepage
{\small
\begin{figure}[htb!]
    \centering
    \caption{Estimated $\widehat{m}(x)$}\label{Fig1}
    \includegraphics[scale=0.4]{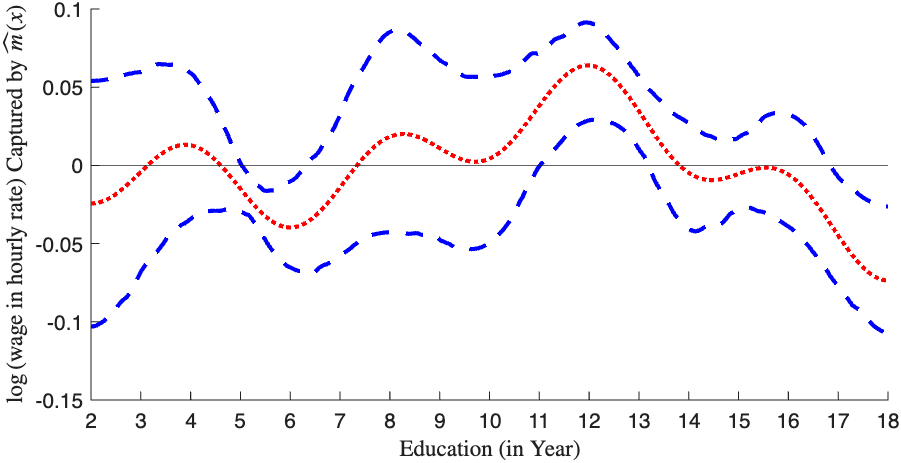}
\end{figure}
}}

As briefly pointed out in the previous sections of this paper, an additional novelty of the SP approach is that we are able to estimate $\varepsilon$ of model (2.1) by $\widehat{\varepsilon} = y - \mathbf{x}^{\top} \widehat{\boldsymbol{\beta}}$. As a consequence, we may be able to make use of $\widehat{\varepsilon}$ to establish an approximate regression model of the form: 
\be
u_i \equiv: \widehat{\varepsilon}_i = G(\mathbf{w}_i; {\eta}_i), \ 1\leq i\leq n
\label{nons1}
\ee
for the econometrician to try to reveal possible sources of endogeneity, where each $\mathbf{w}_i$ is a vector of variables possibly containing omitted variables and/or errors in variables, which may be highly correlated with $\mathbf{x}$, and ${\eta}$ is the error term involved.

Due to the fact that $\varepsilon$ is usually unobservable and latent, it is reasonable to assume the non--separable functional form in (\ref{nons1}). If each $\mathbf{w}_i$ is available to the econometrician, the discussion in Appendices A.2 and \ref{Ap.A3} suggests that one may be able to reveal possible sources of endogeneity by estimating the functional form of $G(\mathbf{w}; \cdot)$ with respect to $\mathbf{w}$.

Let us now come back to Example 5.1. We would like to investigate how $\varepsilon$ might be uncorrelated with $\{z_j: 1\leq j\leq 5\}$ used above. We compute the sample averages of the following quantities:
$\e[\widetilde{x}\cdot m(\widetilde{x})]$ and $\e[z_j\cdot \varepsilon]$, where $\widetilde{x}$, $m(\widetilde{x})$ are defined in (5.3), $z_j$'s are the different IV variables adopted above, and those unobservable (i.e., $m(\cdot)$ and $\varepsilon = y- \widetilde{\alpha}_0- \widetilde{x}\, \widetilde{\beta}_0$ are replaced with their estimates via the SP method). 

As shown in Table \ref{tb.em.3}, $E[\widetilde{x}\cdot m(\widetilde{x})]$ is clearly non-zero, which justifies the existence of the endogeneity. As an valid IV, we would expect $E[z_j\cdot \varepsilon]$ all to be zero. However, only $z_1$ and $z_5$ generate averages sufficiently closed to 0. Numerically, it offers a reason to explain why the results associated with $z_1$ and $(z_1, z_5)$ in Table 2 are close to that from the SP approach. However, the other quantities: $E[z_j\cdot \varepsilon]$, for $2\leq j\leq 4$, don't justify the validity of those IVs. 

\begin{table}[htb!]
\centering \small
\renewcommand{\arraystretch}{0.7}
\caption{Estimation Results}\label{tb.em.3}
\begin{tabular}{lrr} 
\hline
 & Average & 95\% CI \\
$\e[\widetilde{x}\cdot m(\widetilde{x})]$ & 0.009 & (0.004, 0.014) \\
$\e[z_1\cdot \varepsilon]$ & 0.022 & (-0.045, 0.293) \\
$\e[z_2\cdot \varepsilon]$ & 0.027 & (0.014, 0.040) \\
$\e[z_3\cdot \varepsilon]$ & 0.024 & (0.013, 0.035) \\
$\e[z_4\cdot \varepsilon]$ & 0.026 & (0.014, 0.037) \\
$\e[z_5\cdot \varepsilon]$ & 0.002 & (-0.005, 0.008) \\
 \hline
\end{tabular}
\end{table}

Finally, we estimate the following three nonparametric regressions models:
\be
    \text{Model 1}: \, \widetilde{x} = g_1(z_1) +\xi_1,\ \  \text{Model 2}: \, \varepsilon = g_2(z_1) +\xi_2,\ \ \text{Model 3}: \, z_1 = g_3(\widetilde{x}) +\xi_3,
    \nonumber
    \ee
where $\widetilde{x}$, $z_1$ and $\varepsilon$ have been defined previously, $g_1(\cdot)$, $g_2(\cdot)$ and $g_3(\cdot)$ measure possible relationships among them, and $\xi_1$, $\xi_2$ and $\xi_3$ are the error terms. 

{\samepage
\begin{figure}[htb!]
    \centering
    \caption{Robustness Check via Nonparametric Estimates}\label{Fig2}
    \includegraphics[scale=0.5]{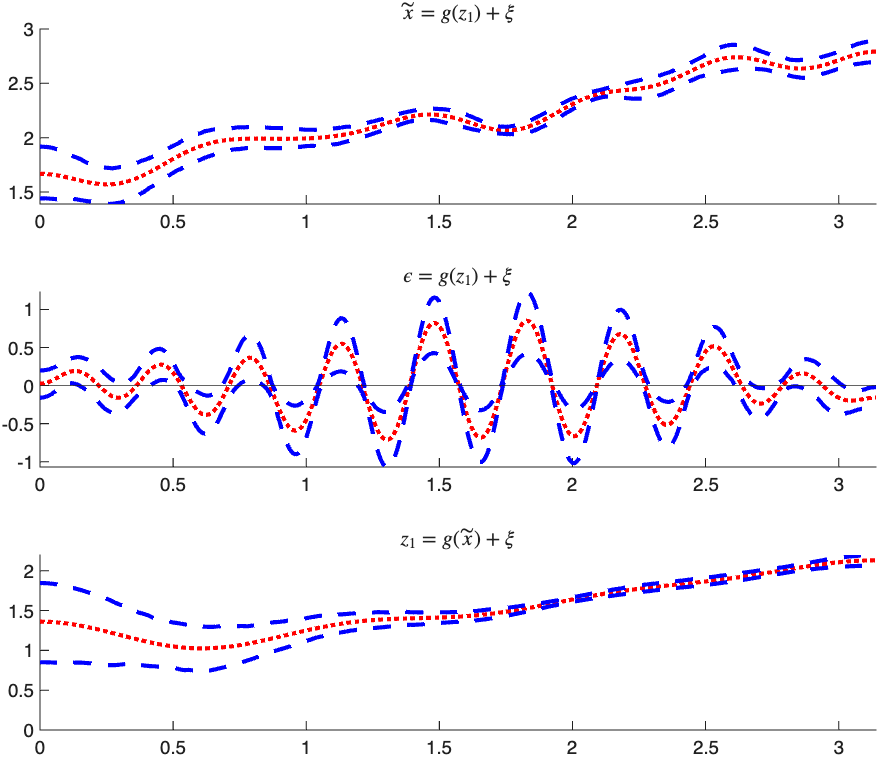}
\end{figure}
}

The nonparametrically estimated relationships, along with their corresponding $95\%$ confidence intervals, are plotted in Figure \ref{Fig2}.  The second plot clearly shows that the relationship between $\varepsilon$ and $z_1$ is clearly nonlinear, although being symmetrically fluctuating around zero. For simplicity, we use the GCV method of \cite{gao2002} to select the truncation parameter here. In the second case, the GCV suggests the number of truncation parameters being $20$. The second plot infers that the typical IV method might be modelled beyond linear regression. 

The first and third plots together infer the relationship between $x$ and $z_1$ has a linear trend but not exactly linear, as the truncation parameters for these two regressions are 13 and 7 respectively according to the GCV approach. This further explains why the results associated the SP method and the IV method (using $z_1$) are close to each other in Table 2. 

In summary, due to the unobservable nature of $\varepsilon$, it is difficult to establish and reveal a relationship between $\varepsilon$ and potential instrumental variables.  Without assuming the knowledge of such relationships, the existence and the validity of potential IVs, the proposed SP approach offers a feasible and robust way to deal with certain types of endogeneity issues. 

As discussed above, the proposed SP estimation method is applicable to time--series data. So we now have a look at the following application.

\subsubsection{Endogeneity in stock return}

We apply the SP method to revisit the classic market model (\citealp{M1997, campbell1998econometrics}). See Eq. (4.3.2) of \cite{campbell1998econometrics} for example.  

Consider the following linear model:
\begin{eqnarray}
    y_t = \alpha_0+ x_t \, \beta_0 + \varepsilon_t,
\end{eqnarray}
where $\varepsilon_t=m(x_t) + e_t$, $y_t$ represents daily stock return for a chosen stock, and $x_t$ stands for the market portfolio. In this study, we  use daily return of the SP500 index as $x_t$. We aim to argue that endogeneity should also be taken into account for the market model. 
 
For the response $y_t$, we collect data for Apple, Google, Meta and Microsoft from Yahoo finance covering period from the first trading day of 2012 to the last trading day of 2020, which gives $n=2169$ observations for us to conduct regressions.  As in our study about the return to schooling, we normalize $x_t$ to get $\widetilde{x}_t$, so the latter belongs to $[0,\pi]$. Thus, the model used for regression is again written as 

\begin{eqnarray}
    y_t = \widetilde{\alpha}_0+\widetilde{\beta}_0\widetilde{x}_t + \varepsilon_t.
\end{eqnarray}

For each stock, we consider (i) The SP method; \ (ii) The LS method; and (iii) The 2sLS method using the basis functions selected by the SP method as IVs.

{\small
\begin{figure}[htb!]
    \centering
    \caption{Estimated $\widehat{m}(\cdot)$}\label{Fig_gg}
    \includegraphics[scale=0.4]{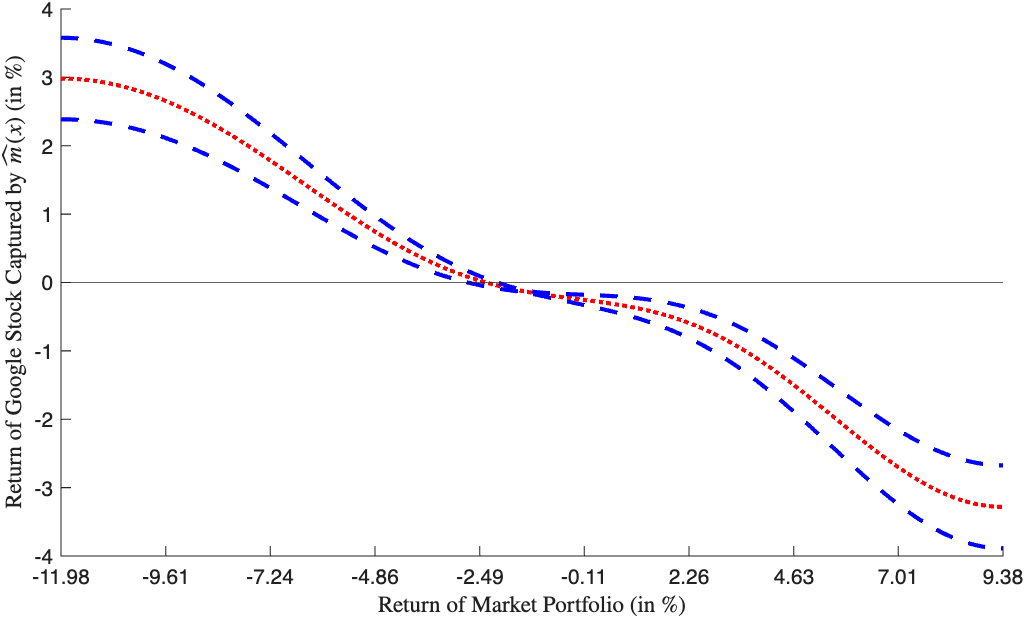} 
\end{figure}
}

The estimated $m(x)$ for different companies are given as follows:
\begin{enumerate}
    \item[] Apple: $\widehat{m}(x) = 0.00348\phi_2(x)$;
    \item[] Google: $\widehat{m}(x) = 0.03176\phi_1(x) +0.00750\phi_3(x)-0.00187\phi_4(x)$;
    \item[] Meta: $\widehat{m}(x)= 0.02730\phi_1(x) -0.00647\phi_5(x) $;
    \item[] Microsoft: $\widehat{m}(x) = 0.00352\phi_2(x) -0.00123\phi_4(x)$.
\end{enumerate}

For the purpose of demonstration, we also plot $\widehat{m}(x)$ from Google with its 95\% confidence interval in Figure \ref{Fig_gg} below, which clearly shows some nonlinear fluctuation of the estimated endogeneity component represented by $\widehat{m}(\cdot)$. 

\begin{table}[htb!]
\centering \small
\caption{Estimation Results}\label{newtb.1}
\setlength{\tabcolsep}{4pt}
\renewcommand{\arraystretch}{0.5}
\begin{tabular}{llrrrrrr}
     \hline
 &  & $\widetilde{\beta}_0$ & \multicolumn{1}{c}{95\% CI}   & RMSE  & $\beta_0$ & \multicolumn{1}{c}{95\% CI} & $\phi_j$'s selected \\
 &  &  &  & {\footnotesize ($\times 100$)} & & &\\
Apple & SP & 0.0756 & (0.0755, 0.0757) & 1.3425 & 1.1111 & (1.1097, 1.1124) & $j=2$ \\
& LS & 0.0769 & (0.0768, 0.0770) & 1.3430 & 1.1307& (1.1296, 1.1319) &  \\ 
 & 2sLS & 0.0809 & (0.0808, 0.0811) & 1.3445 & 1.1898 & (1.1876, 1.1921) &  \\
Google & SP & 0.0835 & (0.0822, 0.0848) & 1.1335 & 1.2276 & (1.2082, 1.2469) & $j\in \{1,3,4\}$ \\
& LS & 0.0717 & (0.0716, 0.0718) & 1.1373 & 1.0542 & (1.0533, 1.0552) &  \\
 & 2sLS & 0.0717 & (0.0716, 0.0717) & 1.1373 & 1.0538 & (1.0528, 1.0548) &  \\
Meta & SP & 0.0840 & (0.0830, 0.0850) & 2.0644 & 1.2357 & (1.2211, 1.2503) & $j\in \{1,5\}$ \\
& LS & 0.0723 & (0.0721, 0.0724) & 2.0691 & 1.0624 & (1.0606, 1.0641) &  \\  
& 2sLS & 0.0721 & (0.0720, 0.0722) & 2.0691 & 1.0598 & (1.0581, 1.0616) &  \\
Microsoft & SP & 0.0773 & (0.0772, 0.0774) & 1.0725 & 1.1364 & (1.1353, 1.1376) & $j\in \{2,4\}$\\
& LS & 0.0799 & (0.0799, 0.0800) & 1.0740 & 1.1752 & (1.1743, 1.1761) &  \\ 
& 2sLS & 0.0841 & (0.0840, 0.0842) & 1.0760 & 1.2371 & (1.2356, 1.2385) & \\ 
 \hline
\end{tabular}
\end{table}

\begin{table}[htb!] 
\centering\small
\caption{Estimated Endogeneity ($\times 100$)}\label{newtb.2}
\renewcommand{\arraystretch}{0.6}
\begin{tabular}{lrr}
    \hline
 & $\e[\widetilde{x} \, m(\widetilde{x})]$ & \multicolumn{1}{c}{95\% CI}  \\
Apple & -0.431 & (-0.433, -0.427) \\
Google & -0.516 & (-0.535, -0.496) \\
Meta & -0.147 & (-0.168, -0.126) \\
Microsoft & -0.538 & (-0.543, -0.533) \\
\hline
\end{tabular}
\end{table}

Tables \ref{newtb.1} and \ref{newtb.2} summarize the results, revealing several key findings. First, Table \ref{newtb.1} indicates similar features about the SP estimates to those discussed in Section 5 with relatively smaller RMSEs in comparison with those of LS and 2sLS estimates. Second, Table \ref{newtb.2} confirms the presence of endogeneity, as the estimated values of $\e[\widetilde{x} \, m(\widetilde{x})]$ are negative across all companies, which should be expected given that these four companies account for approximately more than 19\% of the entire S\&P 500 index. 

Third, the estimated $\beta_0$ for all companies is greater than 1 indicating that these four stocks are more volatile than the market after accounting for the endogeneity.

\subsection{Bias correction for weak endogeneity}\label{Ap.B4}

This section discusses about how to eliminate the bias term by a simple jackknife method.

To show the main idea, we focus on a weak endogeneity case of the form:
\be
    y_i=x_i\, {\beta}_0 +m_n(x_i)+e_i \ \ \mbox{with $m_n(x) = \frac{\mu}{\sqrt{n}} \, \left(x - \e[x]\right)$},
    \label{endos}
    \ee
  where $\mu=4$, $x_i\sim U(0,\pi)$, and $e_i\sim N(0,1)$. 

We partition our sample in two groups: $\mathscr{N}_1$ and $\mathscr{N}_2$, where $\sharp \mathscr{N}_j =n_j$ for $j=1,2$, and $\mathscr{N}_1\cap \mathscr{N}_2=\emptyset$ before we define the OLS estimators: $\widehat{\bm{\beta}}_{\rm LS, 1} = \left(\sum_{\mathscr{N}_1}\mathbf{x}_i\mathbf{x}_i^\top\right)^{-1} \sum_{\mathscr{N}_1}\mathbf{x}_i y_i$ and $\widehat{\bm{\beta}}_{\rm LS, 2} = \left(\sum_{\mathscr{N}_2}\mathbf{x}_i\mathbf{x}_i^\top\right)^{-1} \sum_{\mathscr{N}_2}\mathbf{x}_i y_i$.

Define $\widehat{\bm{\beta}}_{\rm BC} = c \, \widehat{\bm{\beta}}_{\rm LS, 1} + (1-c) \, \widehat{\bm{\beta}}_{\rm LS, 2}$. Let $n=n_1 + n_2$. Simple algebra shows that 

\begin{eqnarray}
    &&\sqrt{n}(\widehat{\bm{\beta}}_{\rm BC} - \bm{\beta}_0) = \sqrt{n} \left(c \, \widehat{\bm{\beta}}_{\rm LS, 1} + (1-c) \, \widehat{\bm{\beta}}_{\rm LS, 2} - \bm{\beta}_0\right)  = o_P(1) + \notag \\
    &&\frac{\sqrt{n}}{\sqrt{n_1}} \, \left(c + (1-c) \, \frac{\sqrt{n_1}}{\sqrt{n_2}}\right) \bm{\mu} + \frac{\sqrt{n}}{\sqrt{n_1}} \, \mathbb{E}^{-1}\left[\mathbf{x} \, \mathbf{x}^{\top}\right]\left(\frac{c}{\sqrt{n_1}}\sum_{\mathscr{N}_1}\mathbf{x}_i e_i  + \frac{(1-c)\sqrt{n}_1}{n_2}\sum_{\mathscr{N}_2}\mathbf{x}_i e_i\right).
    \nonumber
\end{eqnarray}

Note that ${\rm Var}\left(\frac{c}{\sqrt{n_1}}\sum_{\mathscr{N}_1}\mathbf{x}_i e_i+ \frac{(1-c)\sqrt{n_1}}{n_2}\sum_{\mathscr{N}_2}\mathbf{x}_i e_i\right) = 2 \, c^2\, \e\left[\mathbf{x} \, \mathbf{x}^{\top}\right] \sigma_e^2$ due to requiring $c = (c-1)\frac{\sqrt{n_1}}{\sqrt{n_2}}$. Note also that while the choice of $c$ affects the variance component, it doesn't have any impact on the bias evaluation as long as $c>1$ satisfies $c = (c-1)\frac{\sqrt{n_1}}{\sqrt{n_2}}$. 

We therefore choose $c=1.5$ and then $n_2=\left[\frac{n_1}{9}\right]$ in the following numerical evaluation. We consider the case of $n_1\in \{270, 405, 540\}$, so the value of $n_2$ is defined accordingly as follows. We report the absolute bias and the standard derivation based $R=1000$ replications as follows:

\begin{eqnarray}
    \text{Bias} =\frac{1}{R}\sum_{r=1}^R|\widehat{\beta}_r -\beta_0|\quad \text{and}\quad \text{std} = \left\{\frac{1}{R}\sum_{r=1}^R |\widehat{\beta}_r -\overline{\beta}|^2\right\}^{1/2},
\end{eqnarray}
where $\overline{\beta} =\frac{1}{R}\sum_{r=1}^R \widehat{\beta}_r $, and $\widehat{\beta}_r$ stands for the value of $\widehat{\beta}_{\rm LS}$ or $\widehat{\beta}_{\rm BC}$ in the $r^{th}$ replication. 

\begin{table}[h!]\centering \small
    \caption{Estimation Results with $n=n_1+ n_2$}\label{tab.bc}
    \renewcommand{\arraystretch}{0.7}
\begin{tabular}{lcccccc}
    \hline
$n$ & \multicolumn{2}{c}{300} & \multicolumn{2}{c}{450} & \multicolumn{2}{c}{600} \\
 & {\rm bias} & {\rm std} & {\rm bias} & {\rm std} & {\rm bias} & {\rm std} \\
$\widehat{\beta}_{\rm LS}$ & 0.293 & 0.034 & 0.238 & 0.026 & 0.207 & 0.023 \\
$\widehat{\beta}_{\rm BC}$ & 0.060 & 0.074 & 0.046 & 0.056 & 0.039 & 0.049 \\
\hline
\end{tabular}
\end{table}

\noindent The results are summarized in Table \ref{tab.bc}. It is clear that $\widehat{\beta}_{\rm LS}$ yields a much larger bias in each individual case, although $\widehat{\beta}_{\rm LS}$ has a smaller standard derivation correspondingly. The unbiased estimate $\widehat{\beta}_{\rm BC}$ has been obtained by partitioning our full sample size $n=n_1+n_2$, as a consequence, it sacrifices the standard deviations slightly.

\section{Proofs for Sections 2, 3 and 5}\label{Ap.C}

\begin{proof}[Proof of Lemma 2.1]
(1) Notice that as an element in ${\mathcal S}$, ${\mathcal P}_{\mathcal S}(\xi)=\sum_{j=1}^\infty c_j \psi_j({\bf x})$, while for ${\mathcal P}_{\mathcal S}(\xi)$ to be a projection of $\xi$, $\xi-{\mathcal P}_{\mathcal S}(\xi)$ should have minimum norm among all $\xi-\eta$ with $\eta\in {\mathcal S}$, or equivalently, among all $c_j$. Noting that $\|\xi-{\mathcal P}_{\mathcal S}(\xi)\|^2=\e[\xi-{\mathcal P}_{\mathcal S}(\xi)]^2 = \e[\xi^2]-2\sum_{j=1}^\infty c_j \e[\xi\psi_j({\bf x})]+\sum_{j=1}^\infty c_j^2,$ the first order condition gives $c_j=\e[\xi\psi_j({\bf x})]$. 

(2) By Assumption 2.1(iii),  for any $\lambda\ne 0$, $\lambda^\top {\bf x}\not\in {\mathcal S}$. Thus, ${\mathcal M}_{\mathcal S}(\lambda^\top {\bf x})=\lambda^\top {\bf x}-{\mathcal P}_{\mathcal S}(\lambda^\top {\bf x})\ne 0$. We then have $${0}< \|{\mathcal M}_{\mathcal S}(\lambda^\top {\bf x})\|^2=\e[(\lambda^\top {\bf x}-{\mathcal P}_{\mathcal S}(\lambda^\top {\bf x}))^2] = \lambda^\top \left\{\e[{\bf xx}^\top]-\sum_{j=1}^\infty \e[{\bf x} \psi_j({\bf x})] \e[{\bf x}^\top\psi_j({\bf x})]  \right\}\lambda.$$ The assertion holds.
\end{proof}

\begin{proof}[Proof of Lemma 2.2]
We now show that $\bm{\beta}_0$ is invariant to the choice of the basis $\{\psi_j(\cdot): j\geq 1\}$ in ${\mathcal S}$. Towards this end, suppose that ${\mathcal S}$ can also be spanned by another orthonormal sequence $\{\widetilde{\psi}_j(\cdot): j\geq 1\}$. Denote two infinite-dimensional vectors $\Psi({\bf x})=(\psi_1({\bf x}), \psi_2({\bf x}), \cdots)^\top$ and $\widetilde{\Psi}({\bf x})=(\widetilde{\psi}_1({\bf x}), \widetilde{\psi}_2({\bf x}), \cdots)^\top$. 

Thus, $\widetilde{\Psi}({\bf x})=\mathbf{A}\Psi({\bf x}),$ where $A$ is an infinite-dimensional matrix with element $a_{ij}=\e[\widetilde{\psi}_j(\mathbf{x})\psi_{i}(\mathbf{x})]$ that is the $i$-th coefficient in the orthogonal expansion of $\widetilde{\psi}_j(\mathbf{x})$ in terms of the basis $\{\psi_j(\cdot): j\geq 1\}$. On the other hand, $a_{ji}$ is also the $j$-th coefficient in the orthogonal expansion of $\psi_i(\mathbf{x})$ in terms of the basis $\{\widetilde{\psi}_j(\mathbf{x}): j\geq 1\}$. Hence, $\Psi({\bf x})=\mathbf{A}^\top \widetilde{\Psi}({\bf x}).$

Since both $\{\widetilde{\psi}_j(\cdot): j\geq 1\}$ and $\{\psi_j(\cdot): j\geq 1\}$ are orthonormal sequences, $\e[\widetilde{\Psi}({\bf x})\widetilde{\Psi}({\bf x})^\top]=\mathbf{I}$ and $\e[\Psi({\bf x})\Psi({\bf x})^\top]=\mathbf{I}$; these imply $\mathbf{A}\mathbf{A}^\top=\mathbf{I}$ and $\mathbf{A}^\top \mathbf{A}=\mathbf{I}$. Therefore, $\mathbf{A}$ is an orthogonal matrix. It follows that
\bea
\sum_{j=1}^{\infty} \e[\mathbf{x} \, \widetilde{\psi}_j(\mathbf{x})] \,  \e[\mathbf{x}^{\top} \widetilde{\psi}_j(\mathbf{x})] & = &\e[\mathbf{x} \, \widetilde{\Psi}(\mathbf{x})^\top] \,  \e[ \widetilde{\Psi}(\mathbf{x})\mathbf{x}^{\top}]=\e[\mathbf{x} \, \Psi(\mathbf{x})^\top]\mathbf{A}^\top \mathbf{A}  \e[\Psi(\mathbf{x})\mathbf{x}^{\top}]
\nonumber\\
& =&\e[\mathbf{x} \, \Psi(\mathbf{x})^\top]\e[\Psi(\mathbf{x})\mathbf{x}^{\top}]=\sum_{i=1}^{\infty} \e[\mathbf{x} \, {\psi}_i(\mathbf{x})] \,  \e[\mathbf{x}^{\top} {\psi}_i(\mathbf{x})],
\nonumber
\eea
which implies that the matrix $\bm{\Sigma}_x= \e[\mathbf{x}\, \mathbf{x}^{\top}] - \sum_{j=1}^{\infty} \e[\mathbf{x} \, {\psi}_j(\mathbf{x})] \,  \e[\mathbf{x}^{\top} {\psi}_j(\mathbf{x})]$ is invariant to the choice of the basis, so is $\bm{\Sigma}_{xy} = \e[\mathbf{x} \, y] - \sum_{j=1}^\infty \e[{\bf x} \, \psi_j({\bf x})] \, \e[y \, \psi_j({\bf x})]$. We then conclude that $\bm{\beta}_0=(\bm{\Sigma}_x)^{-1}\bm{\Sigma}_{xy}$ is invariant to the choice of the orthonormal basis $\{\psi_j(\cdot): j\geq 1\}$. 
\end{proof}

\begin{proof}[Proof of Theorem 3.1]

(i) It follows that
\begin{align*}
\widehat{\bm{\beta}}_{\rm SP}=&(\mathbf{X}^\top \mathbf{M}_v \, \mathbf{X})^{-1} \mathbf{X}^\top \mathbf{M}_v \, \mathbf{y}=\bm{\beta}_0 +(\mathbf{X}^\top \mathbf{M}_v \, \mathbf{X})^{-1} \mathbf{X}^\top \mathbf{M}_v \, {\bm\delta} +(\mathbf{X}^\top \mathbf{M}_v \, \mathbf{X})^{-1} \mathbf{X}^\top \mathbf{M}_v \, \mathbf{e}.
\end{align*}

We then have $\e[\widehat{\bm{\beta}}_{\rm SP}|\mathbf{X}] = \bm{\beta}_0 +(\mathbf{X}^\top \mathbf{M}_v \, \mathbf{X})^{-1} \mathbf{X}^\top \mathbf{M}_v \, \bm{\delta}$. Notice that, as $n\to\infty$,
\begin{align*}
(\mathbf{X}^\top \mathbf{M}_v \mathbf{X})^{-1} \mathbf{X}^\top \mathbf{M}_v \bm{\delta}=\left(\frac{1}{n} \mathbf{X}^\top \mathbf{M}_v \, \mathbf{X}\right)^{-1} \frac{1}{n} \mathbf{X}^\top \mathbf{M}_v \bm{\delta}\to_P 0,
\end{align*}
which follows from the proof of part (ii) of Theorem 3.1. The assertion holds.
\smallskip

(ii) Notice that $\widehat{\bm{\beta}}_{\rm SP}=\bm{\beta}_0 +(\mathbf{X}^\top \mathbf{M}_v \mathbf{X})^{-1} \mathbf{X}^\top \mathbf{M}_v (\bm{\delta}+\mathbf{e})$ and hence 
\begin{align*}
{\sqrt{n}\left(\DF{1}{n} \mathbf{X}^\top \mathbf{M}_v \mathbf{X}\right)(\widehat{\bm{\beta}}_{\rm SP}-\bm{\beta}_0)= \DF{1}{\sqrt{n}} \mathbf{X}^\top \mathbf{M}_v \, \mathbf{e}+ \DF{1}{\sqrt{n}} \mathbf{X}^\top \mathbf{M}_v \bm{\delta}.}
\end{align*}

We first consider the convergence of $\DF{1}{n} \mathbf{X}^\top \mathbf{M}_v \mathbf{X}$. Note that
\begin{align*}
\DF{1}{n} \mathbf{X}^\top \mathbf{M}_v \mathbf{X}=&\DF{1}{n} \mathbf{X}^\top \mathbf{X}-\DF{1}{n} \mathbf{X}^\top \mathbf{V}(\mathbf{V}^\top \mathbf{V})^{-1}
\mathbf{V}^\top \mathbf{X}\\
=&\DF{1}{n}\sum_{i=1}^n\mathbf{x}_i\mathbf{x}_i^\top-
\DF{1}{n}\sum_{i=1}^n\mathbf{x}_i\mathbf{V}_k(\mathbf{x}_i)^\top \left(\DF{1}{n}\mathbf{V}^\top \mathbf{V}\right)^{-1}\DF{1}{n}\sum_{i=1}^n\mathbf{V}_k(\mathbf{x}_i)
\mathbf{x}_i^\top\\
=&\e[\mathbf{x}\mathbf{x}^\top]-\e[\mathbf{x}\mathbf{V}_k(\mathbf{x})^\top]
\e[\mathbf{V}_k(\mathbf{x})\mathbf{x}^\top]+o_P(1)\\
\to_P&\e[\mathbf{x}\mathbf{x}^\top]-\sum_{j=1}^\infty\e[\mathbf{x}\psi_j(\mathbf{x})]
\e[\psi_j(\mathbf{x})\mathbf{x}^\top]={\bm\Sigma}_x,
\end{align*}
by Lemma 2.1 as $(k, n)\to(\infty, \infty)$. The asymptotic normality will be derived from $\DF{1}{\sqrt{n}} \mathbf{X}^\top \mathbf{M}_v \, \mathbf{e}$. Because its conditional covariance matrix is $\DF{1}{n}\e[ \mathbf{X}^\top \mathbf{M}_v \, \mathbf{e} \, \mathbf{e}^\top \mathbf{M}_v \mathbf{X}|\mathbf{X}|{\bf X}]=\DF{1}{n} \mathbf{X}^\top \mathbf{M}_v \, \mathbf{\Omega} \, \mathbf{M}_v \mathbf{X}$,
and due to the independence data structure, we have
\begin{align*}
\DF{1}{\sqrt{n}}\left(\DF{1}{n} \mathbf{X}^\top \mathbf{M}_v \, \mathbf{\Omega} \, \mathbf{M}_v \mathbf{X}\right)^{-1/2} \mathbf{X}^\top \mathbf{M}_v \, \mathbf{e}\to_{\mathcal D} N\left(0, {\bf I}_d\right)
\end{align*}
as $(k, n)\to (\infty, \infty)$. {Meanwhile, we have
\begin{align*}
  &\DF{1}{n} \mathbf{X}^\top \mathbf{M}_v \, \mathbf{\Omega} \, \mathbf{M}_v \mathbf{X} = \DF{1}{n} \mathbf{X}^\top ({\bf I}-\mathbf{P}_v) \, \mathbf{\Omega} \, ({\bf I}-\mathbf{P}_v) \mathbf{X}\\
=&\DF{1}{n} \mathbf{X}^\top \mathbf{\Omega}  \mathbf{X}+\DF{1}{n} \mathbf{X}^\top \mathbf{P}_v \, \mathbf{\Omega} \, \mathbf{P}_v\mathbf{X}-\DF{1}{n} \mathbf{X}^\top  \mathbf{\Omega} \, \mathbf{P}_v\mathbf{X} -\DF{1}{n} \mathbf{X}^\top \mathbf{P}_v \, \mathbf{\Omega} \, \mathbf{X}\\
=&\DF{1}{n} \mathbf{X}^\top \mathbf{\Omega}  \mathbf{X}+\DF{1}{n^3} \mathbf{X}^\top \mathbf{V}\mathbf{V}^\top \, \mathbf{\Omega} \, \mathbf{V}\mathbf{V}^\top\mathbf{X}\\
&-\DF{1}{n^2} \mathbf{X}^\top  \mathbf{\Omega} \, \mathbf{V}\mathbf{V}^\top\mathbf{X} -\DF{1}{n^2} \mathbf{X}^\top \mathbf{V}\mathbf{V}^\top \, \mathbf{\Omega} \, \mathbf{X}+o_P(1)\\
=&\DF{1}{n}\sum_{i=1}^n \mathbf{x}_i\mathbf{x}^\top_i \sigma_i^2+ \DF{1}{n}\sum_{i=1}^n \mathbf{x}_i\mathbf{V}_k(\mathbf{x}_i)^\top  \DF{1}{n}\sum_{i=1}^n \mathbf{V}_k(\mathbf{x}_i)\mathbf{V}_k(\mathbf{x}_i)^\top \sigma_i^2 \DF{1}{n}\sum_{i=1}^n \mathbf{V}_k(\mathbf{x}_i)\mathbf{x}^\top_i \\
&-\DF{1}{n}\sum_{i=1}^n \mathbf{x}_i\mathbf{V}_k(\mathbf{x}_i)^\top \sigma_i^2\DF{1}{n}\sum_{i=1}^n \mathbf{V}_k(\mathbf{x}_i)\mathbf{x}_i^\top-\DF{1}{n}\sum_{i=1}^n \mathbf{x}_i\mathbf{V}_k(\mathbf{x}_i)^\top \DF{1}{n}\sum_{i=1}^n \mathbf{V}_k(\mathbf{x}_i)\mathbf{x}_i^\top \sigma_i^2\\
=&\left(\e[\mathbf{x}\mathbf{x}^\top]-\e[\mathbf{x}\mathbf{V}_k(\mathbf{x})^\top]\e[\mathbf{V}_k(\mathbf{x})\mathbf{x}^\top]\right)\DF{1}{n}\sum_{i=1}^n\sigma_i^2+o_P(1)\to_P\bm{\Sigma}_x\bar{\sigma}^2,
\end{align*}
as $n,k\to\infty$, where we have used the following approximations:
\begin{align*}
&\DF{1}{n}\sum_{i=1}^n \mathbf{x}_i\mathbf{x}^\top_i \sigma_i^2=\e[\mathbf{x}\mathbf{x}^\top]\DF{1}{n}\sum_{i=1}^n\sigma_i^2+O_P(n^{-1/2}),\\
&\DF{1}{n}\sum_{i=1}^n \mathbf{x}_i\mathbf{V}_k(\mathbf{x}_i)^\top =\e[\mathbf{x}\mathbf{V}_k(\mathbf{x})^\top] +O_P(k^{1/2}n^{-1/2}),\\
&\DF{1}{n}\sum_{i=1}^n \mathbf{V}_k(\mathbf{x}_i)\mathbf{V}_k(\mathbf{x}_i)^\top \sigma_i^2=I_k\DF{1}{n}\sum_{i=1}^n \sigma_i^2+O(kn^{-1/2}),\\
&\DF{1}{n}\sum_{i=1}^n \mathbf{V}_k(\mathbf{x}_i)\mathbf{x}_i^\top \sigma_i^2=\e[\mathbf{V}_k(\mathbf{x})\mathbf{x}^\top] \DF{1}{n}\sum_{i=1}^n \sigma_i^2+O_P(k^{1/2}n^{-1/2}),
\end{align*}
provided that $\sum_{i=1}^n \sigma_i^4=O(n)$. Since they are quite easily verified, we omit their proofs.
}

It remains to show $\DF{1}{\sqrt{n}} \mathbf{X}^\top \mathbf{M}_v \bm{\delta}\to_P0$. Because $\mathbf{M}_v$ is idempotent, its eigenvalues are either one or zero. It follows that
\bea
&& \frac{1}{\sqrt{n}}\|\mathbf{X}^\top \mathbf{M}_v \bm{\delta}\|\leq \DF{1}{\sqrt{n}}\|\mathbf{X}\|\|\bm{\delta}\|  =\sqrt{\frac{1}{n}\sum_{i=1}^n\|\mathbf{x}_i\|^2} \sqrt{\sum_{i=1}^n |\delta_k(\mathbf{x}_i)|^2},
\label{jitigao1}
\eea
and $\e[\delta_k^2(\mathbf{\bf x}_i)]=\int_{{\mathcal X}} \delta_k^2(\mathbf{x}) f(\mathbf{x})d \mathbf{x}=o(k^{-2s/d})$ by Lemma A.5 in \citet{dlp2021}. This implies $\DF{1}{\sqrt{n}} \mathbf{X}^\top \mathbf{M}_v \bm{\delta}=O_P(\sqrt{n \, k^{-2s/d}})=o_P(1)$ by Assumption 3.1.

(iii)\ Notice that
\begin{align*}
  &\DF{1}{n} \mathbf{X}^\top \mathbf{M}_v \,(\widehat{\bf \Omega}- \mathbf{\Omega}) \, \mathbf{M}_v \mathbf{X} = \DF{1}{n} \mathbf{X}^\top ({\bf I}-\mathbf{P}_v) \, (\widehat{\bf \Omega}- \mathbf{\Omega}) \, ({\bf I}-\mathbf{P}_v) \mathbf{X}\\
=&\DF{1}{n} \mathbf{X}^\top (\widehat{\bf \Omega}- \mathbf{\Omega})  \mathbf{X}+\DF{1}{n} \mathbf{X}^\top \mathbf{P}_v  (\widehat{\bf \Omega}- \mathbf{\Omega})  \mathbf{P}_v\mathbf{X}-\DF{1}{n} \mathbf{X}^\top  (\widehat{\bf \Omega}- \mathbf{\Omega})  \mathbf{P}_v\mathbf{X} -\DF{1}{n} \mathbf{X}^\top \mathbf{P}_v (\widehat{\bf \Omega}- \mathbf{\Omega}) \mathbf{X}\\
=&\DF{1}{n} \mathbf{X}^\top (\widehat{\bf \Omega}- \mathbf{\Omega})\mathbf{X}+\DF{1}{n^3} \mathbf{X}^\top \mathbf{V}\mathbf{V}^\top (\widehat{\bf \Omega}- \mathbf{\Omega}) \mathbf{V}\mathbf{V}^\top\mathbf{X}\\
&-\DF{1}{n^2} \mathbf{X}^\top (\widehat{\bf \Omega}- \mathbf{\Omega}) \mathbf{V}\mathbf{V}^\top\mathbf{X} -\DF{1}{n^2} \mathbf{X}^\top \mathbf{V}\mathbf{V}^\top (\widehat{\bf \Omega}- \mathbf{\Omega}) \mathbf{X}+o_P(1)\\
=&\DF{1}{n}\sum_{i=1}^n \mathbf{x}_i\mathbf{x}^\top_i(\widehat{e}_i^2-\sigma_i^2)+ \DF{1}{n}\sum_{i=1}^n \mathbf{x}_i\mathbf{V}_k(\mathbf{x}_i)^\top  \DF{1}{n}\sum_{i=1}^n \mathbf{V}_k(\mathbf{x}_i)\mathbf{V}_k(\mathbf{x}_i)^\top (\widehat{e}_i^2- \sigma_i^2) \DF{1}{n}\sum_{i=1}^n \mathbf{V}_k(\mathbf{x}_i)\mathbf{x}^\top_i \\
&-\DF{1}{n}\sum_{i=1}^n \mathbf{x}_i\mathbf{V}_k(\mathbf{x}_i)^\top(\widehat{e}_i^2-\sigma_i^2)\DF{1}{n}\sum_{i=1}^n \mathbf{V}_k(\mathbf{x}_i)\mathbf{x}_i^\top-\DF{1}{n}\sum_{i=1}^n \mathbf{x}_i\mathbf{V}_k(\mathbf{x}_i)^\top \DF{1}{n}\sum_{i=1}^n \mathbf{V}_k(\mathbf{x}_i)\mathbf{x}_i^\top (\widehat{e}_i^2-\sigma_i^2).
\end{align*}
It suffices to show
\begin{align}
I_1\equiv&\DF{1}{n}\sum_{i=1}^n \mathbf{x}_i\mathbf{x}^\top_i(\widehat{e}_i^2-\sigma_i^2)=o_P(1),\label{th3.1D1}\\
I_2\equiv&\DF{1}{n}\sum_{i=1}^n \mathbf{V}_k(\mathbf{x}_i)\mathbf{V}_k(\mathbf{x}_i)^\top (\widehat{e}_i^2-\sigma_i^2)=o_P(1),\label{th3.1D2}\\
I_3\equiv&\DF{1}{n}\sum_{i=1}^n \mathbf{V}_k(\mathbf{x}_i)\mathbf{x}_i^\top (\widehat{e}_i^2-\sigma_i^2)=o_P(1). \label{th3.1D3}
\end{align}

To do this, note that

\bea
\widehat{e}_i^2 &=&(y_i-{\bf x}_i^\top \widehat{\bm{\beta}}_{\rm SP}-\mathbf{V}_k({\bf x}_i)^\top \widehat{\bm{\gamma}})^2 \notag \\
&=& [e_i -{\bf x}_i^\top( \widehat{\bm{\beta}}_{\rm SP}-{\bm \beta}_0)-\mathbf{V}_k({\bf x}_i)^\top ({\bm \gamma}-\widehat{\bm{\gamma}})+\delta_k({\bf x}_i)]^2 \notag \\
&=& e_i^2 -2e_i [{\bf x}_i^\top( \widehat{\bm{\beta}}_{\rm SP}-{\bm \beta}_0)+\mathbf{V}_k({\bf x}_i)^\top ({\bm \gamma}-\widehat{\bm{\gamma}})-\delta_k({\bf x}_i)] \notag\\
&&+[{\bf x}_i^\top( \widehat{\bm{\beta}}_{\rm SP}-{\bm \beta}_0)+\mathbf{V}_k({\bf x}_i)^\top ({\bm \gamma}- \widehat{\bm{\gamma}}) -\delta_k({\bf x}_i)]^2.
\nonumber
\eea

Thus, for \eqref{th3.1D1},

\bea
&& I_1=\DF{1}{n}\sum_{i=1}^n \mathbf{x}_i\mathbf{x}^\top_i(\widehat{e}_i^2-\sigma_i^2)\\
&& =\DF{1}{n}\sum_{i=1}^n \mathbf{x}_i\mathbf{x}^\top_i(e_i^2-\sigma_i^2)+\DF{2}{n}\sum_{i=1}^n \mathbf{x}_i\mathbf{x}^\top_ie_i[{\bf x}_i^\top( \widehat{\bm{\beta}}_{\rm SP}-{\bm \beta}_0)-\mathbf{V}_k({\bf x}_i)^\top ({\bm \gamma}-\widehat{\bm{\gamma}})+\delta_k({\bf x}_i)]\nonumber\\
&&+\DF{1}{n}\sum_{i=1}^n \mathbf{x}_i\mathbf{x}^\top_i[{\bf x}_i^\top( \widehat{\bm{\beta}}_{\rm SP}-{\bm \beta}_0)-\mathbf{V}_k({\bf x}_i)^\top (\widehat{\bm{\gamma}}-{\bm \gamma})+\delta_k({\bf x}_i)]^2\equiv I_{11}+I_{12}+I_{13}.
\nonumber
\eea

Moreover,
\begin{align*}
\e[\|I_{11}\|^2]=&\DF{1}{n^2}\sum_{i=1}^n \e[\|\mathbf{x}_i\mathbf{x}^\top_i\|^2(e_i^2-\sigma_i^2)^2]\\
=&\DF{1}{n^2}\sum_{i=1}^n \e\{\|\mathbf{x}_i\mathbf{x}^\top_i\|^2\e[(e_i^2-\sigma_i^2)^2|{\bf x}_i]\}
\le\DF{1}{n}\mu_4\e[\|{\bf x}\|^4]=o(1),
\end{align*}
where $\mu_4=\e[e_i^4|{\bf x}_i]$; and

\bea
&&\|I_{13}\|\le \DF{1}{n}\sum_{i=1}^n \|\mathbf{x}_i\|^2[{\bf x}_i^\top( \widehat{\bm{\beta}}_{\rm SP}-{\bm \beta}_0)+\mathbf{V}_k({\bf x}_i)^\top ({\bm \gamma}-\widehat{\bm{\gamma}})-\delta_k({\bf x}_i)]^2
\nonumber\\
&&\le \DF{6}{n}\sum_{i=1}^n \|\mathbf{x}_i\|^2[{\bf x}_i^\top( \widehat{\bm{\beta}}_{\rm SP}-{\bm \beta}_0)]^2+\DF{6}{n}\sum_{i=1}^n \|\mathbf{x}_i\|^2[\mathbf{V}_k({\bf x}_i)^\top (\widehat{\bm{\gamma}}-{\bm \gamma})]^2
\nonumber\\
&&+\DF{6}{n}\sum_{i=1}^n \|\mathbf{x}_i\|^2\delta_k^2({\bf x}_i)\le \| \widehat{\bm{\beta}}_{\rm SP}-{\bm \beta}_0)\|^2\DF{6}{n}\sum_{i=1}^n \|\mathbf{x}_i\|^4+\|\widehat{\bm{\gamma}}-{\bm \gamma}\|^2\DF{6}{n}\sum_{i=1}^n \|\mathbf{x}_i\|^2\|\mathbf{V}_k({\bf x}_i)\|^2
\nonumber\\
&&+\DF{6}{n}\sum_{i=1}^n \|\mathbf{x}_i\|^2\delta_k^2({\bf x}_i)
= O_P(\|\widehat{\bm{\beta}}_{\rm SP}-{\bm \beta}_0)\|^2)+O_P(\|\widehat{\bm{\gamma}}-{\bm \gamma}\|^2) \e[\|\mathbf{x}_i\|^2\|\mathbf{V}_k({\bf x}_i)\|^2]
\nonumber\\
&& +O_P(\e[\|\mathbf{x}_i\|^2\delta_k^2({\bf x}_i)])
= O_P(\|\widehat{\bm{\beta}}_{\rm SP}-{\bm \beta}_0)\|^2)+O_P(k\|\widehat{\bm{\gamma}}-{\bm \gamma}\|^2)+
O_P\left(\{\e[\delta_k^4({\bf x}_i)]\}^{1/2}\right) =o_P(1),\nonumber
\eea
because  $\sqrt{k}(\widehat{\bm{\gamma}}-{\bm \gamma})=\sqrt{k}\DF{1}{n}{\bf V}^\top {\bf X}(\widehat{\bm{\beta}}_{\rm SP}-{\bm \beta}_0)+\sqrt{k}\DF{1}{n}{\bf V}^\top{\bf e}+\sqrt{k}\DF{1}{n}{\bf V}^\top{\bm \delta}=O_P(kn^{-1/2})+O_P(kk^{-s/d})=o_P(1)$ due to Assumption 3.1, and as $k\rightarrow \infty$
\bea
&&\e[\delta_k^4(\mathbf{x}_i)]=\int \delta_k^4(x)f_{\bf x}(x)dx= \sum_{j=k+1}^\infty \gamma_j^4\int \psi_j^4(x) f_{\bf x}(x)dx 
\nonumber\\
&&+6 \sum_{s,j=k+1}^\infty \gamma_j^2\gamma_s^2\int \psi_j^2(x)\psi_s^2(x) f_{\bf x}(x)dx\le C_1\sum_{j=k+1}^\infty \gamma_j^4+C_2\sum_{s,j=k+1}^\infty \gamma_j^2\gamma_s^2=o(1).\nonumber
\eea

The derivation of $I_{12}=o_P(1)$ follows similarly, so we omit the detail. For \eqref{th3.1D2},
\bea
&& I_2=\DF{1}{n}\sum_{i=1}^n \mathbf{V}_k(\mathbf{x}_i)\mathbf{V}_k(\mathbf{x}_i)^\top (\widehat{e}_i^2-\sigma_i^2)=\DF{1}{n}\sum_{i=1}^n \mathbf{V}_k(\mathbf{x}_i)\mathbf{V}_k(\mathbf{x}_i)^\top (e_i^2-\sigma_i^2)
\nonumber\\
&&-\DF{2}{n}\sum_{i=1}^n \mathbf{V}_k(\mathbf{x}_i)\mathbf{V}_k(\mathbf{x}_i)^\top e_i [{\bf x}_i^\top( \widehat{\bm{\beta}}_{\rm SP}-{\bm \beta}_0)+\mathbf{V}_k({\bf x}_i)^\top ({\bm \gamma}-\widehat{\bm{\gamma}})-\delta_k({\bf x}_i)]
\nonumber\\
&&+\DF{1}{n}\sum_{i=1}^n \mathbf{V}_k(\mathbf{x}_i)\mathbf{V}_k(\mathbf{x}_i)^\top[{\bf x}_i^\top( \widehat{\bm{\beta}}_{\rm SP}-{\bm \beta}_0)+\mathbf{V}_k({\bf x}_i)^\top ({\bm \gamma}- \widehat{\bm{\gamma}}) -\delta_k({\bf x}_i)]^2
=I_{21}+I_{22}+I_{23}.
\nonumber
\eea
Note that
\begin{align*}
\e[\|I_{21}\|^2]=&\DF{1}{n^2}\sum_{i=1}^n \e\|\mathbf{V}_k(\mathbf{x}_i)\mathbf{V}_k(\mathbf{x}_i)^\top\|^2 (e_i^2-\sigma_i^2)^2
\le \DF{\mu_4}{n}\e\|\mathbf{V}_k(\mathbf{x})\|^4=C\DF{1}{n}k^2=o(1);
\nonumber\\
\|I_{23}\|\le&\DF{1}{n}\sum_{i=1}^n \|\mathbf{V}_k(\mathbf{x}_i)\|^2[{\bf x}_i^\top( \widehat{\bm{\beta}}_{\rm SP}-{\bm \beta}_0)]^2+\DF{1}{n}\sum_{i=1}^n \|\mathbf{V}_k(\mathbf{x}_i)\|^2\delta_k^2({\bf x}_i)\\
&+\DF{1}{n}\sum_{i=1}^n \|\mathbf{V}_k(\mathbf{x}_i)\|^2[\mathbf{V}_k({\bf x}_i)^\top ({\bm \gamma}- \widehat{\bm{\gamma}})]^2\\
\le&\|\widehat{\bm{\beta}}_{\rm SP}-{\bm \beta}_0\|^2O_P(\e \|\mathbf{V}_k(\mathbf{x})\|^2\|{\bf x}\|^2)+O_P(\e\|\mathbf{V}_k(\mathbf{x})\|^2\delta_k^2({\bf x}))\\
&+\|{\bm \gamma}- \widehat{\bm{\gamma}}\|^2O_P(\e \|\mathbf{V}_k(\mathbf{x})\|^4)\\
=&O_P(kn^{-1})+O_P(k^2k^{-2s/d})+O_P(k^{3/2}n^{-1/2})=o_P(1),
\end{align*}
due to Assumption 3.1 again. We also omit the verification of $I_{22}$ due to the same reason. Meanwhile, as the condition ensuring $I_3$ to hold is no more restrictive than that for $I_{22}$, $I_3$ holds automatically. We therefore have completed the proof of Theorem 3.1. Actually, it should be noted that the proof of Theorem 3.1 follows trivially from those of Theorems A.1 and A.2 in the case of homoskedasticity $\e[e_i^2|\mathbf{x}_i] =\sigma_e^2$ (a.s.).
\end{proof}

\begin{proof}[Proofs of Theorems 3.2 and 3.3]

The proofs of Theorems 3.2 and 3.3 follow respectively from those of Theorems A.1 and A.2 of the online supplementary document, in which we assume $\sigma_i^2 = \e[e_i^2|\mathbf{x}_i] =\sigma_e^2$ (a.s.) for notational simplicity. As may be seen from the derivations in Appendix A.4.3 and then the proof of Theorem 3.1 above, we need only to make some minor changes to those involved in the relevant places in the derivations of the proofs of Theorems 3.2 and 3.3. We therefore omit the repetitions.
\end{proof}




{
\begin{proof}[Proof of Theorem 5.1]

(i)-(ii). Before proceeding, we justify some basic facts. Let $\pmb{\Omega}_k =\begin{pmatrix}
    \mathbf{A}_{11} & \mathbf{A}_{12} \\
    \mathbf{A}_{12}^\top & \mathbf{I}_k \\
\end{pmatrix}$, where $\mathbf{A}_{11}=\e[\mathbf{x}\mathbf{x}^\top]$, $\mathbf{A}_{12}\coloneqq (\pmb{\eta}_1,\cdots , \pmb{\eta}_{k})$, and $\pmb{\eta}_\ell=\e[\mathbf{x}\phi_\ell(\mathbf{x})]$. To show the invertibility of $\pmb{\Omega}_k$, we note that $ \pmb{\Omega}_k =\begin{pmatrix}
    \mathbf{A}_{11} &\mathbf{0} \\
    \mathbf{0} & \mathbf{I}_k \\
\end{pmatrix}+\begin{pmatrix}
    \mathbf{0} & \mathbf{A}_{12} \\
    \mathbf{A}_{12}^\top & \mathbf{0} \\
\end{pmatrix}$, so by Weyl's inequality,

\bea
\rho_{\min}(\pmb{\Omega}_k) & \ge & \rho_{\min}(\mathbf{A}_{11})\wedge 1 +\rho_{\min}^{1/2}\left(\begin{pmatrix}
    \mathbf{0} & \mathbf{A}_{12} \\
    \mathbf{A}_{12}^\top & \mathbf{0} \\
\end{pmatrix}\begin{pmatrix}
    \mathbf{0} & \mathbf{A}_{12} \\
    \mathbf{A}_{12}^\top & \mathbf{0} \\
\end{pmatrix}\right)\notag \\
&=&\rho_{\min}(\mathbf{A}_{11})\wedge 1 +\rho_{\min}^{1/2}\left(\begin{pmatrix}
    \mathbf{A}_{12}\mathbf{A}_{12}^\top & \mathbf{0}  \\
    \mathbf{0}  & \mathbf{A}_{12}^\top\mathbf{A}_{12}  \\
\end{pmatrix} \right) >0,
\nonumber
\eea
where $\rho_{\min}(\cdot)$ stands for the minimum eigenvalue value. 

Therefore, for $\forall(\boldsymbol{\beta}, \boldsymbol{\xi}_k)\in \mathbb{A}$ and $\VEC(\boldsymbol{\beta}, \boldsymbol{\xi}_k)\ne \mathbf{0}$, we have
$\VEC(\boldsymbol{\beta}, \boldsymbol{\xi}_k)^\top \pmb{\Omega}_k\VEC(\boldsymbol{\beta}, \boldsymbol{\xi}_k)\ge \alpha \|\VEC(\boldsymbol{\beta}, \boldsymbol{\xi}_k)\|^2$
uniformly in $k$, where $\mathbb{A}  \coloneqq \{ \VEC(\boldsymbol{\beta}, \boldsymbol{\xi}_k)\mid \|\VEC(\boldsymbol{\beta}, \mathbf{S}_{\bar{\mathscr{C}}}  \boldsymbol{\xi}_k ) \|_1\le 3 \|\VEC(\boldsymbol{\beta}, \mathbf{S}_{\mathscr{C}}  \boldsymbol{\xi}_k )\|_1\}$, and $\alpha>0$ is a fixed positive constant, in which $\mathbf{S}_{\mathscr{C}}$ and $\mathbf{S}_{\bar{\mathscr{C}}} $ are respectively $k_0\times k$ and $(k-k_0)\times k$ selection matrices selecting the elements indexed by $\mathscr{C}$ and $\bar{\mathscr{C}}$. In this case the so-called restricted eigenvalue condition is automatically fulfilled, which has been fully discussed in the literature, such as Pages 1709-1710 of \cite{BRT2009} and Page 2245 of \cite{raskutti10a}. We now proceed. To avoid introducing any new notation, in a similar way to (3.2) of Section 3, we consider a vector form of model (5.1): 
\begin{eqnarray}\label{def.lasso2}
    \mathbf{y} = \mathbf{X}\boldsymbol{\beta}_0 + \mathbf{V}\boldsymbol{\xi} +\boldsymbol{\delta} +\mathbf{e}.
\end{eqnarray}

We then define the following function: $Q(\boldsymbol{\beta}, \boldsymbol{\xi}_k) =\frac{1}{n}\|\mathbf{y}- \mathbf{X}\boldsymbol{\beta} - \mathbf{V}\boldsymbol{\xi}_k\|^2 + \lambda \|\boldsymbol{\xi}_k\|_1$ to conduct the LASSO regression, where $\|\cdot \|_1$ stands for $L_1$ norm,  $\boldsymbol{\beta}$ and $\boldsymbol{\xi}_k$ are generic vectors with dimensions being the same as $\boldsymbol{\beta}_0$ and $\boldsymbol{\xi}$ respectively, in which $\boldsymbol{\xi}$ collects all $\xi_\ell$'s stated in (5.2). The LASSO estimates of $\boldsymbol{\beta}_0$ and $\boldsymbol{\xi}$ are given by $(\widehat{\boldsymbol{\beta}}^*, \widehat{\boldsymbol{\xi}}^*)  =\argmin Q(\boldsymbol{\beta}, \boldsymbol{\xi}_k)$.

By definition, we have 

\begin{eqnarray}\label{Ineq.1}
    Q(\widehat{\boldsymbol{\beta}}^*, \widehat{\boldsymbol{\xi}}^*)\le  Q(\boldsymbol{\beta}_0, \boldsymbol{\xi}) = \frac{1}{n}\|\mathbf{e} + \boldsymbol{\delta}\| + \lambda \| \boldsymbol{\xi} \|_1.
\end{eqnarray}

To further expand the left hand side of \eqref{Ineq.1}, we study the following term. Note that

\begin{eqnarray*}
\frac{1}{n}\left| (\mathbf{e} + \boldsymbol{\delta})^\top [\mathbf{X}, \mathbf{V}] \VEC(\widehat{\boldsymbol{\beta}}^*- \boldsymbol{\beta}_0, \widehat{\boldsymbol{\xi}}^* - \boldsymbol{\xi} )\right| \le \frac{1}{n} \| (\mathbf{e} + \boldsymbol{\delta})^\top [\mathbf{X}, \mathbf{V}] \|_\infty \|\VEC(\widehat{\boldsymbol{\beta}}^* - \boldsymbol{\beta}_0,\widehat{\boldsymbol{\xi}}^* - \boldsymbol{\xi}) \|_1,
\end{eqnarray*}
where $\|\cdot \|_\infty$ stands for the maximum norm of a vector.  We focus on $\frac{1}{n} \| (\mathbf{e} + \boldsymbol{\delta})^\top [\mathbf{X}, \mathbf{V}] \|_\infty$ in what follows. Note that $\boldsymbol{\phi}_\ell \coloneqq (\phi_\ell(\mathbf{x}_1),\ldots, \phi_\ell(\mathbf{x}_n))^\top$ is the $\ell^{th}$ column of $\mathbf{V}$ and then derive

\begin{eqnarray*}
    \max_{\ell}\frac{1}{n}|\boldsymbol{\delta}^\top \boldsymbol{\phi}_\ell|\le \max_{\ell}\frac{1}{n}\sum_{i=1}^n |\delta_k(\mathbf{x}_i) \phi_\ell(\mathbf{x}_i)| \le O(1)\frac{1}{n}\sum_{i=1}^n |\delta_k(\mathbf{x}_i) |=o_P(k^{-s/d}),
\end{eqnarray*}
where the second inequality follows from $\phi_\ell(\cdot)$ being uniformly bounded, and the last step follows form the fact that $\e|\delta_k(\mathbf{x}_i) |\le \{\e|\delta_k(\mathbf{x}_i) |^2\}^{1/2} =o(k^{-s/d})$ in which the first inequality follows from the moment inequality, and the last step has been proved in Theorem 3.1.

Similarly, we obtain $ \frac{1}{n}\|\boldsymbol{\delta}^\top \mathbf{X}\| =o_P(k^{-s/d})$. Thus, we can conclude 

\begin{eqnarray}\label{rate.delta}
    \frac{1}{n} \|\boldsymbol{\delta}^\top [\mathbf{X}, \mathbf{V}]\|_\infty=o_P(k^{-s/d}).
\end{eqnarray}

We consider $\frac{1}{n} \mathbf{e}^\top \boldsymbol{\phi}_\ell  =\frac{1}{n}\sum_{i=1}^n \phi_\ell (\mathbf{x}_i) \, e_i$. In order to do so, we let $\epsilon = c_0\sqrt{n\log k}$ and write

\begin{eqnarray*}
&&{\rm P}\left(\sum_{i=1}^n \phi_\ell (\mathbf{x}_i)e_i\ge \epsilon\right) \le c_1 \, \sum_{i=1}^n\|e_i\|_J^J \ \epsilon^{-J} + 2\exp\left(- c_2\epsilon^2 \ \left(\sum_{i=1}^n\| e_i\|_2^2\right)^{-1}\right) \notag \\
&=&  c_1 \frac{n}{\epsilon^J} + 2\exp\left(  \frac{- c_2\epsilon^2}{n}\right) =  \frac{c_1}{n^{J-1}(\log k)^{J/2}}+2\, e^{- c_2\log k},
\end{eqnarray*}
where the first inequality follows from Corollary 1.8 in \cite{nagaev1979large}, and we may let $J=4$ as in the condition of Theorem 2.1. Additionally, we require $c_2>1$ which is achievable by varying the value of $c_0$. 

Similarly, we obtain that ${\rm P}\left(\sum_{i=1}^n x_{i,j} e_i\ge \epsilon\right) = \frac{c_1}{n^{J-1}(\log k)^{J/2}}+2\exp\left( - c_2\log k\right)$, where $x_{i,j}$ stands for the $j^{th}$ element of $\mathbf{x}_i$. Thus, under the condition $\frac{k}{n^{J-1}(\log k)^{J/2}}\to 0$, 

\begin{eqnarray}\label{rate.e}
    \frac{1}{n} \|  \mathbf{e}^\top [\mathbf{X}, \mathbf{V}] \|_\infty = O_P\left(\frac{\sqrt{\log k}}{\sqrt{n}}\right).
\end{eqnarray}

By \eqref{rate.delta} and \eqref{rate.e}, we obtain that

\begin{eqnarray}\label{rate.ed}
    \frac{1}{n} \| (\mathbf{e} + \boldsymbol{\delta})^\top [\mathbf{X}, \mathbf{V}] \|_\infty=O_P\left(\frac{\sqrt{\log k}}{\sqrt{n}} \vee k^{-s/d}\right)=O_P\left(\frac{\sqrt{\log k}}{\sqrt{n}} \right),
\end{eqnarray}
where the last step is due to Assumption 3.1(iv). Then with probability approaching 1, we have $\frac{1}{n} \| (\mathbf{e} + \boldsymbol{\delta})^\top [\mathbf{X}, \mathbf{V}] \|_\infty\le \frac{\lambda}{4}$, where $\lambda = c^* \frac{\sqrt{\log k}}{\sqrt{n}} $ with $c^*$ being a sufficiently large number. Then, we have $\frac{1}{n}\left| (\mathbf{e} + \boldsymbol{\delta})^\top [\mathbf{X}, \mathbf{V}] \VEC(\boldsymbol{\beta}_0 - \widehat{\boldsymbol{\beta}}^*, \boldsymbol{\xi} -\widehat{\boldsymbol{\xi}}^* )\right| \le \frac{\lambda}{4} \|\VEC(\boldsymbol{\beta}_0 - \widehat{\boldsymbol{\beta}}^*, \boldsymbol{\xi} -\widehat{\boldsymbol{\xi}}^* )\|_1$.

Therefore, \eqref{Ineq.1} reduces to 

\begin{eqnarray}\label{Ineq.2}
\frac{1}{n} \| [\mathbf{X}, \mathbf{V}_k] \VEC(\boldsymbol{\beta}_0 - \widehat{\boldsymbol{\beta}}^*, \boldsymbol{\xi} -\widehat{\boldsymbol{\xi}}^* ) \|^2 + \lambda\|\widehat{\boldsymbol{\xi}}^* \|_1 \le \frac{\lambda}{2} \|\VEC(\boldsymbol{\beta}_0 - \widehat{\boldsymbol{\beta}}^*, \boldsymbol{\xi} -\widehat{\boldsymbol{\xi}}^* )\|_1 + \lambda \| \widehat{\boldsymbol{\xi}}^* \|_1.
\end{eqnarray}

Recall the definitions of $\mathbf{S}_{\mathscr{C}}$ and $\mathbf{S}_{\bar{\mathscr{C}}}$ in Assumption 5.1. Simple algebra shows that 

\begin{eqnarray*}
    &&\|\VEC(\boldsymbol{\beta}_0 - \widehat{\boldsymbol{\beta}}^*, \boldsymbol{\xi} -\widehat{\boldsymbol{\xi}}^* )\|_1 +\|\VEC( \boldsymbol{\beta}_0,  \boldsymbol{\xi} )\|_1 -\|\VEC( \widehat{\boldsymbol{\beta}}^*, \widehat{\boldsymbol{\xi}}^* )\|_1 \le 2\|\VEC(\boldsymbol{\beta}_0 - \widehat{\boldsymbol{\beta}}^*, \mathbf{S}_{\mathscr{C}} ( \boldsymbol{\xi} -\widehat{\boldsymbol{\xi}}^*) )\|_1,
\end{eqnarray*}
so we can rewrite \eqref{Ineq.2} as $\frac{1}{n} \| [\mathbf{X}, \mathbf{V}] \VEC(\boldsymbol{\beta}_0 - \widehat{\boldsymbol{\beta}}^*, \boldsymbol{\xi} -\widehat{\boldsymbol{\xi}}^* ) \|^2 +\frac{\lambda}{2} \|\VEC(\boldsymbol{\beta}_0 - \widehat{\boldsymbol{\beta}}^*, \boldsymbol{\xi} -\widehat{\boldsymbol{\xi}}^* )\|_1 \leq 2\lambda\|\VEC(\boldsymbol{\beta}_0 - \widehat{\boldsymbol{\beta}}^*, \mathbf{S}_{\mathscr{C}} ( \boldsymbol{\xi} -\widehat{\boldsymbol{\xi}}^*) )\|_1$.

It further yields 
\begin{eqnarray}\label{eq.lasso2_1}
    && \frac{1}{n} \| [\mathbf{X}, \mathbf{V}] \VEC(\boldsymbol{\beta}_0 - \widehat{\boldsymbol{\beta}}^*, \boldsymbol{\xi} -\widehat{\boldsymbol{\xi}}^* ) \|^2 +\frac{\lambda}{2} \|\VEC(\boldsymbol{\beta}_0 - \widehat{\boldsymbol{\beta}}^*, \mathbf{S}_{\bar{\mathscr{C}}} ( \boldsymbol{\xi} -\widehat{\boldsymbol{\xi}}^*) )\|_1 \notag \\
    &\le & \frac{3}{2}\lambda\|\VEC(\boldsymbol{\beta}_0 - \widehat{\boldsymbol{\beta}}^*, \mathbf{S}_{\mathscr{C}} ( \boldsymbol{\xi} -\widehat{\boldsymbol{\xi}}^*) )\|_1.
\end{eqnarray}

Finally, we focus on $\frac{1}{n} \| [\mathbf{X}, \mathbf{V}] \VEC(\boldsymbol{\beta}_0 - \widehat{\boldsymbol{\beta}}^*, \boldsymbol{\xi} -\widehat{\boldsymbol{\xi}}^* ) \|^2$ of \eqref{eq.lasso2_1}. As explained in the beginning of the proof, we have for $\forall (\boldsymbol{\beta}, \boldsymbol{\xi}_k)\in \mathbb{A}$

\begin{eqnarray}\label{eq.lasso3}
    \|\VEC(\boldsymbol{\beta}, \boldsymbol{\xi}_k ) \|_1 &=& \|\VEC(\boldsymbol{\beta}, \mathbf{S}_{\mathscr{C}} \boldsymbol{\xi}_k)  \|_1 +\|\VEC(\boldsymbol{\beta}, \mathbf{S}_{\bar{\mathscr{C}}}\boldsymbol{\xi}_k)  \|_1 \notag \\
    &\le & 4 \|\VEC(\boldsymbol{\beta}, \mathbf{S}_{\mathscr{C}} \boldsymbol{\xi}_k)  \|_1\le 4\sqrt{k_0}\| \VEC(\boldsymbol{\beta}, \mathbf{S}_{\mathscr{C}} \boldsymbol{\xi}_k) \| .
\end{eqnarray}

Thus, with probability approaching 1,

\bea
    &&\frac{1}{n} \| [\mathbf{X}, \mathbf{V}] \VEC(\boldsymbol{\beta}_0 - \widehat{\boldsymbol{\beta}}^*, \boldsymbol{\xi} -\widehat{\boldsymbol{\xi}}^* ) \|^2  \ge \VEC(\boldsymbol{\beta}_0 - \widehat{\boldsymbol{\beta}}^*, \boldsymbol{\xi} -\widehat{\boldsymbol{\xi}}^* )^\top  \pmb{\Omega}_k \VEC(\boldsymbol{\beta}_0 - \widehat{\boldsymbol{\beta}}^*, \boldsymbol{\xi} -\widehat{\boldsymbol{\xi}}^* ) \nonumber \\
    &&- \frac{1}{n} \left\| [\mathbf{X}, \mathbf{V} ]^\top [\mathbf{X}, \mathbf{V} ]  - \pmb{\Omega}_k\right\|_{\max} \|\VEC(\boldsymbol{\beta}_0 - \widehat{\boldsymbol{\beta}}^*, \boldsymbol{\xi} -\widehat{\boldsymbol{\xi}}^* )\|_1^2 
    \nonumber\\
    && \geq \VEC(\boldsymbol{\beta}_0 - \widehat{\boldsymbol{\beta}}^*, \boldsymbol{\xi} -\widehat{\boldsymbol{\xi}}^* )^\top  \pmb{\Omega}_k \VEC(\boldsymbol{\beta}_0 - \widehat{\boldsymbol{\beta}}^*, \boldsymbol{\xi} -\widehat{\boldsymbol{\xi}}^* ) 
    \nonumber\\
    &&-  4 k_0\frac{1}{n}\left\| [\mathbf{X}, \mathbf{V}]^\top [\mathbf{X}, \mathbf{V}]  - \pmb{\Omega}_k\right\|_{\max}\cdot\|\VEC(\boldsymbol{\beta}_0 - \widehat{\boldsymbol{\beta}}^*, \boldsymbol{\xi} -\widehat{\boldsymbol{\xi}}^* )\|^2\
    \nonumber \\
    &&\ge \frac{1}{2}\VEC(\boldsymbol{\beta}_0 - \widehat{\boldsymbol{\beta}}^*, \boldsymbol{\xi} -\widehat{\boldsymbol{\xi}}^* )^\top  \pmb{\Omega}_k \VEC(\boldsymbol{\beta}_0 - \widehat{\boldsymbol{\beta}}^*, \boldsymbol{\xi} -\widehat{\boldsymbol{\xi}}^* ),
\label{def.inv.rate}
\eea
where $\|\cdot \|_{\max}$ stands for the max norm for a matrix,  the second inequality follows from \eqref{eq.lasso3}, and the third inequality follows from a development similar for \eqref{rate.e} and $k_0 \frac{\sqrt{\log k}}{\sqrt{n}}\to 0$. 

Again as explained in the beginning of this proof, we obtain 
\begin{eqnarray*} 
  &&  \frac{1}{2}\alpha \|\VEC(\boldsymbol{\beta}_0 - \widehat{\boldsymbol{\beta}}^*, \boldsymbol{\xi} -\widehat{\boldsymbol{\xi}}^* ) \|^2 \le  \frac{3}{2}\lambda\|\VEC(\boldsymbol{\beta}_0 - \widehat{\boldsymbol{\beta}}^*, \mathbf{S}_{\mathscr{C}} ( \boldsymbol{\xi} -\widehat{\boldsymbol{\xi}}^*) )\|_1\notag \\
    &&\le  \frac{3}{2}\lambda  \sqrt{k_0}\|\VEC(\boldsymbol{\beta}_0 - \widehat{\boldsymbol{\beta}}^*, \mathbf{S}_{\mathscr{C}} ( \boldsymbol{\xi} -\widehat{\boldsymbol{\xi}}^*) ) \| \le  \frac{3}{2}\lambda  \sqrt{k_0}\|\VEC(\boldsymbol{\beta}_0 - \widehat{\boldsymbol{\beta}}^*,  \boldsymbol{\xi} -\widehat{\boldsymbol{\xi}}^*) \| 
\end{eqnarray*}
where the first inequality follows from \eqref{eq.lasso2_1}. We can conclude that with probability approaching 1, $\|\VEC(\boldsymbol{\beta}_0 - \widehat{\boldsymbol{\beta}}^*, \boldsymbol{\xi} -\widehat{\boldsymbol{\xi}}^* ) \| \le \frac{3\lambda \sqrt{k_0}}{\alpha}.$

Additionally, we can obtain 
\begin{eqnarray*}
    \|\VEC(\boldsymbol{\beta}_0 - \widehat{\boldsymbol{\beta}}^*, \boldsymbol{\xi} -\widehat{\boldsymbol{\xi}}^* )\|_1\le 4 \sqrt{k_0}\|\VEC(\boldsymbol{\beta}_0 - \widehat{\boldsymbol{\beta}}^*, \mathbf{S}_{\mathscr{C}} (\boldsymbol{\xi} -\widehat{\boldsymbol{\xi}}^*))\|\le  \frac{12\lambda k_0}{\alpha}
\end{eqnarray*}
with probability approaching 1, where the first inequality follows from  \eqref{eq.lasso3}, and the second inequality follows from $\|\VEC(\boldsymbol{\beta}_0 - \widehat{\boldsymbol{\beta}}^*, \boldsymbol{\xi}-\widehat{\boldsymbol{\xi}}^* ) \| \le \frac{3\lambda \sqrt{k_0}}{\alpha}$. The proof for (i)-(ii) is completed.

\smallskip

(iii). To proceed, we still use the model defined in \eqref{def.lasso2}, and define a new objective function here.

\begin{eqnarray*} 
    Q(\boldsymbol{\beta}, \boldsymbol{\xi}_k) =\frac{1}{n}\|\mathbf{y}- \mathbf{X}\boldsymbol{\beta} - \mathbf{V}\boldsymbol{\xi}_k\|^2 + \lambda \|\boldsymbol{\xi}_k\circ\pmb{\zeta}\|_1,
\end{eqnarray*}
where $\pmb{\zeta} =(\zeta_1,\ldots, \zeta_k)^\top$. Accordingly, Step 2 can be written as $(\widehat{\pmb{\beta}}^\dag, \widehat{\pmb{\xi}}_k^\dag) =\argmin Q(\boldsymbol{\beta}, \boldsymbol{\xi}_k)$, where $\widehat{\pmb{\xi}}_k^\dag =(\widehat{\xi}_{k,1}^\dag,\ldots, \widehat{\xi}_{k,k}^\dag)^\top$.

Due to the property of convex optimization, $(\widehat{\pmb{\beta}}^\dag, \widehat{\pmb{\xi}}_k^\dag)$ is a solution if and only if there exits a subgradient

\begin{eqnarray}\label{def.subg}
    \mathbf{g} &\in &\{\VEC(\mathbf{0}_{d\times 1},\mathbf{z}) \mid \mathbf{z}\in \mathbb{R}^{k},\, \notag \\
    &&z_\ell = \sgn(\widehat{\xi}_{k,\ell}^\dag) \zeta_\ell \text{ for }\widehat{\xi}_{k,\ell}^\dag\ne 0, \text{ and } |z_\ell|\le \zeta_\ell \text{ elsewhere, for }\ell\in [k] \}
\end{eqnarray}
such that

\begin{eqnarray}\label{def.FOC}
    \mathbf{0} = \frac{1}{n} (\mathbf{X}, \mathbf{V})^\top (\mathbf{X}, \mathbf{V})\VEC(\widehat{\pmb{\beta}}^\dag, \widehat{\pmb{\xi}}_k^\dag) - \frac{1}{n} (\mathbf{X}, \mathbf{V})^\top \mathbf{y}+ \frac{\lambda}{2} \mathbf{g}.
\end{eqnarray}

We now partition $\mathbf{V}$, $\mathbf{g}$, $\pmb{\xi}$ and $\widehat{\pmb{\xi}}_k^\dag$ based on the sets $\mathscr{C}$ and $\bar{\mathscr{C}}$. Specifically, let $\mathbf{V}_{\mathscr{C}}$, $\mathbf{g}_{\mathscr{C}}$, $\widehat{\pmb{\xi}}_{k,\mathscr{C}}^\dag$ and $\pmb{\xi}_{\mathscr{C}}$ all include the columns/elements corresponding to the set $\mathscr{C}$, and let $\mathbf{V}_{\bar{\mathscr{C}}}$ and $\mathbf{g}_{\bar{\mathscr{C}}}$ include the columns/elements corresponding to the set $\bar{\mathscr{C}}$. Here we let $\mathbf{g}_{\mathscr{C}}$ include the first $d$ 0's in $\mathbf{g}$.

We now proceed. By \eqref{def.FOC}, $\sgn(\widehat{\pmb{\xi}}_k^\dag)=\sgn(\pmb{\xi})$ holds if and only if the following Karush-Kuhn-Tucker conditions hold:

\begin{eqnarray}
    \mathbf{0}&=& \frac{1}{n}(\mathbf{X}, \mathbf{V}_{\mathscr{C}})^\top (\mathbf{X}, \mathbf{V}_{\mathscr{C}}) \VEC(\widehat{\pmb{\beta}}^\dag -\pmb{\beta}_0, \widehat{\pmb{\xi}}_{k,\mathscr{C}}^\dag-\pmb{\xi}_{\mathscr{C}})\notag\\
    &&-\frac{1}{n} (\mathbf{X}, \mathbf{V}_{\mathscr{C}})^\top (\pmb{\delta}+\mathbf{e}) +\frac{\lambda}{2}\mathbf{g}_{\mathscr{C}},\label{def.L.rate1} \\
    \mathbf{0}&=& \frac{1}{n}\mathbf{V}_{\bar{\mathscr{C}}}^\top (\mathbf{X}, \mathbf{V}_{\mathscr{C}}) \VEC(\widehat{\pmb{\beta}}^\dag -\pmb{\beta}_0, \widehat{\pmb{\xi}}_{k,\mathscr{C}}^\dag-\pmb{\xi}_{\mathscr{C}}) -\frac{1}{n} \mathbf{V}_{\bar{\mathscr{C}}}^\top (\pmb{\delta}+\mathbf{e}) +\frac{\lambda}{2}\mathbf{g}_{\bar{\mathscr{C}}},\label{def.L.rate2}
\end{eqnarray}
and 

\begin{eqnarray}\label{def.L.rate3} 
    \sgn(\xi_{\ell})(\widehat{\xi}_{k, \ell}^\dag -\xi_{\ell})>-|\xi_{\ell}|\quad\text{for}\quad\ell \in \mathscr{C}.
\end{eqnarray}

We justify \eqref{def.L.rate1} and \eqref{def.L.rate3} first. Define $\pmb{\Omega}_{\mathscr{C}} =\begin{pmatrix}
        \mathbf{A}_{11} & \mathbf{A}_{12} \\
        \mathbf{A}_{12}^\top & \mathbf{I}_k \\
    \end{pmatrix}$ and $\mathbf{A}_{12}\coloneqq (\pmb{\eta}_1,\cdots , \pmb{\eta}_{k_0})$, where $\mathbf{A}_{11}=E[\mathbf{x}\mathbf{x}^\top]$ and $\pmb{\eta}_\ell=E[\mathbf{x}\psi_\ell(\mathbf{x})]$. By the development of $\pmb{\Omega}_k$, it is easy to know that $\pmb{\Omega}_{\mathscr{C}}$ is invertible and $\rho_{\min}(\pmb{\Omega}_{\mathscr{C}})>0$. 

Note that  we have shown that 

\begin{eqnarray*}
    &&\left|\rho_{\min}(\frac{1}{n}(\mathbf{X}, \mathbf{V}_{\mathscr{C}})^\top (\mathbf{X}, \mathbf{V}_{\mathscr{C}})) -\rho_{\min}(\pmb{\Omega}_{\mathscr{C}} )\right| \notag \\
    &\le & (d+k_0) \left\| \frac{1}{n}(\mathbf{X}, \mathbf{V}_{\mathscr{C}})^\top (\mathbf{X}, \mathbf{V}_{\mathscr{C}})- \pmb{\Omega}_{\mathscr{C}}  \right\|_{\max}\lesssim k_0\frac{\sqrt{\log k}}{\sqrt{n}} \asymp k_0\lambda,
\end{eqnarray*}
where $\rho_{\min}(\cdot)$ stands for the minimum eigenvalue, the second step follow from the development of \eqref{def.inv.rate}, and the last step follows from $\lambda\asymp \frac{\sqrt{\log k}}{\sqrt{n}} $. Thus, we can write

\begin{eqnarray}\label{def.rate.xv}
    &&\left\| \left(\frac{1}{n}(\mathbf{X}, \mathbf{V}_{\mathscr{C}})^\top (\mathbf{X}, \mathbf{V}_{\mathscr{C}})\right)^{-1} \right\|_\infty\le \sqrt{k_0+d}\left\|  \left(\frac{1}{n}(\mathbf{X}, \mathbf{V}_{\mathscr{C}})^\top (\mathbf{X}, \mathbf{V}_{\mathscr{C}})\right)^{-1}\right\|_2\notag \\
    &\le & \sqrt{k_0+d}/\rho_{\min}(\frac{1}{n}(\mathbf{X}, \mathbf{V}_{\mathscr{C}})^\top (\mathbf{X}, \mathbf{V}_{\mathscr{C}}))  \lesssim \sqrt{k_0}.
\end{eqnarray}

From \eqref{def.L.rate1}, we obtain the following expression

\begin{eqnarray}\label{def.rate.b}
     &&\VEC(\widehat{\pmb{\beta}}^\dag -\pmb{\beta}_0, \widehat{\pmb{\xi}}_{k,\mathscr{C}}^\dag-\pmb{\xi}_{\mathscr{C}})\notag \\
     &=&\left(\frac{1}{n}(\mathbf{X}, \mathbf{V}_{\mathscr{C}})^\top (\mathbf{X}, \mathbf{V}_{\mathscr{C}})\right)^{-1}\left(\frac{1}{n} (\mathbf{X}, \mathbf{V}_{\mathscr{C}})^\top (\pmb{\delta}+\mathbf{e}) -\frac{\lambda}{2}\mathbf{g}_{\mathscr{C}}\right).
\end{eqnarray}
The right hand side of \eqref{def.rate.b} can be further written as follows:

\begin{eqnarray}
    &&\left\|\left(\frac{1}{n}(\mathbf{X}, \mathbf{V}_{\mathscr{C}})^\top (\mathbf{X}, \mathbf{V}_{\mathscr{C}})\right)^{-1}\left(\frac{1}{n} (\mathbf{X}, \mathbf{V}_{\mathscr{C}})^\top (\pmb{\delta}+\mathbf{e}) -\frac{\lambda}{2}\mathbf{g}_{\mathscr{C}}\right)\right\|_\infty\notag \\
    &\lesssim &\left\| \left(\frac{1}{n}(\mathbf{X}, \mathbf{V}_{\mathscr{C}})^\top (\mathbf{X}, \mathbf{V}_{\mathscr{C}})\right)^{-1} \right\|_\infty\left(\left\|\frac{1}{n} (\mathbf{X}, \mathbf{V}_{\mathscr{C}})^\top (\pmb{\delta}+\mathbf{e}) \right\|_\infty + \frac{\lambda}{2}\max_{\ell\in \mathscr{C}} \zeta_{\ell}\right)\notag \\
    &\lesssim & \sqrt{k_0}\left(\frac{\sqrt{\log k}}{\sqrt{n}} + \frac{\lambda}{2}\max_{\ell\in \mathscr{C}} \zeta_{\ell}\right) \lesssim \lambda\sqrt{k_0} (1+  \max_{\ell\in \mathscr{C}} \zeta_{\ell} ),
\end{eqnarray}
where the second step follows from \eqref{def.rate.xv} and \eqref{rate.ed}. Thus, \eqref{def.L.rate3} is satisfied provided $\min_{\ell \in \mathscr{C}} |\xi_{\ell}| \gg \lambda\sqrt{k_0} (1+  \max_{\ell\in \mathscr{C}} \zeta_{\ell} ).$

We now turn to \eqref{def.L.rate2}. Combining \eqref{def.L.rate2} and \eqref{def.rate.b}, we shall show that for $\forall\ell \in \bar{\mathscr{C}}$

\begin{eqnarray}\label{def.rate.cbar}
    \frac{\lambda}{2}\zeta_{\ell} &\ge &\left|\frac{1}{n}\mathbf{V}_{\ell}^\top (\mathbf{X}, \mathbf{V}_{\mathscr{C}}) \left(\frac{1}{n}(\mathbf{X}, \mathbf{V}_{\mathscr{C}})^\top (\mathbf{X}, \mathbf{V}_{\mathscr{C}})\right)^{-1}\left(\frac{1}{n} (\mathbf{X}, \mathbf{V}_{\mathscr{C}})^\top (\pmb{\delta}+\mathbf{e}) -\frac{\lambda}{2}\mathbf{g}_{\mathscr{C}}\right) \right| \notag \\
    &&+\left| \frac{1}{n} \mathbf{V}_{\ell}^\top (\pmb{\delta}+\mathbf{e})\right|,
\end{eqnarray}
where $\mathbf{V}_{\ell}$ stands for the $\ell^{th}$ column of $\mathbf{V}$. Note that for $\forall\ell \in \bar{\mathscr{C}}$
\begin{eqnarray*}
    &&\left|\frac{1}{n}\mathbf{V}_{ \ell}^\top (\mathbf{X}, \mathbf{V}_{\mathscr{C}}) \left(\frac{1}{n}(\mathbf{X}, \mathbf{V}_{\mathscr{C}})^\top (\mathbf{X}, \mathbf{V}_{\mathscr{C}})\right)^{-1}\left(\frac{1}{n} (\mathbf{X}, \mathbf{V}_{\mathscr{C}})^\top (\pmb{\delta}+\mathbf{e}) -\frac{\lambda}{2}\mathbf{g}_{\mathscr{C}}\right) \right| \notag \\
    &\le &\sqrt{k_0}\left\| \frac{1}{n}(\mathbf{X}, \mathbf{V}_{\mathscr{C}})^\top (\mathbf{X}, \mathbf{V}_{\mathscr{C}}) \right\|_{\max} \notag \\
    &&\cdot \left\| \left(\frac{1}{n}(\mathbf{X}, \mathbf{V}_{\mathscr{C}})^\top (\mathbf{X}, \mathbf{V}_{\mathscr{C}})\right)^{-1}\left(\frac{1}{n} (\mathbf{X}, \mathbf{V}_{\mathscr{C}})^\top (\pmb{\delta}+\mathbf{e}) -\frac{\lambda}{2}\mathbf{g}_{\mathscr{C}}\right) \right\|_{\infty}\notag \\
    &\lesssim & \sqrt{k_0}\|\diag\{E[\mathbf{x}_i\mathbf{x}_i^\top], \mathbf{I}_{k_0} \}\|_{\max}\lambda\sqrt{k_0} (1+  \max_{\ell\in \mathscr{C}} \zeta_{\ell} )\lesssim \lambda k_0 (1+  \max_{\ell\in \mathscr{C}} \zeta_{\ell} )
\end{eqnarray*}
and $\left| \frac{1}{n} \mathbf{V}_{\ell}^\top (\pmb{\delta}+\mathbf{e})\right|\lesssim \frac{\sqrt{\log(k)}}{\sqrt{n}}\asymp \lambda$. 

Thus, to show \eqref{def.rate.cbar}, we require $\min_{\ell \in \bar{\mathscr{C}}} \zeta_{\ell}\gg k_0 \left(1+  \max_{\ell\in \mathscr{C}} \zeta_{\ell}\right)$. 

Putting everything together, we have shown that $\Pr(\sgn(\widehat{\pmb{\xi}}_k^{\dag}) =\sgn(\pmb{\xi}))\to 1$, which implies $\Pr(\widehat{\mathcal{S}}_{(k)}=\mathcal{S}_{(k)})\rightarrow 1$ as $(n, k)\rightarrow (\infty, \infty)$. The proof of (iii) is therefore completed.

(iv). Note that \eqref{def.L.rate1} requires 
\begin{eqnarray*}
    \mathbf{0}&=& \frac{1}{n}(\mathbf{X}, \mathbf{V}_{\mathscr{C}})^\top (\mathbf{X}, \mathbf{V}_{\mathscr{C}}) \VEC(\widehat{\pmb{\beta}}^\dag -\pmb{\beta}_0, \widehat{\pmb{\xi}}_{k,\mathscr{C}}^\dag-\pmb{\xi}_{\mathscr{C}})-\frac{1}{n} (\mathbf{X}, \mathbf{V}_{\mathscr{C}})^\top (\pmb{\delta}+\mathbf{e}) +\frac{\lambda}{2}\mathbf{g}_{\mathscr{C}},
\end{eqnarray*}
which can also be written as

\begin{eqnarray*}
    \mathbf{0} =  (\mathbf{X}, \mathbf{V}_{\mathscr{C}})^\top (\mathbf{Y}-(\mathbf{X}, \mathbf{V}_{\mathscr{C}}) \VEC(\widehat{\pmb{\beta}}^\dag , \widehat{\pmb{\xi}}_{k,\mathscr{C}}^\dag )) -\frac{n\lambda}{2}\mathbf{g}_{\mathscr{C}}.
\end{eqnarray*}
Noting that the first $d$ rows of $\mathbf{g}_{\mathscr{C}}$ are 0's and solving the above equation carefully yields that

\begin{eqnarray*}
    \widehat{\pmb{\beta}}^\dag &=&(\frac{1}{n}\mathbf{X}^\top\mathbf{M}_{\mathbf{V}_{\mathscr{C}}}\mathbf{X})^{-1}(\frac{1}{n}\mathbf{X}^\top\mathbf{M}_{\mathbf{V}_{\mathscr{C}}}\mathbf{Y} - \frac{\lambda}{2n}\mathbf{X}^\top \mathbf{V}_{\mathscr{C}}(\frac{1}{n}\mathbf{V}_{\mathscr{C}}^\top \mathbf{V}_{\mathscr{C}})^{-1}\mathbf{g}_{\mathscr{C}}^\dag),
\end{eqnarray*}
where $\mathbf{g}_{\mathscr{C}}^\dag$ includes from the $(d+1)^{th}$ element of $\mathbf{g}_{\mathscr{C}}$ to the $(d+k_0)^{th}$ element of $\mathbf{g}_{\mathscr{C}}$. We can then further write

\begin{eqnarray*}
    \widehat{\pmb{\beta}}^\dag &=&(\frac{1}{n}\mathbf{X}^\top\mathbf{M}_{\mathbf{V}_{\mathscr{C}}}\mathbf{X})^{-1}(\frac{1}{n}\mathbf{X}^\top\mathbf{M}_{\mathbf{V}_{\mathscr{C}}}\mathbf{Y} - \frac{\lambda}{2n}\mathbf{X}^\top \mathbf{V}_{\mathscr{C}} \mathbf{g}_{\mathscr{C}}^\dag)(1+o_p(1)).
\end{eqnarray*}
By taking $\zeta_j =1/(\widehat{\xi}_j^*\cdot \log k)$ for example, we can further obtain that 

\begin{eqnarray*}
    \widehat{\pmb{\beta}}^\dag &=&(\frac{1}{n}\mathbf{X}^\top\mathbf{M}_{\mathbf{V}_{\mathscr{C}}}\mathbf{X})^{-1}(\frac{1}{n}\mathbf{X}^\top\mathbf{M}_{\mathbf{V}_{\mathscr{C}}}\mathbf{Y}  )+o_P(1/\sqrt{n}).
\end{eqnarray*}
From there, the proof follows the proof of Theorem 3.1.

\end{proof}

}

\setcounter{section}{3}

\section{Proofs for Appendix A.2}

\renewcommand{\theequation}{D.\arabic{equation}}

Before proving Theorems A.1--A.4 listed in Appendix A.2, we give one lemma as a preparation.

\begin{lemma}\label{lemmaD1}
Under Assumptions A.1 and A.2 and $\e\left[g^4({\bf x},{\bm \theta})\right]<\infty$ for any ${\bm \theta}\in \Theta_0$,
\bea
&& (i) \  \DF{1}{n}\sum_{i=1}^n \mathbf{g}_1(\mathbf{x}_i, \bm{\theta}_0) \, \mathbf{g}_1(\mathbf{x}_i, \bm{\theta}_0)^\top =\e\left[ \mathbf{g}_1(\mathbf{x}, \bm{\theta}_0) \, \mathbf{g}_1(\mathbf{x}, \bm{\theta}_0)^{\top}\right] +O_P(n^{-1/2}), 
\nonumber\\
&& (ii) \  \DF{1}{n}\sum_{i=1}^n e_i \mathbf{V}_k({\bf x}_i)=O_P(\sqrt{k/n}),
\nonumber\\
&& (iii) \ \DF{1}{n}{\bf V}^\top {\bf V}=I_k+O_P(kn^{-1/2}), 
\nonumber\\
&& (iv) \ \DF{1}{n}\sum_{i=1}^n \mathbf{g}_1(\mathbf{x}_i, \bm{\theta}_0) \, \mathbf{V}_k({\bf x}_i)^\top=\e\left[\mathbf{g}_1(\mathbf{x}_i, \bm{\theta}_0) \, \mathbf{V}_k({\bf x})^{\top}\right]+O_P(\sqrt{k/n}),
\nonumber\\
&& (v) \ \DF{1}{n}\sum_{i=1}^n g({\bf x}_i,{\bm \theta})\mathbf{V}_k({\bf x}_i)=\e\left[g({\bf x},{\bm \theta})\mathbf{V}_k({\bf x})\right] +O_P(\sqrt{k/n}), \ \ \forall {\bm \theta}\in \Theta_0;
\nonumber
\eea
where $\mathbf{g}_1(\mathbf{x}, \bm{\theta}_0) =\frac{\partial g(\mathbf{x}, \bm{\theta})}{\partial \bm{\theta}}|_{\bm{\theta} = \bm{\theta}_0}$. When $g({\bf x},{\bm \theta})$ is replaced by $\mathbf{g}_1(\mathbf{x}, \bm{\theta}_0)$ in (v), the similar assertion also holds.
\end{lemma}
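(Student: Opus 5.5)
Each of the five statements in Lemma \ref{lemmaD1} reduces to a first- or second-moment bound for an i.i.d. average followed by Markov's (Chebyshev's) inequality. The vector and matrix statements involve $k$ coordinates or $k^2$ entries, and we control them through the squared Euclidean or Frobenius norm, summing the coordinatewise variances. The only inputs needed are the i.i.d. structure, the orthonormality $\e[\mathbf{V}_k(\mathbf{x})\mathbf{V}_k(\mathbf{x})^\top]=\mathbf{I}_k$, the uniform fourth-moment bound $\sup_j\e[\psi_j^4(\mathbf{x})]<\infty$ from Assumption \ref{nonlinass1}, and the fourth-moment conditions on $g$ and $\mathbf{g}_1$.

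For (i), the entries of $\mathbf{g}_1\mathbf{g}_1^\top$ have finite variance because $\e\|\mathbf{g}_1\|^4<\infty$. Since the dimension $p$ is fixed, the usual $\sqrt n$-rate for i.i.d. means gives (i) directly.

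For (ii), we use $\e[e_i\mid\mathbf{x}_i]=0$ and independence, so the cross terms vanish and
\[
\e\Big\|\frac1n\sum_{i=1}^n e_i\mathbf{V}_k(\mathbf{x}_i)\Big\|^2=\frac1n\sum_{j=1}^k\e[e^2\psi_j^2(\mathbf{x})].
\]
Each term in the sum is bounded by $(\e e^4)^{1/2}(\e\psi_j^4)^{1/2}\le C$, so the right-hand side is $O(k/n)$, and Markov's inequality yields the $O_P(\sqrt{k/n})$ rate.

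For (iii), we take the Frobenius norm and compute
\[
\e\Big\|\frac1n\mathbf{V}^\top\mathbf{V}-\mathbf{I}_k\Big\|_F^2=\frac1n\sum_{j,l\le k}\mathrm{Var}\big(\psi_j(\mathbf{x})\psi_l(\mathbf{x})\big)\le \frac1n\sum_{j,l\le k}\big(\e\psi_j^4\,\e\psi_l^4\big)^{1/2}=O(k^2/n).
\]
This gives the rate $O_P(kn^{-1/2})$ in (iii).

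For (iv) and (v) the variance computation is the same, over $p\times k$ entries in (iv) and $k$ entries in (v):
\[
\frac1n\sum_{j\le k}\e[g^2\psi_j^2]\le\frac1n\sum_{j\le k}\big(\e g^4\,\e\psi_j^4\big)^{1/2}=O(k/n),
\]
where $g^2$ is replaced by $\|\mathbf{g}_1\|^2$ in (iv). In (v), $\e[g^4(\mathbf{x},\bm\theta)]<\infty$ is needed for each fixed $\bm\theta\in\Theta_0$, and that is exactly the stated hypothesis. The closing remark of the lemma, replacing $g$ by $\mathbf{g}_1$ in (v), is the same computation as (iv).

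The only step needing care is keeping the constant uniform in $j$ in the Cauchy--Schwarz bounds, so that summing over $j\le k$ (or over $j,l\le k$) costs exactly a factor $k$ (or $k^2$). This is precisely where $\sup_j\e\psi_j^4<\infty$ enters. Beyond this point the statement contains no substantive obstacle.
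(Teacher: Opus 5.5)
Your proposal is correct and follows essentially the same route as the paper: second-moment bounds for i.i.d. averages (Frobenius norm in (iii)), constants kept uniform in $j$ through $\sup_j\mathbb{E}[\psi_j^4(\mathbf{x})]<\infty$ and Cauchy--Schwarz, then Chebyshev/Markov. The one difference is in (ii): the paper uses the implicit homoskedasticity $\mathbb{E}[e^2\mid\mathbf{x}]=\sigma_e^2$ to get exactly $\sigma_e^2k/n$, whereas your Cauchy--Schwarz step needs $\mathbb{E}[e^4]<\infty$, which is missing from your list of inputs, is not a hypothesis of the lemma, and is assumed only in Theorem A.2, not in Theorem A.1 (whose proof also invokes (ii)).
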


\begin{proof}[Proof of Lemma \ref{lemmaD1}]
(1) Observe that
\bea
&& \e\left\|\DF{1}{n}\sum_{i=1}^n \mathbf{g}_1(\mathbf{x}_i, \bm{\theta}_0) \, \mathbf{g}_1(\mathbf{x}_i, \bm{\theta}_0)^\top-\e\left[ \mathbf{g}_1(\mathbf{x}_i, \bm{\theta}_0) \, \mathbf{g}_1(\mathbf{x}_i, \bm{\theta}_0)^{\top}\right]\right\|^2
\nonumber\\
&& =\DF{1}{n^2}\sum_{i=1}^n \e\left\| \mathbf{g}_1(\mathbf{x}, \bm{\theta}_0) \, \mathbf{g}_1(\mathbf{x}, \bm{\theta}_0)^\top-\e\left[ \mathbf{g}_1(\mathbf{x}, \bm{\theta}_0) \, \mathbf{g}_1(\mathbf{x}, \bm{\theta}_0)^\top)\right]\right\|^2\le \DF{1}{n}\e\|\mathbf{g}_1(\mathbf{x}, \bm{\theta}_0) \, \mathbf{g}_1(\mathbf{x}, \bm{\theta}_0)^\top\|^2
\nonumber\\
&& = \DF{1}{n}\e\|\mathbf{g}_1(\mathbf{x}, \bm{\theta}_0)\|^4=O(1/n).
\nonumber
\eea

(2) Notice that
\begin{align*}
 \e\left\|\DF{1}{n}\sum_{i=1}^ne_i\mathbf{V}_k({\bf x}_i)\right\|^2=& \DF{\sigma_e^2}{n^2}\sum_{i=1}^n\e\left\|\mathbf{V}_k({\bf x}_i)\right\|^2=\sigma_e^2 \, \DF{k}{n}.
\end{align*}

(3) Also observe that
\begin{align*}
 \e\left\|\DF{1}{n}{\bf V}^\top {\bf V}-I_k\right\|^2 =&\sum_{j=1}^k\e\left(\DF{1}{n}\sum_{i=1}^n\psi_j({\bf x}_i)^2-1\right)^2+2\sum_{j=2}^k\sum_{\ell=1}^{j-1}\e\left(\DF{1}{n}
 \sum_{i=1}^n\psi_j({\bf x}_i)\psi_\ell({\bf x}_i)\right)^2\\
=&\sum_{j=1}^k\DF{1}{n^2}\sum_{i=1}^n\e\left(\psi_j({\bf x}_i)^2-1\right)^2+2\sum_{j=2}^k\sum_{\ell=1}^{j-1}\DF{1}{n^2}
 \sum_{i=1}^n\e\left(\psi_j({\bf x}_i)\psi_\ell({\bf x}_i)\right)^2\\
\le&\sum_{j=1}^k\DF{1}{n^2}\sum_{i=1}^n\e\psi_j({\bf x}_i)^4+2\sum_{j=2}^k\sum_{\ell=1}^{j-1}\DF{1}{n^2} \sum_{i=1}^n\e\psi_j({\bf x}_i)^2\psi_\ell({\bf x}_i)^2 \le C\DF{k^2}{n}.
\end{align*}

(4) Note also that
\begin{align*}
\e\left\|\DF{1}{n}\sum_{i=1}^n \mathbf{V}_k({\bf x}_i)^\top-\e({\bf x}\mathbf{V}_k({\bf x})^\top)\right\|^2=&\DF{1}{n^2}\sum_{i=1}^n\e\left\|{\bf x}_i\mathbf{V}_k({\bf x}_i)^\top-\e({\bf x}\mathbf{V}_k({\bf x})^\top)\right\|^2\\
\le & \DF{1}{n}\e\|{\bf x}\mathbf{V}_k({\bf x})^\top\|^2\le C\DF{k}{n}.
\end{align*}

(5) Finally, we have
\begin{align*}
&\e\left\|\DF{1}{n}\sum_{i=1}^n g({\bf x}_i,{\bm \theta})\mathbf{V}_k({\bf x}_i)-\e(g({\bf x},{\bm \theta})\mathbf{V}_k({\bf x}))\right\|^2=\DF{1}{n^2}\sum_{i=1}^n \e\|g({\bf x}_i,{\bm \theta})\mathbf{V}_k({\bf x}_i)-\e(g({\bf x},{\bm \theta})\mathbf{V}_k({\bf x}))\|^2\\
\le&\DF{1}{n}\e\|[g({\bf x},{\bm \theta})\mathbf{V}_k({\bf x})]\|^2\le C\DF{k}{n}\left(\e[g^4({\bf x},{\bm \theta})]\right)^{1/2}\sup_j\left(\e[\psi_j^4({\bf x})]\right)^{1/2}=O(k/n).
\end{align*}

The other proofs follow similarly.
\end{proof}

\begin{proof}[Proof of Theorem \ref{thm.A1}]
We shall check a sufficient condition in \citet[p. 133]{phillips2001} to show the consistency. Observe that, as $n\to \infty$, for any ${\bm \theta}\in \Theta$,
\begin{align*}
L_n({\bm \theta})=&\DF{1}{n}\|{\bf M}_v({\bf y}-\mathbf{G}({\bm \theta}))\|^2 =\DF{1}{n}\|{\bf M}_v({\bf e}+{\bm \delta}+\mathbf{G}({\bm \theta}_0)-\mathbf{G}({\bm \theta}))\|^2\\
=&\DF{1}{n}\|{\bf M}_v{\bf e}\|^2+\DF{1}{n}\|{\bf M}_v{\bm \delta}\|^2+\DF{1}{n}\|{\bf M}_v(\mathbf{G}({\bm \theta}_0)-\mathbf{G}({\bm \theta}))\|^2\\
&+\DF{1}{n}\langle{\bf M}_v{\bf e}, {\bf M}_v{\bm \delta}\rangle+\DF{1}{n}\langle{\bf M}_v{\bf e}, {\bf M}_v(\mathbf{G}({\bm \theta}_0)-\mathbf{G}({\bm \theta}))\rangle\\
&+\DF{1}{n}\langle{\bf M}_v{\bm \delta}, {\bf M}_v(\mathbf{G}({\bm \theta}_0)- \mathbf{G}({\bm \theta}))\rangle.
\end{align*}

Observe that
\begin{align*}
\DF{1}{n}\|{\bf M}_v{\bf e}\|^2=&\DF{1}{n}{\bf e}^\top {\bf e}-\DF{1}{n}{\bf e}^\top {\bf V}({\bf V}^\top{\bf V})^{-1}{\bf V}^\top {\bf e}\\
=&\DF{1}{n}\sum_{i=1}^ne_i^2-\DF{1}{n}\sum_{i=1}^ne_i\mathbf{V}_k({\bf x}_i)^\top \left(\DF{1}{n}{\bf V}^\top{\bf V}\right)^{-1}\DF{1}{n}\sum_{i=1}^ne_i\mathbf{V}_k({\bf x}_i),
\end{align*}
and by Lemma \ref{lemmaD1}, we have
\begin{equation*}
\DF{1}{n}\sum_{i=1}^ne_i\mathbf{V}_k({\bf x}_i)=O_P(\sqrt{k/n}) \ \ \mbox{and} \ \ \DF{1}{n}{\bf V}^\top{\bf V}=\mathbf{I}_k+O_P(kn^{-1/2}).
\end{equation*}

This implies $\DF{1}{n}\|{\bf M}_v{\bf e}\|^2\to_P\sigma_e^2$. Moreover, since ${\bf M}_v$ is idempotent,
\begin{equation*}
\DF{1}{n}\|{\bf M}_v{\bm \delta}\|^2\le \DF{1}{n}\|{\bm \delta}\|^2=\DF{1}{n} \sum_{i=1}^n \delta_k^2({\bf x}_i)=O_P(\|\delta_k(\mathbf{x})\|^2)=o_P(1),
\end{equation*}
where $\|\delta_k(\mathbf{x})\|^2=\sum_{j=k+1}^\infty \gamma_j^2\to 0$ as $k\to\infty$. And, we have
\begin{align*}
&\DF{1}{n}\langle{\bf M}_v{\bf e}, {\bf M}_v(\mathbf{G}({\bm \theta}_0)- \mathbf{G}({\bm \theta}))\rangle = \DF{1}{n}{\bf e}^\top {\bf M}_v(\mathbf{G}({\bm \theta}_0)- \mathbf{G}({\bm \theta}))\\
=&\DF{1}{n}{\bf e}^\top (\mathbf{G}({\bm \theta}_0)- \mathbf{G}({\bm \theta}))-\DF{1}{n}{\bf e}^\top {\bf V} \left({\bf V}^\top {\bf V}\right)^{-1}{\bf V}^\top(\mathbf{G}({\bm \theta}_0)- \mathbf{G}({\bm \theta}))\\
=&\DF{1}{n}\sum_{i=1}^n(g({\bf x}_i, {\bm \theta})-g({\bf x}_i, {\bm \theta}_0))e_i\\
 &-\DF{1}{n}\sum_{i=1}^ne_i \mathbf{V}_k({\bf x}_i)^\top \left(\DF{1}{n}{\bf V}^\top {\bf V}\right)^{-1}\DF{1}{n}\sum_{i=1}^n(g({\bf x}_i, {\bm \theta})-g({\bf x}_i, {\bm \theta}_0))\mathbf{V}_k({\bf x}_i)\\
=&O_P(n^{-1/2})+O_P(k/n)=o_P(1).
\end{align*}

Furthermore, we obtain
\begin{eqnarray*}
    &&\DF{1}{n}\|{\bf M}_v(\mathbf{G}({\bm \theta}_0)- \mathbf{G}({\bm \theta}))\|^2 = \DF{1}{n}(\mathbf{G}({\bm \theta}_0)- \mathbf{G}({\bm \theta}))^\top{\bf M}_v(\mathbf{G}({\bm \theta}_0)-\mathbf{G}({\bm \theta}))\notag \\
&=&\DF{1}{n}(\mathbf{G}({\bm \theta}_0)- \mathbf{G}({\bm \theta}))^\top(G({\bm \theta}_0)-G({\bm \theta})) -\DF{1}{n}(\mathbf{G}({\bm \theta}_0)-\mathbf{G}({\bm \theta}))^\top{\bf V}({\bf V}^\top {\bf V})^{-1}{\bf V}^\top(\mathbf{G}({\bm \theta}_0)-\mathbf{G}({\bm \theta}))\notag\\
&=&\DF{1}{n}\sum_{i=1}^n(g({\bf x}_i, {\bm \theta})-g({\bf x}_i, {\bm \theta}_0))^2\notag\\
&&-\DF{1}{n}\sum_{i=1}^n(g({\bf x}_i, {\bm \theta})-g({\bf x}_i, {\bm \theta}_0))\mathbf{V}_k({\bf x}_i)^\top \left(\DF{1}{n}{\bf V}^\top {\bf V}\right)^{-1}\DF{1}{n}\sum_{i=1}^n(g({\bf x}_i, {\bm \theta})-g({\bf x}_i, {\bm \theta}_0))\mathbf{V}_k({\bf x}_i)\notag\\
&=&\e[(g({\bf x}, {\bm \theta})-g({\bf x}, {\bm \theta}_0))^2] + \e[(g({\bf x}, {\bm \theta})-g({\bf x}, {\bm \theta}_0))\mathbf{V}_k({\bf x})^\top]\e[(g({\bf x}, {\bm \theta})-g({\bf x}, {\bm \theta}_0))\mathbf{V}_k({\bf x})]+o_P(1)\notag\\
&\to_P&\e[(g({\bf x}, {\bm \theta})-g({\bf x}, {\bm \theta}_0))^2]-\sum_{j=1}^\infty \{\e[(g({\bf x}, {\bm \theta})-g({\bf x}, {\bm \theta}_0))\psi_j({\bf x})]\}^2
\equiv: M(\bm{\theta}, \bm{\theta}_0).\notag
\end{eqnarray*}

Finally, we have
\begin{align*}
L_n({\bm \theta})\to_P&\, \sigma_e^2+ M(\bm{\theta}, \bm{\theta}_0) \equiv: L({\bm \theta}, {\bm \theta}_0).
\end{align*}

It is clear from Assumption \ref{MTP2}(iii) that when ${\bm \theta}={\bm \theta}_0$, $L({\bm \theta}, {\bm \theta}_0)$ achieves its minimum, $\sigma_e^2$.

Conversely, when $L({\bm \theta}, {\bm \theta}_0)$ achieves its minimum we have $M{\bm \theta},  {\bm \theta}_0)=0$, which implies ${\bm \theta}={\bm \theta}_0$ by the conditions listed in the theorem. The asymptotic consistency holds.
\end{proof}

\begin{proof}[Proof of Theorem \ref{thm.A2}]

Note that
\bea
\sqrt{n} \mathbf{S}_n(\bm \theta_0)&=&\DF{1}{\sqrt{n}}\DF{\partial}{\partial {\bm \theta}} \mathbf{G}({\bm \theta}_0)^\top {\bf M}_v({\bf y}- \mathbf{G}({\bm \theta}_0)) = \DF{1}{\sqrt{n}}\DF{\partial}{\partial {\bm \theta}} \mathbf{G}({\bm \theta}_0)^\top {\bf M}_v({\bf e}+{\bm \delta})
\nonumber\\
&& = \DF{1}{\sqrt{n}}\DF{\partial}{\partial {\bm \theta}} \mathbf{G}({\bm \theta}_0)^\top {\bf M}_v{\bf e}+\DF{1}{\sqrt{n}}\DF{\partial}{\partial {\bm \theta}} \mathbf{G}({\bm \theta}_0)^\top {\bf M}_v{\bm \delta}.
\label{proof1}
\eea

We start to deal with the last term of equation (\ref{proof1}) as follows:
\bea
&& \frac{1}{\sqrt{n}}\left\|\DF{\partial}{\partial {\bm \theta}} \mathbf{G}^\top \mathbf{M}_v \bm{\delta}\right\|\leq \DF{1}{\sqrt{n}}\left\|\DF{\partial}{\partial {\bm \theta}} \mathbf{G}\right\|\, \left\|\bm{\delta}\right\|  =\sqrt{\frac{1}{n}\sum_{i=1}^n\|\mathbf{g}_1(\mathbf{x}_i, \bm{\theta}_0)\|^2} \, \sqrt{\sum_{i=1}^n |\delta_k(\mathbf{x}_i)|^2},
\label{proof2}
\eea
and $\e[\delta_k^2(\mathbf{\bf x}_i)]=\int_{{\cal X}} \delta_k^2(\mathbf{x}) f(\mathbf{x})d \mathbf{x}=o(k^{-2s/d})$ by Lemma A.5 in \citet{dlp2021} (see also Theorem C.4.13 of \cite{dg2025}). This implies $\DF{1}{\sqrt{n}} \DF{\partial}{\partial {\bm \theta}} \mathbf{G}^\top \mathbf{M}_v \bm{\delta}=O_P(\sqrt{n \, k^{-2s/d}})=o_P(1)$ by Assumption \ref{MTP2}.

It is clear that the first term has expectation zero, and
\begin{align*}
\DF{1}{\sqrt{n}}\DF{\partial}{\partial {\bm \theta}} \mathbf{G}({\bm \theta}_0)^\top {\bf M}_v{\bf e} =& \DF{1}{\sqrt{n}}\left[\DF{\partial}{\partial {\bm \theta}} \mathbf{G}({\bm \theta}_0)^\top-\DF{1}{n}\DF{\partial}{\partial {\bm \theta}} \mathbf{G}({\bm \theta}_0)^\top {\bf V}\left(\DF{1}{n}{\bf V}^\top {\bf V}\right)^{-1}{\bf V}^\top\right]{\bf e}\\
=&\DF{1}{\sqrt{n}}\left[\DF{\partial}{\partial {\bm \theta}} \mathbf{G}({\bm \theta}_0)^\top-\DF{1}{n}\sum_{i=1}^n\DF{\partial}{\partial {\bm \theta}}g({\bf x}_i,{\bm \theta}_0) \mathbf{V}_k({\bf x}_i)^\top \left(\DF{1}{n}{\bf V}^\top {\bf V}\right)^{-1}{\bf V}^\top\right]{\bf e}\\
=&\DF{1}{\sqrt{n}}\left[\DF{\partial}{\partial {\bm \theta}}G({\bm \theta}_0)^\top-\left\{\e\left(\DF{\partial}{\partial {\bm \theta}}g({\bf x},{\bm \theta}_0) \mathbf{V}_k({\bf x})^\top\right)\right\}{\bf V}^\top\right]{\bf e} \, (1+o_P(1))\\
=&\DF{1}{\sqrt{n}}\sum_{i=1}^n \left[\DF{\partial}{\partial {\bm \theta}}g({\bf x}_i,{\bm \theta}_0)-\left\{\e\left(\DF{\partial}{\partial {\bm \theta}}g({\bf x},{\bm \theta}_0) \mathbf{V}_k({\bf x})^\top\right)\right\}\mathbf{V}_k({\bf x}_i)\right] \, e_i \ (1+o_P(1)).
\end{align*}

Since the data is of i.i.d. sequence and the dimension is fixed, to show the limit distribution of the score function, we only need to calculate the limit of the covariance matrix. In fact,
\begin{align*}
&\DF{1}{n}\e\left(\sum_{i=1}^n \left[\DF{\partial}{\partial {\bm \theta}}g({\bf x}_i,{\bm \theta}_0)-\left\{\e\left(\DF{\partial}{\partial {\bm \theta}}g({\bf x},{\bm \theta}_0) \mathbf{V}_k({\bf x})^\top\right)\right\}\mathbf{V}_k({\bf x}_i)\right] \, e_i\right.\\
&\qquad \left.\times\sum_{i=1}^n \left[\DF{\partial}{\partial {\bm \theta}}g({\bf x}_i,{\bm \theta}_0)-\left\{\e\left(\DF{\partial}{\partial {\bm \theta}}g({\bf x},{\bm \theta}_0) \mathbf{V}_k({\bf x})^\top\right)\right\}\mathbf{V}_k({\bf x}_i)\right]^\top \, e_i\right)\\
=&\DF{\sigma_e^2}{n}\sum_{i=1}^n\e\left(\left[\DF{\partial}{\partial {\bm \theta}}g({\bf x}_i,{\bm \theta}_0)-\left\{\e\left(\DF{\partial}{\partial {\bm \theta}}g({\bf x},{\bm \theta}_0) \mathbf{V}_k({\bf x})^\top\right)\right\}\mathbf{V}_k({\bf x}_i)\right]\right.\\
&\qquad \left.\times \left[\DF{\partial}{\partial {\bm \theta}}g({\bf x}_i,{\bm \theta}_0)-\left\{\e\left(\DF{\partial}{\partial {\bm \theta}}g({\bf x},{\bm \theta}_0) \mathbf{V}_k({\bf x})^\top\right)\right\}\mathbf{V}_k({\bf x}_i)\right]^\top\right)\\
=&\sigma_e^2\left[\e\left(\DF{\partial}{\partial {\bm \theta}}g({\bf x},{\bm \theta}_0) \DF{\partial}{\partial {\bm \theta}^\top}g({\bf x},{\bm \theta}_0)\right)-\e\left(\DF{\partial}{\partial {\bm \theta}}g({\bf x},{\bm \theta}_0) \mathbf{V}_k({\bf x})^\top\right)\e \left(\mathbf{V}_k({\bf x})\DF{\partial}{\partial {\bm \theta}^\top}g({\bf x},{\bm \theta}_0)\right)\right]\\
=&\sigma_e^2\left[\e\left(\DF{\partial}{\partial {\bm \theta}}g({\bf x},{\bm \theta}_0) \DF{\partial}{\partial {\bm \theta}^\top}g({\bf x},{\bm \theta}_0)\right)-\sum_{j=1}^k\e\left(\DF{\partial}{\partial {\bm \theta}}g({\bf x},{\bm \theta}_0) \psi_j({\bf x})\right)\e \left(\psi_j({\bf x})\DF{\partial}{\partial {\bm \theta}^\top}g({\bf x},{\bm \theta}_0)\right)\right]\\
\to&\sigma_e^2\left[\e\left(\DF{\partial}{\partial {\bm \theta}}g({\bf x},{\bm \theta}_0) \DF{\partial}{\partial {\bm \theta}^\top}g({\bf x},{\bm \theta}_0)\right)-\sum_{j=1}^\infty \e\left(\DF{\partial}{\partial {\bm \theta}}g({\bf x},{\bm \theta}_0) \psi_j({\bf x})\right)\e \left(\psi_j({\bf x})\DF{\partial}{\partial {\bm \theta}^\top}g({\bf x},{\bm \theta}_0)\right)\right]\\
\equiv: &\sigma_e^2\,{\bm\Sigma}_g.
\end{align*}

Hence, $\sqrt{n} \, \mathbf{S}_n({\bm \theta_0})\to_{\cal D}N(0, \sigma_e^2\,{\bm\Sigma}_g)$ as $n\to \infty$. Next, consider the Hessian matrix:
\begin{align*}
\mathbf{H}_n({\bm \theta}_0)= &\DF{1}{n}\DF{\partial}{\partial {\bm \theta}} \mathbf{G}({\bm \theta}_0)^\top {\bf M}_v\DF{\partial}{\partial {\bm \theta}} \mathbf{G}({\bm \theta}_0)-\DF{1}{n}\DF{\partial^2}{\partial {\bm \theta}\partial {\bm \theta}^\top} \mathbf{G}({\bm \theta}_0) {\bf M}_v({\bf y}- \mathbf{G}({\bm \theta}_0))\\
=&\DF{1}{n}\DF{\partial}{\partial {\bm \theta}} \mathbf{G}({\bm \theta}_0)^\top {\bf M}_v\DF{\partial}{\partial {\bm \theta}} \mathbf{G}({\bm \theta}_0)-\DF{1}{n}\DF{\partial^2}{\partial {\bm \theta}\partial {\bm \theta}^\top} \mathbf{G} ({\bm \theta}_0) {\bf M}_v({\bf e}+{\bm \delta}).
\end{align*}

As shown in the convergence of $\mathbf{S}_n({\bm \theta}_0)$,
\begin{align*}
\DF{1}{n}\DF{\partial}{\partial {\bm \theta}} \mathbf{G}({\bm \theta}_0)^\top {\bf M}_v\DF{\partial}{\partial {\bm \theta}} \mathbf{G}({\bm \theta}_0)\to_P{\bm\Sigma}_g\quad\text{and}\quad
\DF{1}{n}\DF{\partial^2}{\partial {\bm \theta}\partial {\bm \theta}^\top} \mathbf{G}({\bm \theta}_0) {\bf M}_v{\bf e}=O_P(n^{-1/2}).
\end{align*}

To fulfil the convergence of $\mathbf{H}_n({\bm \theta}_0)$, we need only to prove the negligibility of the last term. Indeed,
\bea
&& \DF{1}{n}\DF{\partial^2}{\partial {\bm \theta}\partial {\bm \theta}^\top} \mathbf{G}({\bm \theta}_0) {\bf M}_v{\bm \delta}=\DF{1}{n}\DF{\partial^2}{\partial {\bm \theta}\partial {\bm \theta}^\top} \mathbf{G}({\bm \theta}_0) {\bm \delta}- \DF{1}{n^2}\DF{\partial^2}{\partial {\bm \theta}\partial {\bm \theta}^\top} \mathbf{G}({\bm \theta}_0) {\bf V}{\bf V}^\top {\bm \delta}
\nonumber\\
&=&  \DF{1}{n}\sum_{i=1}^n\DF{\partial^2}{\partial {\bm \theta}\partial {\bm \theta}^\top}g({\bf x}_i,{\bm \theta}_0) \delta_k({\bf x}_i) -\DF{1}{n}\sum_{i=1}^n\DF{\partial^2}{\partial {\bm \theta}\partial {\bm \theta}^\top}g({\bf x}_i,{\bm \theta}_0)\mathbf{V}_k({\bf x}_i)^\top \DF{1}{n}\sum_{i=1}^n\mathbf{V}_k({\bf x}_i)\delta_k({\bf x}_i),
\label{proof3}
\eea
where
\bea
&&\DF{1}{n}\sum_{i=1}^n\e\left\|\DF{\partial^2}{\partial {\bm \theta}\partial {\bm \theta}^\top}g({\bf x}_i,{\bm \theta}_0) \delta_k({\bf x}_i)\right\|\le \left(\e\left\|\DF{\partial^2}{\partial {\bm \theta}\partial {\bm \theta}^\top}g({\bf x},{\bm \theta}_0)\right\|^2 \e\left[\delta_k^2({\bf x})\right]\right)^{1/2} =O(\|\delta_k(\mathbf{x})\|)=o(1),
\label{proof4}\\
&&\DF{1}{n}\sum_{i=1}^n\e\left\|\DF{\partial^2}{\partial {\bm \theta}\partial {\bm \theta}^\top}g({\bf x}_i,{\bm \theta}_0)\mathbf{V}_k({\bf x}_i)^\top\right\|\le \left(\e\left\|\DF{\partial^2}{\partial {\bm \theta}\partial {\bm \theta}^\top}g({\bf x},{\bm \theta}_0)\right\|^2 \e[\|\mathbf{V}_k({\bf x})\|^2]\right)^{1/2}=O(\sqrt{k}),
\label{proof5}\\
&&\DF{1}{n}\sum_{i=1}^n\e\|\mathbf{V}_k({\bf x}_i)\delta_k({\bf x}_i)\|\le \left(\e\|\mathbf{V}_k({\bf x})\|^2 \e[\delta_k^2({\bf x})]\right)^{1/2}=O(\sqrt{k}\|\delta_k(\mathbf{x})\|).
\label{proof6}
\eea

Therefore, in the same way as in the derivation of (\ref{proof2}), equations (\ref{proof5}) and (\ref{proof6}) imply that the last term of (\ref{proof3}) becomes 
\be
\left\|\DF{1}{n}\sum_{i=1}^n\DF{\partial^2}{\partial {\bm \theta}\partial {\bm \theta}^\top}g({\bf x}_i,{\bm \theta}_0)\mathbf{V}_k({\bf x}_i)^\top \DF{1}{n}\sum_{i=1}^n\mathbf{V}_k({\bf x}_i)\delta_k({\bf x}_i)\right\|= O_P\left(k \, \|\delta_k(\mathbf{x})\|\right)=o_P\left(k^{\frac{d-s}{d}}\right) = o_P(1), 
\label{proof7}
\ee
using Assumption \ref{MTP2} for the case of $s\geq d$, and accordingly, $H_n({\bm \theta}_0)\to_P{\bm\Sigma}_g$ as $n\to\infty$.

By the first order condition of \eqref{nonlinear8}, $S_n(\widehat{\bm \theta})=0$. Due to the consistency,
\begin{align*}
 0=\mathbf{S}_n(\widehat{\bm \theta})=&\mathbf{S}_n(\bm \theta_0)+ \mathbf{H}_n(\tilde{\bm \theta})(\widehat{\bm \theta}-{\bm \theta}_0)\\
 =& \mathbf{S}_n(\bm \theta_0)+ \mathbf{H}_n({\bm \theta}_0)(\widehat{\bm \theta}-{\bm \theta}_0)+[\mathbf{H}_n(\tilde{\bm \theta})- \mathbf{H}_n({\bm \theta}_0)](\widehat{\bm \theta}-{\bm \theta}_0),
\end{align*}
where $\tilde{\bm \theta}$ is on the line jointing $\widehat{\bm \theta}$ and ${\bm \theta}_0$, implying $\|\tilde{\bm \theta}-{\bm \theta}_0\|\le \|\widehat{\bm \theta}-{\bm \theta}_0\|$.

By Assumption \ref{MTP2}, we may write
\bea
0&=&\sqrt{n} \mathbf{S}_n(\bm \theta_0)+ \mathbf{H}_n({\bm \theta}_0)\sqrt{n}(\widehat{\bm \theta}-{\bm \theta}_0)+[\mathbf{H}_n(\tilde{\bm \theta})- \mathbf{H}_n({\bm \theta}_0)]\sqrt{n}(\widehat{\bm \theta}-{\bm \theta}_0)
\nonumber\\
&=& \sqrt{n} \, \mathbf{S}_n(\bm \theta_0)+ \mathbf{H}_n({\bm \theta}_0)\sqrt{n}(\widehat{\bm \theta}-{\bm \theta}_0)(1+o_P(1)).
\nonumber
\eea

Hence, we have as $n\to\infty$, $\sqrt{n}(\widehat{\bm \theta}-{\bm \theta}_0) =[\mathbf{H}_n({\bm \theta}_0)]^{-1}\sqrt{n} \, \mathbf{S}_n(\bm \theta_0)(1+o_P(1))  \to_{\cal D} N(0, \sigma_e^2 \, {\bm \Sigma}_g^{-1})$ using the conventional continuous mapping theorem and Slutsky theorem.
\end{proof}

\begin{proof}[Proof of Theorem \ref{tha.1}]
Notice that, for $i\ne j$, by the mutual independence of ${\bf x}_i$ and ${\bf x}_j$ and the orthogonality of $\{\psi_\ell(x)\}$ we have
\begin{align*}
&\e\left[\left(\sum_{k=k_{\rm \min}}^{k_{\rm \max}} \mathbf{V}^{\top}_k({\bf x}_i) \mathbf{V}_k({\bf x}_j)\right)^2\right]\\
=&\sum_{k=k_{\min}}^{k_{\max}} \e[\mathbf{V}^{\top}_k({\bf x}_i) \mathbf{V}_k({\bf x}_j)]^2+\sum_{k=k_{\min}}^{k_{\max}} \sum_{p=k_{\min},\ne k}^{k_{\max}} \e[\mathbf{V}^{\top}_p({\bf x}_i) \mathbf{V}_p({\bf x}_j)\mathbf{V}^{\top}_k({\bf x}_i) \mathbf{V}_k({\bf x}_j)]\\
=&\sum_{k=k_{\min}}^{k_{\max}} \e \text{tr}[\mathbf{V}^{\top}_k({\bf x}_i) \mathbf{V}_k({\bf x}_j)]^2+\sum_{k=k_{\min}}^{k_{\max}} \sum_{p=k_{\min},\ne k}^{k_{\max}} \e\text{tr}[\mathbf{V}^{\top}_p({\bf x}_i) \mathbf{V}_p({\bf x}_j)\mathbf{V}^{\top}_k({\bf x}_i) \mathbf{V}_k({\bf x}_j)]\\
=&\sum_{k=k_{\min}}^{k_{\max}}  \text{tr}\{[\e \mathbf{V}_k({\bf x}_i)\mathbf{V}^{\top}_k({\bf x}_i)] [\e \mathbf{V}_k({\bf x}_j)\mathbf{V}_k^\top({\bf x}_j)]\} +\sum_{k=k_{\min}}^{k_{\max}} \sum_{p=k_{\min},\ne k}^{k_{\max}} \text{tr}\{[\e \mathbf{V}_p({\bf x}_j)\mathbf{V}^{\top}_k({\bf x}_j)][\e \mathbf{V}_k({\bf x}_i)\mathbf{V}^{\top}_p({\bf x}_i)]\}\\
=&\sum_{k=k_{\min}}^{k_{\max}} k+\sum_{k=k_{\min}}^{k_{\max}} \sum_{p=k_{\min},\ne k}^{k_{\max}} \min(k,p) = \DF{1}{2}(k_{\max}+k_{\min})(k_{\max}-k_{\min}+1)+2\sum_{k=k_{\min}+1}^{k_{\max}} \sum_{p=k_{\min}}^{k-1} p\\
=&\DF{1}{2}(k_{\max}+k_{\min})(k_{\max}-k_{\min}+1)+\sum_{k=k_{\min}+1}^{k_{\max}} (k+k_{\min}-1)(k-k_{\min}) 
\nonumber\\
 = & \DF{1}{2}(k_{\max}+k_{\min})(k_{\max}-k_{\min}+1) +\sum_{k=k_{\min}+1}^{k_{\max}}k^2-k_{\min}^2(k_{\max}-k_{\min}-1)-
\sum_{k=k_{\min}+1}^{k_{\max}}(k-k_{\min})\\
=&\DF{k_{\max}(k_{\max}+1)(2k_{\max}+1)-(k_{\min}+1)(k_{\min}+2)(2k_{\min}+3)}{6}\\
& -  k_{\min}^2(k_{\max}-k_{\min}-1)+k_{\min}(k_{\max}-k_{\min}) \notag \\
=& \DF{1}{3}k_{\max}^3(1+o(1)).
\end{align*}

We then consider the convergence of $\widehat{\sigma}_e^2\to_P\sigma_{\varepsilon}^2=\e[\varepsilon^2]$ as $n\to\infty$. Under $H_0$, $m_n({\bf x})=0$ almost surely, and the model $y=g({\bf x}, {\bm\theta}_0)+\varepsilon$ satisfies $\e[\varepsilon|{\bf x}]=0$. Moreover,
\begin{align*}
 \widetilde{e}_i=&y_i-g({\bf x}, \widehat{\bm\theta})=g({\bf x}, {\bm\theta}_0)-g({\bf x}_i,\widehat{\bm \theta})+\varepsilon_i,\ \ \ i=1,\cdots,n.
\end{align*}
It follows that
\begin{align*}
\widehat{\sigma}_e^2=&\DF{1}{n}\sum_{i=1}^n\widetilde{e}^2_i
=\DF{1}{n}\sum_{i=1}^n[g({\bf x}_i, {\bm\theta}_0)-g({\bf x}_i,\widehat{\bm \theta})+\varepsilon_i]^2\\
=&\DF{1}{n}\sum_{i=1}^n\varepsilon_i^2+\DF{1}{n}\sum_{i=1}^n[g({\bf x}_i, {\bm\theta}_0)-g({\bf x}_i,\widehat{\bm \theta})]^2+\DF{2}{n}\sum_{i=1}^n[g({\bf x}_i, {\bm\theta}_0)-g({\bf x}_i,\widehat{\bm \theta})]\varepsilon_i\\
=&\DF{1}{n}\sum_{i=1}^n\varepsilon_i^2+(\widehat{\bm \theta}-{\bm\theta}_0)^\top \left[\DF{1}{n}\sum_{i=1}^n\DF{\partial}{\partial {\bm\theta}}g({\bf x}_i, {\bm\theta}_0) \DF{\partial}{\partial {\bm\theta}^\top}g({\bf x}_i, {\bm\theta}_0)\right](\widehat{\bm \theta}-{\bm\theta}_0)\\
&-(\widehat{\bm \theta}-{\bm\theta}_0)^\top \DF{2}{n}\sum_{i=1}^n\DF{\partial}{\partial {\bm\theta}}g({\bf x}_i, {\bm\theta}_0)\varepsilon_i,
\end{align*}
where we use the first order Taylor approximation for $g({\bf x}_i, {\bm\theta}_0)-g({\bf x}_i,\widehat{\bm \theta})$ due to the consistency of $\widehat{\bm \theta}$.

It is clear that due to the weak LLN,
\begin{align*}
\DF{1}{n}\sum_{i=1}^n\varepsilon_i^2\to_P \sigma_{\varepsilon}^2,
\end{align*}
as $n\to\infty$; and the second and third terms are $o_P(1)$ due to the i.i.d. data and the consistency of $\widehat{\bm \theta}$.

This gives $\widehat{\sigma}_e^2\to_P\sigma_{\varepsilon}^2$ as $n\to\infty$. Hence, this allows us to replace $\widehat{\sigma}_e^4$ by $\sigma_{\varepsilon}^4$ due to the continuous mapping theorem when we prove the normality of $T_n/\sigma_n$. We now conclude $\sigma_n=O_P(nk_{\max}^{3/2})$ that is used in the sequel.

Notice that
\begin{align*}
T_n =& \sum_{i=1}^n \sum_{j=1, \neq i}^n \sum_{k=k_{ \min}}^{k_{\max}} \mathbf{V}^{\top}_k({\bf x}_i) \mathbf{V}_k({\bf x}_j) \widetilde{e}_i \, \widetilde{e}_j\\
=&\sum_{i=1}^n \sum_{j=1, \neq i}^n\sum_{k=k_{ \min}}^{k_{\max}} \mathbf{V}^{\top}_k({\bf x}_i) \mathbf{V}_k({\bf x}_j) [g({\bf x}_j, {\bm\theta}_0)-g({\bf x}_j,\widehat{\bm \theta})+\varepsilon_j][g({\bf x}_i, {\bm\theta}_0)-g({\bf x}_i,\widehat{\bm \theta})+\varepsilon_i]\\
=&\sum_{i=1}^n \sum_{j=1, \neq i}^n\sum_{k=k_{ \min}}^{k_{\max}} \mathbf{V}^{\top}_k({\bf x}_i) \mathbf{V}_k({\bf x}_j)\varepsilon_j \varepsilon_i\\
&+\sum_{i=1}^n \sum_{j=1, \neq i}^n\sum_{k=k_{ \min}}^{k_{\max}} \mathbf{V}^{\top}_k({\bf x}_i) \mathbf{V}_k({\bf x}_j) [g({\bf x}_j, {\bm\theta}_0)-g({\bf x}_j,\widehat{\bm \theta})] [g({\bf x}_i, {\bm\theta}_0)-g({\bf x}_i,\widehat{\bm \theta})]\\
&+2\sum_{i=1}^n \sum_{j=1, \neq i}^n\sum_{k=k_{ \min}}^{k_{\max}} \mathbf{V}^{\top}_k({\bf x}_i) \mathbf{V}_k({\bf x}_j) [g({\bf x}_j, {\bm\theta}_0)-g({\bf x}_j,\widehat{\bm \theta})]\varepsilon_i \equiv: T_{1n}+T_{2n}+T_{3n}.
\end{align*}

Then, we shall show, as $n\to\infty$,
\begin{align*}
\DF{1}{\sigma_n}T_{1n}=\DF{2}{\sigma_n}\sum_{i=2}^n\left[\sum_{j=1}^{i-1}\sum_{k=k_{ \min}}^{k_{\max}} \mathbf{V}^{\top}_k({\bf x}_i) \mathbf{V}_k({\bf x}_j)\varepsilon_j\right] \varepsilon_i\to_{\cal D} N(0,1),
\end{align*}
and $\DF{1}{\sigma_n}T_{in}=o_P(1), i=2,3.$

Because $\e[\varepsilon_i|{\bf x}_i]=0$, and $\{{\bf x}_i, \varepsilon_i\}$ are mutually independent over $i$, one may construct a filtration ${\cal F}_i=\sigma({\bf x}_s, s\le i; \varepsilon_t, t<i)$ such that $(\varepsilon_i, {\cal F}_i)$ is a martingale difference sequence. Thus, to show $T_{1n}/\sigma_n\to_{\cal D}N(0,1)$, it suffices to check whether it satisfies the convergence of the conditional variance process and the Linderberg condition in Corollary 3.1 of \cite{peterhall1980}.

Towards this, denote $a_{ij}\equiv  \sum_{k=k_{ \min}}^{k_{\max}}\mathbf{V}^{\top}_k({\bf x}_i) \mathbf{V}_k({\bf x}_j)$ for convenience. The conditional variance process of $T_{1n}/\sigma_n$ is
\begin{align*}
\DF{4\sigma_{\varepsilon}^2}{\sigma_n^2} \sum_{i=2}^n \left[\sum_{j=1}^{i-1}a_{ij}\varepsilon_j \right]^2
=&\DF{4\sigma_{\varepsilon}^2}{\sigma_n^2} \sum_{i=2}^n \sum_{j=1}^{i-1}a_{ij}^2\varepsilon_j^2
+\DF{8\sigma_{\varepsilon}^2}{\sigma_n^2} \sum_{i=3}^n\sum_{j_1=2}^{i-1} a_{ij_1}\varepsilon_{j_1}\sum_{j_2=1}^{j_1-1} a_{ij_2}\varepsilon_{j_2}\\
=&\DF{4\sigma_{\varepsilon}^4}{\sigma_n^2} \sum_{i=2}^n \sum_{j=1}^{i-1} a_{ij}^2
+\DF{4\sigma_{\varepsilon}^2}{\sigma_n^2} \sum_{i=2}^n \sum_{j=1}^{i-1} a_{ij}^2(\varepsilon_j^2-\sigma_{\varepsilon}^2)\\
&+\DF{8\sigma_{\varepsilon}^2}{\sigma_n^2} \sum_{i=3}^n\sum_{j_1=2}^{i-1} a_{ij_1}\varepsilon_{j_1}\sum_{j_2=1}^{j_1-1}a_{ij_2}\varepsilon_{j_2} =I_{1n}+I_{2n}+I_{3n}, \ \ \text{say}.
\end{align*}

We shall show $I_{1n}\to_P 1$, $I_{2n}=o_P(1)$ and $I_{3n}=o_P(1)$. To begin, by the definition of $\sigma_n^2$,
\begin{align*}
\e(I_{1n}-1)^2=&\DF{16}{n^4k_{\max}^6}\e\left[ \sum_{i=2}^n \sum_{j=1}^{i-1}(a_{ij}^2-\e a_{ij}^2)\right]^2 = \DF{16}{n^4k_{\max}^6} \sum_{i=2}^n \e\left[\sum_{j=1}^{i-1}(a_{ij}^2-\e a_{ij}^2)\right]^2\\
&+\DF{32}{n^4k_{\max}^6}\e \sum_{i_1=3}^n\sum_{i_2=2}^{i_1-1} \sum_{j=1}^{i_1-1}(a_{i_1j}^2-\e a_{i_1j}^2) \sum_{j=1}^{i_2-1}(a_{i_2j}^2-\e a_{i_2j}^2)\\
=&\DF{16}{n^4k_{\max}^6} \sum_{i=2}^n \sum_{j=1}^{i-1}\e\left[(a_{ij}^2-\e a_{ij}^2)\right]^2\\
&+\DF{32}{n^4k_{\max}^6} \sum_{i=3}^n \sum_{j_1=2}^{i-1}\sum_{j_2=1}^{j_1-1}\e\left[(a_{ij_1}^2-\e a_{ij_1}^2)(a_{ij_2}^2-\e a_{ij_2}^2)\right]\\
&+\DF{32}{n^4k_{\max}^6} \sum_{i_1=3}^n\sum_{i_2=2}^{i_1-1} \sum_{j_1=1}^{i_1-1}\sum_{j_2=1}^{i_2-1}\e[(a_{i_1j_1}^2-\e a_{i_1j_1}^2) (a_{i_2j_2}^2-\e a_{i_2j_2}^2)]\\
\le &\DF{C}{n^2}k_{\max}^2+\DF{C}{n}k_{\max}^2=o(1),
\end{align*}
in view of $\frac{k_{\max}^2}{n}\rightarrow 0$, because straightforward algebra yields $\e[a_{ij}^4]\le Ck_{\max}^8$ by the definition of $\mathbf{V}_k(\cdot)$; similarly, $|\e(a_{ij_1}^2-\e[a_{ij_1}^2])(a_{ij_2}^2-\e[a_{ij_2}^2])|\le \e|a_{ij_1}^2a_{ij_2}^2|\le C k_{\max}^8$; and
\begin{align*}
\e[(a_{i_1j_1}^2-\e[a_{i_1j_1}^2]) (a_{i_2j_2}^2-\e[a_{i_2j_2}^2])]=&\e[(a_{i_1j_1}^2-\e a_{i_1j_1}^2)]\e[(a_{i_2j_2}^2-\e[a_{i_2j_2}^2])]=0,
\end{align*}
due to independence of the data. This implies that $I_{1n}\to_P1$ as $n\to \infty$.

We then consider $I_{2n}$. Observe that
\begin{align*}
\e(I_{2n}^2)=&\DF{16}{\sigma_{\varepsilon}^4n^4k_{\max}^6} \e\left[\sum_{i=2}^n \sum_{j=1}^{i-1} a_{ij}^2(\varepsilon_j^2-\sigma_{\varepsilon}^2)\right]^2 = \DF{16}{\sigma_{\varepsilon}^4n^4k_{\max}^6} \sum_{i=2}^n \e\left[\sum_{j=1}^{i-1} a_{ij}^2(\varepsilon_j^2-\sigma_{\varepsilon}^2)\right]^2\\
&+\DF{32}{\sigma_{\varepsilon}^4n^4k_{\max}^6}\sum_{i_1=3}^n\sum_{i_2=2}^{i_1-1} \sum_{j_1=1}^{i_1-1}\sum_{j_2=1}^{i_2-1}\e[ a_{i_1j_1}^2(\varepsilon_{j_1}^2-\sigma_{\varepsilon}^2)  a_{i_2j_2}^2(\varepsilon_{j_2}^2-\sigma_{\varepsilon}^2)]\\
=&\DF{16}{\sigma_{\varepsilon}^4n^4k_{\max}^6} \sum_{i=2}^n \sum_{j=1}^{i-1} \e\left[a_{ij}^4(\varepsilon_j^2-\sigma_{\varepsilon}^2)^2\right]\\
&+\DF{32}{\sigma_{\varepsilon}^4n^4k_{\max}^6} \sum_{i=2}^n \sum_{j_1=2}^{i-1}\sum_{j_2=1}^{j_1-1}\e\left[ a_{ij_1}^2a_{ij_2}^2(\varepsilon_{j_1}^2-\sigma_{\varepsilon}^2)
(\varepsilon_{j_2}^2-\sigma_{\varepsilon}^2)\right]\\
&+\DF{32}{\sigma_{\varepsilon}^4n^4k_{\max}^6}\sum_{i_1=3}^n\sum_{i_2=2}^{i_1-1} \sum_{j_1=1}^{i_1-1}\sum_{j_2=1}^{i_2-1}\e[ a_{i_1j_1}^2(\varepsilon_{j_1}^2-\sigma_{\varepsilon}^2)  a_{i_2j_2}^2(\varepsilon_{j_2}^2-\sigma_{\varepsilon}^2)]\\
\le&\DF{C}{n^2}k_{\max}^2+\DF{32}{\sigma_{\varepsilon}^4n^4k_{\max}^6}
\sum_{i_1=3}^n\sum_{i_2=2}^{i_1-1} \sum_{j_2=1}^{i_2-1}\e[ a_{i_1j_2}^2 a_{i_2j_2}^2(\varepsilon_{j_2}^2-\sigma_{\varepsilon}^2)^2]\\
\le&\DF{C}{n^2}k_{\max}^2+\DF{C}{n}k_{\max}^2=o(1),
\end{align*}
by $\frac{k_{\max}^2}{n}= o(1)$. For $I_{3n}$, note that
\begin{align*}
\e(I_{3n}^2)=&\DF{64}{\sigma_{\varepsilon}^4n^4k_{\max}^6} \e\left[\sum_{i=3}^n\sum_{j_1=2}^{i-1} a_{ij_1}\varepsilon_{j_1}\sum_{j_2=1}^{j_1-1}a_{ij_2}\varepsilon_{j_2}\right]^2 = \DF{64}{\sigma_{\varepsilon}^4n^4k_{\max}^6} \sum_{i=3}^n\e\left[\sum_{j_1=2}^{i-1} a_{ij_1}\varepsilon_{j_1}\sum_{j_2=1}^{j_1-1}a_{ij_2}\varepsilon_{j_2}\right]^2\\
&+\DF{128}{\sigma_{\varepsilon}^4n^4k_{\max}^6} \sum_{i_1=4}^n\sum_{i_2=3}^{i_1-1}\e\left[\sum_{j_1=2}^{i_1-1} a_{i_1j_1}\varepsilon_{j_1}\sum_{j_2=1}^{j_1-1}a_{i_1j_2}\varepsilon_{j_2}\right]
\left[\sum_{j_1=2}^{i_2-1} a_{i_2j_1}\varepsilon_{j_1}\sum_{j_2=1}^{j_1-1}a_{i_2j_2}\varepsilon_{j_2}\right]\\
=&\DF{64}{\sigma_{\varepsilon}^4n^4k_{\max}^6} \sum_{i=3}^n\sum_{j_1=2}^{i-1} \e\left[a_{ij_1}\varepsilon_{j_1}\sum_{j_2=1}^{j_1-1}a_{ij_2}\varepsilon_{j_2}\right]^2 +\DF{128}{n^4k_{\max}^6} \sum_{i_1=4}^n\sum_{i_2=3}^{i_1-1}\sum_{j_1=2}^{i_2-1} \sum_{j_2=1}^{j_1-1}\e\left[a_{i_1j_1}a_{i_2j_1}a_{i_1j_2}a_{i_2j_2}\right]\\
=&\DF{64}{n^4k_{\max}^6} \sum_{i=3}^n\sum_{j_1=2}^{i-1} \sum_{j_2=1}^{j_1-1}\e[a_{ij_1}^2a_{ij_2}^2]
+\DF{128}{n^4k_{\max}^6} \sum_{i_1=4}^n\sum_{i_2=3}^{i_1-1}\sum_{j_1=2}^{i_2-1} \sum_{j_2=1}^{j_1-1}\sum_{k=k_{ \min}}^{k_{\max}}\\
&\qquad \e[ \mathbf{V}^{\top}_k({\bf x}_{i_1})\mathbf{V}_k({\bf x}_{j_1})\mathbf{V}^{\top}_k({\bf x}_{i_2})\mathbf{V}_k({\bf x}_{j_1})\mathbf{V}^{\top}_k({\bf x}_{i_1})\mathbf{V}_k({\bf x}_{j_2})\mathbf{V}^{\top}_k({\bf x}_{i_2})\mathbf{V}_k({\bf x}_{j_2})]\\
&+\DF{128}{n^4k_{\max}^6} \sum_{i_1=4}^n\sum_{i_2=3}^{i_1-1}\sum_{j_1=2}^{i_2-1} \sum_{j_2=1}^{j_1-1}\e\left[\sum_{k=k_{ \min}}^{k_{\max}}\mathbf{V}^{\top}_k({\bf x}_{i_1})\mathbf{V}_k({\bf x}_{j_1})\right]^2\e\left[ \sum_{k=k_{ \min}}^{k_{\max}}\mathbf{V}^{\top}_k({\bf x}_{i_2})\mathbf{V}_k({\bf x}_{j_2})\right]^2\\
\le&\DF{C}{n}k_{\max}^2+\DF{C}{k_{\max}}+\DF{128}{n^4k_{\max}^6} \sum_{i_1=4}^n\sum_{i_2=3}^{i_1-1}\sum_{j_1=2}^{i_2-1} \sum_{j_2=1}^{j_1-1}\\
&\qquad \sum_{k=k_{ \min}}^{k_{\max}}\e\left[\mathbf{V}^{\top}_k({\bf x}_{i_1})\mathbf{V}_k({\bf x}_{j_1})\right]^2 \sum_{k=k_{ \min}}^{k_{\max}}\e\left[\mathbf{V}^{\top}_k({\bf x}_{i_2})\mathbf{V}_k({\bf x}_{j_2})\right]^2\\
=&\DF{C}{n}k_{\max}^2+\DF{C}{k_{\max}}+\DF{C}{k_{\max}^2}=o(1),
\end{align*}
where we use the i.i.d. property of the data, and $\e[\psi_j({\bf x})]=0$.

We finish the proof of conditional covariance process of $T_{1n}/\sigma_n$ converging to $1$. We are then about to show that the Lindeberg condition is fulfilled. In doing so, it suffices to prove, denoting $\mu_4=\e[\varepsilon^4]$,
\begin{align*}
&16\mu_{4}\DF{1}{\sigma_n^4}\sum_{i=2}^n \e\left[\sum_{j=1}^{i-1}\left(\sum_{k=k_{ \min}}^{k_{\max}} \mathbf{V}^{\top}_k({\bf x}_j)\mathbf{V}_k({\bf x}_i) \right)\varepsilon_j\right]^4 \notag \\
=& 16\mu_4^2\DF{1}{n^4k_{\max}^6}\sum_{i=2}^n\sum_{j=1}^{i-1}\e\left[\sum_{k=k_{ \min}}^{k_{\max}} \mathbf{V}^{\top}_k({\bf x}_j)\mathbf{V}_k({\bf x}_i) \right]^4\\
&+16\mu_4\DF{6\sigma_{\varepsilon}^4}{n^4k_{\max}^6}\sum_{i=3}^n\sum_{j_1=2}^{i-1}
\sum_{j_2=1}^{j_1-1}\e\left[\sum_{k=k_{ \min}}^{k_{\max}} \mathbf{V}^{\top}_k({\bf x}_{j_1})\mathbf{V}_k({\bf x}_i) \right]^2 \left[\sum_{k=k_{ \min}}^{k_{\max}} \mathbf{V}^{\top}_k({\bf x}_{j_2})\mathbf{V}_k({\bf x}_i) \right]^2\\
\le&\DF{C}{n^2}k_{\max}^2+\DF{C}{n}k_{\max}^2=o(1).
\end{align*}

Therefore, invoking Corollary 3.1 of \cite{peterhall1980}, $T_{1n}/\sigma_n\to_{\cal D}N(0,1)$ as $n\to\infty$.

In what follows, we show $T_{2n}/\sigma_n=o_P(1)$ and $T_{3n}/\sigma_n=o_P(1)$. Notice that  $\sigma_n^2=O_P(n^2k_{\max}^3)$, and then
\begin{align*}
&\DF{1}{\sigma_n}T_{2n}=\DF{1}{\sigma_n}\sum_{i=1}^n \sum_{j=1, \neq i}^n\sum_{k=k_{ \min}}^{k_{\max}} \mathbf{V}^{\top}_k({\bf x}_i) \mathbf{V}_k({\bf x}_j) [g({\bf x}_j, {\bm\theta}_0)-g({\bf x}_j,\widehat{\bm \theta})] [g({\bf x}_i, {\bm\theta}_0)-g({\bf x}_i,\widehat{\bm \theta})]\\
=&({\bm\theta}_0-\widehat{\bm \theta})^\top\left[\DF{1}{\sigma_n}\sum_{i=1}^n \sum_{j=1, \neq i}^n\sum_{k=k_{ \min}}^{k_{\max}}\DF{\partial}{\partial {\bm \theta}}g({\bf x}_i,{\bm\theta}_0) \mathbf{V}^{\top}_k({\bf x}_i) \mathbf{V}_k({\bf x}_j) \DF{\partial}{\partial {\bm \theta}^\top}g({\bf x}_j,{\bm\theta}_0) \right]({\bm\theta}_0-\widehat{\bm \theta})\\
=&\sigma_e^4\sqrt{n}({\bm\theta}_0-\widehat{\bm \theta})^\top\left[\DF{1}{n^2k_{\max}^{3/2}}\sum_{i=1}^n \sum_{j=1, \neq i}^n\sum_{k=k_{ \min}}^{k_{\max}}\DF{\partial}{\partial {\bm \theta}}g({\bf x}_i,{\bm\theta}_0) \mathbf{V}^{\top}_k({\bf x}_i) \mathbf{V}_k({\bf x}_j) \DF{\partial}{\partial {\bm \theta}^\top}g({\bf x}_j,{\bm\theta}_0)\right] \sqrt{n}({\bm\theta}_0-\widehat{\bm \theta}).
\end{align*}

Once we show the term in the square bracket is $o_P(1)$, the assertion shall be fulfilled. Observe that
\begin{align*}
&\DF{1}{n^2k_{\max}^{3/2}}\sum_{i=1}^n \sum_{j=1, \neq i}^n\sum_{k=k_{ \min}}^{k_{\max}}\DF{\partial}{\partial {\bm \theta}}g({\bf x}_i,{\bm\theta}_0) \mathbf{V}^{\top}_k({\bf x}_i) \mathbf{V}_k({\bf x}_j)  \DF{\partial}{\partial {\bm \theta}^\top}g({\bf x}_j,{\bm\theta}_0)\\
=&\DF{1}{n^2k_{\max}^{3/2}}\left[\sum_{i=1}^n \sum_{j=1}^n\sum_{k=k_{ \min}}^{k_{\max}}  \DF{\partial}{\partial {\bm \theta}}g({\bf x}_i,{\bm\theta}_0) \mathbf{V}^{\top}_k({\bf x}_i) \mathbf{V}_k({\bf x}_j)  \DF{\partial}{\partial {\bm \theta}^\top}g({\bf x}_j,{\bm\theta}_0)\right.\\
&\left.-\sum_{i=1}^n \sum_{k=k_{ \min}}^{k_{\max}}  \|\mathbf{V}_k({\bf x}_i)\|^2 \DF{\partial}{\partial {\bm \theta}}g({\bf x}_i,{\bm\theta}_0) \DF{\partial}{\partial {\bm \theta}^\top}g({\bf x}_i,{\bm\theta}_0)\right] \equiv  I_1-I_2, \ \ \text{say}.
\end{align*}

For $I_1$, notice that
\begin{align*}
I_1= &\DF{1}{n^2k_{\max}^{3/2}}\sum_{k=k_{ \min}}^{k_{\max}}\sum_{i=1}^n \DF{\partial}{\partial {\bm \theta}}g({\bf x}_i,{\bm\theta}_0) \mathbf{V}^{\top}_k({\bf x}_i) \sum_{j=1}^n\mathbf{V}_k({\bf x}_j)  \DF{\partial}{\partial {\bm \theta}^\top}g({\bf x}_j,{\bm\theta}_0)\\
=&\DF{1}{k_{\max}^{3/2}}\sum_{k=k_{ \min}}^{k_{\max}}\left[\e \DF{\partial}{\partial {\bm \theta}}g({\bf x},{\bm\theta}_0) \mathbf{V}^{\top}_k({\bf x})\right] \left[\e \mathbf{V}_k({\bf x}_j)  \DF{\partial}{\partial {\bm \theta}^\top}g({\bf x}_j,{\bm\theta}_0)\right]\\
&+\DF{1}{k_{\max}^{3/2}}\sum_{k=k_{ \min}}^{k_{\max}}\left[\DF{1}{n}\sum_{i=1}^n \DF{\partial}{\partial {\bm \theta}}g({\bf x}_i,{\bm\theta}_0) \mathbf{V}^{\top}_k({\bf x}_i)-\e \DF{\partial}{\partial {\bm \theta}}g({\bf x},{\bm\theta}_0) \mathbf{V}^{\top}_k({\bf x})\right]\ \left[\e \mathbf{V}_k({\bf x}_j)  \DF{\partial}{\partial {\bm \theta}^\top}g({\bf x}_j,{\bm\theta}_0)\right]\\
&+\DF{1}{k_{\max}^{3/2}}\sum_{k=k_{ \min}}^{k_{\max}}\left[\e \DF{\partial}{\partial {\bm \theta}}g({\bf x},{\bm\theta}_0) \mathbf{V}^{\top}_k({\bf x})\right] \left[\DF{1}{n}\sum_{j=1}^n\mathbf{V}_k({\bf x}_j)  \DF{\partial}{\partial {\bm \theta}^\top}g({\bf x}_j,{\bm\theta}_0)-\e \mathbf{V}_k({\bf x}_j)  \DF{\partial}{\partial {\bm \theta}^\top}g({\bf x}_j,{\bm\theta}_0)\right]\\
&+\DF{1}{k_{\max}^{3/2}}\sum_{k=k_{ \min}}^{k_{\max}}\left[\DF{1}{n}\sum_{i=1}^n \DF{\partial}{\partial {\bm \theta}}g({\bf x}_i,{\bm\theta}_0) \mathbf{V}^{\top}_k({\bf x}_i)-\e \DF{\partial}{\partial {\bm \theta}}g({\bf x},{\bm\theta}_0) \mathbf{V}^{\top}_k({\bf x})\right]\\
&\qquad\times \left[\DF{1}{n}\sum_{i=1}^n \DF{\partial}{\partial {\bm \theta}}g({\bf x}_i,{\bm\theta}_0) \mathbf{V}^{\top}_k({\bf x}_i)-\e \DF{\partial}{\partial {\bm \theta}}g({\bf x},{\bm\theta}_0) \mathbf{V}^{\top}_k({\bf x})\right]\notag \\
\equiv & I_{11}+I_{12}+I_{13}+I_{14}, \ \ \text{say}.
\end{align*}

It suffices to show $I_{11}=o(1)$, $I_{12}=o_P(1)$, since $I_{13}$ has the same order as $I_{12}$ while $I_{14}$ is of smaller order. In fact,
\begin{align*}
\|I_{11}\|\le & \DF{1}{k_{\max}^{3/2}}\sum_{k=k_{ \min}}^{k_{\max}}\left\|\e \DF{\partial}{\partial {\bm \theta}}g({\bf x},{\bm\theta}_0) \mathbf{V}^{\top}_k({\bf x})\right\|^2\le \DF{1}{k_{\max}^{3/2}}\sum_{k=k_{ \min}}^{k_{\max}} \left\| \DF{\partial}{\partial {\bm \theta}}g({\bf x},{\bm\theta}_0)\right\|^2=O(k_{\max}^{-1/2})=o(1),
\end{align*}
where the second inequality is because $\e \DF{\partial}{\partial {\bm \theta}}g({\bf x},{\bm\theta}_0) \mathbf{V}^{\top}_k({\bf x})$ is the coefficients of $\DF{\partial}{\partial {\bm \theta}}g({\bf x},{\bm\theta}_0)$ projecting on the space generated by $\mathbf{V}_k({\bf x})$.

Moreover, we have
\begin{align*}
&\e\|I_{12}\| \\
\le&\DF{1}{k_{\max}^{3/2}}\sum_{k=k_{ \min}}^{k_{\max}}\left[\e\left\|\DF{1}{n}\sum_{i=1}^n \DF{\partial}{\partial {\bm \theta}}g({\bf x}_i,{\bm\theta}_0) \mathbf{V}^{\top}_k({\bf x}_i)-\e \DF{\partial}{\partial {\bm \theta}}g({\bf x},{\bm\theta}_0) \mathbf{V}^{\top}_k({\bf x})\right\|^2\right]^{1/2}\ \left\|\e \mathbf{V}_k({\bf x}_j)  \DF{\partial}{\partial {\bm \theta}^\top}g({\bf x}_j,{\bm\theta}_0)\right\|\\
\le&\DF{C}{k_{\max}^{3/2}}\sum_{k=k_{ \min}}^{k_{\max}}\left[\DF{1}{n^2}\sum_{i=1}^n \e\left\|\DF{\partial}{\partial {\bm \theta}}g({\bf x}_i,{\bm\theta}_0) \mathbf{V}^{\top}_k({\bf x}_i)-\e \DF{\partial}{\partial {\bm \theta}}g({\bf x},{\bm\theta}_0) \mathbf{V}^{\top}_k({\bf x})\right\|^2\right]^{1/2}\\
\le&\DF{C}{k_{\max}^{3/2}}\sum_{k=k_{ \min}}^{k_{\max}}\left[\DF{1}{n} \e\left\|\DF{\partial}{\partial {\bm \theta}}g({\bf x},{\bm\theta}_0) \mathbf{V}^{\top}_k({\bf x})\right\|^2\right]^{1/2}\leq \DF{C}{k_{\max}^{3/2}}\sum_{k=k_{ \min}}^{k_{\max}}\DF{1}{\sqrt{n}}\sqrt{k}=O(n^{-1/2})=o(1).
\end{align*}

For $I_2$, observe that
\begin{align*}
\e\|I_2\|\le &\DF{1}{n^2k_{\max}^{3/2}} \sum_{i=1}^n \sum_{k=k_{ \min}}^{k_{\max}}  \e\|\mathbf{V}_k({\bf x}_i)\|^2 \|\DF{\partial}{\partial {\bm \theta}}g({\bf x}_i,{\bm\theta}_0)\|^2\\
\le&\DF{C}{nk_{\max}^{3/2}}  \sum_{k=k_{ \min}}^{k_{\max}}  (\e\|\mathbf{V}_k({\bf x})\|^4)^{1/2}=\DF{C}{nk_{\max}^{3/2}}  \sum_{k=k_{ \min}}^{k_{\max}}k=O(n^{-1}k_{\max}^{1/2}).
\end{align*}
This finishes the proof of $T_2/\sigma_n=o_P(1)$. 

To show $T_3/\sigma_n=o_P(1)$, note that
\begin{align*}
\DF{1}{\sigma_n}T_3=&({\bm\theta}_0-\widehat{\bm \theta})^\top\DF{1}{\sigma_n}\sum_{i=1}^n \sum_{j=1, \neq i}^n\sum_{k=k_{ \min}}^{k_{\max}} \mathbf{V}^{\top}_k({\bf x}_i) \mathbf{V}_k({\bf x}_j) \DF{\partial}{\partial {\bm \theta}}g({\bf x}_j,{\bm\theta}_0)\varepsilon_i\\
=&\sqrt{n}({\bm\theta}_0-\widehat{\bm \theta})^\top\DF{1}{\sigma_n}\sum_{i=1}^n \sum_{j=1, \neq i}^n\sum_{k=k_{ \min}}^{k_{\max}} \mathbf{V}^{\top}_k({\bf x}_i) \mathbf{V}_k({\bf x}_j) \DF{\partial}{\partial {\bm \theta}}g({\bf x}_j,{\bm\theta}_0)\varepsilon_i
\end{align*}
and $\sqrt{n}({\bm\theta}_0-\widehat{\bm \theta})=O_P(1)$, so that it suffices to show
\begin{align*}
\DF{1}{\sigma_n\sqrt{n}}\tilde{T}_3=&\DF{1}{n^{3/2}k_{\max}^{3/2}}\sum_{i=1}^n \sum_{j=1, \neq i}^n\sum_{k=k_{ \min}}^{k_{\max}} \mathbf{V}^{\top}_k({\bf x}_i) \mathbf{V}_k({\bf x}_j) \DF{\partial}{\partial {\bm \theta}}g({\bf x}_j,{\bm\theta}_0)\varepsilon_i\\
=&\DF{1}{n^{3/2}k_{\max}^{3/2}}\sum_{i=1}^n\sum_{k=k_{ \min}}^{k_{\max}}\mathbf{V}^{\top}_k({\bf x}_i) \left[\sum_{j=1,\ne i}^n  \mathbf{V}_k({\bf x}_j) \DF{\partial}{\partial {\bm \theta}}g({\bf x}_j,{\bm\theta}_0)\right]\varepsilon_i=o_P(1).
\end{align*}

In fact, under $H_0$,
\begin{align*}
&\e\left\|\DF{1}{\sigma_n\sqrt{n}}\tilde{T}_3\right\|^2=\DF{1}{n^{3}k_{\max}^{3}}
\e\left\|\sum_{i=1}^n\sum_{k=k_{ \min}}^{k_{\max}}\mathbf{V}^{\top}_k({\bf x}_i) \left[\sum_{j=1,\ne i}^n \mathbf{V}_k({\bf x}_j) \DF{\partial}{\partial {\bm \theta}}g({\bf x}_j,{\bm\theta}_0)\right] \varepsilon_i\right\|^2\\
=&\DF{\sigma_e^2}{n^{3}k_{\max}^{3}}\sum_{i=1}^n\e\left\|\sum_{k=k_{ \min}}^{k_{\max}}\mathbf{V}^{\top}_k({\bf x}_i) \left[\sum_{j=1,\ne i}^n  \mathbf{V}_k({\bf x}_j) \DF{\partial}{\partial {\bm \theta}}g({\bf x}_j,{\bm\theta}_0)\right]\right\|^2\\
=&\DF{\sigma_e^2}{n^{3}k_{\max}^{3}}\sum_{i=1}^n\e\left\|\sum_{j=1,\ne i}^n\sum_{k=k_{ \min}}^{k_{\max}}\mathbf{V}^{\top}_k({\bf x}_i)   \mathbf{V}_k({\bf x}_j) \DF{\partial}{\partial {\bm \theta}}g({\bf x}_j,{\bm\theta}_0)\right\|^2\\
=&\DF{\sigma_e^2}{n^{3}k_{\max}^{3}}\sum_{i=1}^n\sum_{j=1,\ne i}^n\e\left\|\sum_{k=k_{ \min}}^{k_{\max}}\mathbf{V}^{\top}_k({\bf x}_i)   \mathbf{V}_k({\bf x}_j) \DF{\partial}{\partial {\bm \theta}}g({\bf x}_j,{\bm\theta}_0)\right\|^2\\
&+\DF{\sigma_e^2}{n^{3}k_{\max}^{3}}\sum_{i=1}^n\sum_{j_1=1,\ne i}^n\sum_{j_2=1,\ne i,\ne j_1}^n\\
&\qquad \e \sum_{k=k_{ \min}}^{k_{\max}}  (\e\DF{\partial}{\partial {\bm \theta}}g({\bf x}_{j_1},{\bm\theta}_0)^\top \mathbf{V}_k^{\top}({\bf x}_{j_1}))\mathbf{V}_k({\bf x}_i)   \sum_{k=k_{ \min}}^{k_{\max}}\mathbf{V}^{\top}_k({\bf x}_i) \e(\mathbf{V}_k({\bf x}_{j_2}) \DF{\partial}{\partial {\bm \theta}}g({\bf x}_{j_2},{\bm\theta}_0))\\
\le&\DF{\sigma_e^2}{n^{3}k_{\max}^{3}}\sum_{i=1}^n\sum_{j=1,\ne i}^n\e\left(\|\DF{\partial}{\partial {\bm \theta}}g({\bf x}_j, {\bm\theta}_0)\|\sum_{k=k_{ \min}}^{k_{\max}}|\mathbf{V}^{\top}_k({\bf x}_i)   \mathbf{V}_k({\bf x}_j)| \right)^2\\
&+\DF{\sigma_e^2}{n^{3}k_{\max}^{3}}\sum_{i=1}^n\sum_{j_1=1,\ne i}^n\sum_{j_2=1,\ne i,\ne j_1}^n\\
&\qquad \sum_{k_1=k_{ \min}}^{k_{\max}}\sum_{k_2=k_{ \min}}^{k_{\max}}  (\e \DF{\partial}{\partial {\bm \theta}}g({\bf x}_{j_1},{\bm\theta}_0) \mathbf{V}_{k_1}^{\top}({\bf x}_{j_1}))\e[\mathbf{V}_{k_1}({\bf x}_i)\mathbf{V}^{\top}_{k_2}({\bf x}_i)] \e(\mathbf{V}_{k_2}({\bf x}_{j_2}) \DF{\partial}{\partial {\bm \theta}}g({\bf x}_{j_2},{\bm\theta}_0))\\
\le&\DF{c_1}{n}k_{\max}+\DF{c_2}{k_{\max}}=o(1),
\end{align*}
where we have used the following results: (1) $\e\left(\sum_{k_1=k_{ \min}}^{k_{\max}}|\mathbf{V}^{\top}_k({\bf x}_i)\mathbf{V}_k({\bf x}_j)|\right)^4\le C k_{\max}^8$ for some absolutely constant $C$ that is the same order as $\e|a_{ij}|^4$ we used before; (2) $\e[\mathbf{V}_{k_1}({\bf x}_i)\mathbf{V}^{\top}_{k_2}({\bf x}_i)]$ is a $k_1\times k_2$ matrix with 1's on the diagonal and all other elements being zero; $\e\DF{\partial}{\partial {\bm \theta}}g({\bf x}_{j_1},{\bm\theta}_0) \mathbf{V}_{k_1}^{\top}({\bf x}_{j_1})$ and $\e(\mathbf{V}_{k_2}({\bf x}_{j_2}) \DF{\partial}{\partial {\bm \theta}}g({\bf x}_{j_2},{\bm\theta}_0))$ are matrices of coefficients of projecting ${\bf x}_{j_1}$ and ${\bf x}_{j_2}$ onto $\mathbf{V}_{k_1}^{\top}({\bf x}_{j_1})$ and $\mathbf{V}_{k_2}^{\top}({\bf x}_{j_2})$, so that their norms are bounded by $\e\left\|\DF{\partial}{\partial {\bm \theta}}g({\bf x},{\bm\theta}_0)\right\|$. The proof is finished.
\end{proof}

\begin{proof}[Proof of Theorem \ref{tha.2}]
Under $H_1$, $m_n(x)=a_nm(x)$ with $\|m(x)\|>0$. Our model is written as $y=g({\bf x},{\bm \theta}_0)+m_n({\bf x})+e$ where $\e(e|{\bf x})=0$. Given a sample $\{(y_i, {\bf x}_i), i=1,2,\cdots, n\}$ and a truncation parameter $k$, though $a_n\to 0$, the SP method can still estimate ${\bm \theta}_0$ consistently from the sample models, because ${\bf M}_v$ does not depend on $a_n$ at all. Then we write $\widetilde{e}_i=y_i-g({\bf x}_i,\widehat{\bm \theta})=e_i+m_n({\bf x}_i)+g({\bf x}_i,{\bm \theta}_0)-g({\bf x}_i,\widehat{\bm \theta})$.

It follows that
\begin{align*}
 \widehat{\sigma}_e^2 =& \frac{1}{n}\sum_{i=1}^n \widetilde{e}_i^2=\frac{1}{n} \sum_{i=1}^n [e_i+m_n({\bf x}_i)+g({\bf x}_i,{\bm \theta}_0)-g({\bf x}_i,\widehat{\bm \theta})]^2\\
 =&\frac{1}{n} \sum_{i=1}^n e_i^2+a_n^2\frac{1}{n} \sum_{i=1}^n m({\bf x}_i)^2+
 \sum_{i=1}^n [g({\bf x}_i,{\bm \theta}_0)-g({\bf x}_i,\widehat{\bm \theta})]^2\\
 &+\frac{2}{n} \sum_{i=1}^n [g({\bf x}_i,{\bm \theta}_0)-g({\bf x}_i,\widehat{\bm \theta})]^2e_i+a_n\frac{2}{n} \sum_{i=1}^n m({\bf x}_i)e_i\\
 &+a_n\frac{2}{n} \sum_{i=1}^n [g({\bf x}_i,{\bm \theta}_0)-g({\bf x}_i,\widehat{\bm \theta})]^2]m({\bf x}_i)  \to_P \e[e^2]
\end{align*}
as $n\to\infty$, in view of ${\bm \theta}_0-\widehat{\bm \theta}=o_P(1)$, the first order Taylor approximation of $g({\bf x}_i,{\bm \theta}_0)-g({\bf x}_i,\widehat{\bm \theta})=\DF{\partial}{\partial {\bm \theta}^\top}g({\bf x},{\bm\theta}_0)({\bm \theta}_0-\widehat{\bm \theta})$, and the i.i.d. data process of $\{{\bf x}_i, e_i\}$. Hence, in what follows, we replace $\widehat{\sigma}_e^2$ involved in $\sigma_n^2$ by $\e[e^2]$, while noticing again $\sigma_n^2=O(n^2k_{\max}^3)$.

Observe that
\begin{align*}
T_n =& \sum_{i=1}^n \sum_{j=1, \neq i}^n \sum_{k=k_{ \min}}^{k_{\max}} \mathbf{V}^{\top}_k({\bf x}_i) \mathbf{V}_k({\bf x}_j) \widetilde{e}_i \, \widetilde{e}_j = 2\sum_{i=2}^n\sum_{j=1}^{i-1}\sum_{k=k_{ \min}}^{k_{\max}} \mathbf{V}^{\top}_k({\bf x}_i) \mathbf{V}_k({\bf x}_j) \widetilde{e}_i \, \widetilde{e}_j\\
=&2\sum_{i=2}^n\sum_{j=1}^{i-1}\sum_{k=k_{ \min}}^{k_{\max}} \mathbf{V}^{\top}_k({\bf x}_i) \mathbf{V}_k({\bf x}_j)[g({\bf x}_i,{\bm \theta}_0)-g({\bf x}_i,\widehat{\bm \theta})+m_n({\bf x}_i)+e_i]\\
&\qquad \times[g({\bf x}_j,{\bm \theta}_0)-g({\bf x}_j,\widehat{\bm \theta})+m_n({\bf x}_j)+e_j]\\
=&2\sum_{i=2}^n\sum_{j=1}^{i-1}\sum_{k=k_{ \min}}^{k_{\max}} \mathbf{V}^{\top}_k({\bf x}_i) \mathbf{V}_k({\bf x}_j)e_ie_j\\
&+2\sum_{i=2}^n\sum_{j=1}^{i-1}\sum_{k=k_{ \min}}^{k_{\max}} \mathbf{V}^{\top}_k({\bf x}_i) \mathbf{V}_k({\bf x}_j)e_i[g({\bf x}_j,{\bm \theta}_0)-g({\bf x}_j,\widehat{\bm \theta})+m_n({\bf x}_j)]\\
&+2\sum_{i=2}^n\sum_{j=1}^{i-1}\sum_{k=k_{ \min}}^{k_{\max}} \mathbf{V}^{\top}_k({\bf x}_i) \mathbf{V}_k({\bf x}_j)e_j[g({\bf x}_i,{\bm \theta}_0)-g({\bf x}_i,\widehat{\bm \theta})+m_n({\bf x}_i)]\\
&+2\sum_{i=2}^n\sum_{j=1}^{i-1}\sum_{k=k_{ \min}}^{k_{\max}} \mathbf{V}^{\top}_k({\bf x}_i) \mathbf{V}_k({\bf x}_j)[g({\bf x}_i,{\bm \theta}_0)-g({\bf x}_i,\widehat{\bm \theta})+m_n({\bf x}_i)]\\
&\qquad \times[g({\bf x}_j,{\bm \theta}_0)-g({\bf x}_j,\widehat{\bm \theta})+m_n({\bf x}_j)]\\
=&T_{1n}+T_{2n}+T_{3n}+T_{4n}, \ \ \text{say}.
\end{align*}

Since $e_i$ has the same property as $\varepsilon_i$ under $H_0$, it follows from the proof of Theorem \ref{th4.1} that $T_{1n}/\sigma_n\to_{\cal D}N(0,1)$. To fulfil the proof, we shall then show that $T_{2n}/\sigma_n$ and $T_{3n}/\sigma_n$ are both of $O_P(a_nk_{\max}^{1/2})$, while $T_{4n}/\sigma_n\ge O_P(a_n^2n/k_{\max}^{1/2})\to_P\infty$.

As a matter of fact, we have
\begin{align*}
\DF{1}{\sigma_n}T_{2n}=&\DF{1}{\sigma_n}\sum_{i=2}^n\sum_{j=1}^{i-1}\sum_{k=k_{ \min}}^{k_{\max}} \mathbf{V}^{\top}_k({\bf x}_i) \mathbf{V}_k({\bf x}_j)e_i[g({\bf x}_j,{\bm \theta}_0)-g({\bf x}_j,\widehat{\bm \theta})+m_n({\bf x}_j)]\\
=&\sqrt{n}({\bm\theta}_0-\widehat{\bm \theta})^\top \DF{1}{n^{3/2}k_{\max}^{3/2}}\sum_{i=2}^n\sum_{j=1}^{i-1}\sum_{k=k_{ \min}}^{k_{\max}} \mathbf{V}^{\top}_k({\bf x}_i) \mathbf{V}_k({\bf x}_j)\DF{\partial}{\partial {\bm \theta}^\top}g({\bf x}_j,{\bm\theta}_0)e_i\\
&+\DF{a_n}{nk_{\max}^{3/2}}\sum_{i=2}^n\sum_{j=1}^{i-1}\sum_{k=k_{ \min}}^{k_{\max}} \mathbf{V}^{\top}_k({\bf x}_i) \mathbf{V}_k({\bf x}_j)m({\bf x}_j)e_i,
\end{align*}
and
\begin{align*}
&\e\left\|\DF{1}{n^{3/2}k_{\max}^{3/2}}\sum_{i=2}^n\sum_{j=1}^{i-1}\sum_{k=k_{ \min}}^{k_{\max}} \mathbf{V}^{\top}_k({\bf x}_i) \mathbf{V}_k({\bf x}_j)\DF{\partial}{\partial {\bm \theta}^\top}g({\bf x}_j,{\bm\theta}_0)e_i\right\|^2\\
=&\DF{\sigma_e^2}{n^{3}k_{\max}^{3}} \sum_{i=2}^n\e\left\|\sum_{j=1}^{i-1}\sum_{k=k_{ \min}}^{k_{\max}} \mathbf{V}^{\top}_k({\bf x}_i) \mathbf{V}_k({\bf x}_j)\DF{\partial}{\partial {\bm \theta}^\top}g({\bf x}_j,{\bm\theta}_0)\right\|^2=o(1)
\end{align*}
in view of the same derivation of $\DF{1}{\sigma_n\sqrt{n}}\tilde{T}_3$ in Theorem \ref{th4.1}; moreover,
\begin{align*}
&\e\left|\DF{a_n}{nk_{\max}^{3/2}}\sum_{i=2}^n\sum_{j=1}^{i-1}\sum_{k=k_{ \min}}^{k_{\max}} \mathbf{V}^{\top}_k({\bf x}_i) \mathbf{V}_k({\bf x}_j)m({\bf x}_j)e_i\right|^2\\
=&\DF{\sigma_e^2a_n^2}{n^{2}k_{\max}^{3}} \sum_{i=2}^n\e\left|\sum_{j=1}^{i-1}\sum_{k=k_{ \min}}^{k_{\max}} \mathbf{V}^{\top}_k({\bf x}_i) \mathbf{V}_k({\bf x}_j)m({\bf x}_j)\right|^2\\
=&\DF{\sigma_e^2a_n^2}{n^{2}k_{\max}^{3}} \sum_{i=2}^n\e\left|\sum_{k=k_{ \min}}^{k_{\max}} \mathbf{V}^{\top}_k({\bf x}_i) \sum_{j=1}^{i-1}\mathbf{V}_k({\bf x}_j)m({\bf x}_j)\right|^2\\
\le&\DF{\sigma_e^2a_n^2}{n^{2}k_{\max}^{3}} \sum_{i=2}^n\sum_{k=k_{ \min}}^{k_{\max}} \e\|\mathbf{V}^{\top}_k({\bf x}_i)\|^2 \sum_{k=k_{ \min}}^{k_{\max}} \e\left\|\sum_{j=1}^{i-1}\mathbf{V}_k({\bf x}_j)m({\bf x}_j)\right\|^2\\
=&\DF{Ca_n^2}{n^2k_{\max}}\sum_{i=2}^n\sum_{k=k_{ \min}}^{k_{\max}} \e\left\|\sum_{j=1}^{i-1}[\mathbf{V}_k({\bf x}_j)m({\bf x}_j)-\e \mathbf{V}_k({\bf x}_j)m({\bf x}_j)+\e \mathbf{V}_k({\bf x}_j)m({\bf x}_j)]\right\|^2\\
=&\DF{Ca_n^2}{n^2k_{\max}}\sum_{i=2}^n\sum_{k=k_{ \min}}^{k_{\max}} \left[\sum_{j=1}^{i-1}\e\|\mathbf{V}_k({\bf x}_j)m({\bf x}_j)-\e \mathbf{V}_k({\bf x}_j)m({\bf x}_j)\|^2+\|\e \mathbf{V}_k({\bf x}_j)m({\bf x}_j)\|^2\right]\\
\le&\DF{Ca_n^2}{k_{\max}}\sum_{k=k_{ \min}}^{k_{\max}} \left[\e\|\mathbf{V}_k({\bf x})m({\bf x})\|^2+\|\e \mathbf{V}_k({\bf x})m({\bf x})\|^2\right] \leq \DF{Ca_n^2}{k_{\max}}\sum_{k=k_{ \min}}^{k_{\max}}k=Ck_{\max}a_n^2,
\end{align*}
where $C$ is absolute constant that may be different in each appearance. In addition, it is evident that $\DF{1}{\sigma_n}T_{3n}$ has the same order as $\DF{1}{\sigma_n}T_{2n}$.

Furthermore, we obtain
\begin{align*}
&\DF{1}{\sigma_n}T_{4n}\\
=&\DF{2}{\sigma_n}\sum_{i=2}^n\sum_{j=1}^{i-1}\sum_{k=k_{ \min}}^{k_{\max}} \mathbf{V}^{\top}_k({\bf x}_i) \mathbf{V}_k({\bf x}_j)[g({\bf x}_j,{\bm \theta}_0)-g({\bf x}_j,\widehat{\bm \theta})+m_n({\bf x}_j)][g({\bf x}_i,{\bm \theta}_0)-g({\bf x}_i,\widehat{\bm \theta})+m_n({\bf x}_i)]\\
=&\DF{2}{\sigma_n}\sum_{i=2}^n\sum_{j=1}^{i-1}\sum_{k=k_{ \min}}^{k_{\max}} \mathbf{V}^{\top}_k({\bf x}_i) \mathbf{V}_k({\bf x}_j)m_n({\bf x}_j)m_n({\bf x}_i)\\
&+\DF{2}{\sigma_n}\sum_{i=2}^n\sum_{j=1}^{i-1}\sum_{k=k_{ \min}}^{k_{\max}} \mathbf{V}^{\top}_k({\bf x}_i) \mathbf{V}_k({\bf x}_j)[g({\bf x}_j,{\bm \theta}_0)-g({\bf x}_j,\widehat{\bm \theta})][g({\bf x}_i,{\bm \theta}_0)-g({\bf x}_i,\widehat{\bm \theta})]\\
&+\DF{2}{\sigma_n}\sum_{i=2}^n\sum_{j=1}^{i-1}\sum_{k=k_{ \min}}^{k_{\max}} \mathbf{V}^{\top}_k({\bf x}_i) \mathbf{V}_k({\bf x}_j)[g({\bf x}_j,{\bm \theta}_0)-g({\bf x}_j,\widehat{\bm \theta})]m_n({\bf x}_i)\\
&+\DF{2}{\sigma_n}\sum_{i=2}^n\sum_{j=1}^{i-1}\sum_{k=k_{ \min}}^{k_{\max}} \mathbf{V}^{\top}_k({\bf x}_i) \mathbf{V}_k({\bf x}_j)m_n({\bf x}_j)[g({\bf x}_i,{\bm \theta}_0)-g({\bf x}_i,\widehat{\bm \theta})]\\
=&\DF{1}{\sigma_n}\sum_{k=k_{ \min}}^{k_{\max}}\left\|\sum_{i=1}^n \mathbf{V}^{\top}_k({\bf x}_i) m_n({\bf x}_i)\right\|^2 -\DF{1}{\sigma_n}\sum_{k=k_{\min}}^{k_{\max}}\sum_{i=1}^n \|\mathbf{V}^{\top}_k({\bf x}_i) m_n({\bf x}_i)\|^2\\
&+\sqrt{n}({\bm\theta}_0-\widehat{\bm \theta})^\top \left[\DF{1}{n\sigma_n}\sum_{k=k_{ \min}}^{k_{\max}}\left(\sum_{i=1}^n \DF{\partial}{\partial {\bm \theta}^\top}g({\bf x}_i,{\bm\theta}_0)\mathbf{V}^{\top}_k({\bf x}_i)\right)\right.\\
 &\hspace{1cm}\times \left.\left(\sum_{i=1}^n \DF{\partial}{\partial {\bm \theta}^\top}g({\bf x}_i,{\bm\theta}_0)\mathbf{V}^{\top}_k({\bf x}_i)\right)^\top\right]\sqrt{n}({\bm\theta}_0-\widehat{\bm \theta})\\
&-\sqrt{n}({\bm\theta}_0-\widehat{\bm \theta})^\top \left[\DF{1}{n\sigma_n}\sum_{k=k_{ \min}}^{k_{\max}}\sum_{i=1}^n \DF{\partial}{\partial {\bm \theta}^\top}g({\bf x}_i,{\bm\theta}_0)\mathbf{V}^{\top}_k({\bf x}_i)\mathbf{V}_k({\bf x}_i)\DF{\partial}{\partial {\bm \theta}}g({\bf x}_i,{\bm\theta}_0)\right]\sqrt{n}({\bm\theta}_0-\widehat{\bm \theta})\\
&+\DF{2}{\sigma_n}\sum_{i=2}^n\sum_{j=1}^{i-1}\sum_{k=k_{ \min}}^{k_{\max}} \mathbf{V}^{\top}_k({\bf x}_i) \mathbf{V}_k({\bf x}_j)\DF{\partial}{\partial {\bm \theta}^\top}g({\bf x}_i,{\bm\theta}_0)({\bm\theta}_0-\widehat{\bm \theta})m_n({\bf x}_i)\\
&+\DF{2}{\sigma_n}\sum_{i=2}^n\sum_{j=1}^{i-1}\sum_{k=k_{ \min}}^{k_{\max}} \mathbf{V}^{\top}_k({\bf x}_i) \mathbf{V}_k({\bf x}_j)m_n({\bf x}_j)\DF{\partial}{\partial {\bm \theta}^\top}g({\bf x}_i,{\bm\theta}_0)({\bm\theta}_0-\widehat{\bm \theta})
\notag \\
=&J_1+\cdots+J_6, \ \ \text{say}.
\end{align*}

Notice that
\begin{align*}
J_1=&\DF{a_n^2}{\sigma_n}\sum_{k=k_{ \min}}^{k_{\max}}\left\|\sum_{i=1}^n \mathbf{V}^{\top}_k({\bf x}_i) m({\bf x}_i)\right\|^2= \DF{a_n^2n}{k_{\max}^{3/2}}\sum_{k=k_{ \min}}^{k_{\max}} \left\|\DF{1}{n}\sum_{i=1}^n \mathbf{V}^{\top}_k({\bf x}_i)m({\bf x}_i)\right\|^2\\
=&\DF{a_n^2n}{k_{\max}^{3/2}}\sum_{k=k_{ \min}}^{k_{\max}}\left\|\e \mathbf{V}^{\top}_k({\bf x})m({\bf x})+\DF{1}{n}\sum_{i=1}^n [\mathbf{V}^{\top}_k({\bf x}_i)m({\bf x}_i)-\e \mathbf{V}^{\top}_k({\bf x}_i)m({\bf x}_i)]\right\|^2\\
=&\DF{a_n^2n}{k_{\max}^{3/2}}\sum_{k=k_{ \min}}^{k_{\max}}\|\e \mathbf{V}^{\top}_k({\bf x})m({\bf x})\|^2+
\DF{a_n^2n}{k_{\max}^{3/2}}\sum_{k=k_{ \min}}^{k_{\max}} \left\|\DF{1}{n}\sum_{i=1}^n [\mathbf{V}^{\top}_k({\bf x}_i)m({\bf x}_i)-\e \mathbf{V}^{\top}_k({\bf x}_i)m({\bf x}_i)]\right\|^2\\
&+2\DF{a_n^2n}{k_{\max}^{3/2}}\sum_{k=k_{ \min}}^{k_{\max}}[\e \mathbf{V}_k({\bf x}) m({\bf x})]^\top \left(\DF{1}{n}\sum_{i=1}^n [\mathbf{V}_k({\bf x}_i)m({\bf x}_i)-\e \mathbf{V}_k({\bf x}_i)m({\bf x}_i)]\right)\\
=&\DF{a_n^2n}{k_{\max}^{3/2}}\sum_{k=k_{ \min}}^{k_{\max}}\|\tilde{\bm \gamma}\|^2+
\DF{a_n^2n}{k_{\max}^{3/2}}\sum_{k=k_{ \min}}^{k_{\max}} \left\|\DF{1}{n}\sum_{i=1}^n [\mathbf{V}^{\top}_k({\bf x}_i)m({\bf x}_i)-\e \mathbf{V}^{\top}_k({\bf x}_i)m({\bf x}_i)]\right\|^2\\
&+2\DF{a_n^2n}{k_{\max}^{3/2}}\sum_{k=k_{ \min}}^{k_{\max}}[\e \mathbf{V}_k({\bf x}) m({\bf x})]^\top \left(\DF{1}{n}\sum_{i=1}^n [\mathbf{V}_k({\bf x}_i)m({\bf x}_i)-\e \mathbf{V}_k({\bf x}_i) m({\bf x}_i)]\right)\\
\ge&\DF{a_n^2n}{k_{\max}^{3/2}}\sum_{k=k_{ \min}}^{k_{\max}}[\e m^2({\bf x})-\sum_{j=k}^\infty \tilde{\gamma}_j^2]
+\DF{a_n^2n}{k_{\max}^{3/2}} \cdot O_P\left(\sum_{k=k_{ \min}}^{k_{\max}}\DF{k}{n}\right)\\
&-2\DF{a_n^2n}{k_{\max}^{3/2}}\sum_{k=k_{ \min}}^{k_{\max}}\|\e \mathbf{V}_k({\bf x}) m({\bf x})\| \left\|\DF{1}{n}\sum_{i=1}^n [\mathbf{V}_k({\bf x}_i)m({\bf x}_i)-\e \mathbf{V}_k({\bf x}_i)m({\bf x}_i)]\right\|\\
\ge&\|m( x)\|^2 \DF{a_n^2n}{k_{\max}^{1/2}}-\DF{a_n^2n}{k_{\max}^{1/2}}\sum_{j=k_{\min}}^\infty \gamma_j^2+O_P(a_n^2k_{\max}^{1/2})-2\|m(x)\| \DF{a_n^2n}{k_{\max}^{3/2}} \DF{1}{\sqrt{n}} O_P\left(\sum_{k=k_{ \min}}^{k_{\max}}\sqrt{k}\right)\\
=&\|m(x)\|^2 \DF{a_n^2n}{k_{\max}^{1/2}}-O_P\left(\DF{a_n^2n}{k_{\max}^{1/2}}
\sum_{j=k_{\min}}^\infty \gamma_j^2\right)+O_P(a_n^2k_{\max}^{1/2})-O_P(a_n^2\sqrt{n}) \\
=&\DF{a_n^2n}{k_{\max}^{1/2}}(1+o_P(1)),
\end{align*}
where $\tilde\gamma_j$ are coefficients in the expansion of $m(x)$, $\tilde{\bm \gamma}=(\tilde\gamma_1, \cdots, \tilde\gamma_k)^\top$ and by Parseval equality $\|m(x)\|^2=\e[m^2({\bf x})]=\sum_{j=1}^\infty \tilde{\gamma}_j^2$, and the order of $\sum_k\left\|\DF{1}{n}\sum_{i=1}^n [\mathbf{V}^{\top}_k({\bf x}_i) m({\bf x}_i)-\e \mathbf{V}^{\top}_k({\bf x}_i)m({\bf x}_i)]\right\|^2=O_P\left(\sum_{k =k_{\min}}^{k_{\max}} k/n\right)$ is derived straightforwardly so that we omit it.

Next, we analyze $J_2, \cdots, J_6$ term by term. For $J_2$,
\begin{align*}
 J_2=&\DF{1}{\sigma_n}\sum_{k=k_{\min}}^{k_{\max}}\sum_{i=1}^n \|\mathbf{V}^{\top}_k({\bf x}_i) m_n({\bf x}_i)\|^2
 =\DF{a_n^2}{k_{\max}^{3/2}}\sum_{k=k_{\min}}^{k_{\max}}\DF{1}{n}\sum_{i=1}^n \|\mathbf{V}^{\top}_k({\bf x}_i)m({\bf x}_i)\|^2\\
 =&\DF{a_n^2}{k_{\max}^{3/2}}\sum_{k=k_{\min}}^{k_{\max}}\e\|\mathbf{V}^{\top}_k({\bf x}) m({\bf x})\|^2 +\DF{a_n^2}{k_{\max}^{3/2}}\sum_{k=k_{\min}}^{k_{\max}}\left(\DF{1}{n}\sum_{i=1}^n \|\mathbf{V}^{\top}_k({\bf x}_i) m({\bf x}_i)\|^2-\e\|\mathbf{V}^{\top}_k({\bf x}) m({\bf x})\|^2\right)\\
 =&O(a_n^2k_{\max}^{1/2})+\DF{a_n^2}{k_{\max}^{3/2}\sqrt{n}}O_P\left(
 \sum_{k=k_{\min}}^{k_{\max}}  k\right)=O(a_n^2k_{\max}^{1/2})+o_P(1).
\end{align*}

For $J_3$, consider
\begin{align*}
&\DF{1}{n\sigma_n}\sum_{k=k_{ \min}}^{k_{\max}}\left(\sum_{i=1}^n \DF{\partial}{\partial {\bm \theta}^\top}g({\bf x}_i,{\bm\theta}_0)\mathbf{V}^{\top}_k({\bf x}_i)\right) \left(\sum_{i=1}^n \DF{\partial}{\partial {\bm \theta}^\top}g({\bf x}_i,{\bm\theta}_0)\mathbf{V}^{\top}_k({\bf x}_i)\right)^\top\\
=&\DF{1}{k_{\max}^{3/2}}\sum_{k=k_{ \min}}^{k_{\max}}\left(\DF{1}{n}\sum_{i=1}^n \DF{\partial}{\partial {\bm \theta}^\top}g({\bf x}_i,{\bm\theta}_0)\mathbf{V}^{\top}_k({\bf x}_i)\right) \left(\DF{1}{n}\sum_{i=1}^n \DF{\partial}{\partial {\bm \theta}^\top}g({\bf x}_i,{\bm\theta}_0)\mathbf{V}^{\top}_k({\bf x}_i)\right)^\top\\
=&\DF{1}{k_{\max}^{3/2}}\sum_{k=k_{ \min}}^{k_{\max}}(\e \DF{\partial}{\partial {\bm \theta}^\top}g({\bf x},{\bm\theta}_0)\mathbf{V}^{\top}_k({\bf x})) (\e \DF{\partial}{\partial {\bm \theta}^\top}g({\bf x},{\bm\theta}_0)\mathbf{V}^{\top}_k({\bf x}))^\top\\
&+\DF{1}{k_{\max}^{3/2}}\sum_{k=k_{ \min}}^{k_{\max}}\left(\DF{1}{n}\sum_{i=1}^n \DF{\partial}{\partial {\bm \theta}^\top}g({\bf x}_i,{\bm\theta}_0)\mathbf{V}^{\top}_k({\bf x}_i)-\e \DF{\partial}{\partial {\bm \theta}^\top}g({\bf x},{\bm\theta}_0)\mathbf{V}^{\top}_k({\bf x})\right)\\
&\qquad\times \left(\DF{1}{n}\sum_{i=1}^n \DF{\partial}{\partial {\bm \theta}^\top}g({\bf x}_i,{\bm\theta}_0)\mathbf{V}^{\top}_k({\bf x}_i)-\e \DF{\partial}{\partial {\bm \theta}^\top}g({\bf x},{\bm\theta}_0)\mathbf{V}^{\top}_k({\bf x})\right)^\top\\
&+\DF{1}{k_{\max}^{3/2}}\sum_{k=k_{ \min}}^{k_{\max}}(\e \DF{\partial}{\partial {\bm \theta}^\top}g({\bf x},{\bm\theta}_0)\mathbf{V}^{\top}_k({\bf x})) \left(\DF{1}{n}\sum_{i=1}^n \DF{\partial}{\partial {\bm \theta}^\top}g({\bf x}_i,{\bm\theta}_0)\mathbf{V}^{\top}_k({\bf x}_i)-\e \DF{\partial}{\partial {\bm \theta}^\top}g({\bf x},{\bm\theta}_0)\mathbf{V}^{\top}_k({\bf x})\right)^\top\\
&+\DF{1}{k_{\max}^{3/2}}\sum_{k=k_{ \min}}^{k_{\max}}\left(\DF{1}{n}\sum_{i=1}^n \DF{\partial}{\partial {\bm \theta}^\top}g({\bf x}_i,{\bm\theta}_0) \mathbf{V}^{\top}_k({\bf x}_i)-\e \DF{\partial}{\partial {\bm \theta}^\top}g({\bf x},{\bm\theta}_0)\mathbf{V}^{\top}_k({\bf x})\right) (\e \DF{\partial}{\partial {\bm \theta}^\top}g({\bf x},{\bm\theta}_0)\mathbf{V}^{\top}_k({\bf x}))^\top\\
\le&\e\|\DF{\partial}{\partial {\bm \theta}^\top}g({\bf x},{\bm\theta}_0)\|^2\DF{1}{k_{\max}^{1/2}}+\DF{1}{k_{\max}^{3/2}n}O_P\left(\sum_{k=k_{ \min}}^{k_{\max}}k\right)
+\DF{2}{k_{\max}^{3/2}\sqrt{n}}O_P\left(\sum_{k=k_{ \min}}^{k_{\max}}\sqrt{k}\right)=o_P(1),
\end{align*}
where $\left\|\left(\e \DF{\partial}{\partial {\bm \theta}^\top}g({\bf x},{\bm\theta}_0)\mathbf{V}^{\top}_k({\bf x})\right) \left(\e \DF{\partial}{\partial {\bm \theta}^\top}g({\bf x},{\bm\theta}_0)\mathbf{V}^{\top}_k({\bf x})\right)^\top\right\|\le \e\left\|\DF{\partial}{\partial {\bm \theta}^\top}g({\bf x},{\bm\theta}_0)\right\|^2$ since $\e\left[ \DF{\partial}{\partial {\bm \theta}^\top}g({\bf x}, {\bm\theta}_0)\mathbf{V}^{\top}_k({\bf x})\right]$ are these coefficients of $\DF{\partial}{\partial {\bm \theta}^\top}g({\bf x},{\bm\theta}_0)$ projecting onto $\mathbf{V}^{\top}_k({\bf x})$. Thus, $J_3=o_P(1)$. This implies $J_4=o_P(1)$.

For $J_5$, observe
\begin{align*}
&\DF{2}{\sqrt{n}\sigma_n}\sum_{i=2}^n\sum_{j=1}^{i-1}\sum_{k=k_{ \min}}^{k_{\max}}\DF{\partial}{\partial {\bm \theta}^\top}g({\bf x}_j,{\bm\theta}_0)\mathbf{V}^{\top}_k({\bf x}_j) \mathbf{V}_k({\bf x}_i)m_n({\bf x}_i)\\
=&\DF{a_n\sqrt{n}}{k_{\max}^{3/2}}\sum_{k=k_{ \min}}^{k_{\max}} \left(\DF{1}{n}\sum_{j=1}^{n} \DF{\partial}{\partial {\bm \theta}^\top}g({\bf x}_j,{\bm\theta}_0) \mathbf{V}^{\top}_k({\bf x}_j)\right)\left( \DF{1}{n}\sum_{i=1}^n\mathbf{V}_k({\bf x}_i)m({\bf x}_i)\right)\\
=&\DF{a_n\sqrt{n}}{k_{\max}^{3/2}}\sum_{k=k_{ \min}}^{k_{\max}} (\e \DF{\partial}{\partial {\bm \theta}^\top}g({\bf x},{\bm\theta}_0) \mathbf{V}^{\top}_k({\bf x}))(\e \mathbf{V}_k({\bf x})m({\bf x}))\\
&+\DF{a_n\sqrt{n}}{k_{\max}^{3/2}}\sum_{k=k_{ \min}}^{k_{\max}}(\e \DF{\partial}{\partial {\bm \theta}^\top}g({\bf x},{\bm\theta}_0) \mathbf{V}^{\top}_k({\bf x}))\left( \DF{1}{n}\sum_{i=1}^n\mathbf{V}_k({\bf x}_i)m({\bf x}_i)-\e \mathbf{V}_k({\bf x})m({\bf x})\right)\\
&+\DF{a_n\sqrt{n}}{k_{\max}^{3/2}}\sum_{k=k_{ \min}}^{k_{\max}}\left(\DF{1}{n}\sum_{j=1}^{n} \DF{\partial}{\partial {\bm \theta}^\top}g({\bf x}_j,{\bm\theta}_0) \mathbf{V}^{\top}_k({\bf x}_j)-\e \DF{\partial}{\partial {\bm \theta}^\top}g({\bf x},{\bm\theta}_0) \mathbf{V}^{\top}_k({\bf x})\right)\e \mathbf{V}_k({\bf x})m({\bf x})\\
&+\DF{a_n\sqrt{n}}{k_{\max}^{3/2}}\sum_{k=k_{ \min}}^{k_{\max}}\left(\DF{1}{n}\sum_{j=1}^{n} \DF{\partial}{\partial {\bm \theta}^\top}g({\bf x}_j,{\bm\theta}_0) \mathbf{V}^{\top}_k({\bf x}_j)-\e \DF{\partial}{\partial {\bm \theta}^\top}g({\bf x},{\bm\theta}_0) \mathbf{V}^{\top}_k({\bf x})\right)\\
&\hspace{1cm}\times\left( \DF{1}{n}\sum_{i=1}^n\mathbf{V}_k({\bf x}_i)m({\bf x}_i)-\e \mathbf{V}_k({\bf x})m({\bf x})\right)\\
=&O\left(\DF{a_n\sqrt{n}}{k_{\max}^{1/2}}\right)+
2\DF{a_n\sqrt{n}}{k_{\max}^{3/2}}O_P\left(\sum_{k=k_{ \min}}^{k_{\max}} \sqrt{k/n}\right)+\DF{a_n\sqrt{n}}{k_{\max}^{3/2}}O_P\left(\sum_{k=k_{ \min}}^{k_{\max}} k/n\right)
=O_P\left(a_n\sqrt{n/k_{\max}}\right),
\end{align*}
where once again $\left\|\e \left[\DF{\partial}{\partial {\bm \theta}^\top}g({\bf x},{\bm\theta}_0) \mathbf{V}^{\top}_k({\bf x})\right]\right\|\le \sqrt{\e\left\|\DF{\partial}{\partial {\bm \theta}^\top}g({\bf x},{\bm\theta}_0)\right\|^2}$ and $\|\e m({\bf x}) \mathbf{V}^{\top}_k({\bf x})\|\le \sqrt{\e[m^2({\bf x})]}$. Thus, $J_5=O_P\left(a_n\sqrt{n/k_{\max}}\right)$, and $J_6=O_P\left(a_n\sqrt{n/k_{\max}}\right)$ due to their similarity.

To conclude, $\DF{1}{\sigma_n}T_{4n}\ge a_n^2n/k_{\max}^{1/2}(1+o_P(1))\to_P\infty $, and hence, the theorem is fulfilled.
\end{proof}

\renewcommand{\theequation}{E.\arabic{equation}}

\section{Additional Simulations}\label{Ap.E}

In this appendix, we connect our work with \cite{gallant1981} and \cite{gao2002}, and present more numerical evidences with the corresponding theoretical justification to support our SP method.

\subsection{Generalized cross--validation}

Firstly, we introduce an alternative method for selecting the truncation parameter when sparsity assumptions are not ideal for practical analysis. We emphasize that the primary objective is not the specific selection of $k$, but rather the execution of a rigorous robustness check for our SP method.

Specifically, we focus on the so--called ``Generalized Cross--Validation" (GCV) method (see, Chapter 2 of \cite{hlg2000}, for example) works well numerically. For the linear model discussed in Section \ref{Sec2}, the GCV selection method is defined as follows:
\be
\widehat{k}_n =\argmin_{k\in K_n} \left(1 - \frac{k+1}{n}\right)^{-2} \cdot  \frac{1}{n} \sum_{i=1}^n \left(y_i - \mathbf{x}_i^{\top} \widehat{\bm{\beta}} - \widehat{m}_{\phi}(\mathbf{x}_i)\right)^2,
\label{4.18}
\ee
where $\widehat{m}_{\phi}(\mathbf{x}_i) = \sum_{j=1}^{k} \phi_j({\bf x}_i) \, \widehat{\gamma}_{j, \phi}$ with $\phi_j(\cdot)$ being an orthonormal sequence of basis functions chosen by the user from $L^2([0, \pi])$ and $\widehat{\gamma}_{j, \phi}$ being defined in the same way as in Section \ref{Sec3}, in which $K_n = \left\{1,2, \cdots, \left[c_1 \, n^{\frac{1}{5}}\right]\right\}$, in which $c_1$ is the user choice in an individual scenario such that $K_n$ can have up to certain integers. An asymptotic consistency for $\widehat{k}_n$ can be established in a similar way to \cite{gao2002}.

\subsection{Simulated examples}

\noindent{\bf Example E1}: Consider the following data generating process for a univariate setting:
\be
y_i = x_i \, \beta_0 + \varepsilon_i \quad \mbox{with} \quad \varepsilon_i = m(x_i) + e_i,\notag
\ee
where $e_i\sim N(0,1)$, $\beta_0=1$, and $i =1,\ldots , n$. Additionally, we consider the following cases for $x_i$'s and $m(x)$:

\begin{enumerate}
    \item[] Case A: $x_i \sim U(-2, 1)$, and $m(x) = 15 \cos(3\pi x)$;
    \item[] Case B: $x_i \sim N(1, 1)$, and $m(x) = x^2-2$;
    \item[] Case C: $x_i \sim U(-1, 2)$, and $m(x) = \exp(x) - x - \frac{\exp(2)-\exp(-1)}{3} +\frac{1}{2}$;
    \item[] Case D: Consider Case A, and add an extra exogenous variable to the model, so $y_i = x_i \, \beta_0 +w_i \, \alpha_0+ \varepsilon_i$, where $w_i\sim N(1,1)$ and $\alpha_0=1$;
\end{enumerate}

In Appendix \ref{Ap.E}.2 below, we show that Cases A--D all fulfil the following endogeneity condition:
\begin{equation}\label{condition.1}
    \mathbb{E}[m(x_i)]=0\quad\text{and}\quad \mathbb{E}[x_i \, m(x_i)]\neq 0.
\end{equation}

In what follows, we provide some detailed implementation and make comments on each case. We start with Cases A--C. 

\begin{enumerate}
\item {\bf LS}--Estimate $\beta_0$ by the OLS method:
\begin{eqnarray*}
    \widehat{\beta}_{\rm LS}= (\mathbf{X}^{\top}\mathbf{X} )^{-1} \mathbf{X}^{\top}  \mathbf{y} ,
\end{eqnarray*}
where we let $\mathbf{X}= (x_1, \ldots, x_n )^{\top}$ and $\mathbf{y} = (y_1, \ldots, y_n )^{\top}$ for notational simplicity.
\item {\bf SP}--Estimate $\beta_0$ by the SP method of the form:
\begin{equation*}
    \widehat{\beta}_{\rm SP} = (\mathbf{X}^{\top} \mathbf{M}_v \mathbf{X} )^{-1} (\mathbf{X}^{\top} \mathbf{M}_v \mathbf{y}),
\end{equation*}
where $\mathbf{M}_v= \mathbf{I}_n -  \mathbf{V} (\mathbf{V}^{\top} \mathbf{V} )^{-1} \mathbf{V}^{\top}$, and $\mathbf{V}= (\mathbf{v}(x_1), \ldots, \mathbf{v}(x_n) )^{\top}$. Here, we take the suggestions from \citet[p. 228]{gallant1981}, and choose $\mathbf{v}(x)$ as $\mathbf{v}(x)=(1, H_2(x),  p_1(x), p_2(x),\ldots, p_{k-2}(x))^\top$. An optimal choice of $k$ as $\widehat{k}_n$ is selected by the GCV method defined in \eqref{4.18}.
\end{enumerate}

In practice, one can further add more polynomials to $\mathbf{v}(x)$ if necessary. The above setting is already sufficient to evaluate our theory, so we no longer explore higher order polynomials. We repeat the above procedure $R=1000$ times, and report the following measures
\begin{eqnarray}\label{measure_1}
\Delta \beta &=& \overline{\beta}  - \beta_0,\quad  \quad \text{sd}_\beta = \left\{\frac{1}{R}\sum_{r=1}^R(\widehat{\beta}_r - \overline{\beta})^2 \right\}^{1/2},\quad \overline{k} = \frac{1}{R}\sum_{r=1}^R \widehat{k}_{n,r} ,
\end{eqnarray}
where $\overline{\beta} = \frac{1}{R}\sum_{r=1}^R\widehat{\beta}_r$ with $\widehat{\beta}_r \in\{\widehat{\beta}_{\rm LS},\widehat{\beta}_{\rm SP}\}$ in each replication, $\widehat{k}_{n,r}$ stands for the value of $\widehat{k}_n$ chosen by the GCV in the $r^{th}$ replication, and $\overline{k}$ reported in the following tables represents the largest integer part.

For Case D, we also consider LS and SP, but choose

\begin{equation*}
    \mathbf{X} =\begin{pmatrix}
        x_1 & \ldots & x_n \\
        w_1 & \ldots & w_n
    \end{pmatrix}^\top .
\end{equation*}

The rest setting is identical to those specified for Cases A--C. On top of the above measures in \eqref{measure_1}, we report $\Delta \alpha$ and $\text{sd}_\alpha$ which are calculated in the same way as $\Delta \beta$ and $\text{sd}_\beta$.

{ 
\begin{table}[htb!]
\caption{Results of Cases A-D of Example E1}\label{Tb41}
\centering \small
\setlength{\tabcolsep}{4pt}
\renewcommand{\arraystretch}{0.7} 
\begin{tabular}{lrrrrlrrrrr}
\hline\hline
 &  & \multicolumn{3}{c}{Case A} &   & \multicolumn{3}{c}{Case C} & &  \\
 & $n$ & $\Delta \beta$ & $\text{sd}_\beta$ & $\overline{k}$ &  & $\Delta \beta$ & $\text{sd}_\beta$ & $\overline{k}$ &  &  \\
LS & 200 & -0.091 & 0.757 &  &   & 0.782 & 0.101  \\
 & 300 & -0.126 & 0.599 &  &   & 0.784 & 0.080  \\
 & 400 & -0.107 & 0.533 &  &   & 0.783 & 0.067 \\
SP & 200 & 0.004 & 0.137 & 5 &   & 0.067 & 0.131 & 4 &  &  \\
 & 300 & -0.007 & 0.107 & 5 &   & 0.063 & 0.109 & 4  &  &  \\
 & 400 & 0.006 & 0.098 & 5 &   & 0.070 & 0.096 & 4  &  &  \\
\hline \hline
 &  & \multicolumn{3}{c}{Case B} & &   \multicolumn{5}{c}{Case D}  \\
 & $n$ & $\Delta \beta$ & $\text{sd}_\beta$ & $\overline{k}$ &    & $\Delta \beta$ & $\text{sd}_{\beta}$ & $\Delta \alpha$ & $\text{sd}_{\alpha}$ & $\overline{k}$ \\
LS & 200 & 0.981 & 0.160 &   &  & -0.187 & 0.838 & -0.062 & 0.584 &  \\
 & 300 & 0.992 & 0.125 &  &   & -0.163 & 0.664 & -0.039 & 0.473 &  \\
 & 400 & 0.992 & 0.110 &  &   & -0.154 & 0.562 & -0.040 & 0.402 &  \\
SP & 200 & -0.003 & 0.129 & 2 &  &  -0.003 & 0.135 & -0.001 & 0.072 & 5 \\
 & 300 & -0.005 & 0.104 & 2 & & 0.004 & 0.110 & 0.003 & 0.058 & 5 \\
 & 400 & 0.002 & 0.086 & 2 &  & -0.003 & 0.094 & 0.001 & 0.051 & 5 \\
 \hline
 \hline
\end{tabular}
\end{table}

}

We summarize the results in Table \ref{Tb41}, and provide the key findings below.

\begin{enumerate}
    \item Across all scenarios, the SP method consistently delivers strong performance. In contrast, the LS method consistently demonstrates noticeable bias and, in some instances, exhibits considerable standard deviation.

    \item In Case C, simple algebra shows that $\exp(x)-x =\sum_{j=0}^\infty \frac{x^j}{j!} - x$, so the linear term disappear from the right hand side. While truncation residuals are inherent in this scenario, the proposed SP method still achieves significantly lower bias and smaller standard deviation.

    \item For Case D, the proposed SP method surpasses the LS method by yielding unbiased estimates and reduced standard deviations for both parameters.

    \item The optimal value of $k$ is case-dependent. Specifically, for optimal finite sample performance, we anticipate (i) $\widehat{k}_n\ge 5$ in Cases A and D, and (ii) $\widehat{k}_n\ge 2$ for Cases B and C. These expectations align with the results presented in Table \ref{Tb41}. Furthermore, the value of $k$ suggests that a slightly over-specified $k$ still yields reasonable finite sample performance. This is a positive finding, as it indicates that the choice of $k$ is quite robust in practice.

\end{enumerate}

We now have a look at a bivariate case for $\mathbf{x} = (x_1, x_2)^{\top}$ below.
\medskip

\noindent{\bf Example E2}: Consider the following DGP for a bivariate setting:
\begin{equation}
y_i =  \mathbf{x}_i^{\top} \bm{\beta}_0 + \varepsilon_i \quad\text{and}\quad \varepsilon_i = m(\mathbf{x}_i) + e_i,
\label{5.6}
\end{equation}
where $\mathbf{x}_i =(x_{1i},x_{2i})^\top$, $\bm{\beta}_0=\left(\beta_{01}, \beta_{02}\right)^{\top} = \left(1, 1\right)^{\top}$, and $e_i \sim N(0, 1)$ for $1\leq i\leq n$.

As discussed in the end of Appendix \ref{Ap.E}.2 below, the following conditions can be verified:
\begin{equation}\label{cond.bi}
    \mathbb{E}[m(\mathbf{x}_i)]=0\quad\text{and}\quad \mathbb{E}[\mathbf{x}_i \, m(\mathbf{x}_i)]\neq \mathbf{0}.
\end{equation}

{ 
\begin{table}[htb!] 
\caption{Results of Bivariate Case of Example E2}\label{Tb42}
\centering \small
\setlength{\tabcolsep}{4pt}
\renewcommand{\arraystretch}{0.7} 
\begin{tabular}{lrrrrrrlrrrrr}
\hline\hline
 &  & \multicolumn{5}{c}{Case A} &  & \multicolumn{5}{c}{Case B} \\
 & $n$ & $\Delta \beta_1$ & $\text{sd}_{\beta_1}$ & $\Delta \beta_2$ & $\text{sd}_{\beta_2}$ & $\overline{k}$ &  & $\Delta \beta_1$ & $\text{sd}_{\beta_1}$ & $\Delta \beta_2$ & $\text{sd}_{\beta_2}$ & $\overline{k}$\\
LS & 200 & 0.712 & 0.128 & 0.809 & 0.192 &  &  & 0.409 & 0.792 & 1.099 & 0.622 &  \\
 & 300 & 0.714 & 0.110 & 0.811 & 0.164 &  &  & 0.405 & 0.667 & 1.109 & 0.484 &  \\
 & 400 & 0.715 & 0.092 & 0.809 & 0.142 &  &  & 0.422 & 0.593 & 1.119 & 0.421 &  \\
SP & 200 & 0.000 & 0.062 & -0.002 & 0.128 & 3 &  & 0.003 & 0.130 & 0.002 & 0.130 & 6 \\
 & 300 & -0.001 & 0.052 & -0.004 & 0.099 & 3 &  & 0.000 & 0.110 & 0.007 & 0.101 & 6 \\
 & 400 & 0.000 & 0.043 & 0.000 & 0.091 & 3 &  & 0.003 & 0.101 & -0.001 & 0.088 & 6 \\
 &  & \multicolumn{5}{c}{Case C} &  & \multicolumn{5}{c}{Case D} \\
LS & 200 & -0.554 & 0.472 & 2.414 & 0.607 &  &  & -0.467 & 3.953 & -0.158 & 4.543 &  \\
 & 300 & -0.522 & 0.376 & 2.452 & 0.477 &  &  & -0.402 & 3.021 & -0.017 & 3.701 &  \\
 & 400 & -0.517 & 0.332 & 2.458 & 0.444 &  &  & -0.614 & 2.567 & -0.117 & 3.023 &  \\
SP & 200 & -0.001 & 0.062 & -0.006 & 0.134 & 3 &  & -0.005 & 0.151 & -0.011 & 0.136 & 6 \\
 & 300 & 0.002 & 0.054 & -0.001 & 0.102 & 3 &  & 0.000 & 0.117 & -0.006 & 0.104 & 6 \\
 & 400 & 0.001 & 0.043 & 0.000 & 0.087 & 3 &  & 0.002 & 0.098 & -0.002 & 0.091 & 6 \\
 \hline\hline
\end{tabular}
\end{table}}

We start with two additive cases:

\begin{itemize}
    \item[] Case A: $m(\mathbf{x}) = x_1^2 +x_2^2-13/3$,  $x_{1i}\sim U(-2,3)$, and $x_{2i}\sim N(1,1)$;

    \item[] Case B: $m(\mathbf{x}) = 15\cos(3\pi x_1)+x_2^2-2$, $x_{1i}\sim U(-2,1)$, and $x_{2i}\sim N(1,1)$.
\end{itemize}

Accordingly, we let
\begin{equation*}
    \mathbf{v}(\mathbf{x}) = (1,H_2(x_1),  p_1(x_1), \ldots, p_{k-2}(x_1), H_2(x_2),  p_1(x_2),\ldots, p_{k-2}(x_2) )^\top,
\end{equation*}
where $k\ge 2$ ensures the length of $\mathbf{v}(\mathbf{x})$ is at least 3. Apparently, in this case, $\mathbf{v}(\mathbf{x})$ is $(2k-1)\times 1$. An optimal choice of $k$ as $\widehat{k}_n$ is selected by the GCV method:
\begin{equation}
\widehat{k}_n = \argmin_{ k\in K_n} \left(1 - \frac{2k}{n}\right)^{-2} \cdot  \frac{1}{n} (\mathbf{y} - \mathbf{X} \, \widehat{\beta}_{\rm SP} - \mathbf{V} \, \widehat{\pmb{\gamma}} )^{\top}  (\mathbf{y} - \mathbf{X} \, \widehat{\beta}_{\rm SP} - \mathbf{V} \, \widehat{\pmb{\gamma}}),\notag
\end{equation}
where $\widehat{\gamma}  = (\mathbf{V}^{\top} \mathbf{V} )^{-1} \mathbf{V}^{\top}  (\mathbf{y} - \mathbf{X} \, \widehat{\beta}_{\rm SP})$, and $K_n$ is as defined in (\ref{4.18}).

Meanwhile, we also consider the following multiplicative cases:
\begin{itemize}
    \item[] Case C: $m(\mathbf{x}) = x_1^2\cdot (x_2^2-2)$,  $x_{1i}\sim U(-2,3)$, and $x_{2i}\sim N(1,1)$;

    \item[] Case D: $m(\mathbf{x}) = 15\cos(3\pi x_1)\cdot (x_2^2+2)$, $x_{1i}\sim U(-2,1)$, and $x_{2i}\sim N(1,1)$.
\end{itemize}

Accordingly, we let $\mathbf{v}(\mathbf{x})$ include the distinct elements from
\begin{equation*}
    \{1,H_2(x_1),  p_1(x_1), \ldots, p_{k-2}(x_1)\}\otimes \{ 1,H_2(x_2), p_1(x_2),\ldots, p_{k-2}(x_2)\},
\end{equation*}
where $k\ge 2$ ensures the length of $\mathbf{v}(\mathbf{x})$ is at least 4.  An optimal choice of $k$ as $\widehat{k}_n$ is selected by the GCV method:
\begin{equation}
\widehat{k}_n = \arg\min_{ k\in K_n} \left(1 - \frac{k^2 +1}{n}\right)^{-2} \cdot  \frac{1}{n} (\mathbf{y} - \mathbf{X} \, \widehat{\beta}_{\rm SP} - \mathbf{V} \, \widehat{\pmb{\gamma}} )^{\top}  (\mathbf{y} - \mathbf{X} \, \widehat{\beta}_{\rm SP} - \mathbf{V} \, \widehat{\pmb{\gamma}}),
\end{equation}
where the quantities involved are the same as those in (\ref{4.18}).

For the bivariate settings, the findings in Table \ref{Tb42} support that the proposed SP method works well numerically in a very similar spirit to that of Table 1 for Example E1.
\medskip

\noindent{\bf Example E3}: We now examine the nonlinear models. Without loss of generality, we consider the model: $y_i=g(\mathbf{x}_i, \pmb{\theta}_0)+\varepsilon_i.$
\begin{itemize}
\item[] Case A: $g(x, \theta_0) = \frac{1}{1+\exp(-x \theta_0)}$, $m(x) = 0.5 \cos(3\pi x)$, and $x_i \sim U(-\frac{13}{6}, \frac{7}{6})$, where $x_i$ and $\theta_0\, (=-1)$ are scalars.
\item[] Case B: $g(x, \theta_0) = \sin(x\theta_0)$, $m(x)=5(x^2-3)$, and $x_i\sim U(-2,1)$, where $x_i$ and $\theta_0\, (=-1)$ are scalars.
\item[] Case C: $g(\mathbf{x}, \pmb{\theta}_0) = \theta_4\exp(x_1\theta_1) +\theta_5\exp(\mathbf{x}^\top \pmb{\theta}_{23})$, where $(\theta_1,\theta_2,\theta_3,\theta_4,\theta_5) =(-1,-0.5, -0.5, 1, 1)$, $\pmb{\theta}_{23} =(\theta_2, \theta_3)^\top$, and $\mathbf{x} =(x_1,x_2)^\top$. Additionally, we consider the case where $m(x_1, x_2) = \e[\varepsilon|(x_1, x_2)] = \e[\varepsilon|x_1] =m(x_1) = 6 \cos(3\pi x_1)$, $x_{1,i} \sim U(-\frac{13}{6}, \frac{7}{6})$, and $x_{2,i}\sim N(1,1)$.
\end{itemize}

{ 
\begin{table}[htb!]\centering \small
\caption{Results of Nonlinear Models of Example E3}\label{Tb53}
\centering
\setlength{\tabcolsep}{4pt}
\renewcommand{\arraystretch}{0.7} 
\begin{tabular}{lrrrrrrrr}
\hline\hline
&     & \multicolumn{3}{c}{Case A} &  & \multicolumn{3}{c}{Case B} \\
      & $n$   & $\Delta \theta$       & sd$_{\theta}$        & $\overline{k}$ &  & $\Delta \theta$       & sd$_{\theta}$        & $\overline{k}$     \\
LS & 200 & -0.351 & 0.741 &  &  & 2.102 & 0.917 &  \\
 & 300 & -0.305 & 0.444 &  &  & 2.233 & 0.693 &  \\
 & 400 & -0.289 & 0.366 &  &  & 2.252 & 0.682 &  \\
SP & 200 & -0.099 & 0.802 & 5 &  & -0.022 & 0.178 & 3 \\
 & 300 & -0.056 & 0.504 & 5 &  & -0.007 & 0.139 & 3 \\
 & 400 & -0.045 & 0.428 & 5 &  & -0.010 & 0.122 & 3 \\
 &  & \multicolumn{7}{c}{Case C} \\
 &  &  $\Delta \theta_1$       & sd$_{\theta_1}$  & $\Delta \theta_2$       & sd$_{\theta_2}$ & $\Delta \theta_3$       & sd$_{\theta_3}$ & $\overline{k}$ \\
LS & 1500 & 3.375 & 3.268 & 0.003 & 0.101 & 0.035 & 0.155 &  \\
 & 2000 & 3.786 & 3.223 & 0.000 & 0.084 & 0.053 & 0.138 &  \\
 & 2500 & 4.007 & 3.241 & -0.003 & 0.077 & 0.057 & 0.131 &  \\
SP & 1500 & 0.005 & 0.155 & -0.003 & 0.033 & 0.001 & 0.024 & 5 \\
 & 2000 & 0.003 & 0.136 & -0.001 & 0.030 & 0.001 & 0.021 & 5 \\
 & 2500 & 0.002 & 0.116 & 0.000 & 0.027 & 0.000 & 0.019 &  5\\
 &  & $\Delta \theta_4$ &  sd$_{\theta_4}$ & $\Delta \theta_5$ &  sd$_{\theta_5}$ & $\overline{k}$ &  &  \\
LS & 1500 & -0.527 & 1.079 & 0.414 & 0.985 &  &  &  \\
 & 2000 & -0.547 & 0.788 & 0.429 & 0.686 &  &  &  \\
 & 2500 & -0.566 & 0.656 & 0.426 & 0.540 &  &  &  \\
SP & 1500 & 0.033 & 0.233 & 0.004 & 0.078 & 5 &  &  \\
 & 2000 & 0.028 & 0.212 & 0.004 & 0.065 & 5 &  &  \\
 & 2500 & 0.018 & 0.174 & 0.005 & 0.058 & 5 &  & \\
 \hline\hline
\end{tabular}
\end{table}

}

As shown for the justification of Example E2 below, endogeneity exists in all three cases. The forms of $m(\cdot)$ in the above cases are chosen similarly to those in Cases A and B of Example E1. 

In Case A, we select $g(x,\theta_0)$ as a sigmoid function (i.e., logistic function), which receives lots of attention in the optimization process of machine learning literature (e.g., \citealp{BK2019}). 

In Case B, we select $g(x,\theta_0)$ as a standard periodic function, which is possibly one of the easiest nonlinear functions that one can think of. 

In Case C, the functional form is similar to those typically adopted in the neural network literature, and the exponential function is the fundamental component of a series of activation functions such as Softmax, Softplus, Gaussian function, logistic function, etc.

To carry on the regression, we construct  $\mathbf{M}_v$ of Case A and Case B as in Example E1,  construct $\mathbf{M}_v$ of Case C as in Example E2, and finally obtain the estimates for all three cases as in \eqref{nonlinear8}. The optimal value of $k$ is still selected via GCV with obvious modification. For the purpose of comparison, we also carry on the standard nonlinear least squares (NLS) method, and report $\Delta \theta$, $\text{sd}_{\theta}$ and $\overline{k}$ as defined in \eqref{measure_1}.

In Table \ref{Tb53}, the proposed SP method is apparently dominating the traditional NLS method. The endogeneity causes some serious bias issues for the NLS method. As more unknown parameters are involved in Case C, we therefore consider larger sample sizes in the simulation. In Case C, the endogeneity is due to $x_1$ only, and it apparently influences the estimates of these coefficients in different magnitude. The values of optimal $k$ are similar to those in Cases A and B of Example E1. As discussed above, these values should be expected.

\subsection{Verifications of the simulation designs in Appendix \ref{Ap.E}.1}

In what follows, $x_i$ stands for a random variable, while $x$ stands for a real number.

\begin{lemma}\label{Lem_sim_1}
$\{\cos(i\pi x)\mid i \in 0\cup \mathbb{N}\}$ is a set of orthogonal basis functions in $L^2([a,b])$ for $\forall a,b\in \mathbb{Z}$ and $a\ne b$.
\end{lemma}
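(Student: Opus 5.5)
The plan is to verify orthogonality directly by elementary integration, using only that $\sin(m\pi x)$ vanishes at every integer $x$. Fix integers $a\neq b$, say $a<b$, and indices $i,j\in\{0\}\cup\mathbb{N}$. The first step is the product-to-sum identity
\[
\cos(i\pi x)\cos(j\pi x)=\tfrac12\bigl[\cos((i-j)\pi x)+\cos((i+j)\pi x)\bigr].
\]
For any nonzero integer $m$, the antiderivative of $\cos(m\pi x)$ is $\sin(m\pi x)/(m\pi)$, so
\[
\int_a^b\cos(m\pi x)\,dx=\frac{\sin(m\pi b)-\sin(m\pi a)}{m\pi}=0 .
\]

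\textbf{Orthogonality.} For $i\neq j$, both $i-j$ and $i+j$ are nonzero integers. Both terms in the identity therefore integrate to zero over $[a,b]$, which gives $\int_a^b\cos(i\pi x)\cos(j\pi x)\,dx=0$.

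\textbf{Nondegeneracy.} Next I check that no member of the family has zero norm, so that it is a genuine orthogonal system. For $i=0$ the function is $1$ and its squared norm is $b-a>0$. For $i\geq1$, the identity with $j=i$ gives $\cos^2(i\pi x)=\tfrac12[1+\cos(2i\pi x)]$. The second term integrates to zero by the computation above, so $\int_a^b\cos^2(i\pi x)\,dx=(b-a)/2>0$. In particular, the normalized family $\{(b-a)^{-1/2}\}\cup\{\sqrt{2/(b-a)}\cos(i\pi x):i\geq1\}$ is orthonormal. When $(a,b)=(0,1)$ this is exactly the family $p_j(x)=\sqrt2\cos(\pi jx)$ used in Appendix B.1.

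\textbf{Basis (completeness).} This is the step I expect to be the main obstacle, because it only holds for an interval of unit length; for $b-a\geq2$ I would present the lemma as an orthogonality statement, consistent with how the paper uses it. When $b=a+1$, I would substitute $x=a+t$ with $t\in[0,1]$. Since $a$ is an integer, $\cos(i\pi(a+t))=(-1)^{ia}\cos(i\pi t)$. The system on $[a,a+1]$ is therefore, up to signs, the classical cosine system on $[0,1]$. That system is complete in $L^2([0,1])$: extend $f$ evenly to $[-1,1]$ and use completeness of the trigonometric system there, where the sine coefficients vanish by symmetry. Completeness then transfers to $[a,a+1]$ because the translation is unitary.

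When $b-a\geq2$, every $\cos(i\pi x)$ has period $2$. The closed span therefore consists only of functions whose restrictions to consecutive length-$2$ blocks coincide. For example, on $[0,2]$ the function equal to $1$ on $[0,1]$ and $-1$ on $(1,2]$, minus its $\cos$-projection, is a nonzero element orthogonal to the whole family. In that case I would read ``basis'' as ``orthogonal set spanning a proper closed subspace'', which is all that the constructions of $\mathcal{S}$ in Section~\ref{Sec5.1} and Appendix~\ref{Ap.B1} require.
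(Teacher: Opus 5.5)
Your orthogonality and norm computations are the same as the paper's proof: product-to-sum, the vanishing of $\sin(m\pi x)$ at integers, and $\int_a^b\cos^2(i\pi x)\,dx$ equal to $b-a$ for $i=0$ and $(b-a)/2$ for $i\ge 1$. The paper stops there and never addresses completeness, so what it actually proves is that the family is an orthogonal system of nonzero functions.

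Your additional analysis is correct and sharper than the paper's proof. The translation $\cos(i\pi(a+t))=(-1)^{ia}\cos(i\pi t)$ reduces the case $b-a=1$ to the classical complete cosine system on $[0,1]$. For $b-a\ge 2$ the family is genuinely not complete, so the word ``basis'' in the lemma is literally true only for unit-length intervals. Reading the lemma as an orthogonality statement in that case matches how the paper uses it: Appendix B.1, and Example E1, where $x$ ranges over $[-2,1]$ and only orthogonality matters.

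One correction to your incompleteness argument. The period-$2$ reasoning is vacuous in exactly your example $[0,2]$: there is only one length-$2$ block, and $f(x)=f(x+2)$ constrains only the endpoints. The reason that covers every $b-a\ge 2$ is reflection symmetry. For any integer $c$, $\cos(i\pi(2c-x))=\cos(i\pi x)$, so with $c=a+1$ every element of the closed span satisfies $f(x)=f(2a+2-x)$ a.e.\ on $[a,a+2]$. This defines a proper closed subspace.

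Your counterexample is still valid, because it is odd about $x=1$. One checks directly that $\int_0^1\cos(i\pi x)\,dx-\int_1^2\cos(i\pi x)\,dx=0$ for every $i$. Its projection is therefore zero, and the ``minus its projection'' step is unnecessary.
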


\begin{proof}[Proof of Lemma \ref{Lem_sim_1}]
\item

For $i\ne j$, write

\begin{eqnarray*}
&&    \int_a^b\cos(i\pi x) \cos(j\pi x)\mathrm{d}x= \frac{1}{2}\int_a^b \cos((i-j)\pi x)\mathrm{d}x+\frac{1}{2}\int_a^b \cos((i+j)\pi x)\mathrm{d}x \notag \\
    &&=\frac{1}{2(i-j)\pi}\sin((i-j)\pi x) \big|_a^b+\frac{1}{2(i+j)\pi}\sin((i+j)\pi x) \big|_a^b = 0,
\end{eqnarray*}
where the last step is obvious.

Also, we write

\begin{eqnarray*}
    \int_a^b\cos(i\pi x)^2\mathrm{d}x &=&\frac{1}{2}\int_a^b \cos(0\pi x)\mathrm{d}x+\frac{1}{2}\int_a^b \cos(2i\pi x)\mathrm{d}x \notag \\
    &=&\frac{b-a}{2}+\left\{\begin{array}{ll}
        \frac{b-a}{2} & \text{for }i=0 \\
        \frac{1}{4i\pi}\sin(2i\pi x) |_a^b & \text{for }i\ge 1
    \end{array} \right. = \left\{\begin{array}{ll}
        b-a & \text{for }i=0 \\
        \frac{b-a}{2} & \text{for }i\ge 1
    \end{array} \right. .
\end{eqnarray*}
The proof is completed.
\end{proof}

\begin{proof}[Justification of Example E1]
\item

We consider the cases one by one.

Case A: \, Let $\ell$ be an odd positive integer. We note that

\begin{eqnarray*}
    \mathbb{E}[\cos(\ell\pi x_i)] = \int_{-2}^1 \frac{1}{3} \cos(\ell\pi x)\mathrm{d}x =\frac{1}{9\pi}\sin(\ell\pi x)\big|_{-2}^1=0.
\end{eqnarray*}

Also,
\begin{eqnarray*}
&&    \mathbb{E}[x_i\cos(\ell\pi x_i)] = \int_{-2}^1 \frac{1}{3} x \cdot \cos(\ell\pi x)\mathrm{d}x =\frac{1}{3\ell\pi}\int_{-2}^1 x\, \mathrm{d}\sin(\ell\pi x)\notag \\
&& = \frac{1}{3\ell\pi}x\cdot\sin(\ell\pi x) \big|_{-2}^1 -\frac{1}{3\ell\pi}\int_{-2}^1 \sin(\ell\pi x) \mathrm{d}x = \frac{1}{3(\ell\pi)^2}\cos(\ell\pi x)\big|_{-2}^1=\frac{-2}{3(\ell\pi)^2},
\end{eqnarray*}
and
\begin{eqnarray*}
    \mathbb{E}[x_i^2\cos(\ell\pi x_i)] &=&\int_{-2}^1 \frac{1}{3} x^2 \cdot \cos(\ell\pi x)\mathrm{d}x =\frac{1}{3\ell\pi}\int_{-2}^1 x^2\, \mathrm{d}\sin(\ell\pi x)\notag \\
    &=&-\frac{2}{3\ell\pi}\int_{-2}^1 x \cdot\sin(\ell\pi x)\mathrm{d}x= \frac{2}{3(\ell\pi)^2}\int_{-2}^1 x \,\mathrm{d}\cos(\ell\pi x) \notag \\
    &=&\frac{2}{3(\ell\pi)^2}x\cdot \cos(\ell\pi x)\big|_{-2}^1-\frac{2}{3(\ell\pi)^2}\int_{-2}^1\cos(\ell\pi x)\mathrm{d}x =\frac{2}{3(\ell\pi)^2}.
\end{eqnarray*}

Case B: \, We note that if $x_i\sim N(\mu, 1)$, then $\mathbb{E}[H_j(x_i)] =\mu^j$. The reason is as follows.

\begin{eqnarray*}
   && \mathbb{E}[H_j(x_i)] = \int_{\mathbb{R}}H_j(x )f(x-\mu)  \mathrm{d}x = \int_{\mathbb{R}}H_j(x+\mu )f(x)  \mathrm{d}x\notag \\
    &&= \int_{\mathbb{R}}\sum_{k=0}^j\binom{j}{k}\mu^{j-k} H_j(x) f(x)  H_0(x)\mathrm{d}x = \mu^{j} \int_{\mathbb{R}} H_0(x)^2 f(x) \mathrm{d}x=\mu^{j} ,
\end{eqnarray*}
where $f(x)$ is the density function of $N(0,1)$, $H_0(x)\equiv 1$, and the fourth equality follows from the orthogonality of $\{ H_j(x)\}$. We them immediately obtain that

\begin{eqnarray*}
    \mathbb{E}[x_i^2-2]= \mathbb{E}[H_2(x_i)- H_0(x_i)]=\mu^2-\mu^0,
\end{eqnarray*}
and $\mathbb{E}[x_i(x_i^2-2)] =\mathbb{E}[H_3(x_i)+H_1(x_i)]= \mu^3+\mu^1$. Thus, for $\mu=1$, $\mathbb{E}[x_i^2-2]=0$ and $\mathbb{E}[x_i(x_i^2-2)]=2$.

\medskip

Case C: \, We note that $\exp(x)-x =\sum_{j=0}^\infty \frac{x^j}{j!} - x$, so the liner term disappear from the right hand side. Simple algebra shows that
\begin{eqnarray*}
    \int_{-1}^2 \frac{1}{3}(\exp(x)-x)\mathrm{d}x = \frac{1}{3}\exp(x)\big|_{-1}^2 -\frac{1}{6}x^2\big|_{-1}^2 = \frac{1}{3}\exp(2)-\frac{1}{3}\exp(-1)-\frac{1}{2}.
\end{eqnarray*}
Thus, $\mathbb{E}[m(x)]=0$.

We also have
\begin{eqnarray*}
&&    \int_{-1}^2\frac{1}{3} x\exp(x)\mathrm{d}x = \frac{1}{3} x\exp(x)\big|_{-1}^2-\int_{-1}^2 \frac{1}{3}\exp(x)\mathrm{d}x\notag \\
&& =\frac{2}{3}\exp(2) +\frac{1}{3}\exp(-1) -\frac{1}{3}\exp(2)+\frac{1}{3}\exp(-1) = \frac{1}{3}\exp(2) +\frac{2}{3}\exp(-1)
\end{eqnarray*}
and $\int_{-1}^2\frac{1}{3} x^2 \mathrm{d}x=\frac{1}{9}x^3\big|_{-1}^2=1$. Therefore, we have
\begin{eqnarray*}
    \mathbb{E}[x \, m(x)] = \frac{1}{3}\exp(2) +\frac{2}{3}\exp(-1) -1 -\frac{1}{3}\exp(2) +\frac{1}{3}\exp(-1)+\frac{1}{2} = \exp(-1)-\frac{1}{2}\ne 0.
\end{eqnarray*}

\medskip

\textbf{Remark}: We note that, for Cases A, C and D, it is easy to see that $\mathbb{E}[\mathbf{v}(x_j) \mathbf{v}(x_j)^\top]$ is not diagonal but has full rank based on the above development. For Case B, it is mathematically more involved, as we need to use Euler's formula: $\cos(x)=\frac{1}{2}(\exp(\mathrm{i}x)+\exp(-\mathrm{i}x))$, where $\mathrm{i}$ is an imagine unit. Thus,

\begin{eqnarray*}
    \mathbb{E}[\cos(x_j)]&=&\frac{1}{2}(\mathbb{E}[\exp(\mathrm{i}x_j)]+\mathbb{E}[\exp(-\mathrm{i}x_j)])\notag \\
    &=&\frac{1}{2}\exp\left(\mathrm{i}\mu_x-\frac{\sigma_x^2}{2}\right)+\frac{1}{2}\exp\left(-\mathrm{i}\mu_x-\frac{\sigma_x^2}{2}\right) = \exp\left( -\frac{\sigma_x^2}{2}\right)\cos(\mu_x),
\end{eqnarray*}
where $\mu_x$ and $\sigma_x$ are the mean and standard deviation of the random variable $x_j$, and the second equality follows from a direct calculation using the characteristic function of a normal distribution. As this is simulation exercise, we no longer purse the final form of $\mathbb{E}[\mathbf{v}(x_j) \mathbf{v}(x_j)^\top]$ further, which also varies with respect to the choice of the Hermite function(s) involved.
\end{proof}

\begin{proof}[Justification of Example E2]
\item

We now show that all cases fulfil \eqref{cond.bi}.

In Example E2, Cases A and B can be easily justified by following the justification of Cases A and B of Example E1, as $m(\mathbf{x})$ admits an additive form: $m(\mathbf{x}) =m_1(x_1)+ m_2(x_2)$ in both cases.

In Cases C and D, $m(\mathbf{x})$ always admits an multiplicative form: $m(\mathbf{x}) =m_1(x_1)\cdot m_2(x_2)$. In connection with the justification of Cases A and B of Example E1, we can also verify \eqref{cond.bi}.
\end{proof}

\begin{proof}[Justification of Example E3]
\item

Case A: \, Note that given $\theta_0<0$, $\frac{1}{1+\exp(-x \theta_0)}$ is monotonically decreasing with respect to $x$, and $\cos(3\pi x)$ is a periodic function with respect to $x$. It is easy to check that $\int_{-\frac{13}{6}}^{\frac{7}{6}} \cos(3\pi x)\mathrm{d}x \int=\frac{1}{3\pi}\int_{-\frac{13}{2}\pi}^{\frac{7}{2}\pi} \cos( x)\mathrm{d}x =0$ and
\begin{eqnarray*}
&&\int_{-\frac{13}{6}}^{\frac{7}{6}}  \frac{1}{1+\exp(-x  \theta_0)}\cos(3\pi x)\mathrm{d}x \notag \\
&>&\sum_{k=-5,-3,-1,1,3} \frac{1}{1+\exp( (2k-1)(-\theta_0)/6  )} \left(\int_{(2k-3)/6 }^{(2k-1)/6 }+\int_{(2k-1)/6 }^{(2k+1)/6}\right) \cos(3\pi x)\mathrm{d}x = 0.
\end{eqnarray*}

Thus, Case A of Example E3 fulfils the condition $E[g(x_i, \theta_0)m(x_i)]\ne 0$.

\medskip

Case B: \, By direct calculation, we immediately obtain that $\int_{-2}^1 m(x)\mathrm{d}x= 0$, (i.e., $E[m(x_i)]=0$). The level of endogeneity can be calculated as follows:
\begin{eqnarray*}
&&\int_{-2}^1 \sin(-x)\cdot 5(x^2-3)\mathrm{d}x = -5\int_{-2}^1 \sin(x)x^2\mathrm{d}x + 5\int_{-2}^1 \sin(x) \mathrm{d}x 
\nonumber\\
&& = -4\cos(1)+7\cos(2)+2\sin(1)-4\sin(2)\approx -7.
\end{eqnarray*}

Thus, Case B of Example E3 fulfils the condition $E[g(x_i, \theta_0)m(x_i)]\ne 0$.

\medskip

Case C: \, In view of the development of Case A, the justification of Case C is obvious, as $\exp(-x_1)$ is also monotonically decreasing.
\end{proof}

\end{appendix}
}




\end{document}